\titleformat{\section}[block]{\filcenter\normalfont\bfseries\large}{\thesection.}{.5em}{}\titlespacing*{\section}{0pt}{2\baselineskip}{1\baselineskip}
\titleformat{\subsection}[runin]{\normalfont\bfseries}{\thesubsection.}{.4em}{}[.]\titlespacing{\subsection}{0pt}{2ex plus .1ex minus .2ex}{.8em}
\titleformat{\subsubsection}[runin]{\normalfont\itshape}{\thesubsubsection.}{.3em}{}[.]\titlespacing{\subsubsection}{0pt}{1ex plus .1ex minus .2ex}{.5em}
\titleformat{\paragraph}[runin]{\normalfont\itshape}{\theparagraph.}{.3em}{}[.]\titlespacing{\paragraph}{0pt}{1ex plus .1ex minus .2ex}{.5em}
\newcommand{{\small \input{.pdf_tex}}}[1]{{\small \input{#1.pdf_tex}}}
\let\originalleft\left
\let\originalright\right
\renewcommand{\left}{\mathopen{}\mathclose\bgroup\originalleft}
\renewcommand{\right}{\aftergroup\egroup\originalright}
\definecolor{darkred}{rgb}{0.9,0,0.3}
\definecolor{darkblue}{rgb}{0,0.3,0.9}
\definecolor{vdarkred}{rgb}{0.6,0,0.2}
\definecolor{vdarkblue}{rgb}{0,0.2,0.6}
\numberwithin{equation}{section}
\numberwithin{figure}{section}
\theoremstyle{plain} %plain, definition, remark
\newtheorem{theorem}{Theorem}[section]
\newtheorem*{theorem*}{Theorem}
\newtheorem{lemma}[theorem]{Lemma}
\newtheorem*{lemma*}{Lemma}
\newtheorem{corollary}[theorem]{Corollary}
\newtheorem*{corollary*}{Corollary}
\newtheorem{proposition}[theorem]{Proposition}
\newtheorem*{proposition*}{Proposition}
\newtheorem*{conjecture*}{Conjecture}
\theoremstyle{definition} %plain, definition, remark
\newtheorem{definition}[theorem]{Definition}
\newtheorem*{definition*}{Definition}
\newtheorem{example}[theorem]{Example}
\newtheorem*{example*}{Example}
\newtheorem{remark}[theorem]{Remark}
\newtheorem*{remark*}{Remark}
\newtheorem{assumption}[theorem]{Assumption}
\newtheorem*{assumption*}{Assumption}
\renewcommand{\r}{\mathrm} %\newcommand{\rr}{\mathrm} %upright
\newcommand{\bb}{\mathbb} %blackboard bold
\renewcommand{\cal}{\mathcal} 
\newcommand{\fra}{\mathfrak} 
\newcommand{\ol}[1]{\overline{#1} \!\,} %overline
\newcommand{\wh}{\widehat}
\newcommand{\wt}{\widetilde}
\renewcommand{\P}{\mathbb{P}}
\newcommand{\E}{\mathbb{E}}
\newcommand{\R}{\mathbb{R}}
\newcommand{\C}{\mathbb{C}}
\newcommand{\N}{\mathbb{N}}
\newcommand{\Z}{\mathbb{Z}}
\newcommand{\ee}{\r e}
\newcommand{\ii}{\r i}
\newcommand{\dd}{\r d}
\newcommand{\col}{\vcentcolon}%{\mathrel{\vcenter{\baselineskip0.75ex \lineskiplimit0pt \hbox{.}\hbox{.}}}}
\newcommand*{\deq}{\mathrel{\vcenter{\baselineskip0.65ex \lineskiplimit0pt \hbox{.}\hbox{.}}}=}
\newcommand{\eqdist}{\overset{\r d}{=}}
\renewcommand{\leq}{\leqslant}
\renewcommand{\geq}{\geqslant}
\renewcommand{\epsilon}{\varepsilon}
\newcommand{\ind}[1]{\f 1_{#1}}
\newcommand{\p}[1]{(#1)}
\newcommand{\pb}[1]{\bigl(#1\bigr)}
\newcommand{\pB}[1]{\Bigl(#1\Bigr)}
\newcommand{\pbb}[1]{\biggl(#1\biggr)}
\newcommand{\pBB}[1]{\Biggl(#1\Biggr)}
\newcommand{\qb}[1]{\bigl[#1\bigr]}
\newcommand{\qB}[1]{\Bigl[#1\Bigr]}
\newcommand{\qBB}[1]{\Biggl[#1\Biggr]}
\newcommand{\qa}[1]{{\left[#1\right]}}
\newcommand{\abs}[1]{\lvert #1 \rvert}
\newcommand{\absb}[1]{\bigl\lvert #1 \bigr\rvert}
\newcommand{\absB}[1]{\Bigl\lvert #1 \Bigr\rvert}
\newcommand{\absbb}[1]{\biggl\lvert #1 \biggr\rvert}
\newcommand{\absBB}[1]{\Biggl\lvert #1 \Biggr\rvert}
\newcommand{\norm}[1]{\lVert #1 \rVert}
\newcommand{\normb}[1]{\bigl\lVert #1 \bigr\rVert}
\newcommand{\normB}[1]{\Bigl\lVert #1 \Bigr\rVert}
\newcommand{\scalar}[2]{\langle#1 \mspace{2mu}, #2\rangle}
\DeclareMathOperator{\tr}{Tr}
\DeclareMathOperator{\re}{Re}
\DeclareMathOperator{\im}{Im}
\DeclareMathOperator{\Span}{Span}
\newcommand{\wick}[1]{{\col\!#1\!\col}}
\newcommand{\f}[1]{\boldsymbol{\mathrm{#1}}}
\title{The Euclidean $\phi^4_2$ theory as a limit of an interacting Bose gas}
\author{	
J\"urg Fr\"ohlich
\and Antti Knowles
\and Benjamin Schlein
\and Vedran Sohinger
}
\begin{document}

\maketitle

\begin{abstract}
We prove that the complex Euclidean field theory with local quartic self-interaction in two dimensions arises as a limit of an interacting Bose gas at positive temperature, when the density of the gas becomes large and the range of the interaction becomes small. The field theory is supported on distributions of negative regularity, which requires a renormalization by divergent mass and energy counterterms. We obtain convergence of the relative partition function and uniform convergence of the renormalized reduced density matrices. The proof is based on three main ingredients: (a) a quantitative analysis of the infinite-dimensional saddle point argument for the functional integral introduced in \cite{FKSS_2020} using continuity properties of Brownian paths, (b) a Nelson-type estimate for a general nonlocal field theory in two dimensions, and (c) repeated Gaussian integration by parts in field space to obtain uniform control on the renormalized correlation functions. As a byproduct of our proof, in two and three dimensions we also extend the results on the mean-field limit from \cite{LNR3, FKSS_2020} to unbounded interaction potentials satisfying the optimal integrability conditions proposed by Bourgain \cite{bourgain1997invariant}.
\end{abstract}

\section{Introduction}

\subsection{Overview of Euclidean field theory} \label{sec:overview}
  
A Euclidean field theory of a scalar field on a domain $\Lambda \subset \R^d$ is specified by a formal probability measure on a space of fields\footnote{Rigorously, the space of fields is the Schwartz distribution space $\cal S'(\Lambda, \R^K)$.} $\phi \col \Lambda \to \R^K$ given by
\begin{equation} \label{mu_def}
\mu(\dd \phi) = \frac{1}{c} \, \ee^{-S(\phi)} \, \r D \phi\,,
\end{equation}
where $\r D \phi = \prod_{x \in \Lambda} \dd \phi(x)$ is the formal uniform measure on the space of fields, and $S$ is the action. The latter is typically the integral over $\Lambda$ of a local function of the field $\phi$ and its gradient. One of the simplest field theories with nontrivial interaction is the $K$-component \emph{Euclidean $\phi^4_d$ theory}, whose action is given by
\begin{equation} \label{S_def}
S(\phi) \deq - \int_\Lambda \dd x \, \phi(x) \cdot (\theta + \Delta / 2) \phi(x) + \frac{\lambda}{2} \int_\Lambda \dd x \, \abs{\phi(x)}^4\,,
\end{equation}
where $\theta$ is a constant,
$\lambda$ is a coupling constant, $\Delta$ is the Laplacian on $\Lambda$ with appropriate boundary conditions, and $\abs{\cdot}$ denotes the Euclidean norm on $\R^K$.

Euclidean field theories originally arose in high-energy physics in $d = 4$ space-time dimensions, through an analytic continuation of the time variable of the quantum field $\phi$, which replaces the Minkowski space-time metric with a Euclidean one \cite{schwinger1958euclidean, nakano1959quantum}. Subsequently, Euclidean field theories have proven of great importance in statistical mechanics in $d \leq 3$ dimensions, in particular through their connection with the theory of phase transitions and critical phenomena. The works \cite{symanzik1966euclidean, symanzik1969euclidean} recognized the analogy between Euclidean field theories and classical statistical mechanics, which was followed by a purely probabilistic formulation of Euclidean field theories in \cite{nelson1973free, nelson1973probability}. The rigorous study of field theories of the form \eqref{mu_def} has been a major topic in mathematical physics since the late sixties; see e.g. \cite{glimm2012quantum, Simon74} as well as the more recent \cite{Hairer2016} for reviews.

Euclidean field theories also play a central role in the theory of stochastic nonlinear partial differential equations. Formally, \eqref{mu_def} is the stationary measure of the stochastic nonlinear heat equation
\begin{equation*}
\partial_t \phi = - \frac{1}{2} \nabla S(\phi) + \xi = (\theta + \Delta / 2) \phi - \lambda \abs{\phi}^2 \phi + \xi
\end{equation*}
with space-time white noise $\xi$, which can be regarded as the Langevin equation for a time-dependent field $\phi$ with potential given by the action $S$ in \eqref{S_def}.
Constructing measures of the form \eqref{mu_def} by exhibiting them as stationary measures of stochastic nonlinear partial differential equations is the goal of stochastic quantization developed in \cite{nelson1966derivation, faris1982large, parisi1981perturbation, lebowitz1988statistical}. See for instance \cite{hairer2014theory,gubinelli2015paracontrolled, kupiainen2016renormalization,da2003strong} for recent developments.

In addition, Euclidean field theories are of great importance in the probabilistic Cauchy theory of nonlinear dispersive equations. For $K = 2$ and identifying $\R^2 \equiv \C$, the measure \eqref{mu_def} is formally invariant under the nonlinear Schrödinger (NLS) equation
\begin{equation} \label{NLS}
\ii \partial_t \phi = \frac{1}{2} \nabla S(\phi) = -(\theta + \Delta / 2) \phi + \lambda \abs{\phi}^2 \phi \,.
\end{equation}
Gibbs measures \eqref{mu_def} for the NLS \eqref{NLS} have proven a powerful tool for constructing almost sure global solutions with random initial data of low regularity. One considers the flow of the NLS \eqref{NLS} with random initial data distributed according to \eqref{mu_def}. The invariance of the measure \eqref{mu_def} under the NLS flow (in low dimensions) serves as a substitute for energy conservation, which is not available owing to the low regularity of the solutions.
See for instance the seminal works \cite{bourgain1994periodic,bourgain1994_z,bourgain1996_2d,bourgain1997invariant,bourgain2000_infinite_volume,lebowitz1988statistical} as well as \cite{Carlen_Froehlich_Lebowitz_2016, Carlen_Froehlich_Lebowitz_Wang_2019,GLV1,GLV2,McKean_Vaninsky2,NORBS,NRBSS,BourgainBulut4,BrydgesSlade,BurqThomannTzvetkov,FOSW,Thomann_Tzvetkov} and references given there for later developments.

The main difficulty in all of the works cited above is that, in dimensions larger than one, under the measure \eqref{mu_def} the field $\phi$ is almost surely a distribution of negative regularity, and hence the interaction term
\begin{equation*}
V(\phi) \deq \frac{\lambda}{2} \int_\Lambda \dd x \, \abs{\phi(x)}^4
\end{equation*}
in \eqref{S_def} is ill-defined. This is an \emph{ultraviolet} problem: a divergence for large wave vectors (i.e.\ spatial frequencies) producing small-scale singularities in the field. As the dimension $d = 1,2,3$ increases, the difficulty of making sense of the measure in \eqref{mu_def} increases significantly.

To outline the rigorous construction of the measure in \eqref{mu_def}, we introduce an ($\R^K$-valued) Gaussian free field on $\Lambda$ whose law $\P$ is the Gaussian measure on the space of fields with mean zero and covariance $(2 \kappa - \Delta)^{-1}$, where $\kappa > 0$ is some positive constant. Then we write
\begin{equation} \label{mu_P}
\mu(\dd \phi) = \frac{1}{\zeta} \ee^{-V(\phi)} \, \P(\dd \phi)
\end{equation}
for some normalization constant $\zeta > 0$. For $d = 1$, the right-hand side of \eqref{mu_P} makes sense as is, since, under 
$\P$, the field $\phi$ is almost surely a continuous function and hence $V(\phi)$ is almost surely nonnegative and finite. This provides a simple construction of \eqref{mu_def} for $d = 1$ and $\kappa = -\theta$. See e.g.\ \cite{Simon74} for a careful treatment.

For $d > 1$, the simple approach just sketched no longer works, since $\phi$ is almost surely of negative regularity, and the interaction term $V(\phi)$ has to be renormalized by subtracting suitably chosen infinite counterterms. The most elementary renormalization is Wick ordering of $V(\phi)$ with respect to the Gaussian measure $\P$, denoted by $\wick{\cdot}$ (see Appendix \ref{sec:wick}). After Wick ordering, the interaction term becomes
\begin{align} 
V(\phi) &= \frac{\lambda}{2} \int_\Lambda \dd x \, \wick{\abs{\phi(x)}^4}
\notag \\ \label{Wick_intro}
&= \frac{\lambda}{2} \int_\Lambda \dd x \pbb{\abs{\phi(x)}^4 - \frac{4 + 2K}{K} \, \E \qb{\abs{\phi(x)}^2}\, \abs{\phi(x)}^2 + \frac{K+2}{K} \, \E \qb{\abs{\phi(x)}^2}^2}\,,
\end{align}
where $\E$ denotes expectation with respect to $\P$. The second and third terms on the right-hand side of \eqref{Wick_intro} are infinite counterterms, which may be regarded as mass and energy renormalizations, respectively. Hence, for $d > 1$, the constant $\theta$ in \eqref{S_def} is formally $-\infty$. To make rigorous sense of \eqref{Wick_intro} in dimension $d=2$, one has to mollify $\phi$ by convolving it with an approximate delta function, and then show that as the mollifier is removed, the right-hand side of \eqref{Wick_intro} converges in $L^2(\P)$ (see Section \ref{sec:classical_field} below for more details). It is not hard to show that for $d = 2$ the renormalization on the right-hand side of \eqref{Wick_intro} yields a well-defined interaction term $V(\phi) \in L^2(\P)$. However, owing to the mass renormalization in \eqref{Wick_intro}, after Wick ordering, $V(\phi)$ is unbounded from below, and the integrability of $\ee^{-V(\phi)}$ with respect to $\P$ represents a nontrivial problem, which was successfully solved in the landmark work of Nelson \cite{nelson1973free, nelson1973probability}.

For $d = 3$, it is easy to see that, even after Wick ordering, $V(\phi)$ almost surely does not exist in $L^2(\P)$. Further, a simple expansion of the exponential $\ee^{-V(\phi)}$ in the two-point correlation function, $\frac{1}{\zeta} \int  \phi_i(x) \, \phi_j(y) \, \ee^{-V(\phi)} \, \P(\dd \phi)$, yields a divergent term already at second order, associated with the so-called sunset diagram of quantum field theory. Hence, a further mass renormalization of $V(\phi)$ is required, which results in a measure 
$\mu$ that is mutually singular with respect to the free-field Gaussian measure $\P$. The mathematically rigorous construction of the Euclidean $\phi^4_3$ theory, first achieved in the seminal work of Glimm and Jaffe \cite{glimm1973positivity},
is one of the major successes of the constructive field theory programme started in the sixties. By now, several different constructions of this theory have been developed, based on, first, phase cell expansions \cite{glimm1973positivity,feldman1976wightman,park1977convergence,glimm2012quantum}, then on renormalization group methods \cite{brydges1995short,gawedzki1985asymptotic,benfatto1978some}, later on correlation inequalities \cite{brydges1983new}, and, most recently, on paracontrolled calculus \cite{catellier2018paracontrolled, gubinelli2018pde}, as well as variational methods \cite{barashkov2020variational}.

For $d \geq 4$,  it is expected, and indeed proven in some cases,
that the $\phi^4_d$ theory is trivial: any renormalization of $V(\phi)$ resulting in a well-defined measure $\mu$ yields a (generalized free-field) Gaussian measure.
For $d \geq 5$, this triviality was proven in \cite{aizenman1982geometric, frohlich1982triviality}. Recently, the triviality of $\phi^4_4$ for $K = 1$ was established in \cite{aizenman2021marginal}.

\subsection{The $\phi^4_2$ theory as a limit of a Bose gas} \label{sec:overview_limit}
In this paper, we establish for the first time a relationship beween a local Euclidean field theory in dimension larger than one and an interacting quantum gas. We show that the complex Euclidean $\phi^4_2$ theory describes the limiting behaviour of an interacing Bose gas at positive temperature. The limiting regime is a high-density limit in a box\footnote{For conciseness, in this paper we assume that $\Lambda$ is the unit torus, although the actual shape of $\Lambda$ and the boundary conditions are not essential for our proof; see Remark \ref{rem:general_Lambda} below.} of fixed size, where the range of the interaction is much smaller than the diameter of the box. This result provides a rigorous derivation of the $\phi^4_2$ theory starting from a realistic model of statistical mechanics. Viewed differently, we introduce a new regularization of the $\phi^4_2$ theory in terms of an interacting Bose gas, in addition to the commonly used smooth mollifiers or lattice approximations.

To explain our result more precisely, we recall that a quantum system of $n$ spinless non-relativistic bosons of mass $m$ in $\Lambda$ is described by the Hamiltonian
\begin{equation*}
\bb H_n \deq - \sum_{i = 1}^n \frac{\Delta_i}{2m} + \frac{g}{2} \sum_{i,j = 1}^n v(x_i - x_j)
\end{equation*}
acting on the space $\cal H_n$ of square-integrable wave functions that are symmetric in their arguments $x_1, \dots, x_n$ and supported in $\Lambda^n$. Here $\Delta_i$ is the Laplacian in the variable $x_i$, $g$ is a coupling constant, and $v$ is a repulsive (i.e.\ with nonnegative Fourier transform) two-body interaction potential. We consider a system in the grand canonical ensemble at positive temperature, characterized by the density matrix
\begin{equation} \label{gc_density}
\frac{1}{Z} \bigoplus_{n \in \N} \ee^{-\beta(\bb H_n - \theta n)}
\end{equation}
acting on Fock space $\cal F = \bigoplus_{n \in \N} \cal H_n$, where $\beta < \infty$ is the inverse temperature, $\theta$ is the chemical potential, and $Z$ is a normalization factor. We refer to e.g.\ \cite{lieb2005mathematics,benedikter2016effective} for reviews on interacting Bose gases.

The limiting regime of this paper is obtained by introducing two parameters, $\nu, \epsilon > 0$, where $\nu = \frac{\beta}{m} = \sqrt{\beta g}$, and the potential $v$ is taken to be an approximate delta function of range $\epsilon$. We suppose that $\nu,\epsilon \to 0$ under the technical constraint $\epsilon \geq \exp\pb{- (\log \nu^{-1})^{1/2 - c}}$, for some constant $c > 0$. We show that there exists a suitable renormalization of the chemical potential $\theta \equiv \theta_\nu^\epsilon$ such that the reduced density matrices of the quantum state \eqref{gc_density} converge to the correlation functions of the field theory \eqref{mu_def}, \eqref{S_def}.

Previously, this result was obtained for $d = 1$ in \cite{lewin2015derivation, lewin2018gibbs, frohlich2017gibbs}, where, as explained in Section \ref{sec:overview}, no renormalization is required. In higher dimensions $d = 2,3$, the \emph{mean-field limit} was investigated in \cite{lewin2018classical, LNR3, FKSS_2020, frohlich2017gibbs, frohlich2019microscopic,sohinger2019microscopic}, where the parameter $\epsilon$ was fixed as $\nu \to 0$. The resulting limiting field theory differs from $\phi^4_d$ in that the interaction term $V(\phi)$ is nonlocal, given by a convolution with a bounded two-body interaction potential $v$. This nonlocal interaction term is considerably less singular than the local one of $\phi^4_2$ theory. 
%The current paper is the first to establish the convergence of an interacting Bose gas to a local Euclidean field theory in dimensions larger than one.
The stronger singularity of $V(\phi)$ requires additional renormalization as compared to the nonlocal potential. This makes the local problem significantly more difficult than the nonlocal one. In particular, the renormalized interaction term $V(\phi)$ is unbounded from below, whereas in the nonlocal regime it is almost surely nonnegative.

The above lower bound on the range of the interaction $\epsilon$ is technical in nature (see Remark \ref{rem:range} below for a more detailed discussion on its origin). We expect that it can be improved, however at the cost of a considerably more complicated argument. In this paper we wish to emphasize that, using relatively simple methods, one can establish a connection between local Euclidean field theories and interacting quantum gases. We leave quantitative improvements of such results to future work.

Using our methods, we also extend the results on the mean-field limit for a nonlocal interaction term $V(\phi)$ in 
\cite{LNR3, FKSS_2020} from bounded two-body interaction potentials, $v$, to unbounded ones. 
Our integrability assumptions on the function $v$ are optimal, as given in \cite{bourgain1997invariant}. 
We refer to Section \ref{Extension of results to unbounded nonlocal interactions} below for details.

\subsection{Outlook}
The close relationship between Euclidean field theory and interacting Bose gases established in this paper leads 
to a web of conjectures concerning properties of 
$\phi^4_d$ theories inspired by results on Bose gases and, conversely and perhaps more interestingly, properties of 
interacting Bose gases inspired by known results on $\phi^4_d$ theories. In the following, we outline some of these conjectures.

We remark that an analysis very similar to the one in this paper yields an analogous relationship between 
the $\phi^4_2$ theory with $N$ complex components (that is, with $K = 2N$ real components) and an interacting Bose 
gas with $N$ species of identical Bosons; see Remark \ref{rem:species} below. 

\begin{enumerate}
\item
It is known (see \cite{berezin1961remark, albeverio2012solvable, geiler1995potentials}) that systems of non-relativisitic 
quantum particles moving in $d$-dimensional Euclidean 
space and interacting through  delta function potentials are equivalent to systems of 
\textit{free} (i.e.\ non-interacting) particles, provided that $d\geq 4$. Given the connection between interacting Bose
 gases and  $\phi^4_d$ theories exhibited in this paper, this suggests that the latter theories are
equivalent to free (i.e.\ Gaussian) field theories in dimensions $d\geq 4$, for a field $\phi$ with an arbitrary number 
of complex components.

\item
In $d=3$ dimensions, $\phi^4_d$ theories with $N$ complex components are known to undergo 
a phase transition accompanied by spontaneous $O(2N)$-symmetry breaking and the emergence of Goldstone 
bosons \cite{frohlich1976infrared}; 
(see also \cite{frohlich1983berezinskii}, as well as \cite{garban2021continuous} for recent results on related lattice models with disorder). 
Given our results for $d = 2$, as well as analogous results for $d = 3$ to appear in a future paper, the existence of a phase transition in the Euclidean field theory strongly suggests that
 translation-invariant Bose gases with repulsive two-body interactions in three dimensions exhibit 
 Bose-Einstein condensation accompanied by the appearance of massless quasi-particles with approximately relativistic dispersion at small wave vectors.

In two dimensions, the Mermin-Wagner theorem implies that such phase transitions do not exist, and the 
$O(2N)$-symmetry remains unbroken for arbitrary values of the coupling constant $\lambda$. A similar result is expected to hold for two-dimensional interacting Bose gases (and easy to see for ideal Bose gases).

\item
The \textit{one-component} complex $\phi^4_d$ - theory in $d = 2$ dimension is expected to exhibit a \textit{Berezinskii-Kosterlitz-Thouless transition.} This is 
rigorously known for the classical $XY$-model on a square lattice, which is the limiting theory of lattice 
$\phi^4_2$-theory, as $\lambda$ tends to $\infty$, with $\kappa=2\lambda$; see \cite{frohlich1981kosterlitz, frohlich1983berezinskii}.
 In view of the results proven in this paper, this suggests that two-dimensional Bose gases of \textit{one species} of 
 particles might exhibit a transition to a low-temperature phase where reduced density matrices exhibit 
 \textit{slow decay}, analogous to the Berezinskii-Kosterlitz-Thouless transition.
 
 In contrast, for a two-dimensional $\phi^4_d$ theory with two or more complex components, with an $O(2N)$-symmetry, it is expected that connected correlations exhibit exponential decay for arbitrary values of 
 the coupling constant $\lambda$; see \cite{polyakov1975interaction}. This suggests that two-dimensional Bose gases of several species 
 of identical particles exhibit rapidly decaying correlations at all temperatures and densities.
 
 \item
 For $\phi^4_d$ theories with $N$ complex components, 
  there exists a systematic $1/N$-expansion; see \cite{itzykson1991statistical1, itzykson1991statistical2} and \cite[Chapter 30]{zinn2021quantum}. The model obtained in the limit, as 
  $N\rightarrow \infty$, is the spherical model, which is exactly solved. It is tempting to extend 
 the method of the $1/N$-expansion to Bose gases of $N$ species of identical particles interacting 
 through two-body interactions of strength $O(1/N)$. The model obtained in the limit,
 as $N \rightarrow \infty$, appears to be equivalent to an ideal Bose gas, but with a renormalized chemical
 potential. In attempting to prove Bose-Einstein condensation for translation-invariant
interacting Bose gases, therefore, it seems judicious to begin by studying Bose gases with a large number of species of identical particles. The connection between Bose-Einstein condensation and phase transitions in classical field theory has been discussed in e.g.\ \cite{baym2001bose, holzmann2003condensate}.
\end{enumerate}

\section{Setup and results}

\subsection{Classical field theory} \label{sec:classical_field}
In this subsection we define the Euclidean field theory and its correlation functions. We note that the measure $\mu$  from \eqref{mu_def} can be formally viewed as the thermal equilibrium measure of a \textit{classical} field theory with Hamilton function given by $S(\phi)$ from \eqref{S_def}.
We work on the $d$-dimensional torus $\Lambda \deq [-1/2,1/2)^d$. We use the Euclidean norm $\abs{\cdot}$ for elements of $\Lambda$ regarded as a subset of $\R^d$. We use the shorthand $\int \dd x \, (\cdot)\deq \int_{\Lambda} \dd x \, (\cdot)$ to denote integration over $\Lambda$ with respect to Lebesgue measure. We abbreviate $\cal H \deq L^2(\Lambda; \C)$ and denote by $\scalar{\cdot}{\cdot}$ the inner product of the space $\cal H$, which is by definition linear in the second argument. On $\cal H$ we use the standard Laplacian $\Delta$ with periodic boundary conditions.

The classical free field $\phi$ is by definition the complex-valued Gaussian field with covariance $(\kappa - \Delta/2)^{-1}$, where $\kappa > 0$ is a constant. Explicitly, the free field may be constructed as follows.
We use the spectral decomposition $\kappa - \Delta/2 = \sum_{k \in \Z^d} \lambda_k u_k u_k^*$, with eigenvalues $\lambda_k > 0$ and normalized eigenfunctions $u_k \in \cal H$ (see also \eqref{lambda_k} below). Let $X = (X_k)_{k \in \Z^d}$ be a family of independent standard complex Gaussian random variables\footnote{We recall that $Z$ is a standard complex Gaussian if it is Gaussian and satisfies $\E Z = 0$, $\E Z^2 = 0$, and $\E \abs{Z^2} = 1$, or, equivalently, if it has law $\pi^{-1} \ee^{- \abs{z}^2} \dd z$ on $\C$, where $\dd z$ denotes Lebesgue measure.}, whose law and associated expectation are denoted by $\P$ and $\E$, respectively. The \emph{classical free field} is then given by
\begin{equation*}
\phi = \sum_{k \in \Z^d} \frac{X_k}{\sqrt{\lambda_k}} \, u_k\,,
\end{equation*}
which is easily seen to converge\footnote{In fact, an application of Wick's rule shows that the convergence holds in $L^m$ for any $m < \infty$.} in $L^2(\P)$ of the $L^2$-Sobolev space $H^{1 - d/2 - c}$ for any $c > 0$.

In order to define the interacting theory, it is necessary to regularize the field $\phi$ by convolving it with a smooth mollifier. To that end, choose a nonnegative function $\vartheta \col \R^d \to \R_+$ of rapid decay satisfying $\vartheta(0) = 1$, and for $0 < N < \infty$ define the regularized field
\begin{equation} \label{def_phi_N}
\phi_N \deq \sum_{k \in \Z^d} \frac{X_k}{\sqrt{\lambda_k}} \sqrt{\vartheta(k/N)}\, u_k\,,
\end{equation}
which is almost surely a smooth function on $\Lambda$. We define the regularized interaction
\begin{equation*}
V_N \deq \frac{1}{2} \int \dd x \, \wick{\abs{\phi_N(x)}^4}\,,
\end{equation*}
where $\wick{\cdot}$ denotes Wick ordering with respect to the Gaussian measure $\P$ (see Appendix \ref{sec:Wick}). Explicitly,
\begin{equation*}
\wick{\abs{\phi_N(x)}^4} = \abs{\phi_N(x)}^4 - 4 \E \qb{\abs{\phi_N(x)}^2} \, \abs{\phi_N(x)}^2 + 2 \E \qb{\abs{\phi_N(x)}^2}^2\,.
\end{equation*}
Here, the deterministic factor $\E \qb{\abs{\phi_N(x)}^2} = \sum_{k \in \Z^d} \frac{\vartheta(k/N)}{\lambda_k}$ diverges as $N \to \infty$ for $d > 1$.

For $d = 2$, using Wick's theorem, it is easy to see that $V_N$ converges as $N \to \infty$ in $L^2(\P)$ to a random variable, denoted by $V$, which does not depend on the choice of $\vartheta$. See e.g.\ \cite[Lemma 1.5]{frohlich2017gibbs} for details. The interacting field theory is given as the probability measure
\begin{equation} \label{field theory}
\frac{1}{\zeta} \,\ee^{-V} \, \dd \P \,, \qquad \zeta \deq \E[\ee^{-V}]\,.
\end{equation}
By the well-known Nelson bounds \cite{nelson1973probability, nelson1973free} mentioned in Section \ref{sec:overview}, $\ee^{-V}$ is integrable with respect to $\P$.

We characterize the interacting field theory through its correlation functions, defined as follows.
For $p \in \N$ and $\f x,\tilde{\f x} \in \Lambda^p$, we define the \emph{$p$-point correlation function} as
\begin{equation}
\label{gamma_p}
(\gamma_p)_{\f x,\tilde{\f x}} \deq \frac{1}{\zeta} \, \E \qB{\bar{\phi}(\tilde x_1) \cdots \bar{\phi}(\tilde x_p)\,\phi(x_1)\cdots \phi(x_p) \, \ee^{-V}}\,.
\end{equation}
which is the $2p$-th moment of the field $\phi$ under the probability measure \eqref{field theory}. This measure is sub-Gaussian, and is hence determined by its moments $(\gamma_p)_{p \in \N^*}$. (Note that any moment containing a different number of $\bar \phi$s and $\phi$s vanishes by invariance of the measure \eqref{field theory} under the gauge transformation $\phi \mapsto \alpha \phi$, where $\abs{\alpha} = 1$.)

As explained in \cite[Section 1.5]{FKSS_2020}, the correlation function $\gamma_p$ is divergent on the diagonal, even for the free field. Hence, for instance, it cannot be used to analyse the distribution of the mass density $\abs{\phi(x)}^2$. As in \cite[Section 1.5]{FKSS_2020}, we remedy this issue by introducing the \emph{Wick-ordered $p$-point correlation function}
\begin{equation}
\label{hat_gamma_p}
(\wh{\gamma}_p)_{\f x,\tilde{\f x}} \deq \frac{1}{\zeta}\, \E \qB{\wick{\bar{\phi}(\tilde x_1) \cdots \bar{\phi}(\tilde x_p)\,\phi(x_1)\cdots \phi(x_p)} \, \ee^{-V}}\,,
\end{equation}
which has a regular behaviour on the diagonal. The Wick-ordered correlation function \eqref{hat_gamma_p} can be expressed explicitly in terms of the correlation functions \eqref{gamma_p} and the correlation functions of the free field; see \eqref{hat_gamma_p_2} below.

\subsection{Quantum many-body system}
In this subsection we define the quantum many-body system and its reduced density matrices. For $n \in \N$, we denote by $P_n$ the orthogonal projection onto the symmetric subspace of $\cal H^{\otimes n}$; explicitly, for $\Psi_n \in \cal H^{\otimes n}$,
\begin{equation} \label{def_Pn}
P_n \Psi_n(x_1, \dots, x_n) \deq \frac{1}{n!} \sum_{\pi \in S_n} \Psi_n(x_{\pi(1)}, \dots, x_{\pi(n)})\,,
\end{equation}
where $S_n$ is the group of permutations on $\{1, \dots, n\}$. For $n \in \N^*$, we define the $n$-particle space as $\cal H_n \deq P_n \cal H^{\otimes n}$. We define Fock space as the Hilbert space $\cal F \equiv \cal F (\cal H) \deq \bigoplus_{n \in \N} \cal H_n$. We denote by $\tr_{\cal F}(X)$ the trace of an operator $X$ acting on $\cal F$. For $f \in \cal H$ we define the bosonic annihilation and creation operators $a(f)$ and $a^*(f)$ on $\cal F$ through their action on a dense set of vectors $\Psi = (\Psi_n)_{n \in \N} \in \mathcal{F}$ as
\begin{align}
\label{def_b2}
\pb{a(f) \Psi}_n(x_1, \dots, x_n) &= \sqrt{n+1} \int \dd x \, \bar f(x) \, \Psi_{n+1} (x,x_1, \dots, x_n)\,,
\\
\label{def_b1}
\pb{a^*(f) \Psi}_n(x_1, \dots, x_n) &= \frac{1}{\sqrt{n}} \sum_{i = 1}^n f(x_i) \Psi_{n - 1}(x_1, \dots, x_{i - 1}, x_{i+1}, \dots, x_n)
\,.
\end{align}
The operators $a(f)$ and $a^*(f)$ are unbounded closed operators on $\cal F$, and are each other's adjoints. They satisfy the canonical commutation relations
\begin{equation} \label{CCR_b}
[a(f), a^*(g)] = \scalar{f}{g} \, 1 \,, \qquad [a(f), a(g)] = [a^*(f), a^*(g)] =0\,,
\end{equation}
where $[X,Y] \deq XY - YX$ denotes the commutator. We regard $a$ and $a^*$ as operator-valued distributions and use the notations
\begin{equation} \label{phi_tau f}
a(f) = \int \dd x \, \bar f(x) \, a(x)\,, \qquad
a^*(f) = \int \dd x \, f(x) \, a^*(x)\,.
\end{equation}
The distribution kernels $a^*(x)$ and $a(x)$ satisfy the canonical commutation relations
\begin{equation} \label{CCR}
[a(x),a^*(\tilde x)] = \delta(x - \tilde x) \,, \qquad
[a(x),a(\tilde x)] = [a^*(x),a^*(\tilde x)] = 0\,.
\end{equation}
For $\nu > 0$, we define the free quantum Hamiltonian $H^{(0)} \equiv H^{(0)}_\nu$ through
\begin{equation} \label{free_Hamiltonian_H}
H^{(0)} \deq \nu \int \dd x \, a^*(x) ((\kappa - \Delta / 2) a) (x)\,.
\end{equation}

To describe the interaction potential of the Bose gas, we choose $v \col \R^d \to \R$ to be an even, smooth, compactly supported function of positive type\footnote{This means that the Fourier transform of $v$ is a positive measure. Note that we do not assume $v$ to be pointwise nonnegative.} whose integral is equal to one. For $\epsilon > 0$ we define the rescaled interaction potential on $\Lambda$ as
\begin{equation}
\label{v_epsilon}
v^\epsilon(x) = \sum_{n \in \Z^d} \frac{1}{\epsilon^d} \, v \pbb{\frac{x - n}{\epsilon}}\,.
\end{equation}
For $\epsilon, \nu > 0$ we define the interacting quantum Hamiltonian $H \equiv H^{\epsilon}_{\nu}$ through
\begin{equation} \label{def_H}
H \deq 
H^{(0)} + \frac{\nu^2}{2} \int \dd x \, \dd \tilde x \, a^*(x) a(x)  \, v^\epsilon(x - \tilde x) \,  a^*(\tilde x) a (\tilde x)
- \nu \alpha^\epsilon_\nu \int \dd x \, a^*(x) a(x)+ \theta^\epsilon_\nu 
\,,
\end{equation}
where $\alpha^\epsilon_\nu$ and  $\theta^\epsilon_\nu$ are real renormalization parameters that we shall define shortly in \eqref{alpha_beta_def} below.

Using \eqref{def_H}, the quantum grand canonical density matrix from \eqref{gc_density} can be expressed as the operator
\begin{equation*}
\eqref{gc_density} =  \frac{\ee^{-H}}{Z}\,, \qquad Z \deq \tr_{\cal F} (\ee^{-H})\,,
\end{equation*}
where $Z$ is the grand canonical partition function. Analogously, the free grand canonical partition function is
\begin{equation*}
Z^{(0)} \deq \tr_{\cal F}(\ee^{-H^{(0)}})\,.
\end{equation*}
We shall also use the relative partition function
\begin{equation} \label{Z^epsilon_quantum}
\cal Z \deq \frac{Z}{Z^{(0)}}\,.
\end{equation}

In order to define the renormalization parameters $\alpha^\epsilon_\nu$ and  $\theta^\epsilon_\nu$, we introduce the Green function $G$ of the free field $\phi$, i.e.\ the integral kernel of the operator $(\kappa - \Delta / 2)^{-1}$. Since $\kappa - \Delta/2$ is invariant under translations, we can write $G_{x,y} = G(x - y)$. Explicitly, in the sense of distributions,
\begin{equation*}
G(x - y) = \E [\phi(x) \bar \phi(y)]\,.
\end{equation*}
The Green function $G$ exhibits a logarithmic singularity at the origin (see Lemma \ref{lem:G_smoothness} below).
Moreover, we denote by
\begin{equation} \label{rho_nu_definition}
\varrho_\nu \deq \nu \tr_{\cal F} \pbb{a^*(0) a(0) \, \frac{\ee^{-H^{(0)}}}{Z^{(0)}}}
\end{equation}
the expected rescaled particle density in the free quantum state. Then we set
\begin{equation} \label{alpha_beta_def}
\alpha^\epsilon_\nu \deq \varrho_\nu + \tau^\epsilon\,, \qquad
\theta^\epsilon_\nu \deq \frac{1}{2} \varrho_\nu^2 + \tau^\epsilon \varrho_\nu - E^\epsilon\,,
\end{equation}
where
\begin{equation}
\label{tau_epsilon}
\tau^\epsilon \deq \int \dd x \, v^\epsilon(x) \, G(x)\,,
\qquad
E^\epsilon \deq \frac{1}{2} \int \dd x  \, v^\epsilon (x) \, G(x)^2\,.
\end{equation}
The parameter $\alpha^\epsilon_\nu$ describes a renormalization of the chemical potential, and $\theta^\epsilon_\nu$ corresponds to an energy renormalization.  As $\epsilon ,\nu \to 0$, the renormalization of the chemical potential behaves as $\alpha_\nu^\epsilon \to +\infty$. We remark that, using the quantities \eqref{rho_nu_definition} and \eqref{tau_epsilon}, we can rewrite the Hamiltonian \eqref{def_H} in the form
\begin{multline}
\label{Hamiltonian_H}
H = H^{(0)}  + \frac{1}{2} \int \dd x \, \dd \tilde x \, \pb{\nu a^*(x) a(x) - \varrho_\nu} \, v^\epsilon(x - \tilde x) \, \pb{ \nu a^*(\tilde x) a (\tilde x) - \varrho_\nu}
\\
- \tau^\epsilon \, \int \dd x \, \pb{\nu a^*(x) a(x) - \varrho_\nu} - E^\epsilon\,.
\end{multline}

Next, we define the \emph{$p$-particle reduced density matrix} as
\begin{equation} \label{Gamma_p}
(\Gamma_p)_{\f x,\tilde{\f x}} \deq
\tr_{\cal F} \Biggl(a^{*}(\tilde{x}_1) \cdots a^{*}(\tilde{x}_p)\,a(x_1)\cdots a(x_p)\,\frac{\ee^{-H}}{Z}\Biggr)\,.
\end{equation}
As for the correlation function \eqref{gamma_p} and its Wick-ordered version \eqref{hat_gamma_p}, we would like to replace \eqref{Gamma_p} with its Wick-ordered version. To that end, we regard the expressions \eqref{gamma_p} and \eqref{hat_gamma_p} as integral kernels of operators acting on $\cal H_p$, and observe that  (see \cite[Lemma A.4]{FKSS_2020})
\begin{equation}
\label{hat_gamma_p_2}
\wh{\gamma}_p =\sum_{k=0}^{p}\binom{p}{k}^2\,(-1)^{p-k}\,P_p(\gamma_k \otimes \gamma^{(0)}_{p-k})P_p\,,
\end{equation}
where $\gamma^{(0)}_{m}$ denotes the $m$-point correlation function from \eqref{gamma_p} with $V=0$. In analogy with \eqref{hat_gamma_p_2}, we therefore define the \emph{Wick-ordered $p$-particle reduced density matrix} as
\begin{equation}
\label{hat_Gamma_p_2}
(\wh{\Gamma}_p)_{\f x,\tilde{\f x}} \deq \sum_{k=0}^{p}\binom{p}{k}^2\,(-1)^{p-k}\,P_p(\Gamma_k \otimes \Gamma^{(0)}_{p-k})P_p\,,
\end{equation}
where $\Gamma^{(0)}_{m}$ denotes the $m$-particle reduced density matrix of the free grand canonical density matrix $\ee^{-H^{(0)}} / Z^{(0)}$. 
(For an interpretation of \eqref{hat_Gamma_p_2} as a result of Wick ordering \eqref{Gamma_p} with respect to the free field in the functional integral representation of quantum many-body theory, we refer the reader to the discussion in \cite[Section 1.7]{FKSS_2020}).

\subsection{Results}

We may now state our main result.

\begin{theorem} \label{thm:main}
Suppose that $d = 2$ and $\epsilon \equiv \epsilon(\nu)$ satisfies
\begin{equation} \label{epsilon_assumption}
\epsilon \geq \exp\pb{- (\log \nu^{-1})^{1/2 - c}}
\end{equation}
for some constant $c > 0$. Then as $\epsilon, \nu \to 0$ we have the convergence of the partition function
\begin{equation}
\cal Z \to \zeta
\end{equation}
and of the Wick-ordered correlation functions
\begin{equation}
\nu^p \, \wh \Gamma_p \overset{C}{\longrightarrow} \wh \gamma_p
\end{equation}
for all $p \in \N$, where $\overset{C}{\longrightarrow}$ denotes convergence in the space of continuous functions on $\Lambda^p \times \Lambda^p$ with respect to the supremum norm.
\end{theorem}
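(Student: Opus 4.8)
The plan is to interpolate between the quantum gas and the local $\phi^4_2$ theory through an auxiliary \emph{nonlocal} classical field theory. For $\epsilon > 0$ set
\[
V^\epsilon \deq \frac{1}{2} \int \dd x \, \dd \tilde x \, v^\epsilon(x - \tilde x) \, \wick{\abs{\phi(x)}^2} \, \wick{\abs{\phi(\tilde x)}^2} - \tau^\epsilon \int \dd x \, \wick{\abs{\phi(x)}^2} - E^\epsilon ,
\]
with $\tau^\epsilon$, $E^\epsilon$ as in \eqref{tau_epsilon}; this is precisely the interaction functional that appears in the functional-integral representation of $\cal Z$ in the regime $\nu \to 0$ with $\epsilon$ fixed. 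Write $\zeta^\epsilon \deq \E[\ee^{-V^\epsilon}]$ and let $\gamma^\epsilon_p$, $\wh\gamma^\epsilon_p$ be the corresponding (Wick-ordered) correlation functions. I will prove two statements and then combine them: \textbf{(A)} as $\epsilon \to 0$ one has $\zeta^\epsilon \to \zeta$ and $\wh\gamma^\epsilon_p \to \wh\gamma_p$ in supremum norm, with a quantitative rate; and \textbf{(B)} for each fixed $\epsilon > 0$, as $\nu \to 0$, the relative partition function $\cal Z$ converges to $\zeta^\epsilon$ and $\nu^p \wh\Gamma_p$ converges to $\wh\gamma^\epsilon_p$, with an error that is quantitative in $\nu$ but that \emph{deteriorates} as $\epsilon \to 0$. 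Letting $\epsilon \equiv \epsilon(\nu) \to 0$ slowly enough — which is exactly what \eqref{epsilon_assumption} ensures — makes both errors vanish at once.

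For step \textbf{(A)}, the starting point is that $V^\epsilon \to V$ in $L^2(\P)$ as $\epsilon \to 0$ (with a power-law rate), which follows from Wick's theorem and the logarithmic singularity of $G$ at the origin (Lemma \ref{lem:G_smoothness}): the counterterms $\tau^\epsilon$ and $E^\epsilon$ are tuned precisely so that the divergent pieces cancel. To pass from $L^2(\P)$-convergence of the interaction to convergence of $\zeta^\epsilon$ and of the correlation functions one needs exponential integrability that is \emph{uniform} in $\epsilon$, namely a Nelson-type bound $\E[\ee^{-(1 + \delta) V^\epsilon}] \leq C_\delta < \infty$ uniformly in small $\epsilon$; this holds because each $v^\epsilon$ is of positive type and is the ``Nelson-type estimate for a general nonlocal field theory in two dimensions'' of the abstract — it is what controls the measure despite $V^\epsilon$ being unbounded below. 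For uniform bounds on $\gamma^\epsilon_p$ (needed for convergence in supremum norm, not merely weakly) I would use repeated Gaussian integration by parts in field space: differentiating $\ee^{-V^\epsilon}$ against $\P$ expresses $\gamma^\epsilon_p$ as a finite sum of lower-order correlation functions paired with derivatives $D^k V^\epsilon$ and factors of the propagator $G$; iterating the recursion and using the Nelson bound to absorb the combinatorial weight yields $\epsilon$-uniform bounds, and together with the $L^2(\P)$-convergence, convergence of $\gamma^\epsilon_p$ (off the diagonal, where they are finite). The algebraic identity \eqref{hat_gamma_p_2} relating $\wh\gamma_p$ to the $\gamma_k$ and free-field correlations then converts this into $\wh\gamma^\epsilon_p \to \wh\gamma_p$ uniformly on $\Lambda^p \times \Lambda^p$, the diagonal singularity having been removed by the Wick ordering.

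For step \textbf{(B)}, I would use the functional-integral (Feynman--Kac) representation of $\cal Z$ and of $\nu^p \Gamma_p$ over $\cal H$-valued Brownian loops of time-length of order one, as in \cite{FKSS_2020, LNR3}, in which $\nu$ plays the role of a semiclassical parameter; the Hamiltonian in the form \eqref{Hamiltonian_H} shows that the interaction functional of the loop is exactly $V^\epsilon$ applied to the loop. As $\nu \to 0$ the loop measure concentrates on time-independent configurations, whose law is the classical free field $\P$, and the formal saddle-point value of the functional integral is $\zeta^\epsilon$ (resp.\ $\gamma^\epsilon_p$); the quantitative version is the ``infinite-dimensional saddle point argument using continuity properties of Brownian paths'' — one bounds the time-fluctuation $\sup_t \normb{\phi(t) - \bar\phi}$ of the Brownian loop (in the relevant Sobolev norm) by its modulus of continuity, of order $\nu^{1/2}(\log \nu^{-1})^{1/2}$, and feeds this estimate into the interaction using, once more, integration by parts and the uniform Nelson bound. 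Since $V^\epsilon$ becomes increasingly singular as $\epsilon \to 0$ — the relevant norms of $v^\epsilon$, as well as $\tau^\epsilon$ and $E^\epsilon$, diverge like powers of $\log \epsilon^{-1}$ — the resulting error has the schematic form $\nu^\eta \, \ee^{O((\log \epsilon^{-1})^2)}$ for some $\eta > 0$, and one checks that under \eqref{epsilon_assumption} this tends to zero because $(\log \epsilon^{-1})^2 \leq (\log \nu^{-1})^{1 - 2c} = o(\log \nu^{-1})$. Finally \eqref{hat_Gamma_p_2} upgrades the convergence of $\nu^k \Gamma_k$ (handled directly away from the diagonal, where they are finite) to the asserted uniform convergence of the Wick-ordered $\nu^p \wh\Gamma_p$.

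The main obstacle is precisely the tension between steps \textbf{(A)} and \textbf{(B)}: shrinking $\epsilon$ improves the approximation of $\phi^4_2$ by the nonlocal theory but simultaneously worsens every estimate in the saddle-point analysis, since the interaction — already unbounded below after renormalization — grows more singular. Carrying this out requires (i) a Nelson-type bound that is genuinely uniform in $\epsilon$ and robust enough to survive the perturbation of the interaction by the Brownian time-fluctuations, and (ii) tracking the sharp $\epsilon$-dependence of the saddle-point error so that a fixed power of $\nu$ beats it under the weakest possible assumption on $\epsilon(\nu)$; the hypothesis \eqref{epsilon_assumption} is simply the outcome of this bookkeeping (and is almost certainly not optimal). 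A secondary difficulty is that both steps must be run at the level of supremum norms of kernels — using the Wick-ordering reshuffling \eqref{hat_gamma_p_2}, \eqref{hat_Gamma_p_2} and elliptic regularity for $G$ to ensure the relevant kernels are continuous — rather than merely in a weak or Hilbert--Schmidt topology.
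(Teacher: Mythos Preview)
Your two-step decomposition through an intermediate nonlocal theory is exactly the paper's strategy (Propositions~\ref{prop:step1} and~\ref{prop:step2}), and you have correctly identified the main ingredients of each step --- $L^2(\P)$-convergence of the interaction, a uniform Nelson-type bound, Gaussian integration by parts for \textbf{(A)}; a quantitative functional-integral saddle point using Brownian continuity for \textbf{(B)}; and the error form $\nu^\eta\,\ee^{C(\log\epsilon^{-1})^2}$ that produces the constraint~\eqref{epsilon_assumption}.

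A few technical points where the paper's execution differs from your sketch. First, the paper uses \emph{two} intermediate interactions: your $V^\epsilon$ (called $W^\epsilon$ in the paper) and a fully Wick-ordered version $\frac12\int v^\epsilon(x-\tilde x)\,\wick{|\phi(x)|^2|\phi(\tilde x)|^2}$. The uniform Nelson bound is proven for the latter, where the Wick structure makes the deterministic lower bound clean, and then transferred to $W^\epsilon$ by observing that $W^\epsilon-V^\epsilon$ is small in $L^2(\P)$ and lies in the \emph{second} chaos, so that hypercontractivity controls $\ee^{V^\epsilon-W^\epsilon}$ in all $L^p$. A direct Nelson argument for your $V^\epsilon$ (with the $\tau^\epsilon,E^\epsilon$ counterterms) would be awkward. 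Second, rather than proving convergence of $\gamma_p^\epsilon$ off the diagonal and then invoking~\eqref{hat_gamma_p_2} --- which would require showing that the diagonal singularities cancel \emph{uniformly in $\epsilon$}, a delicate point you have not addressed --- the paper applies integration by parts directly to the Wick-ordered correlator, obtaining $\zeta\,(\wh\gamma_p)_{\f x,\tilde{\f x}}=\E\bigl[\bar L_{\tilde x_1}\cdots\bar L_{\tilde x_p}L_{x_1}\cdots L_{x_p}\ee^{-V}\bigr]$ with $L_x=\int G(x-y)\tfrac{\delta}{\delta\bar\phi(y)}$; the derivatives produce only bounded factors and the diagonal never appears. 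Third, for step~\textbf{(B)} the paper does not use a Nelson bound at all: it uses the Hubbard--Stratonovich representation of~\cite{FKSS_2020}, where integrability is automatic from $\re F_2\le 0$, and the large factor $\ee^{C(\log\epsilon^{-1})^2}$ enters as the explicit prefactor $\ee^{T^\epsilon}$, $T^\epsilon=(\tau^\epsilon)^2/2+E^\epsilon$, in front of an oscillatory integral over the auxiliary field.
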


We refer to \cite[Section 1.5]{FKSS_2020} for an in-depth discussion on applications of Theorem \ref{thm:main}. In particular, Theorem \ref{thm:main} yields the following result for unrenormalized correlation functions.

\begin{corollary}
Under the assumptions of Theorem \ref{thm:main},
\begin{equation*}
\nu^p \Gamma_p \overset{L^r}{\longrightarrow} \gamma_p
\end{equation*}
for all $p \in \N$ and $r < \infty$, where $\overset{L^r}{\longrightarrow}$ denotes convergence in the $L^r(\Lambda^p \times \Lambda^p)$-norm.
\end{corollary}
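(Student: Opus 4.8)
The plan is to deduce the corollary from Theorem \ref{thm:main} by inverting the Wick-ordering relations \eqref{hat_gamma_p_2} and \eqref{hat_Gamma_p_2}. Since $\gamma^{(0)}_\bullet$ and $\Gamma^{(0)}_\bullet$ are the moment sequences of the Gaussian free field and the quasi-free free quantum state---in particular they are permanents of $\gamma^{(0)}_1 = G$ and of $\Gamma^{(0)}_1$ respectively---the map $\gamma \mapsto \wh\gamma$ (resp.\ $\Gamma \mapsto \wh\Gamma$) defined by those identities is invertible, and a short combinatorial computation (or the quasi-free structure of the free theory, cf.\ \cite[Appendix A]{FKSS_2020}) shows that the inverse is obtained by dropping the sign:
\begin{gather*}
\gamma_p = \sum_{k=0}^p \binom pk^2\, P_p\pb{\wh\gamma_k \otimes \gamma^{(0)}_{p-k}}\, P_p \,, \\
\nu^p \Gamma_p = \sum_{k=0}^p \binom pk^2\, P_p\pb{(\nu^k \wh\Gamma_k) \otimes (\nu^{p-k}\Gamma^{(0)}_{p-k})}\, P_p \,.
\end{gather*}
Subtracting these, using that $P_p$ is an average of permutation operators and hence a contraction on $L^r(\Lambda^p)$ for every $r$, and that $\norm{A \otimes B}_{L^r} = \norm{A}_{L^r}\norm{B}_{L^r}$ by Fubini, the proof reduces---after telescoping each summand---to controlling, for each $k \leq p$ and each $r < \infty$, the quantities $\normb{(\nu^k\wh\Gamma_k - \wh\gamma_k) \otimes (\nu^{p-k}\Gamma^{(0)}_{p-k})}_{L^r}$ and $\normb{\wh\gamma_k \otimes (\nu^{p-k}\Gamma^{(0)}_{p-k} - \gamma^{(0)}_{p-k})}_{L^r}$.

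Here $\nu^k\wh\Gamma_k - \wh\gamma_k \to 0$ in $L^\infty$, hence in every $L^r$ (as $\abs\Lambda = 1$), by Theorem \ref{thm:main}, and $\wh\gamma_k$ is continuous, hence bounded; so it remains to establish the purely free-theoretic statement that, for every $m$ and every $r < \infty$, the rescaled free reduced density matrices $\nu^m\Gamma^{(0)}_m$ are bounded in $L^r(\Lambda^m \times \Lambda^m)$ uniformly in $\nu$ and converge there to $\gamma^{(0)}_m$. Because the free quantum state is quasi-free, $\nu^m\Gamma^{(0)}_m$ is the permanent of the rescaled one-particle density matrix $\nu\Gamma^{(0)}_1 = \nu\pb{\ee^{\nu(\kappa - \Delta/2)} - 1}^{-1}$, while $\gamma^{(0)}_m$ is the permanent of $G = (\kappa - \Delta/2)^{-1}$; by Hölder's inequality the claim therefore reduces to the case $m = 1$. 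For $m = 1$, $\nu\Gamma^{(0)}_1$ is translation invariant with Fourier coefficients $\nu/(\ee^{\nu\lambda_k} - 1) \in [0,\lambda_k^{-1}]$, so it is of positive type and thus $\absb{(\nu\Gamma^{(0)}_1)(x)} \leq (\nu\Gamma^{(0)}_1)(0) = \varrho_\nu = \OO(\log\nu^{-1})$; splitting the Fourier series at $\abs k \sim \abs x^{-1}$ and summing by parts for the high frequencies yields the uniform-in-$\nu$ bound $\absb{(\nu\Gamma^{(0)}_1)(x)} \lesssim 1 + \log\abs x^{-1}$, which lies in every $L^r(\Lambda)$; combined with Lemma \ref{lem:G_smoothness} this gives uniform $L^r$-boundedness of $\nu\Gamma^{(0)}_1$ and of $\nu\Gamma^{(0)}_1 - G$. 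Convergence $\nu\Gamma^{(0)}_1 \to G$ in every $L^r$ then follows by interpolating this uniform $L^{r'}$-bound with the elementary $L^2$-estimate $\norm{\nu\Gamma^{(0)}_1 - G}_{L^2}^2 = \sum_k \pb{\lambda_k^{-1} - \nu(\ee^{\nu\lambda_k} - 1)^{-1}}^2 = \OO(\nu)$.

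Assembling the pieces: each telescoped term above is a product of one factor tending to $0$ in $L^r$ and one factor bounded in $L^{r'}$ uniformly in $\nu$, so by Hölder it tends to $0$ in $L^{\tilde r}(\Lambda^p \times \Lambda^p)$ for every $\tilde r < \infty$; summing the finitely many values of $k$ gives the corollary. The only point that is not soft is the uniform $L^r$-domination of the free two-point function $\nu\Gamma^{(0)}_1$ through the logarithmic singularity of $G$, i.e.\ the pointwise bound by $1 + \log\abs x^{-1}$ uniformly in the temperature---but this is a short, classical estimate on the free Bose gas, entirely decoupled from the interacting analysis behind Theorem \ref{thm:main}.
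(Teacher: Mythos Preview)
Your proof is correct and follows exactly the route the paper has in mind: the paper does not give its own proof of the corollary but simply refers to \cite[Section 1.5]{FKSS_2020}, where precisely this argument---inverting \eqref{hat_gamma_p_2}--\eqref{hat_Gamma_p_2} to recover $\gamma_p,\Gamma_p$ from their Wick-ordered versions, then reducing via the quasi-free (permanent) structure of the free correlation functions to the one-body convergence $\nu\Gamma^{(0)}_1 \to G$ in $L^r$---is spelled out. Your treatment of the one-body case (uniform logarithmic pointwise domination plus $L^2$-convergence and interpolation) is the standard way to carry out that last step.
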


Another application of Theorem \ref{thm:main} is the convergence of the joint distribution of the Wick-ordered quantum particle densities $a^*(x) a(x)$ to those of the Wick-ordered mass densities $\abs{\phi(x)}^2$; see \cite[Theorem 1.4]{FKSS_2020}.

\begin{remark} \label{rem:general_Lambda}
In this paper we set $\Lambda$ to be the unit torus for definiteness, but our methods extend without complications to more general domains and boundary conditions. In particular, they also apply to the full space $\R^2$ with one-body Hamiltonian $-\Delta/2 + U(x)$, where the particles are confined by a suitable external potential $U \col \R^2 \to \R$ satisfying $U(x) \geq b \abs{x}^\theta$ for some $b > 0$ and $\theta > 2$. We refer to \cite[Sections 1.6 and 7]{FKSS_2020} and \cite[Section 3.2]{LNR3} for an in-depth discussion of the analogous extension for the mean-field scaling. The corresponding counterterm problem, which relates the bare and renormalized external potentials, was formulated and solved in \cite{frohlich2017gibbs} for the mean-field scaling. It, along with the arguments of \cite[Section 5]{frohlich2017gibbs}, can be adapted to the local scaling of the current paper; we omit further details.
\end{remark}

\begin{remark} \label{rem:species}
The proof of Theorem \ref{thm:main} can be extended to establish the convergence of the interacting Bose gas of $N$ species of identical Bosons to the $\phi^4_2$ theory with $N$ complex components. (Theorem \ref{thm:main} corresponds to $N = 1$.) More precisely, we introduce the species index $i = 1, \dots, N$, and augment the creation and annihilation operators to $a_i^*(x), a_i(x)$ satisfying the canonical commutation relations
\begin{equation} \label{CCR_species}
[a_i(x),a_j^*(\tilde x)] = \delta_{ij} \delta(x - \tilde x) \,, \qquad
[a_i(x),a_j(\tilde x)] = [a_i^*(x),a_j^*(\tilde x)] = 0\,,
\end{equation}
which generalize \eqref{CCR}. The Hamiltonian from \eqref{free_Hamiltonian_H} and \eqref{def_H} is generalized to
\begin{equation*}
H^{(0)} \deq \nu \sum_{i = 1}^N \int \dd x \, a_i^*(x) ((\kappa - \Delta / 2) a_i) (x)
\end{equation*}
and
\begin{equation*}
H \deq 
H^{(0)} + \frac{\nu^2}{2} \sum_{i = 1}^N \int \dd x \, \dd \tilde x \, a_i^*(x) a_i(x)  \, v^\epsilon(x - \tilde x) \,  a_i^*(\tilde x) a_i (\tilde x)
- \nu \alpha^\epsilon_\nu \sum_{i = 1}^N \int \dd x \, a_i^*(x) a_i(x)+ \theta^\epsilon_\nu
\,.
\end{equation*}
Then we find that the reduced density matrices of the $N$-species quantum Bose gas converge to the correlation functions of $\phi^4_2$ theory with $N$ complex components, in the sense of Theorem \ref{thm:main}.
\end{remark}

\begin{remark} \label{rem:range}
We conclude this section with a discussion on the technical condition \eqref{epsilon_assumption} on the range $\epsilon$ of the interaction potential. It is instructive to compare the right-hand side of \eqref{epsilon_assumption} to the typical inter-particle distance, which we claim is of order
\begin{equation} \label{particle-distance}
\ell = \nu^{1/2} (\log \nu^{-1})^{-1/2}\,.
\end{equation}
To show \eqref{particle-distance}, it suffices to show that the expected number of particles, given by $\tr (\Gamma_1)$, is of order $\nu^{-1} \log \nu^{-1}$. By Theorem \ref{thm:main} and the definition \eqref{hat_Gamma_p_2}, we find $\tr (\Gamma_1) = \tr (\Gamma_1^{(0)}) + O(\nu^{-1})$. Hence, it suffices to show that $\tr (\Gamma_1^{(0)})$ is of order $\nu^{-1} \log \nu^{-1}$, which follows using the Wick theorem for quasi-free quantum states (see e.g.\ \cite[Remark 1.5 and Lemma 2.9]{FKSS_2020}), since $\tr (\Gamma_1^{(0)}) = \tr \pb{\frac{1}{\ee^{\nu (\kappa - \Delta/2)} - 1}} \asymp \nu^{-1} \log \nu^{-1}$, as claimed.

%To show \eqref{particle-distance}, it suffices to show that the expected number of particles, given by $\tr (\Gamma_1)$, is of order $\nu^{-1} \log \nu^{-1}$. By Theorem \ref{thm:main}, we find that $\nu \tr (\wh \Gamma_1) \to \tr (\wh \gamma_1)$ as $\nu \to 0$. Since the right-hand side is finite, we deduce from the definition \eqref{hat_Gamma_p_2} that $\tr (\Gamma_1) = \tr (\Gamma_1^{(0)}) + O(\nu^{-1})$. Hence, it suffices to show that $\tr (\Gamma_1^{(0)})$ is of order $\nu^{-1} \log \nu^{-1}$, which follows using the Wick theorem for quasi-free quantum states (see e.g.\ \cite[Remark 1.5 and Lemma 2.9]{FKSS_2020}), since $\tr (\Gamma_1^{(0)}) = \tr \pb{\frac{1}{\ee^{\nu (\kappa - \Delta/2)} - 1}} = \sum_{k \in \Z^2} \frac{1}{\ee^{\nu (\kappa + 2 \pi^2 \abs{k}^2)} - 1} \asymp \nu^{-1} \log \nu^{-1}$, as claimed.

Comparing \eqref{particle-distance} and the right-hand side of \eqref{epsilon_assumption}, we conclude that the range of the interaction $\epsilon$ may vanish much faster than any power of $(\log \ell^{-1})^{-1}$ but much slower than any power of $\ell$. As stated in Section \ref{sec:overview_limit}, we expect that it can be improved, however at the cost of a considerably more complicated argument. We leave such quantitative improvements to future work, focusing here on the first result of this kind while aiming for a relatively simple proof.

The origin of the condition \eqref{epsilon_assumption} in our proof arises from controlling oscillatory integrals. It can be traced to the functional Fourier representation from Lemmas \ref{Partition_function_rate_of_convergence_1}  and \ref{Partition_function_rate_of_convergence_2} below. There, the partition function is expressed in terms of an integral over a Gaussian field, where the integrand includes a phase with a diverging prefactor. To compensate this rapidly oscillating phase and obtain a quantity of order one, the integral is multiplied by the large factor 
\begin{equation*}
%\ee^{T^{\epsilon}}=
\ee^{(\tau^\epsilon)^2 / 2+E^{\epsilon}}\,,
\end{equation*}
(see \eqref{T^epsilon}) which is propagated throughout the estimates of Section \ref{The rate of convergence}, for example in Lemma \ref{Partition_function_rate_of_convergence_3}. This large factor needs to be compensated by powers of $\nu$ which arise from our error estimates, leading to the lower bound on $\epsilon$ in terms of $\nu$.

\end{remark}

\section{Structure of the proof}

The rest of this paper is devoted to the proof of Theorem \ref{thm:main}. We begin with a short section that lays out the general strategy. We use $c,C$ to denote generic positive constants, which may change from one expression to the next, and may depend on fixed parameters.  We write $x \lesssim y$ or $x=O(y)$ to mean $x \leq C y$. If $C$ depends on a parameter $\alpha$, we write $x \lesssim_{\alpha} y$, $x \leq C_{\alpha} y$, or $x=O_{\alpha}(y)$. We abbreviate $[n] = \{1,\dots,n\}$.

We shall need two different interacting field theories approximating \eqref{field theory}, obtained by replacing the interaction $V$ with regularized variants, denoted by $W^\epsilon$ and $V^\epsilon$, respectively. They are defined by
\begin{align}
\label{W^epsilon}
W^\epsilon &\deq \frac{1}{2} \int \dd x \, \dd \tilde x \, \wick{\abs{\phi(x)}^2} \, v^\epsilon(x - \tilde x) \, \wick{\abs{\phi(\tilde x)}^2} - \tau^\epsilon \int \dd x \, \wick{\abs{\phi(x)}^2} - E^\epsilon\,,
\\ \label{V^epsilon}
V^\epsilon &\deq \frac{1}{2} \int \dd x \, \dd \tilde x \, v^\epsilon(x - \tilde x) \, \wick{ \abs{\phi(x)}^2 \, \abs{\phi(\tilde x)}^2}\,.
\end{align}
The rigorous construction of the random variables $W^\epsilon, V^\epsilon$ proceeds exactly like that of $V$ explained in Section \ref{sec:classical_field}: one introduces truncated versions $W^\epsilon_N, V^\epsilon_N$ defined in terms of the truncated free field $\phi_N$ (see e.g.\ \eqref{def_V_e_M} below), and proves using Wick's theorem that as $N \to \infty$ they converge in $L^2(\P)$ to their respective  limits $W^\epsilon, V^\epsilon$. Throughout the following, we shall make use of such constructions of Wick-ordered functions of the free field without further comment. The integrability of $\ee^{-W^\epsilon}$ and $\ee^{-V^\epsilon}$ is established in Section \ref{sec:field_theory} below.

To emphasize the dependence of the quantities \eqref{field theory} and \eqref{hat_gamma_p} on the interaction $V$, we sometimes include the interaction $V$ in our notation as a superscript, writing $\zeta^V$ and $\wh \gamma_p^V$, respectively.

The proof consists of two main steps.
\begin{description}
\item[Step 1.]
We compare $\cal Z$ and $\nu^p \, \wh \Gamma_p$ with $\zeta^{W^\epsilon}$ and $\wh \gamma_p^{W^\epsilon}$, respectively, in the limit $\nu,\epsilon \to 0$ under the condition \eqref{epsilon_assumption}.
\item[Step 2.]
We compare $\zeta^{W^\epsilon}$ and $\wh \gamma_p^{W^\epsilon}$ with $\zeta^{V}$ and $\wh \gamma_p^{V}$, respectively, in the limit $\epsilon \to 0$. This step is done by passing via the further intermediate interaction $V^\epsilon$.
\end{description}

Step 1 relies on a quantitative analysis of the infinite-dimensional saddle point argument for the functional integral introduced in \cite{FKSS_2020}.

Step 2 relies on three main ingredients. First, we show integrability of $\ee^{-V^\epsilon}$, uniformly in $\epsilon$. Second, we use that $V^\epsilon - W^\epsilon$ is small in $L^2(\P)$ and it lies in the second polynomial chaos (see Section \ref{sec:hypercontr} below), which allows us to deduce integrability of $\ee^{-W^\epsilon}$ by expansion in $V^\epsilon - W^\epsilon$ and hypercontractive moment bounds. Third, to obtain uniform control on the Wick-ordered correlation functions, we use Gaussian integration by parts, analogous to Malliavin calculus, to derive a representation of the correlation functions in terms of expectations of derivatives of the interaction potential.

The results of these two steps are summarized in the following two propositions.

\begin{proposition} \label{prop:step1}
Suppose that $d = 2$ and that $\nu,\epsilon \to 0$ under the constraint \eqref{epsilon_assumption}. Then $\cal Z - \zeta^{W^\epsilon} \to 0$. Moreover, for all $p \in \N^*$,
\begin{equation}
\normB{\nu^p \, \wh \Gamma_p - \wh \gamma_p^{W^\epsilon}}_C \to 0\,.
\end{equation}
\end{proposition}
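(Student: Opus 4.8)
The plan is to realize the quantum partition function $\cal Z$ and the reduced density matrices $\nu^p\wh\Gamma_p$ via the functional integral representation from \cite{FKSS_2020}, which expresses $\cal Z$ as an expectation over an $\cal H$-valued Ornstein--Uhlenbeck bridge (the ``time-dependent field'' living on the thermal circle $[0,\beta]$), and then to perform a quantitative saddle-point reduction collapsing the time-dependent field onto its time-average, which is distributed as the classical free field $\phi$. The first step is to recall the representation: up to explicit prefactors, $\cal Z = \E\qb{\ee^{-\cal W^\epsilon(\omega)}}$ where $\omega$ is the random path and $\cal W^\epsilon$ is a Wick-ordered nonlocal quartic functional of $\omega$, and similarly for $\nu^p\wh\Gamma_p$ with additional insertions of the path at time zero. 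The target objects $\zeta^{W^\epsilon}$ and $\wh\gamma_p^{W^\epsilon}$ are obtained from the very same formulas upon replacing the path $\omega(\cdot)$ by its constant-in-time value $\phi$, i.e. upon replacing $\cal W^\epsilon(\omega)$ by $W^\epsilon(\phi)$.

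The core estimate is therefore to control the difference $\cal W^\epsilon(\omega) - W^\epsilon(\bar\omega)$, where $\bar\omega$ is the time-average of the path, using continuity properties of Brownian paths: the path $\omega(t)$ differs from $\bar\omega$ by a fluctuation that is small in the relevant negative Sobolev/Besov norm, with quantitative Hölder-in-$t$ control. I would (i) split $\cal W^\epsilon$ into its quadratic-in-$\abs{\omega}^2$ part, its linear counterterm $\tau^\epsilon$-part, and the constant $E^\epsilon$; (ii) for each, write the path as $\bar\omega$ plus fluctuation and expand; (iii) bound the cross terms and the pure-fluctuation terms in $L^2(\P)$ (or higher moments via hypercontractivity on the relevant Wiener chaos) by a positive power of $\nu$ times the dangerous large factor $\ee^{(\tau^\epsilon)^2/2 + E^\epsilon}$ flagged in Remark \ref{rem:range}; (iv) combine with the integrability bounds for $\ee^{-\cal W^\epsilon}$ and $\ee^{-W^\epsilon}$ (the latter from Section \ref{sec:field_theory}, cited via Step 2) together with a stability/Hölder-type estimate of the form $\absb{\ee^{-a}-\ee^{-b}} \leq \absb{a-b}\,\pb{\ee^{-a}+\ee^{-b}}$ and Cauchy--Schwarz, to upgrade smallness of $\cal W^\epsilon - W^\epsilon$ to smallness of $\cal Z - \zeta^{W^\epsilon}$. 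For the reduced density matrices one argues identically, carrying along the insertion operators; here one must check that the insertions at time zero, once expressed through the path, converge to the corresponding $\bar\phi(\tilde x_j)$, $\phi(x_j)$ factors, and that the Wick ordering is compatible between the two representations — this is exactly why the statement is phrased for the Wick-ordered $\wh\Gamma_p$, whose on-diagonal behaviour is regular, yielding convergence in $C(\Lambda^p\times\Lambda^p)$ rather than merely in $L^r$.

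The main obstacle, as the authors themselves anticipate, is controlling the oscillatory/large-prefactor structure: after Wick ordering the nonlocal quartic interaction against the \emph{quantum} free field (which has covariance built from $(\ee^{\nu(\kappa-\Delta/2)}-1)^{-1}$ rather than $(\kappa-\Delta/2)^{-1}$), the counterterms $\tau^\epsilon$ and $E^\epsilon$ diverge as $\epsilon\to 0$, and the discrepancy between the quantum covariance and the classical Green function $G$ must be shown to be negligible only after multiplication by the enormous factor $\ee^{(\tau^\epsilon)^2/2+E^\epsilon}$. This forces a genuinely quantitative Brownian-path estimate whose error is a power of $\nu$ large enough to beat $\ee^{(\tau^\epsilon)^2/2+E^\epsilon} \lesssim \ee^{C(\log\epsilon^{-1})^2}$ — precisely the source of the constraint \eqref{epsilon_assumption}. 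Everything else (the expansions, the hypercontractivity bounds, the Cauchy--Schwarz assembly) is routine once this power-of-$\nu$ gain is in hand; I would isolate it as a separate lemma (the analogue of Lemmas \ref{Partition_function_rate_of_convergence_1}--\ref{Partition_function_rate_of_convergence_3}) and treat the rest of the proof as bookkeeping built on top of it.
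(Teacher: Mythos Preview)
Your proposal misidentifies the functional integral representation that the paper actually uses. The representation from \cite{FKSS_2020} is \emph{not} an $\cal H$-valued Ornstein--Uhlenbeck bridge carrying a quartic interaction $\cal W^\epsilon(\omega)$; it is obtained by first applying a Hubbard--Stratonovich transformation to linearize the quartic interaction, introducing an auxiliary real Gaussian field $\sigma$ on $[0,\nu]\times\Lambda$ (respectively $\xi$ on $\Lambda$ in the classical case), and then using the Feynman--Kac formula, which brings in Brownian loops. The resulting representation is $\cal Z_\eta = \int \mu_{\cal C_\eta}(\dd\sigma)\,\ee^{T^\epsilon - \ii\tau^\epsilon[\sigma]/\nu}\,\ee^{F_2(\sigma)}$ and $\zeta^{W^\epsilon} = \int \mu_{\cal C_\eta}(\dd\sigma)\,\ee^{T^\epsilon - \ii\tau^\epsilon[\sigma]/\nu}\,\ee^{f_2(\langle\sigma\rangle)}$, where $\langle\sigma\rangle$ is the \emph{time average of the Hubbard--Stratonovich field}, not of the Bose field. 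The Brownian paths enter only inside the definitions of $F_2$ and $f_2$.

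The reason this matters is that the Hubbard--Stratonovich step buys a crucial structural simplification you do not have: $\re F_2(\sigma)\leq 0$ and $\re f_2(\xi)\leq 0$. Hence the paper can use the elementary inequality $\abs{\ee^a-\ee^b}\leq\abs{a-b}$ for $\re a,\re b\leq 0$ and then Cauchy--Schwarz, reducing everything to bounding $\int\mu_{\cal C_\eta}(\dd\sigma)\,\abs{F_2(\sigma)-f_2(\langle\sigma\rangle)}^2$, with the large prefactor $\ee^{T^\epsilon}$ sitting harmlessly outside. Your route, by contrast, needs a stability estimate of the form $\abs{\ee^{-a}-\ee^{-b}}\leq\abs{a-b}\pb{\ee^{-a}+\ee^{-b}}$ followed by Cauchy--Schwarz, which requires \emph{uniform-in-$\nu$} $L^2$ integrability of the quantum weight $\ee^{-\cal W^\epsilon}$. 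No such bound is available in the paper (Section \ref{sec:field_theory} treats only the classical side), and proving it would essentially amount to a quantum Nelson estimate --- a substantial result in its own right and arguably harder than the proposition you are trying to prove. The quantitative Brownian-path H\"older estimates you mention are indeed used, but they enter in the comparison of $F_2$ with $f_2$ (and in the analogous comparison for the correlation functions via $Q_{p,\eta}$ and $q_p^\epsilon$), not in a direct comparison of quartic Wick-ordered functionals.
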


\begin{proposition} \label{prop:step2}
Suppose that $d = 2$ and that $\epsilon \to 0$. Then $\zeta^{W^\epsilon} \to \zeta^V$. Moreover, for all $p \in \N^*$,
\begin{equation} \label{convergence_gamma_field}
\normB{\wh \gamma_p^{W^\epsilon} - \wh \gamma_p^V}_C \to 0\,.
\end{equation}
\end{proposition}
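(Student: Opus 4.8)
The plan is to pass from $W^\epsilon$ to $V$ via the intermediate interaction $V^\epsilon$, in two sub-steps: first $W^\epsilon \to V^\epsilon$ (replacing the doubly Wick-ordered quartic by the nonlocal one), then $V^\epsilon \to V$ (removing the mollification of the delta function). The fundamental analytic input is that all three random variables $W^\epsilon, V^\epsilon, V$ live in a fixed direct sum of low-order Wiener chaoses (degree $\leq 4$), so that convergence in $L^2(\P)$ automatically upgrades to convergence in every $L^m(\P)$ by Gaussian hypercontractivity (Section~\ref{sec:hypercontr}). First I would record the two $L^2$-convergence facts: $\norm{V^\epsilon - W^\epsilon}_{L^2(\P)} \to 0$ and $\norm{V^\epsilon - V}_{L^2(\P)} \to 0$ as $\epsilon \to 0$. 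Both are direct Wick-theorem computations comparing covariance kernels: for the first, note $V^\epsilon - W^\epsilon = \frac12\int\dd x\,\dd\tilde x\, v^\epsilon(x-\tilde x)\,\big(\wick{\abs{\phi(x)}^2\abs{\phi(\tilde x)}^2} - \wick{\abs{\phi(x)}^2}\wick{\abs{\phi(\tilde x)}^2} + 2\tau^\epsilon\,\wick{\abs{\phi(x)}^2}\dots\big)$, which when fully Wick-expanded leaves only the "contracted" diagrams carrying a factor $G(x-\tilde x)$ or $G(x-\tilde x)^2$ against $v^\epsilon$; since $v^\epsilon \to \delta$ weakly and $G$ has only a logarithmic (hence $L^2_{\mathrm{loc}}$, and more) singularity in $d=2$, these integrals vanish. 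For $V^\epsilon \to V$ one compares against the local Wick power $\wick{\abs{\phi(x)}^4}$ and again the difference of covariance kernels tends to zero using $v^\epsilon \to \delta$ and the regularity of $G$.

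The second and more serious ingredient is \emph{uniform integrability} of the exponentials. By Section~\ref{sec:field_theory} (assumed available), $\ee^{-V^\epsilon}$ is integrable uniformly in $\epsilon$, i.e.\ $\sup_\epsilon \E[\ee^{-V^\epsilon}] < \infty$ and in fact $\sup_\epsilon \E[\ee^{-p V^\epsilon}] < \infty$ for every fixed $p$ (this is the Nelson-type bound). To transfer integrability to $W^\epsilon$, I would write $\ee^{-W^\epsilon} = \ee^{-V^\epsilon}\,\ee^{V^\epsilon - W^\epsilon}$ and apply Cauchy--Schwarz: $\E[\ee^{-W^\epsilon}] \leq \E[\ee^{-2V^\epsilon}]^{1/2}\,\E[\ee^{2(V^\epsilon - W^\epsilon)}]^{1/2}$. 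The first factor is bounded uniformly by the Nelson bound; the second is controlled because $V^\epsilon - W^\epsilon$ lies in the \emph{second} chaos and is small in $L^2$, so its exponential moments are finite and bounded — here one uses the explicit description of second-chaos exponential moments (a determinant / product over eigenvalues of the associated Hilbert--Schmidt operator), or simply that for a second-chaos element $Z$ with $\norm{Z}_{L^2}$ small enough one has $\E[\ee^{\lambda Z}] \leq 2$. This gives $\sup_\epsilon \E[\ee^{-W^\epsilon}] < \infty$, and the same argument with an extra power gives uniform boundedness of $\E[\ee^{-pW^\epsilon}]$. Combining uniform $L^p$ bounds on $\ee^{-W^\epsilon}$, $\ee^{-V^\epsilon}$, $\ee^{-V}$ with the $L^m(\P)$ (all $m$) convergence of the exponents, a standard interpolation/equi-integrability argument yields $\ee^{-W^\epsilon} \to \ee^{-V}$ in $L^1(\P)$, hence $\zeta^{W^\epsilon} = \E[\ee^{-W^\epsilon}] \to \E[\ee^{-V}] = \zeta^V$. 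Since $\zeta^V > 0$, the same argument also shows $\zeta^{W^\epsilon}$ is bounded away from $0$ for small $\epsilon$.

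For the correlation functions, recall $\wh\gamma_p^{W^\epsilon}$ is, via \eqref{hat_gamma_p_2}, a fixed polynomial (with free-field coefficients) in the un-Wick-ordered $\gamma_k^{W^\epsilon}$ for $k \leq p$, so it suffices to prove $\normb{\gamma_p^{W^\epsilon} - \gamma_p^V}_C \to 0$. The cleanest route is the Gaussian-integration-by-parts (Malliavin) representation advertised in the excerpt: integrating by parts in field space repeatedly rewrites $\zeta\,(\gamma_p)_{\f x,\tilde{\f x}} = \E[\bar\phi(\tilde x_1)\cdots\phi(x_p)\ee^{-V}]$ as a finite sum of terms of the form (free-field kernels) $\times\ \E[(\text{product of finitely many derivatives } D^{(j)}V \text{ paired against } G)\, \ee^{-V}]$, where each $D^{(j)}V$ is again a fixed-degree polynomial in $\phi$. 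Each such term is pointwise-in-$(\f x,\tilde{\f x})$ continuous and bounded uniformly, because: (i) the free-field kernels $G$ are continuous off the diagonal and the pairings against $v^\epsilon$ smooth out the diagonal singularity uniformly in $\epsilon$; (ii) the derivative-factors $D^{(j)}W^\epsilon$ converge to $D^{(j)}V$ in every $L^m(\P)$ by the same chaos+hypercontractivity argument as above (differentiating only lowers chaos degree); and (iii) $\ee^{-W^\epsilon}$ is uniformly $L^m$-integrable. Applying Hölder to each term and using the established convergences, one gets convergence of $\zeta^{W^\epsilon}\,\gamma_p^{W^\epsilon} \to \zeta^V\,\gamma_p^V$ uniformly on $\Lambda^p\times\Lambda^p$; dividing by $\zeta^{W^\epsilon} \to \zeta^V > 0$ finishes \eqref{convergence_gamma_field}.

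I expect the main obstacle to be the transfer of integrability (the second paragraph), specifically obtaining the bound $\sup_\epsilon \E[\ee^{2(V^\epsilon - W^\epsilon)}] < \infty$ genuinely uniformly in $\epsilon$: one must verify that the second-chaos element $V^\epsilon - W^\epsilon$ not only goes to zero in $L^2$ but that the Hilbert--Schmidt operator representing it has operator norm staying below the threshold $1/2$ (beyond which second-chaos exponentials blow up) — this requires a quantitative bound on that operator, not merely on its Hilbert--Schmidt norm, and is where the logarithmic regularity of $G$ in two dimensions is used sharply. The uniform Nelson bound for $\ee^{-V^\epsilon}$ is the other heavy input, but that is supplied by Section~\ref{sec:field_theory}.
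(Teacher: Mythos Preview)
Your treatment of the partition function is essentially the paper's argument. One remark: the obstacle you flag at the end --- controlling the operator norm of the second-chaos element $V^\epsilon - W^\epsilon$ rather than just its Hilbert--Schmidt norm --- is not actually an obstacle, since the $L^2(\P)$ norm of a second-chaos variable \emph{is} (up to a constant) the Hilbert--Schmidt norm of its kernel, which dominates the operator norm; so $\norm{V^\epsilon - W^\epsilon}_{L^2(\P)} \to 0$ already forces the operator norm below any threshold for small $\epsilon$. The paper sidesteps this whole discussion by expanding $\ee^{V^\epsilon - W^\epsilon} - 1$ as a power series and bounding $\norm{(V^\epsilon - W^\epsilon)^k}_{L^p}$ via hypercontractivity (Lemma~\ref{lem:hyper}), which yields a geometric series in $Cp\norm{V^\epsilon - W^\epsilon}_{L^2}$ directly.

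There is, however, a genuine gap in your correlation-function argument. You reduce \eqref{convergence_gamma_field} to $\norm{\gamma_p^{W^\epsilon} - \gamma_p^V}_C \to 0$ via \eqref{hat_gamma_p_2}, but the un-Wick-ordered $\gamma_p$ is \emph{not} continuous: it diverges on the diagonal (this is exactly why $\wh\gamma_p$ was introduced). The formula \eqref{hat_gamma_p_2} expresses $\wh\gamma_p$ as an alternating sum of tensor products $\gamma_k \otimes \gamma_{p-k}^{(0)}$, each of which is singular; only after summation do the singularities cancel. So you cannot prove $C$-convergence of $\wh\gamma_p$ by proving it for each $\gamma_k$. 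Relatedly, when you apply Gaussian integration by parts to $\E[\bar\phi(\tilde x_1)\cdots\phi(x_p)\ee^{-V}]$, the Leibniz rule produces free Green-function factors $G(x_i - \tilde x_j)$ between \emph{external} points, and these are not ``paired against $v^\epsilon$'' --- the mollification lives only inside $W^\epsilon$, not in the observable. Those external $G$-factors are exactly the source of the diagonal divergence.

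The paper's resolution is to apply integration by parts directly to the \emph{Wick-ordered} product. Using the recursion \eqref{Wick_recursion}, one checks (Lemma~\ref{lem:IBP_representation}) that the external Green-function contractions cancel exactly against the Wick subtractions, leaving the clean identity
\[
(\wh\gamma_{N,p}^V)_{\f x,\tilde{\f x}} \;=\; \frac{1}{\zeta^V}\,\E\qB{\bar L_{N,\tilde x_1}\cdots L_{N,x_p}\,\ee^{-V}}\,,
\]
with \emph{all} $2p$ derivatives landing on $\ee^{-V}$ and no free $G$-factors between external points. From here your programme works: the chain rule produces products of $\cal L_{N,\f z_i} V$ times $\ee^{-V}$, each factor is uniformly bounded in $L^r(\P)$ and converges as $V \leftrightarrow W^\epsilon$ (Lemma~\ref{lem:derivatives-convergence}), and H\"older plus the uniform exponential integrability finishes. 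The key point you are missing is that Wick ordering is not a cosmetic reformulation here --- it is what makes the integration-by-parts representation land in $C(\Lambda^p\times\Lambda^p)$.
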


We remark that Proposition \ref{prop:step1} holds also for $d = 3$, with the same proof, provided that the condition \eqref{epsilon_assumption} is suitably adjusted. We refer to Section \ref{The rate of convergence} for more details and for the proof.

\section{Proof of Proposition \ref{prop:step2}} \label{sec:field_theory}

In this section we prove Proposition \ref{prop:step2}. We set $d = 2$ throughout.

\subsection{$L^2$-estimates} \label{sec_L2estimates}
In this subsection we derive $L^2$-estimates for the differences $V^\epsilon - V$ and $V^\epsilon - W^\epsilon$.

\begin{lemma} \label{lem:R_variance}
We have $\norm{V^\epsilon - W^\epsilon}_{L^2(\P)} \to 0$ as $\epsilon \to 0$.
\end{lemma}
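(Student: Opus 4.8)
\emph{Overview.} The plan is to identify $R^\epsilon \deq V^\epsilon - W^\epsilon$ explicitly as an element of the second Wiener chaos over $\P$, and then to compute $\norm{R^\epsilon}_{L^2(\P)}$ exactly in Fourier space. The starting point is the product formula for Wick monomials of the complex free field: for $x \neq \tilde x$,
\begin{equation*}
\wick{\abs{\phi(x)}^2}\,\wick{\abs{\phi(\tilde x)}^2} = \wick{\abs{\phi(x)}^2\abs{\phi(\tilde x)}^2} + G(x - \tilde x)\pb{\wick{\phi(x)\bar\phi(\tilde x)} + \wick{\bar\phi(x)\phi(\tilde x)}} + G(x-\tilde x)^2\,,
\end{equation*}
where $G$ is the Green function of $\phi$. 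As usual this is first verified for the truncated field $\phi_N$ using Wick's theorem for Gaussians, and then one passes to the $L^2(\P)$-limit $N \to \infty$, which is legitimate by the construction of $V^\epsilon, W^\epsilon$ recalled above.

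\emph{Step 1: the explicit form of $R^\epsilon$.} Multiplying the product formula by $\tfrac12 v^\epsilon(x - \tilde x)$ and integrating over $x,\tilde x$: the $G(x-\tilde x)^2$-term produces exactly $E^\epsilon$ (using translation invariance and $\abs{\Lambda} = 1$), and the two cross terms coincide after interchanging $x \leftrightarrow \tilde x$ and using that $v^\epsilon$ and $G$ are even. Rewriting the linear counterterm of $W^\epsilon$ via $\tau^\epsilon = \int\dd\tilde x\, v^\epsilon(x - \tilde x)\,G(x - \tilde x)$, one arrives at the identity
\begin{equation*}
R^\epsilon = \int \dd x\,\dd\tilde x\, v^\epsilon(x - \tilde x)\,G(x - \tilde x)\,\wick{\phi(x)\pb{\bar\phi(x) - \bar\phi(\tilde x)}}\,,
\end{equation*}
a manifestly second-chaos expression in which the mass and energy counterterms have cancelled against the contractions. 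Writing $\phi = \sum_k \lambda_k^{-1/2} X_k u_k$ in the eigenbasis $u_k(x) = \ee^{2\pi\ii k\cdot x}$ of $\kappa - \Delta/2$ and abbreviating $h^\epsilon \deq v^\epsilon G$, the two types of integral that occur evaluate to $\int\dd x\,\dd\tilde x\, h^\epsilon(x - \tilde x)\,u_k(x)\bar u_l(x) = \delta_{kl}\,\tau^\epsilon$ and $\int\dd x\,\dd\tilde x\, h^\epsilon(x - \tilde x)\,u_k(x)\bar u_l(\tilde x) = \delta_{kl}\,\wh h^\epsilon(k)$, where $\wh h^\epsilon(k)$ is the $k$-th Fourier coefficient of $h^\epsilon$ (so $\wh h^\epsilon(0) = \tau^\epsilon$). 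Hence $R^\epsilon = \sum_k \lambda_k^{-1}\pb{\tau^\epsilon - \wh h^\epsilon(k)}\,\wick{\abs{X_k}^2}$, and since the family $\hb{\wick{\abs{X_k}^2}}_k$ is orthogonal in $L^2(\P)$ with $\E\absb{\wick{\abs{X_k}^2}}^2 = 1$,
\begin{equation*}
\norm{R^\epsilon}_{L^2(\P)}^2 = \sum_{k \in \Z^2} \frac{\pb{\tau^\epsilon - \wh h^\epsilon(k)}^2}{\lambda_k^2}\,.
\end{equation*}

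\emph{Step 2: the estimate.} It remains to show this sum vanishes as $\epsilon \to 0$, which is where the real work lies. The difficulty is that $\tau^\epsilon$ and $\wh h^\epsilon(k)$ individually diverge like $\log\epsilon^{-1}$ — because $G$ has a logarithmic singularity at the origin (Lemma \ref{lem:G_smoothness}) while $v^\epsilon$ is an approximate identity of scale $\epsilon$ — so the divergences must be seen to cancel in the difference. Writing $\tau^\epsilon - \wh h^\epsilon(k) = \int \dd y\, v^\epsilon(y)\,G(y)\,\pb{1 - \ee^{-2\pi\ii k\cdot y}}$ and using, on the one hand, $\absb{1 - \ee^{-2\pi\ii k\cdot y}} \leq 2\pi\abs k\abs y$ together with $\abs{v^\epsilon(y)} \lesssim \epsilon^{-2}\ind{\abs{y} \lesssim \epsilon}$ and $\abs{G(y)} \lesssim \absb{\log\abs{y}} + 1$, and, on the other hand, the trivial bound $\absb{1 - \ee^{-2\pi\ii k\cdot y}} \leq 2$, one obtains
\begin{equation*}
\absb{\tau^\epsilon - \wh h^\epsilon(k)} \lesssim \min\hb{\abs k\,\epsilon\,\abs{\log\epsilon},\ \abs{\log\epsilon}}\,.
\end{equation*}
Noting that the $k=0$ term vanishes and $\lambda_k \asymp 1 + \abs k^2$, splitting the sum at $\abs k \sim \epsilon^{-1}$, and using the two-dimensional bounds $\sum_{1 \leq \abs k \leq \epsilon^{-1}} \abs k^{-2} \lesssim \abs{\log\epsilon}$ and $\sum_{\abs k > \epsilon^{-1}} \abs k^{-4} \lesssim \epsilon^2$, one gets
\begin{equation*}
\norm{R^\epsilon}_{L^2(\P)}^2 \lesssim \epsilon^2\abs{\log\epsilon}^3 + \epsilon^2\abs{\log\epsilon}^2 \;\longrightarrow\; 0\,,
\end{equation*}
which proves the lemma. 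The one genuinely delicate point is this last estimate: one must track precisely how the logarithmic blow-up of the counterterms is offset by the oscillation of the Fourier modes against the shrinking support of $v^\epsilon$, the borderline case being the logarithmically divergent sum $\sum_{1 \leq \abs k \leq \epsilon^{-1}}\abs k^{-2}$, which is nonetheless comfortably absorbed by the $\epsilon^2$ prefactor.
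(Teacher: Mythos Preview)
Your proof is correct and takes a genuinely different route from the paper. Both proofs begin with the same identity expressing $V^\epsilon - W^\epsilon$ as a second-chaos object (this is the paper's \eqref{W_V_estimate}), but then diverge. The paper stays in position space: it applies Wick's rule to get the $L^2$-norm as a four-fold integral over $x,\tilde x,y,\tilde y$, with integrand proportional to $(G(x-\tilde y)-G(x-y))(G(\tilde x-y)-G(x-y))$; it then exploits this ``double cancellation'' through the Lipschitz-type bound \eqref{G-G_estimate} and the case analysis \eqref{log_estimate}, ultimately obtaining $\norm{V^\epsilon-W^\epsilon}_{L^2(\P)} \lesssim_\alpha \epsilon^\alpha$ for any $\alpha<1$. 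You instead exploit translation invariance to diagonalize $R^\epsilon$ in the plane-wave basis of the Laplacian, reducing everything to the single sum $\sum_k \lambda_k^{-2}(\tau^\epsilon - \wh h^\epsilon(k))^2$, and then estimate each summand via the elementary bound $\abs{1-\ee^{-2\pi\ii k\cdot y}}\leq \min\{2\pi\abs{k}\abs{y},2\}$. What you gain is a shorter and more transparent argument (and a marginally sharper rate, $\epsilon\abs{\log\epsilon}^{3/2}$); what the position-space approach buys is robustness---it does not rely on an explicit eigenbasis and adapts more readily to settings without translation invariance (cf.\ Remark~\ref{rem:general_Lambda}) and to the higher-derivative estimates of Lemma~\ref{lem:derivatives-convergence}, where the same Green-function difference bounds are reused.
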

\begin{proof}
A straightforward calculation using \eqref{wick_expanded} below yields
\begin{align}
V^\epsilon &= \frac{1}{2} \int \dd x \, \dd \tilde x \, \wick{\abs{\phi(x)}^2} \, v^\epsilon(x - \tilde x) \, \wick{\abs{\phi(\tilde x)}^2}
 - \int \dd x \, \dd \tilde x \, v^\epsilon(x - \tilde x) \, G(x - \tilde x) \, \bar \phi(x) \phi(\tilde x) + E^\epsilon
\notag \\
 &= 
 \frac{1}{2} \int \dd x \, \dd \tilde x \, \wick{\abs{\phi(x)}^2} \, v^\epsilon(x - \tilde x) \, \wick{\abs{\phi(\tilde x)}^2}
 - \int \dd x \, \dd \tilde x \, v^\epsilon(x - \tilde x) \, G(x - \tilde x) \, \wick{\bar \phi(x) \phi(\tilde x)} - E^\epsilon
\notag \\ \label{W_V_estimate}
&= W^\epsilon -  \int \dd x \, \dd \tilde x \, v^\epsilon(x - \tilde x) \, G(x -  \tilde x) \, \pb{\wick{\bar \phi(x) \phi(\tilde x)} - \wick{\abs{\phi(x)}^2}}\,.
\end{align}
By Lemma \ref{lem:Wick} below (see also Example \ref{ex:two_blocks} below), we find
\begin{align}
\norm{V^\epsilon - W^\epsilon}_{L^2(\P)}^2 &= \int \dd x \, \dd \tilde x \, \dd y \, \dd \tilde y \, v^\epsilon(x - \tilde x) \, v^\epsilon(y - \tilde y) \, G(\tilde x - x) \, G(\tilde y - y)
\notag \\ \label{R_e_square}
&\qquad \times \pb{G(x - \tilde y) - G(x - y)} \pb{G(\tilde x - y) - G(x - y)}\,.
\end{align}
We emphasize the crucial double cancellation on the second line of \eqref{R_e_square}, which will ensure convergence of the right-hand side, even though the first line of the right-hand side on its own is divergent as $\epsilon \to 0$.
From Lemma \ref{lem:G_smoothness} below, we find
\begin{equation} \label{G-G_estimate}
|G (x- \tilde{y}) - G(x-y)| \lesssim |y-\tilde{y}| + \absbb{ \log \frac{|x-\tilde{y}|}{|x-y|}} 
\end{equation}
and similarly for $| G(\tilde{x} - y) - G (x-y)|$. Switching to new integration variables $h = (\tilde x- x)/\epsilon$, $k = (y - \tilde{y})/\epsilon$, and $z = x-y$, we obtain 
\[ \| V^\epsilon - W^\epsilon \|_{L^2 (\mathbb{P})}^2 \lesssim \int \dd h \,  \dd k \, \dd z \, v(h) \, v(k) \, G(\epsilon h) \, G(\epsilon k) \pbb{ \epsilon |h| + \absbb{\log \frac{|z + \epsilon  h|}{|z|}}}  \pbb{ \epsilon |k| + \absbb{ \log \frac{|z +\epsilon k|}{|z|} } } \]
We multiply out the two parentheses on the right-hand side and treat each of the four terms separately. The term arising from $\epsilon \abs{h} \epsilon \abs{k}$ is easily estimated by $O_\alpha(\epsilon^{2 \alpha})$ for any $\alpha \in (0,1)$, using that $G(x) \abs{x} \lesssim_\alpha \abs{x}^\alpha$, by Lemma \ref{lem:G_smoothness}.

For the other three terms containing the logarithmic factor, we use
\begin{equation} \label{log_estimate}
\absbb{ \log \frac{|x + y|}{|x|}} \lesssim
\begin{cases}
| \log |x+ y| | + | \log |x| | & \text{if } |x| \leq 2 |y|
\\
\frac{|y|}{|x|}  & \text{if } |x| > 2 |y|\,,
\end{cases}
\end{equation}
We estimate the mixed terms, for any $\alpha \in [0,1)$, as
\begin{align*}
&\mspace{-10mu}\int \dd h \,  \dd k \, \dd z \, v(h) \, v(k) \, G(\epsilon h) \, G(\epsilon k) \,  \epsilon |k| \, \absbb{ \log \frac{|z +\epsilon h|}{|z|} }
\\
&\lesssim \int \dd k \,  v(k) \, G(\epsilon k) \, \epsilon |k| \int  \dd h  \, v(h)  \, G(\epsilon h) \int \dd z \,   \pbb{\pb{| \log |z+ \epsilon h| | + | \log |z| |} \ind{|z| \leq 2 \epsilon |h|} + \frac{\epsilon |h|}{|z|} \ind{|z| > 2 \epsilon |h|}}
\\
&\lesssim_\alpha
\int \dd k \,  v(k) \, G(\epsilon k) \, \epsilon |k| \int \dd h \, v(h)  \, G(\epsilon h) \pB{\epsilon^{2 \alpha} \abs{h}^{2 \alpha} + \epsilon \abs{h}}
\\
&\lesssim_\alpha \epsilon^{2 \alpha}\,.
\end{align*}
Finally, we estimate, for any $\alpha' < \alpha < 1$,
\begin{align*}
&\mspace{-10mu} \int \dd h \,  \dd k \, \dd z \, v(h) \, v(k) \, G(\epsilon h) \, G(\epsilon k) \,  \absbb{ \log \frac{|z +\epsilon h|}{|z|} } \, \absbb{ \log \frac{|z +\epsilon k|}{|z|} }
\\
&\lesssim
\int \dd h \,  \dd k \, v(h) \, v(k) \, G(\epsilon h) \, G(\epsilon k) \int \dd z \, \pbb{\pb{| \log |z+ \epsilon h| | + | \log |z| |} \ind{|z| \leq 2 \epsilon |h|} + \frac{\epsilon |h|}{|z|} \ind{|z| > 2 \epsilon |h|}}
\\
&\qquad \times \pbb{\pb{| \log |z+ \epsilon k| | + | \log |z| |} \ind{|z| \leq 2 \epsilon |k|} + \frac{\epsilon |k|}{|z|} \ind{|z| > 2 \epsilon |k|}}
\\
&\lesssim_\alpha \int \dd h \,  \dd k \, v(h) \, v(k) \, G(\epsilon h) \, G(\epsilon k)  \pbb{[\epsilon (\abs{h} \wedge \abs{k})]^{2 \alpha} + 2 \, (\epsilon \abs{h})^\alpha \, \epsilon \abs{k} + (\epsilon \abs{h})^\alpha (\epsilon \abs{k})^\alpha}
\\
&\lesssim \int \dd h \,  \dd k \, v(h) \, v(k) \, G(\epsilon h) \, G(\epsilon k)  \,(\epsilon \abs{h})^\alpha (\epsilon \abs{k})^\alpha
\\
&\lesssim_{\alpha'} \epsilon^{2 \alpha'}\,,
\end{align*}
where in the second step we used the estimates
\begin{align*}
\int \dd z \, \pB{| \log |z+ \epsilon h| |^2 + | \log |z| |^2} \ind{|z| \leq 2 \epsilon |h|} \ind{|z| \leq 2 \epsilon |k|} &\lesssim_\alpha [\epsilon (\abs{h} \wedge \abs{k})]^{2 \alpha}\,,
\\
\int \dd z \, \frac{| \log |z+ \epsilon h| | + | \log |z| |}{\abs{z}} \ind{|z| \leq 2 \epsilon |h|} &\lesssim_\alpha (\epsilon \abs{h})^\alpha
\end{align*}
and that $\frac{\epsilon |h|}{|z|} \leq \pb{\frac{\epsilon |h|}{|z|}}^{\alpha}$ for $|z| > 2 \epsilon |h|$. We conclude that  $\|  V^\epsilon - W^\epsilon \|_{L^2 (\mathbb{P})} \lesssim_\alpha \epsilon^{\alpha}$ for any $\alpha \in (0,1)$. 
\end{proof}

\begin{lemma} \label{lem:conv_V}
We have $\norm{V^{\epsilon} - V}_{L^2(\P)} \to 0$ as $\epsilon \to 0$.
\end{lemma}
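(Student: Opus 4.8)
The plan is to prove the lemma by a direct second‑moment computation, mirroring the proof of Lemma \ref{lem:R_variance}. Since $\wick{\abs{\phi(x)}^4}$ and $\wick{\abs{\phi(x)}^2\abs{\phi(\tilde x)}^2}$ are Wick‑ordered quartic monomials, both $V$ and $V^\epsilon$ lie in the fourth homogeneous chaos, hence so does $V^\epsilon-V$, and $\norm{V^\epsilon-V}_{L^2(\P)}^2$ is given by the contraction formula of Lemma \ref{lem:Wick} (in the form of Example \ref{ex:two_blocks}). Using that $\int\dd\tilde x\, v^\epsilon(x-\tilde x)=1$, which follows from \eqref{v_epsilon}, I would first write
\[
V^\epsilon-V=\frac12\int\dd x\,\dd\tilde x\, v^\epsilon(x-\tilde x)\,\pb{\wick{\abs{\phi(x)}^2\abs{\phi(\tilde x)}^2}-\wick{\abs{\phi(x)}^4}}\,,
\]
so that, squaring and applying Lemma \ref{lem:Wick} (which, since the free field is complex with $\E[\phi\phi]=0$, leaves only the $\phi$–$\bar\phi$ cross‑contractions),
\[
\norm{V^\epsilon-V}_{L^2(\P)}^2=\frac14\int\dd x\,\dd\tilde x\,\dd y\,\dd\tilde y\, v^\epsilon(x-\tilde x)\, v^\epsilon(y-\tilde y)\, K(x,\tilde x,y,\tilde y)\,,
\]
where, using $\E[\wick{\abs{\phi(a)}^2\abs{\phi(b)}^2}\wick{\abs{\phi(c)}^2\abs{\phi(d)}^2}]=\pb{G(a-c)G(b-d)+G(a-d)G(b-c)}^2$,
\begin{multline*}
K(x,\tilde x,y,\tilde y)=\pb{G(x-y)G(\tilde x-\tilde y)+G(x-\tilde y)G(\tilde x-y)}^2\\
-4G(x-y)^2G(\tilde x-y)^2-4G(x-y)^2G(x-\tilde y)^2+4G(x-y)^4\,.
\end{multline*}

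The structural point is that $K$ vanishes identically on the diagonal $\tilde x=x$, $\tilde y=y$. To exploit this I would pass to variables $a=x-y$, $h=x-\tilde x$, $k=\tilde y-y$ and integrate out the remaining translation variable (using translation invariance of $G$), so that $K$ becomes a function $K(a;h,k)$ and $\norm{V^\epsilon-V}_{L^2(\P)}^2=\tfrac14\int v^\epsilon(h)v^\epsilon(k)K(a;h,k)\,\dd a\,\dd h\,\dd k$; rescaling $h=\epsilon h'$, $k=\epsilon k'$ turns $v^\epsilon(h)\dd h$ into $v(h')\dd h'$, reducing matters to showing $\int v(h')v(k')\!\int_\Lambda K(a;\epsilon h',\epsilon k')\,\dd a\,\dd h'\,\dd k'\to0$. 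Expanding $K(a;h,k)$ about $h=k=0$, i.e.\ writing $G(a-h)=G(a)+r_h(a)$ and likewise for $r_k,r_{h+k}$, the diagonal cancellation forces every surviving term to contain either the second difference $G(a-h-k)-G(a-h)-G(a-k)+G(a)$ or a product of at least two increments $r_h,r_k,r_{h+k}$, multiplied by powers of $G(a)$. Each such term is then bounded by Hölder's inequality using that $G\in L^p(\Lambda)$ for all $p<\infty$ (the Green function has only a logarithmic singularity, by Lemma \ref{lem:G_smoothness}, so all its powers are integrable) and that translation is continuous on $L^p(\Lambda)$, which makes $\norm{r_h}_{L^p(\Lambda)}$, $\norm{r_k}_{L^p(\Lambda)}$ and the $L^p$‑norm of the second difference tend to $0$ as $\epsilon\to0$, uniformly for $h',k'$ in the compact support of $v$; a final dominated‑convergence step in $h',k'$ concludes.

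The main obstacle is bookkeeping in the expansion of $K$: the individual moments $\E[(V^\epsilon)^2]$, $\E[V^\epsilon V]$, $\E[V^2]$ are all of order one and none of them separately tends to $0$, so one must track the cancellation explicitly — analogous to, though milder than, the double cancellation highlighted around \eqref{R_e_square} in the proof of Lemma \ref{lem:R_variance} (milder because here there is no divergence to tame: uniform finiteness of $\norm{V^\epsilon-V}_{L^2(\P)}^2$ is automatic from $G^2\in L^2(\Lambda)$, and only the smallness requires the structure). Organizing the expansion so that each term manifestly carries a second difference or a product of increments of $G$, and matching Hölder exponents against the available integrability of $G$, is the one place where care is needed; the remaining estimates are routine.
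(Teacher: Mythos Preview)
Your proposal is correct and follows the same overall architecture as the paper: compute $\E(V^\epsilon-V)^2$ via Wick's theorem, change variables to isolate the small parameters $h,k$ on the scale $\epsilon$, and exploit the cancellation at $h=k=0$. Your representation $V^\epsilon-V=\tfrac12\int v^\epsilon(x-\tilde x)\pb{\wick{\abs{\phi(x)}^2\abs{\phi(\tilde x)}^2}-\wick{\abs{\phi(x)}^4}}$ is equivalent to the paper's $\tfrac12\int(v^\epsilon-\delta)\,\wick{\abs{\phi(x)}^2\abs{\phi(\tilde x)}^2}$, and your kernel $K$ indeed vanishes not only at $(h,k)=(0,0)$ but separately at $h=0$ and at $k=0$; a direct expansion confirms your structural claim that the only cubic-in-$G$ term is $4G(a)^3$ times the second difference, while all remaining terms carry at least two increment factors.

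The difference lies in the final analytic step. The paper uses the explicit logarithmic form of $G$ from Lemma~\ref{lem:G_smoothness} to derive pointwise difference bounds of the type \eqref{G-G_estimate}, integrates them by hand, and obtains a quantitative rate $\norm{V^\epsilon-V}_{L^2(\P)}\lesssim\epsilon$. Your argument instead invokes only $G\in L^p(\Lambda)$ for all $p<\infty$ and continuity of translation on $L^p$, closing with dominated convergence; this is cleaner and avoids any case analysis on the singularity, at the price of giving no rate. Since the lemma as stated asks only for convergence, your softer route is sufficient, and nothing downstream in Section~\ref{sec:field_theory} requires the rate. The bookkeeping you flag as the main obstacle is real but manageable: once you note that $K(a;h,0)=K(a;0,k)=0$, the expansion organizes itself, and the uniform $L^4$ bound $\abs{\int K\,\dd a}\le C\norm{G}_{L^4}^4$ supplies the dominating function.
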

\begin{proof}
Clearly,
\begin{equation*}
V^\epsilon - V = \frac{1}{2} \int \dd x \, \dd \tilde x \, \pb{v^\epsilon(x - \tilde x) - \delta(x - \tilde x)} \, \wick{ \abs{\phi(x)}^2 \, \abs{\phi(\tilde x)}^2}
\end{equation*}
Using Lemma \ref{lem:Wick} below (see also Example \ref{ex:two_blocks} below) we therefore find
\begin{multline*}
\E (V^\epsilon - V)^2 = \frac{1}{2} \int \dd x \, \dd \tilde x \, \dd y \, \dd \tilde y \, \pb{v^\epsilon (x - \tilde x)  - \delta(x - \tilde x)} \, \pb{v^\epsilon(y - \tilde y) - \delta(y - \tilde y)}
\\
\times G(x - y) \, G(\tilde x - \tilde y) \, \pb{G(x - y) \, G(\tilde x - \tilde y) + G(\tilde x - y) \, G(x - \tilde y)}\,.
\end{multline*}
The right-hand side splits into two terms. We only consider the first one; the analysis of the second one is analogous.
With the change of variables $\tilde x - x = h$, $\tilde y - y = k$, and $z = x - y$, the first term reads
\begin{multline*}
\frac{1}{2} \int \dd z  \, \dd h \, \dd k \, \pb{v^\epsilon (h)  - \delta(h)} \, \pb{v^\epsilon(k) - \delta(k)} \, G(z)^2 \, G(z + h - k)^2
\\
= \frac{1}{2} \int \dd z  \, G(z)^2 \int \dd h \, \dd k \, v^\epsilon (h) \, v^\epsilon (k) \pB{G(z + h - k)^2 - G(z - k)^2 - G(z + h)^2 + G(z)^2}\,.
\end{multline*}
We now estimate the first two terms and the last two terms separately. (A more careful second-order analysis could be done to take into account a further cancellation between all four terms, yielding a bound $\epsilon^{2 \alpha}$ for any $\alpha < 1$ instead of $\epsilon$, but we shall not need it.) The sum of the first two terms on the right-hand side is estimated by
\begin{align*}
&\mspace{-10mu}\frac{1}{2} \int \dd z  \, G(z)^2 \int \dd h \, \dd k \, v^\epsilon (h) \, v^\epsilon (k) \absB{G(z + h - k)^2 - G(z - k)^2}
\\
&\lesssim
\frac{1}{2} \int \dd h \, \dd k \, v^\epsilon (h) \, v^\epsilon (k) \int \dd z  \, G(z)^2 \, \pb{G(z + h - k) + G(z - k)} \pbb{\abs{h} + \absbb{\log \frac{\abs{z + h - k}}{\abs{z - k}}}}
\\
&\lesssim
\frac{1}{2} \int \dd h \, \dd k \, v^\epsilon (h) \, v^\epsilon (k) \int \dd z  \, G(z)^2 \, \pb{G(z + h - k) + G(z - k)}
\\
&\qquad \times \pbb{\abs{h} + \pb{\abs{\log \abs{z + h - k}} + \abs{\log \abs{z - k}}} \ind{\abs{z - k} \leq 2 \abs{h}} + \frac{\abs{h}}{\abs{z - k}} \ind{\abs{z - k} > 2 \abs{h}}}
\end{align*}
where in the second step we used the estimate \eqref{G-G_estimate}, and in the third the estimate \eqref{log_estimate}. Using that on the support of the integral over $h$ we have $\abs{h} \lesssim \epsilon$, we may perform the integral over $z$, followed by the integrals over $h$ and $k$, to deduce that the above expression is bounded by $O(\epsilon)$. 
Here we also used that $G$ has a logarithmic singularity at the origin, as established in Lemma \ref{lem:G_smoothness} below and the fact that  $\bigl\|\log |x| \,\ind{|x| \lesssim \epsilon}\bigr\|_{L^p} \lesssim_p \epsilon$ for $p \in (1,2)$, combined with H\"{o}lder's inequality.
This concludes the proof.
\end{proof}

\subsection{Integrability of $\ee^{-V^\epsilon}$} \label{sec:Nelson}
In this subsection we establish the integrability of $\ee^{-V^\epsilon}$, uniformly in $\epsilon$.
This is an adaptation of Nelson's argument \cite{nelson1973free} (see also \cite{Hairer2016} for a recent pedagocial account) to a nonlocal interaction.

\begin{proposition} \label{prop:Nelson}
There is a constant $c > 0$ such that for all $\epsilon > 0$ and $t \geq 1$ we have
\begin{equation*}
\P(\ee^{-V^{\epsilon}} > t) \lesssim \exp(-\ee^{c \sqrt{\log t}})\,.
\end{equation*}
The same estimate holds for $V^\epsilon$ replaced with $V$.
\end{proposition}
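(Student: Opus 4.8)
The plan is to carry out Nelson's high/low-frequency decomposition for $V^\epsilon$, tracking all constants so that they stay uniform in $\epsilon$; the bound for $V$ then follows by the same argument with the mollified potential $v^\epsilon$ replaced by a Dirac delta. Write $\phi_N$ for the truncated field \eqref{def_phi_N}, put $\sigma_N \deq \E\qb{\abs{\phi_N(x)}^2} = \sum_{k \in \Z^2} \vartheta(k/N)/\lambda_k$, so that $\sigma_N \asymp \log N$, and let $G_N(x - \tilde x) \deq \E\qb{\phi_N(x)\bar\phi_N(\tilde x)}$, which satisfies $\absb{G_N(z)} \leq \sigma_N$ for all $z$. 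Let $V^\epsilon_N$ denote the interaction \eqref{V^epsilon} with $\phi$ replaced by $\phi_N$, so that $V^\epsilon_N \to V^\epsilon$ in $L^2(\P)$.

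The heart of the argument is a \emph{deterministic} lower bound $V^\epsilon_N \geq - C (\log N)^2$, uniform in $\epsilon$. Expanding the Wick ordering (cf.\ \eqref{wick_expanded}),
\begin{multline*}
\wick{\abs{\phi_N(x)}^2 \abs{\phi_N(\tilde x)}^2} = \abs{\phi_N(x)}^2 \abs{\phi_N(\tilde x)}^2 - \sigma_N \pb{\abs{\phi_N(x)}^2 + \abs{\phi_N(\tilde x)}^2} \\ - 2 G_N(x - \tilde x)\,\re\qb{\bar\phi_N(x)\phi_N(\tilde x)} + \sigma_N^2 + G_N(x - \tilde x)^2\,.
\end{multline*}
Integrating against $\tfrac12 v^\epsilon(x - \tilde x)$ over $\Lambda^2$ and using $\int_\Lambda v^\epsilon = 1$, the mass term contributes $-\sigma_N \norm{\phi_N}_{L^2}^2$, the constant terms contribute $\tfrac12 \sigma_N^2 + \tfrac12 \int v^\epsilon(z) G_N(z)^2\, \dd z$, with the last integral $\lesssim \norm{v^\epsilon}_{L^1}\sigma_N^2 \lesssim (\log N)^2$ uniformly in $\epsilon$. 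The quartic term is bounded below by $\tfrac12 \norm{\phi_N}_{L^2}^4$, keeping only the zero Fourier mode of $\abs{\phi_N}^2$ and using that $\wh v^\epsilon(k) = \wh v(\epsilon k) \geq 0$ with $\wh v^\epsilon(0) = 1$ (this is where the positive-type hypothesis and the normalization $\int v = 1$ enter). Finally the cross term is estimated crudely by $\absb{G_N} \leq \sigma_N$ and Young's inequality, $\absB{\int \dd x\, \dd\tilde x\; v^\epsilon(x-\tilde x) G_N(x - \tilde x)\re[\bar\phi_N(x)\phi_N(\tilde x)]} \leq \sigma_N \norm{v^\epsilon}_{L^1}\norm{\phi_N}_{L^2}^2$. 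Collecting the four contributions, $V^\epsilon_N \geq \tfrac12 \norm{\phi_N}_{L^2}^4 - C\sigma_N \norm{\phi_N}_{L^2}^2 - C(\log N)^2$, and completing the square in $\norm{\phi_N}_{L^2}^2$ gives the claimed deterministic bound, with $C$ depending only on $\norm{v}_{L^1}$ and $\kappa$.

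Next I would show $\normb{V^\epsilon - V^\epsilon_N}_{L^2(\P)} \lesssim N^{-\delta}$ for some $\delta > 0$, uniformly in $\epsilon$, by a Wick-theorem computation in the spirit of Lemmas \ref{lem:R_variance} and \ref{lem:conv_V}: $\E(V^\epsilon - V^\epsilon_N)^2$ is a sum of integrals of $v^\epsilon(x - \tilde x) v^\epsilon(y - \tilde y)$ against products of four Green functions linking the two blocks $\{x,\tilde x\}$ and $\{y,\tilde y\}$, each term carrying at least one factor $G - G_N$, and one concludes from $\norm{G - G_N}_{L^2} \lesssim N^{-1}$ together with the local behaviour of $G$, $G_N$ (cf.\ Lemma \ref{lem:G_smoothness}). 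Since $V^\epsilon - V^\epsilon_N$ lies in the Wiener chaoses of order $\leq 4$, Nelson's hypercontractivity gives $\normb{V^\epsilon - V^\epsilon_N}_{L^p(\P)} \lesssim p^2 N^{-\delta}$ for $p \geq 2$, and Markov's inequality optimized over $p$ yields $\P\pb{\absb{V^\epsilon - V^\epsilon_N} > \mu} \leq \exp\pb{-c\sqrt{\mu}\, N^{\delta/2}}$ for $\mu \gtrsim N^{-\delta}$. On $\hb{V^\epsilon < -K}$ we have $\absb{V^\epsilon - V^\epsilon_N} \geq K - C(\log N)^2$; choosing $N \asymp \exp(c_0\sqrt K)$ with $c_0$ small enough that $C(\log N)^2 \leq K/2$ for large $K$, we get $\P(V^\epsilon < -K) \leq \exp\pb{-c\sqrt K\, \ee^{c_0 \delta \sqrt K/2}} \leq \exp\pb{-\ee^{c'' \sqrt K}}$, extended trivially to bounded $K$; taking $K = \log t$ gives the statement for $V^\epsilon$. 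The bound for $V$ is the same argument with $v^\epsilon$ replaced by $\delta$ — the quartic term being then $\tfrac12 \int\abs{\phi_N}^4 \geq \tfrac12\norm{\phi_N}_{L^2}^4$ by Cauchy--Schwarz, since $\abs{\Lambda} = 1$ — which is Nelson's classical bound.

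The main obstacle, I expect, is the deterministic lower bound. Renormalization forces both the mass term $-\sigma_N\norm{\phi_N}_{L^2}^2$ and the Wick-ordering cross term $\int v^\epsilon G_N\re[\bar\phi_N\phi_N]$ into the expression, and both are merely quadratic in $\phi_N$, whereas the only positive quantity at hand is the quartic term. The point that makes it work — and the place where the structure of the nonlocal interaction genuinely enters — is that after the crude estimate $\absb{G_N} \leq \sigma_N$ the cross term is $\lesssim \sigma_N\norm{\phi_N}_{L^2}^2$, and the positive-type property of $v^\epsilon$ lets one extract the full $\tfrac12\norm{\phi_N}_{L^2}^4$ from the quartic term using only its zero Fourier mode, after which completing the square closes the estimate. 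Because $\sigma_N$ is independent of $\epsilon$ and $\norm{v^\epsilon}_{L^1}$ is bounded uniformly in $\epsilon$, the resulting $(\log N)^2$ bound, and hence the doubly-exponential tail, is uniform in $\epsilon$.
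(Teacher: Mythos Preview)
Your proposal is correct and follows essentially the same route as the paper: deterministic lower bound $V^\epsilon_N \geq -C(\log N)^2$ uniform in $\epsilon$, an $L^2$ estimate $\norm{V^\epsilon - V^\epsilon_N}_{L^2(\P)} \lesssim N^{-\delta}$, hypercontractivity to upgrade to $L^p$, and Chebyshev with $N \asymp \exp(c_0\sqrt{K})$. The differences are purely organisational: for the lower bound the paper shifts $\abs{\phi_N}^2$ by $S\sigma_N$ with $S = 1 + \int \abs{v}$ and uses positive type to drop the shifted quadratic form, whereas you keep the full quartic, extract $\tfrac12\norm{\phi_N}_{L^2}^4$ from the zero Fourier mode of $\abs{\phi_N}^2$, and complete the square --- these are equivalent rearrangements of the same estimate; and for the $V$ case the paper passes to the limit via $V^\epsilon \to V$ in $L^2(\P)$, while you simply rerun Nelson's classical argument with $v^\epsilon$ replaced by a Dirac mass.
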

In particular, $\ee^{-V^{\epsilon}}$ is uniformly integrable in $\epsilon > 0$.

The rest of this subsection is devoted to the proof of Proposition \ref{prop:Nelson}. We start by noting that $\kappa - \Delta / 2$ has eigenfunctions $u_k \in \cal H$ and eigenvalues $\lambda_k$ indexed by $k \in \Z^d$ and given by
\begin{equation}
\label{lambda_k}
\lambda_k = \kappa + 2 \pi^2 \abs{k}^2 \,, \qquad u_k = \ee^{2 \pi \ii k \cdot x}\,.
\end{equation}
We shall use the truncated field $\phi_N$ from \eqref{def_phi_N} with a suitable truncation $\vartheta$, which is smooth in Fourier space. To that end, we fix $\rho$ to be a smooth, nonnegative, rotation invariant function, that has integral $1$ and is supported in the unit ball. We suppose that its Fourier transform
\begin{equation} \label{Fourier}
\vartheta(\xi) \deq \fra F \rho (\xi) \deq \int_{\R^2} \dd x \, \ee^{-2 \pi \ii \xi \cdot x} \, \rho(x)
\end{equation}
is nonnegative and radially nonincreasing (this can always be achieved by taking $\rho$ as a convolution of two nonnegative functions).

We define the truncated version of $V^\epsilon$ from \eqref{V^epsilon} through
\begin{equation} \label{def_V_e_M}
V^\epsilon_N \deq \frac{1}{2} \int \dd x \, \dd \tilde x \, v^\epsilon(x - \tilde x) \, \wick{ \abs{\phi_N(x)}^2 \, \abs{\phi_N(\tilde x)}^2}\,,
\end{equation}
which converges in $L^2(\P)$ to $V^\epsilon$ as $N \to \infty$.

Next, let $(Y_k)_{k \in \Z^d}$ be a family of i.i.d.\ standard complex Gaussian random variables, which is independent of the family $(X_k)_{k \in \Z^d}$. For  $0 < N \leq M \leq \infty$ we define the field
\begin{equation*}
\psi_{N,M} \deq \sum_{k \in \Z^d} \frac{Y_k}{\sqrt{\lambda_k}} \, \sqrt{\vartheta(k / M) - \vartheta(k / N)} \, u_k\,.
\end{equation*}
By construction, $\phi_N$ and $\psi_{N,M}$ are independent. For $M < \infty$, they are almost surely smooth on ${\Lambda}$. We define the truncated Green function
\begin{equation} \label{def_G_N}
G_N \deq G * \rho_N\,, \qquad \rho_N(x) \deq \sum_{n \in \Z^d} N^2 \rho(N(x + n))\,,
\end{equation}
and find by Poisson summation that, for $N \leq M$,
\begin{align} \label{G_N_phi}
\E \qb{\phi_N(x) \bar \phi_N(y)} &= \sum_{k \in \Z^d} \frac{1}{\lambda_k} \ee^{2 \pi \ii k \cdot (x - y)} \vartheta(k / N) = G_N(x-y)
\\ \label{G_N_M_phi}
\E \qb{\psi_{N,M}(x) \bar \psi_{N,M}(y)}  &= \sum_{k \in \Z^d} \frac{1}{\lambda_k} \ee^{2 \pi \ii k \cdot (x - y)} \pb{\vartheta(k / M) - \vartheta(k / N)} = G_M(x-y) - G_N(x - y)\,.
\end{align}
By independence of $\phi_N$ and $\psi_{N,M}$, we therefore find that for any $N \leq M$ we have the decomposition into low and high frequencies
\begin{equation} \label{phi_decomposition}
\phi_N + \psi_{N,M} \eqdist \phi_{M}\,,
\end{equation}
and in particular setting $M = \infty$ we get
\begin{equation*}
\phi_N + \psi_{N,\infty} \eqdist \phi\,.
\end{equation*}
Here $\eqdist$ denotes equality in law.

By \eqref{phi_decomposition} we have, for any $N \leq M$,
\begin{equation*}
V^{\epsilon}_M \eqdist \frac{1}{2} \int \dd x \, \dd \tilde x \, v^\epsilon(x - \tilde x) \, \wick{ \abs{\phi_N(x) + \psi_{N,M}(x)}^2 \, \abs{\phi_N(\tilde x) + \psi_{N,M}(\tilde x)}^2}\,.
\end{equation*}
For any $N \leq M$ we therefore have
\begin{equation*}
V^{\epsilon}_M \eqdist \sum_{a , \tilde a , b, \tilde b \in \{0,1\}} V^{\epsilon}_{ N,M}(a,\tilde a, b, \tilde b)\,,
\end{equation*}
where
\begin{multline} \label{def_V_epsilon}
V^{\epsilon}_{N,M}(a,\tilde a, b, \tilde b) \deq \frac{1}{2} \int \dd x \, \dd \tilde x \, v^\epsilon(x - \tilde x)
\\ \times  \wick{\phi_N(x)^{1 - a} \phi_N(\tilde x)^{1 - \tilde a} \bar \phi_N(x)^{1 - b} \bar \phi_N(\tilde x)^{1 - \tilde b} \,\psi_{N,M}(x)^{a} \psi_{N,M}(\tilde x)^{\tilde a} \bar \psi_{N,M}(x)^{b} \bar \psi_{N,M}(\tilde x)^{\tilde b}}\,.
\end{multline}

Hence, for $N \leq M$ we have
\begin{equation} \label{V_M-V_N}
V^{\epsilon}_M - V^{\epsilon}_N = \sum_{a , \tilde a , b, \tilde b \in \{0,1\}} \ind{a + \tilde a +  b + \tilde b > 0} \, V^{\epsilon}_{N,M}(a,\tilde a, b, \tilde b)\,.
\end{equation}

\begin{lemma} \label{lem:lbound_V}
There is a constant $C$ depending on $v$ such that almost surely
\begin{equation*}
V^{\epsilon}_{N} \geq - C (\log N)^2
\end{equation*}
for all $\epsilon > 0$.
\end{lemma}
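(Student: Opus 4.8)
The plan is to write $V^\epsilon_N$ in a form where the mass-renormalization counterterm is manifestly the only source of negativity, then bound that counterterm. Expanding the Wick-ordered product in \eqref{def_V_e_M} using \eqref{wick_expanded} (the formula referenced in the proof of Lemma \ref{lem:R_variance}), one gets
\[
V^\epsilon_N = \frac{1}{2}\int \dd x\,\dd\tilde x\, v^\epsilon(x-\tilde x)\,\wick{\abs{\phi_N(x)}^2}\,\wick{\abs{\phi_N(\tilde x)}^2} - \int \dd x\,\dd\tilde x\, v^\epsilon(x-\tilde x)\, G_N(x-\tilde x)\,\wick{\bar\phi_N(x)\phi_N(\tilde x)} + E^\epsilon_N\,,
\]
where $E^\epsilon_N = \frac12\int\dd x\,\dd\tilde x\, v^\epsilon(x-\tilde x)\,G_N(x-\tilde x)^2 \geq 0$ (recall $v$ has positive type and $G_N$ is a bona fide covariance, so the quadratic form is nonnegative). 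The first term on the right-hand side is also nonnegative, again because $v^\epsilon$ has positive type: it equals $\frac12\int v^\epsilon(x-\tilde x)\, f(x)\overline{f(\tilde x)}$-type structure with $f = \wick{\abs{\phi_N}^2}$ real, hence $\geq 0$ in Fourier space. So the only potentially negative contribution is the middle term, $-\int v^\epsilon(x-\tilde x) G_N(x-\tilde x) \wick{\bar\phi_N(x)\phi_N(\tilde x)}$.

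Next I would estimate this middle term pathwise. Writing $\wick{\bar\phi_N(x)\phi_N(\tilde x)} = \bar\phi_N(x)\phi_N(\tilde x) - G_N(x-\tilde x)$, the $G_N^2$ piece is again nonnegative (same positive-type argument) and can be dropped from a lower bound, so it suffices to lower bound $-\int v^\epsilon(x-\tilde x) G_N(x-\tilde x)\,\bar\phi_N(x)\phi_N(\tilde x)$. By Cauchy–Schwarz in $x,\tilde x$ and the fact that $\int v^\epsilon = 1$ with $v^\epsilon \geq 0$ replaced by $\abs{v^\epsilon}$ having bounded $L^1$ norm (here one uses that $v$, though only of positive type, is a fixed smooth compactly supported function, so $\norm{v^\epsilon}_{L^1}$ is bounded uniformly in $\epsilon$), this is bounded below by $-\norm{v^\epsilon}_{L^1}\,\norm{G_N}_\infty\,\int\abs{\phi_N(x)}^2\,\dd x = -C\,\norm{G_N}_\infty\,\norm{\phi_N}_{L^2}^2$. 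Now $\norm{G_N}_\infty = G_N(0) = \sum_k \vartheta(k/N)/\lambda_k \lesssim \log N$ by \eqref{G_N_phi} and the logarithmic divergence established in Lemma \ref{lem:G_smoothness}, so we arrive at $V^\epsilon_N \geq -C (\log N)\,\norm{\phi_N}_{L^2}^2$. This is not yet the claimed bound because $\norm{\phi_N}_{L^2}^2$ is random and unbounded.

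The key remaining point is therefore to trade the random factor $\norm{\phi_N}_{L^2}^2$ for a deterministic $\log N$, at the cost of worsening the constant. This is where the positive first term should be put to work rather than discarded: one keeps $\frac14\int v^\epsilon \wick{\abs{\phi_N(x)}^2}\wick{\abs{\phi_N(\tilde x)}^2}$ and completes the square against the middle term. Concretely, write $\wick{\bar\phi_N(x)\phi_N(\tilde x)}$ in terms of $\wick{\abs{\phi_N(x)}^2}$ and $\wick{\abs{\phi_N(\tilde x)}^2}$ up to a term controlled by $\abs{G_N}$ and the smoothness estimate \eqref{G-G_estimate} (this is exactly the manipulation already carried out in \eqref{W_V_estimate}), and then absorb the resulting cross term into the positive quadratic form plus a negative multiple of $(\tau^\epsilon_N)^2$-type constant, where $\tau^\epsilon_N = \int v^\epsilon(x) G_N(x)\,\dd x \lesssim \log N$. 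The leftover negative constant is then $O((\log N)^2)$ uniformly in $\epsilon$, which is the claimed bound. \textbf{The main obstacle} I anticipate is the bookkeeping in this completion-of-the-square step: one must be careful that all cross terms are genuinely absorbed into the nonnegative quadratic forms with constants independent of both $\epsilon$ and $N$, using only positivity of type of $v$ (not pointwise positivity) and the uniform-in-$\epsilon$ bounds $\tau^\epsilon_N, \norm{G_N}_\infty \lesssim \log N$ coming from Lemma \ref{lem:G_smoothness}; the $\epsilon$-uniformity is the subtle part since individual terms diverge as $\epsilon \to 0$.
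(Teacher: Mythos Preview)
Your overall strategy is correct and close to the paper's: decompose $V^\epsilon_N$, keep the nonnegative quartic form in $\wick{|\phi_N|^2}$, and identify the off-diagonal cross term $-\int v^\epsilon(x-\tilde x)G_N(x-\tilde x)\bar\phi_N(x)\phi_N(\tilde x)$ as the sole source of negativity, bounded by $(S-1)G_N(0)\|\phi_N\|_{L^2}^2$ with $S-1=\|v\|_{L^1}$. (Minor slip: with Wick-ordered $\wick{\bar\phi_N\phi_N}$ the constant is $-E^\epsilon_N$, not $+E^\epsilon_N$; cf.\ line~2 of \eqref{W_V_estimate}. This is harmless once you unpack the Wick ordering as you do next.)

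The gap is in your final paragraph. The sentence ``write $\wick{\bar\phi_N(x)\phi_N(\tilde x)}$ in terms of $\wick{|\phi_N(x)|^2}$ and $\wick{|\phi_N(\tilde x)|^2}$'' does not make sense: these are algebraically unrelated objects (one is bilinear in $\phi_N,\bar\phi_N$ at different points, the other quadratic at a single point), and the smoothness estimate \eqref{G-G_estimate} plays no role here. So the completion-of-the-square mechanism you describe is not a mechanism.

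The paper's fix is direct and avoids the Wick-ordered decomposition altogether. Expand $V^\epsilon_N$ fully in terms of $|\phi_N|^2$ and then complete the square with an \emph{over-shifted} constant $SG_N(0)$ (rather than $G_N(0)$), where $S=1+\int|v|$:
\[
V^\epsilon_N=\tfrac12\!\int v^\epsilon(x-\tilde x)\bigl(|\phi_N(x)|^2-SG_N(0)\bigr)\bigl(|\phi_N(\tilde x)|^2-SG_N(0)\bigr)
+(S-1)G_N(0)\|\phi_N\|_{L^2}^2-\re\!\int v^\epsilon G_N\,\phi_N\bar\phi_N+\text{const.}
\]
The first term is $\ge 0$ by positive type, and the extra mass term $(S-1)G_N(0)\|\phi_N\|_{L^2}^2$ produced by the over-shift exactly absorbs the cross term via $|G_N|\le G_N(0)$ and Young's inequality $\int|v^\epsilon(x-\tilde x)||\phi_N(x)||\phi_N(\tilde x)|\le (S-1)\|\phi_N\|_{L^2}^2$. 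What remains is $-\tfrac{S^2}{2}G_N(0)^2=-C(\log N)^2$. Your setup can also be salvaged more simply than you suggest: the quartic term dominates its own zero Fourier mode, giving $\tfrac12(\|\phi_N\|_{L^2}^2-G_N(0))^2$, after which a scalar square-completion in $\|\phi_N\|_{L^2}^2$ yields the same $-\tfrac{S^2}{2}G_N(0)^2$. Finally, the $\epsilon$-uniformity you flag as the ``main obstacle'' is a non-issue: the only $\epsilon$-dependent inputs are $\int v^\epsilon=1$ and $\|v^\epsilon\|_{L^1}=\int|v|$, both uniform; no individually divergent quantities like $\tau^\epsilon$ enter.
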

\begin{proof}
Abbreviate $S = 1 + \int_{\R^2} \dd x \, \abs{v(x)}$.
Using the explicit form \eqref{wick_expanded} of the Wick power in \eqref{def_V_e_M} as well as \eqref{G_N_phi}, we find
\begin{align*}
V^{\epsilon}_N &= \frac{1}{2} \int \dd x \, \dd \tilde x \, v^\epsilon(x - \tilde x) \, \pB{\abs{\phi_N(x)}^2 \abs{\phi_N(\tilde x)}^2 - G_N(0) \abs{\phi_N(x)}^2 - G_N(0) \abs{\phi_N(\tilde x)}^2
\\
&\qquad - 2 \re G_N(x - \tilde x) \phi_N(x) \bar \phi_N(\tilde x) + G_N(0)^2 + G_N(x - \tilde x)^2}
\\
&= \frac{1}{2} \int \dd x \, \dd \tilde x \, v^\epsilon(x - \tilde x) \, \qB{\pb{\abs{\phi_N(x)}^2 - S G_N(0)} \pb{\abs{\phi_N(\tilde x)}^2 - S G_N(0)}
\\
&\qquad + (S - 1) G_N(0) \pb{\abs{\phi_N(x)}^2 + \abs{\phi_N(\tilde x)}^2 }
- 2 \re G_N(x - \tilde x) \phi_N(x) \bar \phi_N(\tilde x)
\\
&\qquad
- (S^2 - 1) G_N(0)^2 + G_N(x - \tilde x)^2
}
\\
&\geq (S - 1) G_N(0) \int \dd x \, \abs{\phi_N(x)}^2 - \re \int \dd x \, \dd \tilde x \, v^\epsilon(x - \tilde x)  G_N(x - \tilde x) \phi_N(x) \bar \phi_N(\tilde x) - \frac{S^2}{2} G_N(0)^2\,,
\end{align*}
where in the last step we used that $v$ (and hence also $v^\epsilon$) is of positive type with integral one.
Using $\abs{G_N(x)} \leq G_N(0)$ by \eqref{G_N_phi} and Cauchy-Schwarz combined with Young's inequality, we find
\begin{equation*}
\absbb{ \int \dd x \, \dd \tilde x \, v^\epsilon(x - \tilde x)  G_N(x - \tilde x) \phi_N(x) \bar \phi_N(\tilde x)} \leq (S - 1) \, G_N(0) \int \dd x \, \abs{\phi_N(x)}^2\,,
\end{equation*}
and the claim follows from Lemma \ref{lem:prop_G_N}.
\end{proof}

Next, we derive an estimate for the $L^2$-norm of $V^{\epsilon}_M - V^{\epsilon}_N$.

\begin{lemma} \label{lem:V_variance}
For any fixed $\delta > 0$ and for any $0 < N \leq M < \infty$ we have
\begin{equation*}
\normb{V^{\epsilon}_M - V^{\epsilon}_N}_{L^2(\P)} \lesssim N^{-1+\delta} 
\end{equation*}
\end{lemma}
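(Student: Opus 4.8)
The plan is to bound each of the at most fifteen terms $V^\epsilon_{N,M}(a,\tilde a,b,\tilde b)$ (with $j \deq a + \tilde a + b + \tilde b \geq 1$) appearing in the decomposition \eqref{V_M-V_N}, showing $\normb{V^\epsilon_{N,M}(a,\tilde a,b,\tilde b)}_{L^2(\P)}^2 \lesssim_\delta N^{-2 + \delta}$; summing the finitely many terms and taking a square root then yields the claim (after renaming $\delta$). First I would expand this squared $L^2$-norm by Wick's theorem, in the form of Lemma \ref{lem:Wick} (cf.\ Example \ref{ex:two_blocks}). Because $\psi_{N,M}$ is independent of $\phi_N$ and Wick ordering forbids contractions of two fields sitting inside the same Wick monomial, every nonzero contraction pairs $\psi_{N,M}$-fields among themselves — contributing a factor $\E[\psi_{N,M}(\cdot)\bar\psi_{N,M}(\cdot)] = G_M - G_N$ by \eqref{G_N_M_phi} — and $\phi_N$-fields among themselves — contributing $G_N$ by \eqref{G_N_phi} — and only between the two copies of the monomial. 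Counting degrees in \eqref{def_V_epsilon}, each surviving pairing produces exactly $j$ factors of $G_M - G_N$ and $4 - j$ factors of $G_N$, evaluated at one of the cross-copy differences $x - y$, $x - \tilde y$, $\tilde x - y$, $\tilde x - \tilde y$. After the change of variables $u = x - y$, $w_1 = x - \tilde x$, $w_2 = y - \tilde y$ and integration of the free centre of mass, each such contribution becomes
\begin{equation*}
\int \dd w_1\, \dd w_2\, v^\epsilon(w_1)\, v^\epsilon(w_2) \int \dd u\, \prod_{i = 1}^{j} \bigl(G_M - G_N\bigr)(u + c_i)\, \prod_{l = 1}^{4 - j} G_N(u + d_l)\,,
\end{equation*}
where each $c_i, d_l$ is one of $0, -w_1, w_2, w_2 - w_1$, and $j \geq 1$.

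Next I would record the two inputs on the propagators. From the properties of $G_N$ in Lemma \ref{lem:prop_G_N} and of $G$ in Lemma \ref{lem:G_smoothness}, I would use, uniformly in $N \geq 1$, the bounds $\norm{G_N}_{L^\infty(\Lambda)} \lesssim \log N$ and $\norm{G_N}_{L^q(\Lambda)} \lesssim_q 1$ for all $q < \infty$. The crucial bound is on the difference: using the two vanishing moments $\int (\rho_M - \rho_N) = 0$ and $\int y\,(\rho_M - \rho_N)(y)\,\dd y = 0$ (the second holding because $\rho$ is rotation invariant), writing $G_M - G_N = G * (\rho_M - \rho_N)$, subtracting the first two Taylor terms of $G$ about $x$, and invoking $\abs{\nabla^2 G(z)} \lesssim \abs{z}^{-2}$ together with the logarithmic singularity of $G$ (Lemma \ref{lem:G_smoothness}), I would obtain, uniformly in $M \geq N$,
\begin{equation*}
\absb{(G_M - G_N)(x)} \lesssim \bigl(1 + \absb{\log (N \abs{x})}\bigr)\, \ind{\abs{x} \leq 1/N} + \frac{1}{(N\abs{x})^2}\, \ind{\abs{x} > 1/N}\,.
\end{equation*}
Since $\{\abs{x} \lesssim 1/N\}$ has measure $O(N^{-2})$, integrating this yields $\norm{G_M - G_N}_{L^k(\Lambda)}^k \lesssim_k N^{-2} \log N$ for every $k \geq 1$, again uniformly in $M \geq N$.

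To finish, in the displayed integral I would bound $\prod_l \abs{G_N(u + d_l)} \leq \norm{G_N}_{L^\infty}^{4-j} \lesssim (\log N)^3$ and pull it out of the $u$-integral; by the generalized H\"older inequality $\int \dd u\, \prod_{i=1}^j \absb{(G_M - G_N)(u + c_i)} \leq \norm{G_M - G_N}_{L^j(\Lambda)}^j \lesssim_j N^{-2} \log N$. As $\norm{v^\epsilon}_{L^1} = 1$, the remaining $w_1, w_2$-integration is harmless, and altogether $\normb{V^\epsilon_{N,M}(a,\tilde a,b,\tilde b)}_{L^2(\P)}^2 \lesssim N^{-2}(\log N)^{O(1)} \lesssim_\delta N^{-2+\delta}$, uniformly in $\epsilon > 0$ and $M \geq N$.

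I expect the main obstacle to be the uniformity in $M$. Since $\norm{G_M - G_N}_{L^\infty} \asymp \log(M/N)$ is unbounded as $M \to \infty$, no pointwise estimate on $G_M - G_N$ suffices, and the terms carrying a single $\psi_{N,M}$-contraction ($j = 1$) genuinely require the $L^1$-bound $\norm{G_M - G_N}_{L^1} \lesssim N^{-2}\log N$. That bound is available only thanks to the second vanishing moment $\int y\,(\rho_M - \rho_N)(y)\,\dd y = 0$, which upgrades the off-diagonal decay of $G_M - G_N$ from $1/(N\abs{x})$ to $1/(N\abs{x})^2$; the residual logarithmic divergence near the origin is then absorbed by the $O(N^{-2})$ volume of $\{\abs{x} \lesssim 1/N\}$. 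Everything else is routine bookkeeping over the finitely many terms and Wick contractions.
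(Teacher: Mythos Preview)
Your argument is correct and arrives at the same bound, but the bookkeeping differs from the paper's. The paper does not keep track of the number $j$ of $\psi$-contractions at all: after invoking Wick's theorem it simply bounds $G_N\lesssim p$ and $\abs{G_M-G_N}\lesssim q\lesssim p$ with $p(x)=1+\abs{\log\abs{x}}$ and $q(x)=(1+\abs{\log(N\abs{x})})\wedge (N\abs{x})^{-2}$, retains a \emph{single} factor of $q$ (since $j\geq 1$), replaces all remaining propagators by $p$, and then uses Young's inequality to collapse the four-fold convolution to $\sup_y\int \dd x\, q(x)\,p(x-y)^3\lesssim_\delta N^{-2+\delta}$. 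Your route instead pulls the $4-j$ factors of $G_N$ out in $L^\infty$ (costing $(\log N)^{4-j}$) and applies H\"older in the remaining spatial variable to get $\norm{G_M-G_N}_{L^j}^j\lesssim N^{-2}\log N$. Both give $N^{-2+\delta}$ for the squared norm; your version is a little sharper on the power-of-log bookkeeping for $j\geq 2$, while the paper's avoids case-splitting on $j$ and does not need to single out the $L^1$ bound for $j=1$. The pointwise bound $\abs{G_M-G_N}(x)\lesssim q_N(x)$ that you derive by Taylor expansion is exactly Lemma~\ref{lem:prop_G_N} (combined with the triangle inequality), so you could simply quote it. One small slip: $\norm{v^\epsilon}_{L^1}$ need not equal $1$ since $v$ is only of positive type, not pointwise nonnegative; but it is bounded by $\norm{v}_{L^1(\R^2)}<\infty$, which is all you need.
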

\begin{proof}
By \eqref{V_M-V_N} and Minkowski's inequality, it suffices to estimate
\begin{equation} \label{variance_V}
\cal R \deq \E \qB{\abs{V^{\epsilon}_{N,M}(a,\tilde a, b, \tilde b)}^2} = \E \qB{V^{\epsilon}_{N,M}(a,\tilde a, b, \tilde b) \, \ol{V^{\epsilon}_{N,M}(a,\tilde a, b, \tilde b)}\,}
\end{equation}
for any fixed $a, \tilde a, b, \tilde b \in \{0,1\}$ satisfying $a + \tilde a + b + \tilde b > 0$.

Using Lemma \ref{lem:prop_G_N} below we find the bounds $G_N(x) \lesssim p(x)$ and $\abs{G_M(x) - G_N(x)} \lesssim q(x)$, where
\begin{equation*}
p(x) \deq 1 + \abs{\log \abs{x}}\,, \qquad q(x) \equiv q_N(x) \deq (1 + \abs{\log (N \abs{x})}) \wedge \frac{1}{N^2 \abs{x}^2}\,.
\end{equation*}
Note that $q(x) \lesssim p(x)$.
Using Wick's theorem, Lemma \ref{lem:Wick} below, and Young's inequality, we find
\begin{align*}
\cal R &\lesssim \int \dd x \, \dd \tilde x \, \dd y \, \dd \tilde y \, \abs{v^\epsilon(x - \tilde x)} \, \abs{v^\epsilon(y - \tilde y)} \, q(x - y) \, p(x - y) \, p(\tilde x - \tilde y)^2
\\
&\quad +
\int \dd x \, \dd \tilde x \, \dd y \, \dd \tilde y \, \abs{v^\epsilon(x - \tilde x)} \, \abs{v^\epsilon(y - \tilde y)} \, q(x - y) \, p(\tilde x - \tilde y) \, p(x - \tilde y) \, p(\tilde x - y)
\\ &\lesssim \sup_{y \in \Lambda} \int \, \dd x \, q(x) \, p(x-y)^3 + \int \dd x  \, \dd \tilde x \, q(x) \, p(\tilde x)^3
\\
&\lesssim \sup_{y \in \Lambda} \int \, \dd x \, q(x) \, p(x-y)^3
\\ &\lesssim \sup_{y \in \Lambda} \left[ \int_{|x| \leq C/N} \dd x \, \pb{ 1 + |\log |x| | }^4 + \frac{1}{N^2}  \int_{|x| > C/N} \dd x \, \pb{ 1 + |\log |x-y| | }^3 \frac{1}{|x|^2} \right] \lesssim_\delta  N^{-2+\delta}
\end{align*}
for any $\delta > 0$, where in the third step we used that $\int \dd \tilde x \, p(\tilde x)^3 \lesssim 1 \leq p(x)^3$, and in the last step we used H\"older's inequality.
\end{proof}

\begin{proof}[Proof of Proposition \ref{prop:Nelson}]
For any $N \geq 1$ we have, by Lemma \ref{lem:lbound_V},
\begin{equation*}
\P(\ee^{-V^{\epsilon}} > t) = \P\pb{V^{\epsilon} - V^{\epsilon}_N < - \log t - V^{\epsilon}_N}
\leq \P\pb{V^{\epsilon} - V^{\epsilon}_N < - (\log t - C (\log N)^2)}\,.
\end{equation*}
Now choose $N \geq 1$ such that
\begin{equation*}
\log t - C (\log N)^2 = 1\,,
\end{equation*}
which is always possible for $t$ large enough.

Next, we find that $V^{\epsilon}_M$ (or more precisely its real and imaginary parts) is in the $4$th polynomial chaos (see Section \ref{sec:hypercontr}), by using Lemma \ref{lem:chaos_decomp} and the easy fact that $V^{\epsilon}_M$ is orthogonal to the $n$th chaos for $n \neq 4$, which is a consequence of Wick's theorem in Lemma \ref{lem:Wick}. Hence, from Remark \ref{rem:hypercontr_complex} and Lemma \ref{lem:V_variance} we deduce that for any $0 < N \leq M < \infty$ and $p \in 2 \N$ we have
\begin{equation} \label{V_M-V_N2}
\normb{V^{\epsilon}_M - V^{\epsilon}_N}_{L^p(\P)} \lesssim \frac{p^2}{N^{2/3}}\,.
\end{equation}
Since $V_N^\epsilon \to V^\epsilon$ in $L^2(\P)$ as $N \to \infty$, by Lemma \ref{lem:hyper} we find that \eqref{V_M-V_N2} holds also for $M = \infty$ (i.e.\ replacing $V^\epsilon_M$ with $V^\epsilon$). Hence we get from Chebyshev's inequality, for any $p \in 2\N$,
\begin{equation*}
\P(\ee^{-V^{\epsilon}} > t) \leq \E \qb{\abs{V^{\epsilon} - V^{\epsilon}_N}^p} \leq \pbb{\frac{C p^2}{N^{2/3}}}^p \leq \pbb{\frac{p^2}{\sqrt{N}}}^p \leq \pb{p^2 \ee^{-c \sqrt{\log t}}}^p\,,
\end{equation*}
for large enough $t$ (and hence $N$). Choosing $p$ to be the largest element of $2 \N$ smaller than $\ee^{c/2 \sqrt{\log t} - 1/2}$ yields the claim for $V^\epsilon$.

Finally, the claim for $V$ easily follows from the one for $V^\epsilon$ and Lemma \ref{lem:conv_V}.
\end{proof}

\subsection{Convergence of the partition function} \label{sec:field_part_funct}
The first claim of Proposition \ref{prop:step2}, the convergence $\zeta^{W^\epsilon} \to \zeta^{V}$, follows immediately from the following result.

\begin{lemma} \label{lem:V_E_L2}
For any $1 \leq p < \infty$ we have $\normb{\ee^{-V} - \ee^{-W^\epsilon}}_{L^p(\P)} \to 0$ as $\epsilon \to 0$.
\end{lemma}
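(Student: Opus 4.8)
The goal is to upgrade the $L^2$-convergence $W^\epsilon \to V$ (which follows by combining Lemmas \ref{lem:R_variance} and \ref{lem:conv_V}) to convergence of the exponentials $\ee^{-W^\epsilon} \to \ee^{-V}$ in every $L^p(\P)$. The natural strategy is a uniform integrability argument: convergence in $L^p$ follows from convergence in probability together with uniform integrability of the family $\{\abs{\ee^{-V}-\ee^{-W^\epsilon}}^p\}_{\epsilon}$, and the latter in turn follows from a uniform bound on a slightly higher moment, $\sup_\epsilon \norm{\ee^{-W^\epsilon}}_{L^{q}(\P)} < \infty$ for some $q > p$ (together with the already-established $\norm{\ee^{-V}}_{L^q(\P)} < \infty$ from Proposition \ref{prop:Nelson}).

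First I would record that $W^\epsilon \to V$ in $L^2(\P)$, hence along a subsequence almost surely, so that $\ee^{-W^\epsilon} \to \ee^{-V}$ in probability; since $p$ is arbitrary it suffices to treat a fixed $p$. Next, and this is the crux, I would establish the uniform exponential-integrability bound
\begin{equation*}
\sup_{\epsilon > 0} \E\qb{\ee^{-q W^\epsilon}} < \infty \qquad \text{for every } q < \infty\,.
\end{equation*}
The mechanism is the one advertised in the ``Structure of the proof'' section: write $W^\epsilon = V^\epsilon - (V^\epsilon - W^\epsilon)$, where by Lemma \ref{lem:R_variance} the remainder $R^\epsilon \deq V^\epsilon - W^\epsilon$ is small in $L^2(\P)$, and — crucially — from the explicit formula \eqref{W_V_estimate} it lies in the \emph{second} Wiener chaos (it is a quadratic expression $\int v^\epsilon G (\wick{\bar\phi\phi} - \wick{\abs\phi^2})$ in the free field). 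By Hölder, $\E[\ee^{-qW^\epsilon}] \leq \E[\ee^{-2q V^\epsilon}]^{1/2}\,\E[\ee^{2q R^\epsilon}]^{1/2}$. The first factor is uniformly bounded by Proposition \ref{prop:Nelson} (which gives a stretched-exponential tail for $\ee^{-V^\epsilon}$, hence finiteness of $\E[\ee^{-q' V^\epsilon}]$ uniformly in $\epsilon$, for every $q' < \infty$ — one integrates the tail bound $\P(\ee^{-V^\epsilon} > t) \lesssim \exp(-\ee^{c\sqrt{\log t}})$ against $t^{q'-1}$). The second factor I would control using hypercontractivity in the second chaos: for a real random variable $Y$ in the second chaos, $\norm{Y}_{L^n} \lesssim n \norm{Y}_{L^2}$, so $\E[\ee^{2qR^\epsilon}] = \sum_n \frac{(2q)^n}{n!}\E[(R^\epsilon)^n] \leq \sum_n \frac{(2q)^n}{n!}(Cn\norm{R^\epsilon}_{L^2})^n$, which converges (indeed is $O(1)$) once $\norm{R^\epsilon}_{L^2}$ is small enough, which holds for $\epsilon$ small by Lemma \ref{lem:R_variance}; for the finitely many remaining (larger) $\epsilon$ one notes each $\E[\ee^{2qR^\epsilon}]$ is separately finite by the same chaos estimate since $R^\epsilon \in L^2$. (If one prefers not to split off finitely many $\epsilon$, one can instead invoke the stronger structural fact that $-R^\epsilon$, being a quadratic form $\sum \mu_j(\abs{Z_j}^2 - 1)$ with eigenvalues $\mu_j$ summable and small, has $\E[\ee^{2q(-R^\epsilon)}] = \prod_j (1-2q\mu_j)^{-1} e^{-2q\mu_j} < \infty$ uniformly once $\sup_j \abs{\mu_j}$ is small, which is implied by smallness of the Hilbert–Schmidt norm $\norm{R^\epsilon}_{L^2}$.) This yields the desired uniform bound $\sup_\epsilon \E[\ee^{-qW^\epsilon}] < \infty$.

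With $\sup_\epsilon \norm{\ee^{-W^\epsilon}}_{L^{p+1}(\P)} < \infty$ and $\norm{\ee^{-V}}_{L^{p+1}(\P)} < \infty$ in hand, the family $\{\abs{\ee^{-V} - \ee^{-W^\epsilon}}^p\}_\epsilon$ is bounded in $L^{(p+1)/p}(\P)$, hence uniformly integrable; combined with convergence in probability, Vitali's convergence theorem gives $\ee^{-W^\epsilon} \to \ee^{-V}$ in $L^p(\P)$, completing the proof. The first displayed consequence of Proposition \ref{prop:step2}, namely $\zeta^{W^\epsilon} = \E[\ee^{-W^\epsilon}] \to \E[\ee^{-V}] = \zeta^V$, is then immediate (it only needs the $p=1$ case). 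The main obstacle is genuinely the uniform bound $\sup_\epsilon \E[\ee^{-qW^\epsilon}] < \infty$: one cannot simply invoke Nelson's bound for $W^\epsilon$ directly, because $W^\epsilon$ involves the counterterms $\tau^\epsilon, E^\epsilon$ which diverge as $\epsilon \to 0$, so it is essential to route through $V^\epsilon$ (for which the Nelson-type estimate of Proposition \ref{prop:Nelson} is proved uniformly in $\epsilon$) and to exploit the second-chaos structure and smallness of the difference $V^\epsilon - W^\epsilon$.
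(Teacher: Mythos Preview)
Your proposal is correct and follows essentially the same route as the paper: both arguments establish the uniform bound $\sup_{\epsilon}\norm{\ee^{-W^\epsilon}}_{L^q(\P)}<\infty$ by writing $\ee^{-W^\epsilon}=\ee^{-V^\epsilon}\ee^{V^\epsilon-W^\epsilon}$, invoking Proposition~\ref{prop:Nelson} for the first factor, and using the second-chaos structure of $V^\epsilon-W^\epsilon$ together with hypercontractivity and Lemma~\ref{lem:R_variance} for the second. The only difference is in the concluding step: you appeal to Vitali's theorem (convergence in probability plus uniform integrability), whereas the paper writes $\ee^{-V}-\ee^{-W^\epsilon}=\int_0^1 \dd t\,(W^\epsilon-V)\,\ee^{-tW^\epsilon-(1-t)V}$ and applies H\"older directly, which has the minor advantage of being quantitative.
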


\begin{proof}
We begin by estimating $\norm{\ee^{-W^\epsilon}}_{L^p(\P)}$ by comparing it to $\norm{\ee^{-V^\epsilon}}_{L^p(\P)}$ and recalling Proposition \ref{prop:Nelson}. To that end, we note that $V^\epsilon - W^\epsilon$ is in the second polynomial chaos by \eqref{W_V_estimate} (more precisely, by Lemma \ref{lem:Wick}, $V^\epsilon - W^\epsilon$ is orthogonal to the $n$th polynomial chaos for $n \neq 2$ and the claim hence follows from Lemma \ref{lem:chaos_decomp}). Hence, by the hypercontractive bound from Lemma \ref{lem:hyper} we obtain
\begin{multline*}
\normb{\ee^{V^\epsilon - W^\epsilon} - 1}_{L^p(\P)} \leq \sum_{k \geq 1} \frac{1}{k!}\, \norm{(V^\epsilon - W^\epsilon)^k}_{L^p(\P)}
\\
= \sum_{k \geq 1} \frac{1}{k!}\, \norm{V^\epsilon - W^\epsilon}_{L^{pk}(\P)}^k
\lesssim
\sum_{k \geq 1} \frac{1}{k!} \, (pk)^k \norm{V^\epsilon - W^\epsilon}_{L^2(\P)}^k \leq
\sum_{k \geq 1} \pb{C p \norm{V^\epsilon - W^\epsilon}_{L^2(\P)}}^k 
\end{multline*}
for some constant $C > 0$, by Stirling's approximation for $k!$.
Using Lemma \ref{lem:R_variance} and Proposition \ref{prop:Nelson}, we conclude that for small enough $\epsilon$ (depending on $p$), $\norm{\ee^{-W^\epsilon}}_{L^p(\P)}$ is uniformly bounded in $\epsilon$. The claim now follows by writing
\begin{equation*}
\normb{\ee^{-V} - \ee^{-W^\epsilon}}_{L^p(\P)} \leq \int_0^1 \dd t \, \normB{(V - W^\epsilon) \, \ee^{-t W^\epsilon - (1 - t) V}}_{L^p(\P)}\,,
\end{equation*}
applying Hölder's inequality to the right-hand side, and combining Lemmas \ref{lem:R_variance} and \ref{lem:conv_V}  with Lemma \ref{lem:hyper} and the observation that $V - W^\epsilon$ lies in the span of the polynomial chaoses up to order four.
\end{proof}

\subsection{Convergence of correlation functions}
In this subsection we prove the second claim of Proposition \ref{prop:step2}, the convergence of the correlation functions in \eqref{convergence_gamma_field}. In order to obtain the uniform convergence of the Wick-ordered correlation functions $(\wh \gamma^{W^\epsilon}_p)_{\f x, \tilde {\f x}}$ to $(\wh \gamma^V_p)_{\f x, \tilde {\f x}}$, we use a representation obtained by repeated Gaussian integration by parts. To that end, we shall introduce a differential operator, denoted by $L_{N,x}$, such that
\begin{equation} \label{LN_prop1}
L_{N,x} \bar \phi(y) = G_N(x-y)
\end{equation}
and hence, formally,
\begin{equation} \label{LN_prop2}
L_{N,x} = \int \dd y \, G_N(x - y) \frac{\delta}{\delta \bar \phi(y)}\,.
\end{equation}
Our argument may be viewed as an instance of Malliavin calculus, with $L_{N,x}$ playing the role of the Malliavin derivative.

We choose the regularizing function $\vartheta$ to have compact support. Recall (see \eqref{def_phi_N}) that the underlying probability space consists of elements $X = (X_k)_{k \in \Z^d}$ with $X_k \in \C$. Define $\cal T$ to be the space of random variables of the form $f(X)$, where $f$ is smooth in the sense that all of its partial derivatives exist. We denote by $\partial_{X_k}$ and $\partial_{\bar X_k}$ the usual holomorphic and antiholomorphic partial derivatives in the complex variable $X_k$. On the space $\cal T$ we define the first order differential operators
\begin{equation*}
L_{N,x} \deq \sum_{k \in \Z^d} \frac{1}{\sqrt{\lambda_k}}\,  \sqrt{\vartheta(k / N)} \, u_k(x) \, \partial_{\bar X_k}\,, \qquad
\bar L_{N,x} \deq \sum_{k \in \Z^d} \frac{1}{\sqrt{\lambda_k}}\,  \sqrt{\vartheta(k / N)} \, \bar u_k(x) \, \partial_{X_k}\,.
\end{equation*}
where $x \in \Lambda$ and $0 < N < \infty$. Here we recall the definitions \eqref{lambda_k}. Note that, owing to our choice of $\vartheta$, each sum is finite. That this definition indeed satisfies \eqref{LN_prop1} is verified in Lemma \ref{lem:L_phi} below.

We record a few simple properties of $L_{N,x}$. The first property is Gaussian integration by parts for the operator $L_{N,x}$.

\begin{lemma} \label{lem:IBP}
Let $f(X) \in \cal T \cap L^1(\P)$. Then
\begin{equation*}
\E[L_{N,x} f(X)] = \E [\phi_N(x) f(X)] \,, \qquad \E[\bar L_{N,x} f(X)] = \E [\bar \phi_N(x) f(X)]\,.
\end{equation*}
\end{lemma}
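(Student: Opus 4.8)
The plan is to establish both identities by a direct computation in the coordinates $X = (X_k)_{k \in \Z^d}$, using the explicit series representations of $L_{N,x}$ and $\phi_N$ together with the standard one-dimensional complex Gaussian integration by parts formula. Since the two claims are complex conjugates of each other (with $f(X)$ replaced by $\ol{f(X)}$ in the obvious way), it suffices to prove the first one, $\E[L_{N,x} f(X)] = \E[\phi_N(x) f(X)]$.

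First I would recall that for a single standard complex Gaussian variable $Z$ with law $\pi^{-1} \ee^{-\abs z^2} \dd z$, and a smooth integrable $g$, one has the integration by parts identity $\E[\partial_{\bar Z} g(Z)] = \E[Z g(Z)]$; this is an immediate consequence of $\partial_{\bar z}\pb{\ee^{-\abs z^2}} = - z \ee^{-\abs z^2}$ and integration by parts in the plane (the boundary terms vanish by the rapid Gaussian decay, which is where the hypothesis $f(X) \in \cal T \cap L^1(\P)$, together with the finiteness of the sum defining $L_{N,x}$ owing to the compact support of $\vartheta$, is used to justify differentiating under the integral sign and discarding boundary contributions). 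Since the $X_k$ are independent, applying this coordinatewise gives $\E[\partial_{\bar X_k} f(X)] = \E[X_k f(X)]$ for each $k$.

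Then I would simply sum: by definition $L_{N,x} f(X) = \sum_{k} \frac{1}{\sqrt{\lambda_k}} \sqrt{\vartheta(k/N)}\, u_k(x)\, \partial_{\bar X_k} f(X)$, a finite sum, so linearity of expectation and the coordinatewise identity give
\begin{equation*}
\E[L_{N,x} f(X)] = \sum_{k \in \Z^d} \frac{1}{\sqrt{\lambda_k}} \sqrt{\vartheta(k/N)}\, u_k(x)\, \E[X_k f(X)] = \E\qbb{\pbb{\sum_{k \in \Z^d} \frac{X_k}{\sqrt{\lambda_k}} \sqrt{\vartheta(k/N)}\, u_k(x)} f(X)} = \E[\phi_N(x) f(X)]\,,
\end{equation*}
using \eqref{def_phi_N}. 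The second identity follows by the same argument with $\partial_{X_k}$, $\bar u_k$, and the conjugate integration by parts formula $\E[\partial_{Z} g(Z)] = \E[\bar Z g(Z)]$.

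The only genuine point requiring care — and the one I would present most carefully — is the justification of the one-variable integration by parts and the interchange of expectation with the (finite) sum; the interchange of sum and expectation is trivial since the sum is finite, so the real content is the absence of boundary terms, which holds because $f \in \cal T$ has all derivatives existing and $f \in L^1(\P)$ combined with the Gaussian weight controls the relevant integrals. There is no deep obstacle here; this is a routine Malliavin-type integration by parts, and the lemma is stated precisely so that these analytic technicalities are unproblematic.
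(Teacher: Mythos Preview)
Your proposal is correct and follows essentially the same approach as the paper: both reduce to the one-variable complex Gaussian identity $\E[\partial_{\bar Z} g(Z)] = \E[Z g(Z)]$, apply it coordinatewise using the independence of the $X_k$, and then sum over the (finitely many) $k$ using the series definitions of $L_{N,x}$ and $\phi_N$. Your write-up is slightly more explicit about why the boundary terms vanish, but the argument is the same.
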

\begin{proof}
We only prove the first identity. We use that if $Z$ is a standard complex Gaussian random variable, then $\E[Z f(Z)] = \E[ \partial_{\bar Z} f(Z)]$, as can be seen by integration by parts. Thus, using that each $X_k$ is a standard complex Gaussian random variable independent of the others, we get
\begin{align*}
\E[L_{N,x} f(X)] &= \sum_{k \in \Z^d} \frac{1}{\sqrt{\lambda_k}}\, \sqrt{\vartheta(k / N)} \, u_k(x) \, \E [\partial_{\bar X_k} f(X)]
\\
&= \sum_{k \in \Z^d} \frac{1}{\sqrt{\lambda_k}}\, \sqrt{\vartheta(k / N)} \, u_k(x) \, \E [X_k f(X)]
\\
&= \E[\phi_N(x) f(X)]\,. \qedhere
\end{align*}
\end{proof}

The second property is the verification of the condition \eqref{LN_prop1}.
\begin{lemma} \label{lem:L_phi}
For $x,y \in \Lambda$, we have
\begin{equation*}
L_{N,x} \, \phi(y) = 0\,, \qquad L_{N,x}\,  \bar \phi(y) = G_N(x - y)
\end{equation*}
in the sense of distributions in the variable $y$. Similar identities hold for $\bar L_{N,x}$.
\end{lemma}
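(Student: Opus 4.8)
The plan is to read the two identities as equalities of $L^2(\P)$-valued distributions in the variable $y$, so it suffices to pair with a real smooth test function $\psi$ on $\Lambda$ and prove $L_{N,x}\,\phi(\psi)=0$ and $L_{N,x}\,\bar\phi(\psi)=\int\dd y\,G_N(x-y)\,\psi(y)$, where $\phi(\psi)\deq\int\dd y\,\phi(y)\,\psi(y)$ and $\bar\phi(\psi)\deq\int\dd y\,\bar\phi(y)\,\psi(y)$. By \eqref{def_phi_N}, $\phi(\psi)=\sum_k\lambda_k^{-1/2}c_k\,X_k$ and $\bar\phi(\psi)=\sum_k\lambda_k^{-1/2}\overline{c_k}\,\bar X_k$ with $c_k\deq\int\dd y\,u_k(y)\,\psi(y)$, both series converging in $L^2(\P)$ because $\lambda_k\geq\kappa$ and $\sum_k\abs{c_k}^2=\norm{\psi}_{\cal H}^2$ by Parseval; these are first-chaos random variables, the first holomorphic and the second antiholomorphic in $X=(X_k)_{k\in\Z^d}$, and in particular lie in $\cal T$.

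First I would dispatch $L_{N,x}\phi(y)=0$: since $\phi(\psi)$ contains no $\bar X_k$, each antiholomorphic derivative $\partial_{\bar X_j}$ kills it. The interchange of the sum over $k$ defining $\phi(\psi)$ with $\partial_{\bar X_j}$ is legitimate because $\partial_{\bar X_j}$ merely extracts the coefficient of $\bar X_j$ and is therefore continuous on the first chaos. Since $\vartheta$ has been chosen with compact support, $L_{N,x}=\sum_k\lambda_k^{-1/2}\sqrt{\vartheta(k/N)}\,u_k(x)\,\partial_{\bar X_k}$ is a \emph{finite} combination of such derivatives, and so $L_{N,x}\phi(\psi)=0$; the identity $\bar L_{N,x}\bar\phi(y)=0$ is obtained identically, exchanging the roles of holomorphic and antiholomorphic derivatives.

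For $L_{N,x}\bar\phi(y)=G_N(x-y)$ I would use $\partial_{\bar X_k}\bar\phi(\psi)=\lambda_k^{-1/2}\overline{c_k}$ to obtain (the $k$-sum being finite) $L_{N,x}\bar\phi(\psi)=\sum_k\lambda_k^{-1}\sqrt{\vartheta(k/N)}\,u_k(x)\,\overline{c_k}=\int\dd y\,\pb{\sum_k\lambda_k^{-1}\sqrt{\vartheta(k/N)}\,u_k(x)\,\bar u_k(y)}\psi(y)$, and then identify the bracketed finite Fourier series with the kernel $G_N(x-y)$ by the definition of $G_N$ (cf.\ \eqref{G_N_phi}). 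The identity for $\bar L_{N,x}\phi(y)$ follows the same way, using that $G_N$ is real and even so that conjugating and reflecting leave the kernel unchanged. I do not expect a genuine obstacle here: the only point requiring care is the distributional interpretation together with the interchange of summation and differentiation, and both are immediate because the compact support of $\vartheta$ makes $L_{N,x}$ a finite sum of derivatives $\partial_{\bar X_k}$, each continuous on the first Wiener chaos.
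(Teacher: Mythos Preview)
Your approach is the same as the paper's: a direct termwise computation, to which you have added the (harmless) extra step of pairing against a test function. The paper simply writes out the one-line sum and identifies it with $G_N$ via \eqref{G_N_phi}.

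One point deserves a flag, though. The bracketed series you obtain is
\[
\sum_{k}\lambda_k^{-1}\sqrt{\vartheta(k/N)}\,u_k(x)\,\bar u_k(y)\,,
\]
with a \emph{square root} on $\vartheta$, whereas $G_N(x-y)$ as characterized in \eqref{G_N_phi} has Fourier coefficients $\lambda_k^{-1}\vartheta(k/N)$ with no square root. So your final identification step does not literally hold. The paper's own displayed computation writes $\vartheta(k/N)$ (not $\sqrt{\vartheta(k/N)}$) in the intermediate expression, as though $L_{N,x}$ had been defined with $\vartheta$ rather than $\sqrt{\vartheta}$; this points to a typo in the paper's definitions rather than a flaw in the method. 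In practice (see the proof of Lemma~\ref{lem:IBP_representation}) the operator $L_{N,x}$ is applied to $\bar\phi_N$, which carries its own factor $\sqrt{\vartheta(k/N)}$, and the two square roots combine to give the full $\vartheta(k/N)$ and hence $G_N$. For the untruncated field $\bar\phi$ as in the lemma's statement, the kernel one actually gets is the one with $\sqrt{\vartheta}$. Your strategy is fine; just be aware of this $\sqrt{\vartheta}$-versus-$\vartheta$ discrepancy when invoking \eqref{G_N_phi}.
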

\begin{proof}
We only prove the second identity. We compute
\begin{equation*}
L_{N,x}\,  \bar \phi(y) = \sum_{k \in \Z^d} \frac{1}{\sqrt{\lambda_k}}\,  \vartheta(k / N) \, u_k(x) \, \frac{1}{\sqrt{\lambda_k}} \, \bar u_k(y) = G_N(x-y)\,,
\end{equation*}
where the last step follows from \eqref{G_N_phi}\,.
\end{proof}

We may now prove the representation of \eqref{hat_gamma_p} underlying our proof. We denote the regularized Wick-ordered correlation function by
\begin{equation*}
(\wh{\gamma}_{N,p}^V)_{\f x,\tilde{\f x}} \deq \frac{1}{\zeta^V}\, \E \qB{\ee^{-V} \wick{\bar{\phi}_N(\tilde x_1) \cdots \bar{\phi}_N(\tilde x_p)\,\phi_N(x_1)\cdots \phi_N(x_p)}}\,.
\end{equation*}

\begin{lemma} \label{lem:IBP_representation}
We have
\begin{equation*}
(\wh{\gamma}_{N,p}^V)_{\f x,\tilde{\f x}} = \frac{1}{\zeta^V} \E \qB{\bar L_{N, \tilde x_1} \cdots \bar L_{N, \tilde x_p} L_{N, x_1} \cdots L_{N, x_p} \ee^{-V}}\,.
\end{equation*}
The same holds for $V$ replaced with $W^\epsilon$.
\end{lemma}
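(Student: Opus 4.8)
The plan is to reduce the statement to a single bookkeeping identity and then prove that identity by induction. For $m \in \N$ and fields $\psi_1,\dots,\psi_m$, each of the form $\phi_N(z)$ or $\bar\phi_N(z)$ for some $z \in \Lambda$, associate to $\psi_j$ the operator $D_j$, equal to $L_{N,z}$ if $\psi_j = \phi_N(z)$ and to $\bar L_{N,z}$ if $\psi_j = \bar\phi_N(z)$. I claim that for every $F \in \cal T$ for which all the expectations below are finite,
\begin{equation*}
\E \qB{F \, \wick{\psi_1 \cdots \psi_m}} = \E \qB{D_1 \cdots D_m F}\,.
\end{equation*}
Lemma~\ref{lem:IBP_representation} is then the case $m = 2p$, with $(\psi_1,\dots,\psi_{2p}) = (\bar\phi_N(\tilde x_1),\dots,\bar\phi_N(\tilde x_p),\phi_N(x_1),\dots,\phi_N(x_p))$ and $F = \ee^{-V}$, divided by $\zeta^V$; the case of $W^\epsilon$ is identical.

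The proof of the claim rests on three elementary facts. First, each $D_j$ is a first-order differential operator on $\cal T$, hence a derivation, and the operators $L_{N,x}$ and $\bar L_{N,y}$ all commute (they are built from the commuting derivatives $\partial_{X_k}$, $\partial_{\bar X_k}$), so $D_1 \cdots D_m$ is unambiguous. Second, by Lemma~\ref{lem:L_phi} together with \eqref{G_N_phi}, for any two of our fields $D_i \psi_j = \E[\psi_i \psi_j]$, which is a \emph{deterministic} quantity (equal to $G_N(z_i - z_j)$ or $0$ according to the types of $\psi_i$ and $\psi_j$). Third, by Gaussian integration by parts in the form of Lemma~\ref{lem:IBP}, $\E[D_i F] = \E[\psi_i F]$ whenever $F \in \cal T \cap L^1(\P)$.

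I argue by induction on $m$, the case $m = 0$ being trivial. For the inductive step, apply the Wick recursion
\begin{equation*}
\wick{\psi_1 \cdots \psi_m} = \psi_1 \, \wick{\psi_2 \cdots \psi_m} - \sum_{j=2}^m \E[\psi_1 \psi_j] \, \wick{\psi_2 \cdots \widehat{\psi_j} \cdots \psi_m}
\end{equation*}
(a direct consequence of the definition of Wick ordering, see Appendix~\ref{sec:Wick}). The terms in the sum are handled directly by the inductive hypothesis with $m-2$ fields. For the first term, the inductive hypothesis, applied to the $m-1$ fields $\psi_2,\dots,\psi_m$ and the random variable $F\psi_1$, gives $\E[(F\psi_1)\,\wick{\psi_2 \cdots \psi_m}] = \E[D_2 \cdots D_m (F\psi_1)]$. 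Expanding the right-hand side by the Leibniz rule and using the second fact — so that $\psi_1$ is differentiated at most once, each such differentiation replacing $\psi_1$ by the constant $\E[\psi_j \psi_1]$ — one obtains
\begin{equation*}
\E \qB{(D_2 \cdots D_m F)\, \psi_1} + \sum_{j=2}^m \E[\psi_j \psi_1] \, \E \qB{D_2 \cdots \widehat{D_j} \cdots D_m F}\,.
\end{equation*}
By the third fact the first summand equals $\E[D_1 \cdots D_m F]$, and by the inductive hypothesis (with $m-2$ fields) the $j$-th term of the sum equals $\E[\psi_j\psi_1]\,\E[F\,\wick{\psi_2 \cdots \widehat{\psi_j} \cdots \psi_m}]$. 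Since $\E[\psi_j \psi_1] = \E[\psi_1 \psi_j]$, this sum cancels precisely the sum coming from the Wick recursion, and the claim follows.

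The one genuine technicality is integrability: to justify the integrations by parts one must know that the iterated derivatives $D_{i_1} \cdots D_{i_k} \ee^{-V}$ (and the same for $\ee^{-W^\epsilon}$) lie in $\cal T \cap L^1(\P)$. These derivatives are polynomials in the quantities $D_{j_1} \cdots D_{j_\ell} V$ times $\ee^{-V}$; since the former lie in a fixed finite sum of Wiener chaoses, they have finite $L^q(\P)$-norms for all $q < \infty$ by hypercontractivity (Lemma~\ref{lem:hyper}), and together with the integrability of $\ee^{-V}$ and $\ee^{-W^\epsilon}$ from Proposition~\ref{prop:Nelson} and Lemma~\ref{lem:V_E_L2}, Hölder's inequality yields the required bounds. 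To sidestep any question about the smoothness in $X$ of $V$ and $W^\epsilon$ themselves, one may first prove the identity with $V, W^\epsilon$ replaced by their truncations $V_M, W^\epsilon_M$ — genuine polynomials in finitely many Gaussians, for which every integration by parts is manifestly licit — and then let $M \to \infty$ using the $L^q$-convergences already established. I expect this last step, rather than the algebraic identity, to be where the care is needed.
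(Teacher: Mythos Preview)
Your proof is correct and follows essentially the same approach as the paper: both arguments peel off one field at a time using the Wick recursion \eqref{Wick_recursion}, apply Gaussian integration by parts (Lemma~\ref{lem:IBP}) together with Lemma~\ref{lem:L_phi}, expand via the Leibniz rule, and observe that the sum arising from the Leibniz expansion cancels exactly the sum in the Wick recursion. Your formulation as a general inductive identity $\E[F\,\wick{\psi_1\cdots\psi_m}]=\E[D_1\cdots D_m F]$ is a slightly cleaner packaging of the same computation, and your remarks on integrability and on passing through the truncations $V_M$, $W^\epsilon_M$ are sound.
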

\begin{proof}
Using the recursive characterization of Wick ordering from \eqref{Wick_recursion}, we find
\begin{multline*}
\E \qBB{\ee^{-V} \wick{ \prod_{i \in [p]} \bar{\phi}_N(\tilde x_i) \prod_{i \in [p]} \phi_N(x_i)}}
= \E \qBB{\ee^{-V} \wick{ \prod_{i \in [p]} \bar{\phi}_N(\tilde x_i) \prod_{i \in [p-1]} \phi_N(x_i)} \, \phi_N(x_p)}
\\
- \sum_{j \in [p]} G_N(\tilde x_j - x_p) \, \E \qBB{\ee^{-V} \wick{ \prod_{i \in [p] \setminus \{j\}}\bar{\phi}_N(\tilde x_i) \prod_{i \in [p-1]} \phi_N(x_i)}}\,.
\end{multline*}
Here we used \eqref{G_N_phi} and that $\E[\phi_N(x) \phi_N(y)] = 0$.
Using Lemmas \ref{lem:IBP} and \ref{lem:L_phi} as well as the Leibniz rule for the operator $L_{N,x}$, we write the first term on the right-hand side as
\begin{multline*}
\E \qBB{L_{N,x_p} \pBB{\ee^{-V} \wick{ \prod_{i \in [p]} \bar{\phi}_N(\tilde x_i) \prod_{i \in [p-1]} \phi_N(x_i)}}}
= \E \qBB{L_{N,x_p} (\ee^{-V}) \, \wick{ \prod_{i \in [p]} \bar{\phi}_N(\tilde x_i) \prod_{i \in [p-1]} \phi_N(x_i)}}
\\
+ \sum_{j \in [p]} G_N(\tilde x_j - x_p) \, \E \qBB{\ee^{-V} \wick{ \prod_{i \in [p] \setminus \{j\}}\bar{\phi}_N(\tilde x_i) \prod_{i \in [p-1]} \phi_N(x_i)}}\,.
\end{multline*}
We conclude that
\begin{equation*}
\E \qBB{\ee^{-V} \wick{ \prod_{i \in [p]} \bar{\phi}_N(\tilde x_i) \prod_{i \in [p]} \phi_N(x_i)}} = 
\E \qBB{L_{N,x_p} (\ee^{-V}) \, \wick{ \prod_{i \in [p]} \bar{\phi}_N(\tilde x_i) \prod_{i \in [p-1]} \phi_N(x_i)}}\,.
\end{equation*}
Repeating this argument $2p$ times yields the claim.
\end{proof}

The following result is the main analytical tool behind the proof of \eqref{convergence_gamma_field}.

\begin{lemma} \label{lem:derivatives-convergence}
Let $\ell \geq 0$ and $\f z = (z_1, \dots, z_\ell) \in \Lambda^\ell$. Abbreviate
\begin{equation*}
\cal L_{N, \f z} \deq L_{N, z_1}^\# \cdots L_{N, z_\ell}^\#\,,
\end{equation*}
where each $L^\#$ stands for either $L$ or $\bar L$. Then the following holds for any $r \geq 1$.
\begin{enumerate}[label=(\roman*)]
\item \label{lab:der1}
$\sup_N \sup_{\f z} \norm{\cal L_{N, \f z}V}_{L^r(\P)} < \infty$.
\item \label{lab:der2}
As $\epsilon \to 0$ we have $\sup_N \sup_{\f z} \norm{\cal L_{N, \f z} W^\epsilon - \cal L_{N, \f z} V}_{L^r(\P)} \to 0$.
\item \label{lab:der3}
As $M,N \to \infty$ we have $\sup_{\f z} \norm{(\cal L_{N, \f z} - \cal L_{M, \f z}) V}_{L^r(\P)} \to 0$.
\item \label{lab:der4}
For any $\epsilon > 0$, as $M,N \to \infty$ we have $\sup_{\f z} \norm{(\cal L_{N, \f z} - \cal L_{M, \f z}) W^\epsilon}_{L^r(\P)} \to 0$.
\end{enumerate}
\end{lemma}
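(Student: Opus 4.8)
The proof rests on an explicit formula for $\cal L_{N,\f z}$ applied to $V$, $V^\epsilon$ and $W^\epsilon$. Each of $L_{N,x}, \bar L_{N,x}$ is a first-order differential operator, hence obeys the Leibniz rule; by Lemma~\ref{lem:L_phi} it acts on the fields by $L_{N,x}\bar\phi(y) = \bar L_{N,x}\phi(y) = G_N(x-y)$ and annihilates $\phi$ resp.\ $\bar\phi$; and, being an integral of functional derivatives (see~\eqref{LN_prop2}), it commutes with Wick ordering. Consequently $\cal L_{N,\f z}$ applied to any of our interactions is a \emph{finite} sum of terms of the shape
\begin{equation*}
\text{(combinatorial constant)} \times \int \dd x \, \dd \tilde x \, \Bigl( \textstyle\prod_{i} G_N(z_i - \star_i) \Bigr) \, \mu^\epsilon(x - \tilde x) \, \wick{\phi_N^a(x)\,\bar\phi_N^b(x)\,\phi_N^{a'}(\tilde x)\,\bar\phi_N^{b'}(\tilde x)}\,,
\end{equation*}
where each $\star_i \in \{x, \tilde x\}$, the total degree $a+b+a'+b'$ is at most $4 - \ell$ (so the term vanishes once $\ell > 4$), and $\mu^\epsilon$ equals $\delta$ for $V$, $v^\epsilon$ for $V^\epsilon$, and a combination of $\delta$, $v^\epsilon$ and $v^\epsilon(\cdot)G(\cdot)$ for $W^\epsilon$ (the last arising from the cross-terms when one expands the product of Wick squares in~\eqref{W^epsilon}, exactly as in~\eqref{W_V_estimate}). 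In particular every random variable involved lies in the span of the Wiener chaoses of order at most $4$; hence, by the hypercontractive estimate (Lemma~\ref{lem:hyper}, Remark~\ref{rem:hypercontr_complex}, Lemma~\ref{lem:chaos_decomp}), it suffices throughout to prove the claimed bounds in $L^2(\P)$ rather than $L^r(\P)$. Computing these $L^2$-norms with Wick's theorem (Lemma~\ref{lem:Wick}) turns them into explicit integrals over two copies of the spatial variables: the deterministic kernels $\prod_i G_N(z_i-\cdot)$ and $\mu^\epsilon$ factor out of the expectation, while the cross-contractions contribute a bounded number ($\le 4-\ell$) of Green functions joining the two copies.

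\textbf{Part~\ref{lab:der1}.} Here we bound the resulting integrals using $G_N(x) \lesssim p(x) \deq 1 + \abs{\log\abs x}$ from Lemma~\ref{lem:prop_G_N} together with $p \in \bigcap_{s<\infty} L^s(\Lambda)$ on the compact torus; an elementary application of Hölder's inequality yields a bound uniform in $N$ (the function $p$ does not depend on $N$) and uniform in $\f z$ (by translation invariance of Lebesgue measure on $\Lambda$). For $\ell = 0$ this is just the standard fact that $V$ lies in a fixed chaos with an integrable kernel, so $\norm V_{L^2(\P)} < \infty$.

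\textbf{Part~\ref{lab:der2}.} We write $\cal L_{N,\f z} W^\epsilon - \cal L_{N,\f z} V = \cal L_{N,\f z}(W^\epsilon - V^\epsilon) + \cal L_{N,\f z}(V^\epsilon - V)$, using~\eqref{W_V_estimate} for the first piece. The case $\ell = 0$ is Lemmas~\ref{lem:R_variance} and~\ref{lem:conv_V} combined with hypercontractivity. For $\ell \geq 1$ we rerun the proofs of those two lemmas in the presence of the extra decorating factors $\prod_i G_N(z_i - \star_i)$: after the changes of variables used there ($h = \tilde x - x$, $k = \tilde y - y$, $z = x - y$), the decorations are functions of the one remaining free integration variable alone, which we integrate out first and bound by $\prod_i \norm{p(z_i - \cdot)}_{L^s(\Lambda)} \lesssim 1$, uniformly in $N$ and $\f z$; what is left is precisely a cancellation integral of the type already estimated in Lemmas~\ref{lem:R_variance} and~\ref{lem:conv_V}, now with \emph{fewer} Green functions, so that the same manipulations — the pointwise bounds~\eqref{G-G_estimate} and~\eqref{log_estimate} together with the double cancellation coming from $v^\epsilon - \delta$, resp.\ from the two differences of Wick monomials in~\eqref{W_V_estimate} — give $O_\alpha(\epsilon^\alpha)$ for any $\alpha \in (0,1)$, uniformly in $N$ and $\f z$. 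This part is the main obstacle: one must carry the cancellation structure of Lemmas~\ref{lem:R_variance} and~\ref{lem:conv_V} (the source of the smallness as $\epsilon\to 0$) through the introduction of the differential-operator decorations while simultaneously keeping all bounds uniform in the mollification parameter $N$, which is why one bounds $G_N$ by the $N$-independent $p$ at the earliest opportunity.

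\textbf{Parts~\ref{lab:der3} and~\ref{lab:der4}.} These follow by telescoping $\cal L_{N,\f z} - \cal L_{M,\f z}$ for $M \geq N$: it is a sum of terms in each of which either one decorating factor $G_N$ is replaced by $G_M$, or one occurrence of $\phi_N$ inside a Wick monomial is replaced by $\phi_M$. Using the decomposition $\phi_M \eqdist \phi_N + \psi_{N,M}$ from~\eqref{phi_decomposition}, with $\E[\psi_{N,M}(x)\bar\psi_{N,M}(y)] = G_M(x-y) - G_N(x-y)$, the $L^2$-norm of each such term is (after Wick's theorem) an integral containing at least one factor $\abs{G_M - G_N} \lesssim q_N$, where $q_N(x) = (1 + \abs{\log(N\abs x)}) \wedge (N\abs x)^{-2}$ satisfies $\norm{q_N}_{L^s(\Lambda)} \to 0$ as $N\to\infty$ for every $s < \infty$ (Lemma~\ref{lem:prop_G_N}), while all remaining factors are bounded by $p$ (and, for part~\ref{lab:der4}, by the fixed smooth $v^\epsilon$). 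Hölder's inequality then gives convergence to $0$ as $N \to \infty$, uniformly in $M \geq N$ and in $\f z$. These parts are comparatively routine, being consequences of "bounded chaos plus integrable kernels" and of the quantitative decay $\norm{q_N}_{L^s(\Lambda)} \to 0$.
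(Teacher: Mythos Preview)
Your approach is correct and essentially the same as the paper's: reduce to $L^2$ by hypercontractivity, compute derivatives explicitly via Lemma~\ref{lem:L_phi} and~\eqref{Wick_derivative}, telescope $W^\epsilon - V$ through $V^\epsilon$ for part~\ref{lab:der2}, and feed the resulting integrals back into the estimates of Lemmas~\ref{lem:R_variance}--\ref{lem:conv_V}.

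One slip to fix: in your displayed formula for $\cal L_{N,\f z}$ applied to the interactions, the surviving Wick monomial should involve the \emph{untruncated} field $\phi$, not $\phi_N$. The interactions $V$, $V^\epsilon$, $W^\epsilon$ are built from $\phi$; the operator $L_{N,z}$ replaces one $\bar\phi(y)$ by the deterministic $G_N(z-y)$ and leaves the remaining $\phi$'s intact. Consequently, in parts~\ref{lab:der3}--\ref{lab:der4} the telescoping of $\cal L_{N,\f z} - \cal L_{M,\f z}$ produces \emph{only} terms where one decorating factor $G_N$ is replaced by $G_M$; there is no second type of term with ``$\phi_N$ replaced by $\phi_M$'', and the decomposition $\phi_M \eqdist \phi_N + \psi_{N,M}$ is not needed here. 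The paper handles this by writing, e.g., $(L_{N,z} - L_{M,z})V = \int \dd x\, (G_N(x-z) - G_M(x-z))\,\wick{\phi(x)\abs{\phi(x)}^2}$, then telescoping $G_N - G_M = (G_N - G) - (G_M - G)$ and bounding $\norm{G_N - G}_{L^s}$ via~\eqref{G_N_est2}. Your bound $\abs{G_M - G_N}\lesssim q_N$ with $\norm{q_N}_{L^s}\to 0$ accomplishes the same thing; just drop the spurious $\psi_{N,M}$ discussion.
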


Before proving Lemma \ref{lem:derivatives-convergence}, we use it to conclude the proof of \eqref{convergence_gamma_field}, and hence also of Proposition \ref{prop:step2}.

\begin{proof}[Proof of \eqref{convergence_gamma_field}]
We begin by showing that $(\wh \gamma^V_{N,p})_{N > 0}$ is a Cauchy sequence in $\norm{\cdot}_C$, and it hence converges to a limit that is by definition $\wh \gamma^V_p$. In the notation of Lemma \ref{lem:derivatives-convergence}, we deduce from Lemma \ref{lem:IBP_representation} that
\begin{equation} \label{cauchy_est}
\zeta^V \normb{\wh{\gamma}_{N,p}^V - \wh{\gamma}_{M,p}^V}_{C} = \sup_{\f z} \absB{\E\qB{(\cal L_{N, \f z} - \cal L_{M, \f z}) \ee^{-V}}}\,,
\end{equation}
where we abbreviate $\f z = (\f x, \tilde {\f x}) \in \Lambda^{2p}$ and choose the supercripts $\#$ in the definition of $\cal L_{N, \f z}$ appropriately. Applying the chain rule and the Leibniz rule to the $2 p$ derivatives in $\cal L_{N, \f z}$ and $\cal L_{M,\f z}$, we estimate the right-hand side by a finite number of terms of the form
\begin{equation} \label{cauchy_prod_est}
\sup_{\f z_1, \dots, \f z_k} \E\absBB{\pBB{\prod_{i = 1}^k (\cal L_{N, \f z_i} V) - \prod_{i = 1}^k (\cal L_{M, \f z_i} V)} \ee^{-V}}\,,
\end{equation}
where $k \leq 2p$ and, for all $i = 1, \dots, k$, $\f z_i \in \Lambda^{\ell_i}$ for some $\ell_i \leq 4$. Using the identity $\prod_i a_i - \prod_i b_i = \sum_{i} (a_i - b_i) \prod_{j < i} a_j \prod_{j > i} b_j$, we estimate \eqref{cauchy_prod_est} by
\begin{equation} \label{cauchy_prod_est2}
\sum_{i = 1}^k \sup_{\f z_1, \dots, \f z_k} \E \absBB{\pb{\cal L_{N, \f z_i} V - \cal L_{M, \f z_i} V} \prod_{j < i} (\cal L_{N, \f z_j} V) \prod_{j > i} (\cal L_{M, \f z_j} V) \, \ee^{-V}}\,.
\end{equation}
Applying Hölder's inequality, we conclude using Lemma \ref{lem:derivatives-convergence} \ref{lab:der1} and \ref{lab:der3}, as well as Proposition \ref{prop:Nelson}, that \eqref{cauchy_est} converges to $0$ as $M,N \to \infty$. We conclude that $(\wh \gamma^V_{N,p})_{N > 0}$ is a Cauchy sequence in $\norm{\cdot}_C$.

The same argument with $V$ replaced with $W^\epsilon$, using Lemma \ref{lem:derivatives-convergence} \ref{lab:der4}, shows that, for any $\epsilon > 0$, $\wh \gamma^{W^\epsilon}_{N,p}$ converges in $\norm{\cdot}_C$ to $\wh \gamma^{W^\epsilon}_p$. Here, we also use the observation that
$\sup_N \sup_{\f z} \norm{\cal L_{N, \f z}W^\epsilon}_{L^r(\P)}<\infty$, which follows from Lemma \ref{lem:derivatives-convergence}  \ref{lab:der1} and  \ref{lab:der2}.

Now writing
\begin{equation*}
\normb{\zeta^{W^\epsilon} \wh \gamma^{W^\epsilon}_{p} -\zeta^V \wh \gamma^V_p }_C \leq
\normb{\zeta^{W^\epsilon} \wh \gamma^{W^\epsilon}_{p} - \zeta^{W^\epsilon} \wh \gamma^{W^\epsilon}_{N,p} }_C
+
\normb{\zeta^{W^\epsilon} \wh \gamma^{W^\epsilon}_{N,p} -\zeta^V \wh \gamma^V_{N,p}}_C
+
\normb{\zeta^V \wh \gamma^V_{N,p} - \zeta^V \wh \gamma^V_p}_C\,,
\end{equation*}
we therefore conclude, using $\lim_{\epsilon \to 0} \zeta^{W^\epsilon} = \zeta^V$ by Lemma \ref{lem:V_E_L2}, that the convergence \eqref{convergence_gamma_field} holds provided that
\begin{equation} \label{eps_conv_gamma}
\lim_{\epsilon \to 0} \sup_N \normb{\zeta^{W^\epsilon} \wh \gamma^{W^\epsilon}_{N,p} -\zeta^V \wh \gamma^V_{N,p}}_C = 0\,.
\end{equation}

To prove \eqref{eps_conv_gamma}, we write, using Lemma \ref{lem:IBP_representation},
\begin{equation*}
 \normb{\zeta^{W^\epsilon} \wh \gamma^{W^\epsilon}_{N,p} -\zeta^V \wh \gamma^V_{N,p}}_C = \sup_{\f z} \absB{\E \qB{\cal L_{N, \f z} \pb{\ee^{-W^\epsilon} - \ee^{V}}}}\,.
\end{equation*}
Similarly to \eqref{cauchy_est}--\eqref{cauchy_prod_est} above, we estimate the right-hand side by a sum of terms of the form
\begin{equation*}
\sup_{\f z_1, \dots, \f z_k} \E\absBB{\prod_{i = 1}^k (\cal L_{N, \f z_i} W^\epsilon) \, \ee^{-W^\epsilon} - \prod_{i = 1}^k (\cal L_{N, \f z_i} V) \, \ee^{-V}}\,,
\end{equation*}
which converges to $0$ as $\epsilon \to 0$, uniformly in $N$, by telescoping (analogously to \eqref{cauchy_prod_est2}), Hölder's inequality, as well as Lemma \ref{lem:derivatives-convergence} \ref{lab:der1}--\ref{lab:der2} and Proposition \ref{lem:V_E_L2}. This concludes the proof.
\end{proof}

\begin{proof}[Proof of Lemma \ref{lem:derivatives-convergence}]
Since all quantities appearing inside the $L^r(\P)$-norms in Lemma \ref{lem:derivatives-convergence} are a superpositions of random variables in polynomial chaoses of order at most four (see Section \ref{sec:hypercontr}), from the hypercontractivity estimate of Remark \ref{rem:hypercontr_complex} we find that it suffices to consider $r = 2$.

The proof of \ref{lab:der1} is similar to that of \ref{lab:der2}, and we omit it. To prove \ref{lab:der2}, we proceed by telescoping via $V^\epsilon$, writing $W^\epsilon - V = (W^\epsilon - V^\epsilon) + (V^\epsilon - V)$. Thus, we have to differentiate the random variables
\begin{align*}
W^\epsilon - V^\epsilon &=  \int \dd x \, \dd \tilde x \, v^\epsilon(x - \tilde x) \, G(x -  \tilde x) \, \pb{\wick{\bar \phi(x) \phi(\tilde x)} - \wick{\abs{\phi(x)}^2}}\,,
\\
V^\epsilon - V &= \frac{1}{2} \int \dd x \, \dd \tilde x \, \pb{v^\epsilon(x - \tilde x) - \delta(x - \tilde x)} \,\wick{ \abs{\phi(x)}^2 \, \abs{\phi(\tilde x)}^2}
\end{align*}
by $\cal L_{N, \f v}$.

The zeroth order derivatives, $\ell = 0$, were estimated in Lemmas \ref{lem:R_variance} and \ref{lem:conv_V}. For the high-order derivatives, let us start with $W^\epsilon - V^\epsilon$. For the first order derivative, from \eqref{Wick_derivative} and Lemma \ref{lem:L_phi}, we find
\begin{equation*}
L_{N,z} (W^\epsilon - V^\epsilon) = 
\int \dd x \, \dd \tilde x \, v^\epsilon(x - \tilde x) \, G(x -  \tilde x) \, G_N(z - x) \, \p{\phi(\tilde x) - \phi(x)}\,,
\end{equation*}
so that Lemma \ref{lem:Wick} (see also Example \ref{ex:two_blocks}) yields
\begin{multline*}
\normb{L_{N,z} (W^\epsilon - V^\epsilon)}_{L^2(\P)}^2 = \int \dd x \, \dd \tilde x \, \dd y \, \dd \tilde y \,
v^\epsilon(x - \tilde x) \, G(x -  \tilde x) \, G_N(z - x) \, v^\epsilon(y - \tilde y) \, G(y -  \tilde y) \, G_N(z - y)
\\
\times \pb{G(x-y) + G(\tilde x - \tilde y) - G(x - \tilde y) - G(\tilde x - y)}\,.
\end{multline*}
The right-hand side is estimated similarly to the proof of Lemma \ref{lem:R_variance}, uniformly in $N$ and $z \in \Lambda$, using that $G_N$ is uniformly bounded in $L^2(\Lambda)$, and the bound
\begin{equation*}
\absb{G(x-y) + G(\tilde x - \tilde y) - G(x - \tilde y) - G(\tilde x - y)} \leq \abs{G(x - y) - G(x - \tilde y)} + \abs{G(\tilde x - \tilde y) - G(\tilde x - y)}\,.
\end{equation*}

For the second order derivative, we obtain
\begin{equation*}
\bar L_{N, \tilde z} L_{N,z} (W^\epsilon - V^\epsilon) =
\int \dd x \, \dd \tilde x \, v^\epsilon(x - \tilde x) \, G(x -  \tilde x) \, G_N(z - x) \, \pb{G_N(\tilde z - \tilde x) - G_N(\tilde z - x)}\,,
\end{equation*}
which can again be estimated as in the proof of Lemma \ref{lem:R_variance}, uniformly in $N$ and $z, \tilde z \in \Lambda$. This concludes the estimate of $\norm{\cal L_{N, \f z} W^\epsilon - \cal L_{N, \f z} V^\epsilon}_{L^2(\P)}$.

To estimate $\norm{\cal L_{N, \f z} V^\epsilon - \cal L_{N, \f z} V}_{L^2(\P)}$, we compute using \eqref{Wick_derivative} and Lemma \ref{lem:L_phi}
\begin{equation*}
L_{N,z} (V^\epsilon - V) = \int \dd x \, \dd \tilde x \, \pb{v^\epsilon(x - \tilde x) - \delta(x - \tilde x)} \, G_N(z - x) \,\wick{\phi(x) \, \abs{\phi(\tilde x)}^2}\,, 
\end{equation*}
so that Lemma \ref{lem:Wick} (see also Example \ref{ex:two_blocks}) yields
\begin{multline*}
\normb{L_{N,z} (V^\epsilon - V)}_{L^2(\P)}^2 = \int \dd x \, \dd \tilde x \, \dd y \, \dd \tilde y \, \pb{v^\epsilon(x - \tilde x) - \delta(x - \tilde x)} \, \pb{v^\epsilon(y - \tilde y) - \delta(y - \tilde y)} \, G_N(z - x) \, G_N(z - y)
\\
\times \pb{G(x-y) G(\tilde x - \tilde y)^2 + G(x - \tilde y) G(\tilde x - y) G(\tilde x - \tilde y)}\,.
\end{multline*}
The right-hand side is estimated  as in the proof of Lemma \ref{lem:conv_V}, uniformly in $N$ and $z \in \Lambda$. The higher order derivatives are estimated analogously. This concludes the proof of \ref{lab:der2}.

The proofs of \ref{lab:der3} and \ref{lab:der4} are similar, and we focus on \ref{lab:der3}. For the first order derivative, we find
\begin{equation*}
(L_{N,z} - L_{M,z}) V = \int \dd x \, (G_N(x - z) - G_M(x - z)) \wick{\phi(x) \abs{\phi(x)}^2}\,,
\end{equation*}
so that Lemma \ref{lem:Wick} (see also Example \ref{ex:two_blocks}) yields
\begin{equation*}
\norm{(L_{N,z} - L_{M,z}) V}^2_{L^2(\P)} = 2 \int \dd x \, \dd y \, (G_N(x - z) - G_M(x - z)) \, (G_N(y - z) - G_M(y - z)) \, G(x - y)^3\,.
\end{equation*}
Telescoping $G_N - G_M = (G_N - G) - (G_M - G)$ and using Lemma \ref{lem:G_smoothness} and Hölder's inequality, we conclude that
\begin{equation*}
\norm{(L_{N,z} - L_{M,z}) V}_{L^2(\P)} \lesssim \pbb{\int \dd x \, \abs{G_N(x) - G(x)}^3}^{1/3} + \pbb{\int \dd x \, \abs{G_M(x) - G(x)}^3}^{1/3}\,.
\end{equation*}
Using \eqref{G_N_est2} and splitting the integration domain into $\abs{x} \leq 1/N$ and $\abs{x} > 1/N$, we easily deduce that the right-hand side vanishes as $N,M \to \infty$.
The higher order derivatives are estimated in exactly the same way. This concludes the proof.
\end{proof}

\section{Proof of Proposition \ref{prop:step1}}
\label{The rate of convergence}

We study the rate of convergence of the relative partition function and the correlation functions in the mean-field limit, while keeping track of the parameter $\epsilon$. This amounts to a quantitative analysis of the infinite-dimensional saddle point argument for the functional integral introduced in \cite{FKSS_2020}. Note that, without the $\tau^\epsilon$ and $E^{\epsilon}$ correction terms in \eqref{Hamiltonian_H}, \eqref{W^epsilon}, and with $\epsilon=1$, this convergence was obtained in a qualitative way in \cite[Section 5]{FKSS_2020}. The main ingredients that enable a quantitative analysis are: (a) the Lipschitz continuity of the interaction potential $v^{\epsilon}$, with Lipschitz constant depending on $\epsilon$ (see Lemma \ref{v^{epsilon}_properties} \ref{itm:heatkernel2} below), and (b) quantitative $L^p$-H\"{o}lder continuity properties of Brownian motion (see Lemma \ref{Heat_kernel_estimates} \ref{itm:heatkernel2} below).
As a result, we can find a suitable choice of $\epsilon$ as a function of $\nu$ such that we get the wanted convergence as $\nu \rightarrow 0$; see \eqref{epsilon_lower_bound} below. Our methods work for $d = 2,3$, and hence all results of this section are stated for both dimensions.

We now state the explicit lower bound on $\epsilon \equiv \epsilon(\nu)$. Namely, throughout the sequel we assume that $\epsilon$ satisfies 
\begin{equation}
\label{epsilon_lower_bound}
\epsilon(\nu) \gtrsim 
\begin{cases}
\exp\Bigl(-(\log \nu^{-1})^{\frac{1-a}{2}}\Bigr) & \text{if } d=2\\
(\log \nu^{-1})^{-\frac{1-a}{2}}& \text{if } d=3\,,
\end{cases}
\end{equation}
for some $a \in (0,1)$.
Let us define $\chi\col \R^+ \rightarrow \R$ by
\begin{equation}
\label{sigma(epsilon)}
\chi(t) \deq 
\begin{cases}
\log t^{-1}  & \text{if } d=2
\\
t^{-1} & \text{if } d=3\,.
\end{cases}
\end{equation}
By \eqref{v_epsilon}, \eqref{tau_epsilon}, \eqref{sigma(epsilon)}, Lemma \ref{lem:G_smoothness} (when $d=2$), and Remark \ref{G_smoothness_3D_remark} (when $d=3$), we note that\footnote{Throughout the sequel, we do not emphasize the dependence of the implied constants on $d=2,3$.}
\begin{equation}
\label{tau(epsilon)_bound}
|\tau^\epsilon| \lesssim_{\kappa,v} \chi(\epsilon)\,, \qquad |E^{\epsilon}| \lesssim_{\kappa} \chi(\epsilon)^2\,.
\end{equation}
Furthermore, by \eqref{sigma(epsilon)} and \eqref{epsilon_lower_bound}, it follows that for all $C,b>0$, we have
\begin{equation}
\label{epsilon_lower_bound_2}
\lim_{\nu \rightarrow 0} \ee^{C \chi(\epsilon)^2}\,\nu^{b}=0\,.
\end{equation}

We now state the main results which, in light of \eqref{epsilon_lower_bound_2}, imply Proposition \ref{prop:step1}.

\begin{proposition}
\label{Partition_function_rate_of_convergence}
There exists $C_1>0$ depending on $\kappa,v$ such that
\begin{equation*}
\Bigl|\cal Z-\zeta^{W^\epsilon}\Bigr| \lesssim_{\kappa,v} 
\begin{cases}
\ee^{C_1 \chi(\epsilon)^2}\, \nu^{1/4}  & \text{if } d=2
\\
\ee^{C_1 \chi(\epsilon)^2}\, \nu^{1/4}\,\log \nu^{-1} & \text{if } d=3\,.
\end{cases}
\end{equation*}
\end{proposition}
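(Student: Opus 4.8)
The plan is to realise both $\cal Z$ and $\zeta^{W^\epsilon}$ as expectations over a common Gaussian field and to compare the two integrands term by term. I would begin from the functional integral representation of the relative partition function established in Lemmas~\ref{Partition_function_rate_of_convergence_1} and \ref{Partition_function_rate_of_convergence_2}: applying a Feynman--Kac / coherent-state expansion to $\ee^{-H^{(0)}}/Z^{(0)}$ together with a Hubbard--Stratonovich decoupling of the quartic term in \eqref{Hamiltonian_H}, one writes
\begin{equation*}
\cal Z = \ee^{(\tau^\epsilon)^2/2 + E^\epsilon}\, \E\qb{\ee^{-\cal W^\epsilon_\nu}\,\Xi^\epsilon_\nu}\,,
\end{equation*}
where the expectation is over a field $\Phi$ built from the free field $\phi$ together with a Brownian loop fluctuation of size $O(\sqrt\nu)$ in the imaginary-time variable, $\cal W^\epsilon_\nu$ is a Wick-ordered, imaginary-time-smeared avatar of the classical interaction $W^\epsilon$, and $\Xi^\epsilon_\nu$ is an oscillatory factor carrying a prefactor that diverges as $\epsilon\to0$ --- the deterministic factor $\ee^{(\tau^\epsilon)^2/2+E^\epsilon}$ of \eqref{T^epsilon} being precisely what compensates its divergent mean. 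On the classical side $\zeta^{W^\epsilon}=\E[\ee^{-W^\epsilon}]$. I would then interpolate between the two through a short chain of intermediate quantities, removing one discrepancy at a time: (i) the oscillatory factor $\Xi^\epsilon_\nu$ together with its compensating prefactor; (ii) the imaginary-time smearing, i.e.\ $\cal W^\epsilon_\nu$ versus $W^\epsilon$; and (iii) the Brownian-loop covariance versus the classical covariance $(\kappa-\Delta/2)^{-1}$.

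For step (ii) the key inputs are the Lipschitz continuity of $v^\epsilon$, with constant $O(\epsilon^{-d-1})$ (Lemma~\ref{v^{epsilon}_properties}), and the quantitative $L^p$-H\"older continuity of Brownian paths (Lemma~\ref{Heat_kernel_estimates}): together they bound $\cal W^\epsilon_\nu-W^\epsilon$ in $L^2(\P)$ by a positive power of $\nu$ times a polynomial factor in $\epsilon^{-1}$, uniformly in the imaginary-time variable. Since this difference is a superposition of polynomial chaoses of bounded order (Section~\ref{sec:hypercontr}), Remark~\ref{rem:hypercontr_complex} upgrades the estimate to every $L^p(\P)$; then the elementary inequality $\absb{\ee^{-A}-\ee^{-B}}\le\absb{A-B}\pb{\ee^{-\re A}+\ee^{-\re B}}$, H\"older's inequality, and the uniform-in-$\epsilon$ bound on $\norm{\ee^{-W^\epsilon}}_{L^p(\P)}$ (obtained in the proof of Lemma~\ref{lem:V_E_L2}, ultimately via Proposition~\ref{prop:Nelson}) convert it into the desired estimate for the partition functions. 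Step (iii) is a Gaussian comparison: the two covariances differ relatively by $O(\nu)$, which after Wick ordering again produces a small chaos element, and one concludes as in (ii). The extra factor $\log\nu^{-1}$ in $d=3$ reflects the fact that there $G$ lies in $L^p$ only for $p<3$, rather than for every $p<\infty$ as when $d=2$.

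Step (i) is the main obstacle, and is the whole reason a quantitative argument is needed here: in the mean-field regime ($\epsilon$ fixed, treated qualitatively in \cite[Section~5]{FKSS_2020}) the phase is $O(1)$ and the argument is soft, whereas here its prefactor diverges like $\chi(\epsilon)$ --- with $\absb{\tau^\epsilon}\lesssim\chi(\epsilon)$ and $\absb{E^\epsilon}\lesssim\chi(\epsilon)^2$ by \eqref{tau(epsilon)_bound}. The point is that the divergent contribution to the phase is quadratic in the fields, so that taking its expectation produces exactly the compensating factor $\ee^{(\tau^\epsilon)^2/2+E^\epsilon}$, and the residual non-oscillatory remainder is then controlled by Taylor-expanding in the $O(\sqrt\nu)$ loop fluctuation; this is where the factor $\ee^{C_1\chi(\epsilon)^2}$ is produced, and it is the same factor that is propagated through Lemma~\ref{Partition_function_rate_of_convergence_3}. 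Collecting the contributions of (i)--(iii), each is of the form $\ee^{C\chi(\epsilon)^2}\nu^{\alpha}$ (times $\log\nu^{-1}$ when $d=3$) for some $\alpha>0$, with $\alpha=1/4$ admissible after absorbing the polynomial-in-$\epsilon^{-1}$ factors into $\ee^{C\chi(\epsilon)^2}$ (valid since $\epsilon^{-n}\le\ee^{C_n\chi(\epsilon)^2}$ for small $\epsilon$); under the standing assumption \eqref{epsilon_lower_bound} the product tends to $0$ by \eqref{epsilon_lower_bound_2}, which is the claimed bound.
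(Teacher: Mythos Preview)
Your proposal misidentifies the nature of the functional integral in Lemmas~\ref{Partition_function_rate_of_convergence_1} and~\ref{Partition_function_rate_of_convergence_2}. The expectation there is \emph{not} over the physical field $\phi$ (or any perturbation thereof); it is over the Hubbard--Stratonovich auxiliary field $\sigma$ (respectively $\xi$), whose covariance is $v^\epsilon$ in space. The exponent $F_2(\sigma)$ from \eqref{F_2} is not a Wick-ordered polynomial in $\sigma$ --- it contains factors $\ee^{\ii\int\sigma}$ along Brownian loops --- so your appeal to hypercontractivity for bounded polynomial chaos does not apply. Likewise, the Nelson bound on $\norm{\ee^{-W^\epsilon}}_{L^p(\P)}$ is a statement about the $\phi$-field and plays no role here; it is also only available for $d=2$, whereas the proposition (and the paper's proof) covers $d=3$ as well.

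The actual mechanism is simpler and bypasses all of this. The crucial structural fact, built into the Hubbard--Stratonovich representation, is $\re F_2\le 0$ and $\re f_2\le 0$ (see \eqref{Re_F2}, \eqref{Re_f2}); hence $\absb{\ee^{F_2(\sigma)}-\ee^{f_2(\langle\sigma\rangle)}}\le\absb{F_2(\sigma)-f_2(\langle\sigma\rangle)}$ with no integrability input needed. Moreover, once $\zeta^{W^\epsilon}$ is rewritten via Lemma~\ref{Lemma 5.2'} as an integral over $\mu_{\cal C_\eta}$ using the time average $\langle\sigma\rangle$, the oscillatory factor $\ee^{-\ii\tau^\epsilon[\sigma]/\nu}$ is \emph{identical} in both representations (by \eqref{sigma_time_average_observation}) and thus has modulus one in the difference --- your step~(i) is not a separate obstacle. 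This gives Lemma~\ref{Partition_function_rate_of_convergence_3} directly, and the $\ee^{T^\epsilon}$ prefactor accounts for the $\ee^{C_1\chi(\epsilon)^2}$. The remaining work is to bound $\norm{F_2(\sigma)-f_2(\langle\sigma\rangle)}_{L^2(\mu_{\cal C_\eta})}^2$; expanding the square and integrating out $\sigma$ produces explicit Brownian-loop integrals (Lemmas~\ref{Partition_function_rate_of_convergence_4}--\ref{Partition_function_rate_of_convergence_5}), and \emph{these} are where the Lipschitz bound on $v^\epsilon$ and the H\"older continuity of Brownian paths enter, together with a quantitative Riemann-sum comparison. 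Your identification of those two analytic inputs is correct, but they act on the loop integrals after the $\sigma$-expectation has been taken, not on a chaos decomposition in the $\phi$-field.
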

In statements of results, we use the notation $b-$ to mean that the statement holds for $b - c$ for any constant $c > 0$. 
\begin{proposition}
\label{Correlation_functions_rate_of_convergence}
For $p \in \N$, we define
\begin{equation}
\label{theta_definition}
\theta(d,p) \deq
\begin{cases}
\frac{1}{4p+4}  & \text{if } d=2
\\
\frac{1}{12p+8}-& \text{if } d=3\,.
\end{cases}
\end{equation}

There exists $C_2>0$ depending on $\kappa,v$ such that
\begin{equation*}
\Bigl\|\nu^p \, \wh{\Gamma}_{p}-\wh{\gamma}_{p}^{W^\epsilon}\Bigr\|_{C}
\lesssim_{p,\kappa,v} \ee^{C_2 \chi(\epsilon)^2}\,\nu^{\theta(d,p)}\,.
\end{equation*}
\end{proposition}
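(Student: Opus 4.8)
\emph{Strategy.} The plan is to carry out, for the reduced density matrices, the quantitative saddle point analysis of the functional integral from \cite{FKSS_2020} that also yields Proposition \ref{Partition_function_rate_of_convergence}, adding the input needed to handle the $2p$ field insertions and to reach the supremum norm. First I would recall the Feynman--Kac / coherent-state representation of the quantum state: expanding $\tr_{\cal F}\pb{a^*(\tilde x_1)\cdots a(x_p)\,\ee^{-H}}$ by Duhamel's formula in the interaction and counterterms, and representing the free evolution by a Gaussian measure on periodic paths $\tau\mapsto\Phi_\tau\in\cal H$, $\tau\in\R/\Z$, one writes $\nu^p\Gamma_p$ --- hence, through \eqref{hat_Gamma_p_2} together with the corresponding Wick-ordering identity on the path side (see \cite[Section 1.7]{FKSS_2020}), $\nu^p\wh\Gamma_p$ --- as a ratio whose numerator carries the Wick-ordered insertions $\wick{\bar\Phi_0(\tilde x_1)\cdots\Phi_0(x_p)}$ against the Boltzmann weight $\ee^{-\int_0^1\dd\tau\,W^\epsilon_\tau}$, where $W^\epsilon_\tau$ denotes $W^\epsilon$ built from the slice $\Phi_\tau$ and the path covariance at a fixed time converges to the classical free-field covariance $G$ as $\nu\to0$. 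As in Lemmas \ref{Partition_function_rate_of_convergence_1} and \ref{Partition_function_rate_of_convergence_2}, the counterterms $\tau^\epsilon,E^\epsilon$ are generated by a functional Fourier transform in field space, at the price of a rapidly oscillating phase with a diverging prefactor together with a compensating constant $\ee^{(\tau^\epsilon)^2/2+E^\epsilon}=\ee^{O(\chi(\epsilon)^2)}$ (see \eqref{T^epsilon} and \eqref{tau(epsilon)_bound}); the denominator is controlled by Proposition \ref{Partition_function_rate_of_convergence}.

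\emph{Collapsing the time-dependence.} The core step is to pass from the time-dependent path to a single copy of the classical free field $\phi$: replace the path interaction $\int_0^1 W^\epsilon_\tau\,\dd\tau$ by $W^\epsilon$, and each slice $\Phi_{\tau_i}$ occurring in the insertions by $\phi$. Both replacements I would control by expanding to first order, applying Hölder's inequality, and using the quantitative $L^r$-Hölder continuity of Brownian paths (Lemma \ref{Heat_kernel_estimates}) together with the Lipschitz continuity of $v^\epsilon$ (Lemma \ref{v^{epsilon}_properties}); this gives a rate that is a genuine positive power of $\nu$ times a polynomial factor in $\epsilon^{-1}$. The exponential weights that survive the expansion --- the field-theoretic $\ee^{-cW^\epsilon}$ and its time-dependent analogue --- I would bound uniformly in $\epsilon$ and $\nu$ by Nelson-type estimates, i.e.\ Proposition \ref{prop:Nelson} and the hypercontractive moment bounds of Section \ref{sec:field_theory}. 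Using that $G\in L^r(\Lambda)$ for every $r<\infty$ in $d=2$ (Lemma \ref{lem:G_smoothness}), the Wick-ordered insertions are controlled in $L^r(\P)$ uniformly in $(\f x,\tilde{\f x})$, so this procedure produces a bound $\absb{\pb{\nu^p\wh\Gamma_p-\wh\gamma^{W^\epsilon}_p}_{\f x,\tilde{\f x}}}\lesssim_{p,\kappa,v}\ee^{C\chi(\epsilon)^2}\,\nu^{c}$ (times a polynomial in $\epsilon^{-1}$) not only pointwise but in a suitable averaged norm in $(\f x,\tilde{\f x})$.

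\emph{Uniformity and the exponent $\theta(d,p)$.} To upgrade the averaged bound to $\norm{\cdot}_C$, I would use an equicontinuity/interpolation argument: the error is bounded by a fixed constant uniformly in all parameters, and is $\nu^c$-small in an averaged norm; interpolating these --- while accounting for the fact that the cost of the smoothing required for the interpolation grows with the number $2p$ of insertions (and, for $d=3$, with the genuine power-law singularity of $G$), and then optimizing the interpolation scale against $\nu$ --- produces exactly the exponent $\theta(d,p)$ of \eqref{theta_definition}. This is consistent with the case $p=0$ reducing to Proposition \ref{Partition_function_rate_of_convergence}, and a logarithmic loss accounts for the ``$-$'' when $d=3$. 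Combining with \eqref{epsilon_lower_bound_2} --- which, under the standing assumption \eqref{epsilon_lower_bound}, kills $\ee^{C\chi(\epsilon)^2}\nu^b$ for every $b>0$ and in particular absorbs the polynomial factor in $\epsilon^{-1}$ --- yields the statement.

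\emph{Main obstacle.} The hard part, as anticipated in Remark \ref{rem:range}, is the rapidly oscillating phase with diverging prefactor coming from the functional Fourier transform: every error estimate must carry a true positive power of $\nu$ in order to beat the compensating factor $\ee^{O(\chi(\epsilon)^2)}$, and this power has to be propagated --- as in Lemma \ref{Partition_function_rate_of_convergence_3} --- through the Duhamel expansion, both freezing steps, and the interpolation to the supremum norm, all the while remaining uniform in $(\f x,\tilde{\f x})$ and producing the precise dependence on $p$.
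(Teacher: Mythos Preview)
Your proposal diverges from the paper's proof in its basic representation and, more importantly, misidentifies the mechanism that produces the exponent $\theta(d,p)$.

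The paper does \emph{not} work with a coherent-state path $\tau\mapsto\Phi_\tau$ and does not bound $\ee^{-W^\epsilon_\tau}$ by Nelson-type estimates on the quantum side. Instead it uses the Hubbard--Stratonovich representation (Lemmas \ref{hat_Gamma_fr} and \ref{hat_gamma_fr}), which writes $\nu^p\wh\Gamma_{p,\eta}=\frac{p!}{\cal Z_\eta}P_pQ_{p,\eta}$ and $\wh\gamma_p^{W^\epsilon}=\frac{p!}{\zeta^{W^\epsilon}}P_p q_p^\epsilon$ as integrals over the auxiliary Gaussian field $\sigma$ (resp.\ $\xi$); after this transformation the ``Boltzmann weight'' is the phase $\ee^{F_2(\sigma)}$ with $\re F_2\le0$, so no Nelson bound is ever needed --- the a priori bounds come for free from the sign. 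Your plan to invoke Proposition \ref{prop:Nelson} for the time-dependent quantum weight has no available analogue and is precisely what the Hubbard--Stratonovich trick is designed to avoid.

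The exponent $\theta(d,p)$ does not arise from a spatial equicontinuity/interpolation upgrade from an averaged to a sup-norm bound. In the paper all estimates in Lemma \ref{Q_and_q_estimates} are already pointwise in $(\f x,\tilde{\f x})$, because the $2p$ insertions are represented by $p$ open Brownian bridges $\bb W^{r_i,0}_{x_i,\tilde x_i}$ and the bound \eqref{Y_bound} is uniform in the endpoints. The exponent comes instead from optimizing a \emph{time-truncation} parameter $\delta$ in the loop-length variables $r_i$: one truncates $r_i\in[\delta,1/\delta]$ (error $\sim\delta^{2-d/2}$), replaces the $r$-integral by a Riemann sum and collapses $\sigma$ to $\langle\sigma\rangle$ (errors $\sim\delta^{-dp/2}\sqrt\nu$ and $\sim\delta^{-dp/2-(d-1)}\nu$), and balances these by choosing $\delta\sim\nu^{1/(4p+4)}$ for $d=2$ and $\delta\sim\nu^{1/(6p+4)-}$ for $d=3$. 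This is where both the $p$-dependence and the ``$-$'' in $d=3$ come from. Finally, the denominator requires not just Proposition \ref{Partition_function_rate_of_convergence} but a genuine lower bound $\zeta^{W^\epsilon},\cal Z\ge\ee^{-c\chi(\epsilon)^2}$ (Lemma \ref{lower_bound}), obtained by Jensen / Peierls--Bogoliubov, which your proposal does not mention.
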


\begin{remark}
\label{v_assumptions}
We note that, in order to obtain Propositions \ref{Partition_function_rate_of_convergence} and \ref{Correlation_functions_rate_of_convergence} above, and hence Proposition \ref{prop:step1}, we only need to use the bounds from Lemma \ref{v^{epsilon}_properties} below. 
In light of this observation, we can consider more general $v$, which are not smooth. We always assume that $v :\R^d \rightarrow \R$ is even, $L^1$ with integral 1, and of positive type. Since $v$ is not smooth, we need to consider suitable regularizations $v_{\eta}$ of $v$. For a motivation and detailed description of the regularization, we refer the reader to \cite[Section 3.1]{FKSS_2020} and \cite[Section 4.1]{FKSS_2020}. A summary is also given in the study of the nonlocal problem in Section \ref{Extension of results to unbounded nonlocal interactions} below.

In the first generalization, we consider $v$ Lipschitz and compactly supported. Then the result of Lemma \ref{v^{epsilon}_properties} holds for $v^{\epsilon}_{\eta}$ (given by \eqref{v_epsilon} with $v$ replaced by $v_{\eta}$), uniformly in $\eta$. This follows immediately from the proof of Lemma \ref{v^{epsilon}_properties} below.

In the second generalization, we assume that $v$ is differentiable and that uniformly in $\epsilon \in (0,1)$, we have 
\begin{equation}
\label{v_epsilon_assumptions_2}
\sup_{x \in \Lambda}\,\sum_{n \in \Z^d}  \Biggl|v \pbb{\frac{x - n}{\epsilon}}\Biggr|+\sup_{x \in \Lambda}\,\sum_{n \in \Z^d}  \Biggl|\nabla v \pbb{\frac{x - n}{\epsilon}}\Biggr| \lesssim 1\,.
\end{equation}
Let us note that \eqref{v_epsilon_assumptions_2} holds if we assume that there exists $a>d$ such that for all $x \in \R^d$
\begin{equation*}
|v(x)|+|\nabla v(x)| \lesssim \frac{1}{(1+|x|)^{a}}\,.
\end{equation*}
In particular, under the latter conditions, we do not need to assume that $v$ is compactly supported. 
\end{remark}

\subsection{The partition function}
In this subsection, we prove Proposition \ref{Partition_function_rate_of_convergence}. Before proceeding with the proof, we make several observations and review the functional integral representation from \cite{FKSS_2020}. See also \cite{ginibre1971some}.
We first recall some basic notions for Brownian paths. Given $0 \leq \tilde{\tau} < \tau$, we denote by $\Omega^{\tau,\tilde{\tau}}$ the space of continuous paths $\omega\col [\tilde{\tau},\tau] \rightarrow \Lambda$. Given $\tilde{x} \in \Lambda$ and $0 \leq \tilde{\tau} <\tau$, $\mathbb{P}^{\tau,\tilde{\tau}}_{\tilde{x}}(\dd \omega)$ denotes the law on $\Omega^{\tau,\tilde{\tau}}$ of standard Brownian motion with periodic boundary conditions on $\Lambda$ that equals $\tilde{x}$ at time $\tilde{\tau}$. Given $x,\tilde{x} \in \Lambda$ and $0 \leq \tilde{\tau} <\tau$, $\mathbb{P}^{\tau,\tilde{\tau}}_{x,\tilde{x}}(\dd \omega)$ denotes the law of the Brownian bridge $\Omega^{\tau,\tilde{\tau}}$ with periodic boundary conditions on $\Lambda$ that equals $\tilde{x}$ at time $\tilde{\tau}$ and $x$ at time $\tau$. For $t>0$, we write the heat kernel on $\Lambda$ as
\begin{equation*}
\psi^t(x)\deq \ee^{t \Delta/2}(x)=\sum_{n \in \Z^d} \frac{1}{(2\pi t)^{d/2}} \,\ee^{-\frac{|x-n|^2}{2t}}\,.
\end{equation*}
For $x,\tilde{x} \in \Lambda$ and $0 \leq \tilde{\tau} <\tau$, we define the positive measure 
\begin{equation}
\label{W_measure}
\mathbb{W}^{\tau,\tilde{\tau}}_{x,\tilde{x}}(\dd \omega) \deq \psi^{\tau-\tilde{\tau}}(x-\tilde{x})\,\mathbb{P}^{\tau,\tilde{\tau}}_{x,\tilde{x}}(\dd \omega)\,.
\end{equation}
Given $n \in \N^*$, $\tilde{\tau}<t_{1}<\cdots<t_{n}<\tau$, and $f\col \Lambda^n \rightarrow \mathbb{R}$ continuous, the measure \eqref{W_measure} satisfies
\begin{align*}
&\int \mathbb{W}^{\tau,\tilde{\tau}}_{x,\tilde{x}}(\dd \omega)\,f(\omega(t_{1}),\ldots,\omega(t_{n}))=
\\
&\int \dd x_{1}\,\cdots\,\dd x_{n}\,\psi^{t_{1}-\tilde{\tau}}(x_{1}-\tilde{x})\,\psi^{t_{2}-t_{1}}(x_{2}-x_{1})\,\cdots\,\psi^{t_{n}-t_{n-1}}(x_{n}-x_{n-1})\,\psi^{\tau-t_{n}}(x-x_{n})\,f(x_{1},\ldots,x_{n})\,.
\end{align*}

We note several useful estimates for the above quantities.

\begin{lemma}
\label{Heat_kernel_estimates}
The following estimates hold.
\begin{enumerate}[label=(\roman*)]
\item \label{itm:heatkernel1}
There exists a constant $C>0$ such that for all $0 \leq \tilde{\tau} <\tau$, we have
\begin{equation*}
\sup_{x,\tilde{x}} \int \mathbb{W}^{\tau,\tilde{\tau}}_{x,\tilde{x}}(\dd \omega)=\sup_{x,\tilde{x}} \psi^{\tau,\tilde{\tau}} (x-\tilde{x}) \leq C\,\biggl(1+\frac{1}{(\tau-\tilde{\tau})^{d/2}}\biggr)\,.
\end{equation*}
\item \label{itm:heatkernel2}
There exists a constant $C>0$ such that for all $\tilde{\tau} \leq s \leq t \leq \tau$, we have
\begin{equation}
\label{Heat_kernel_estimates_ii_1}
\int \mathbb{W}^{\tau,\tilde{\tau}}_{x,\tilde{x}}(\dd \omega)\,|\omega(t)-\omega(s)|^2_{\Lambda} \leq C \,\biggl(1+\frac{1}{(\tau-\tilde{\tau})^{d/2}}\biggr)\,\biggl((t-s)+
|x-\tilde{x}|_{\Lambda}^2\,\frac{(t-s)^2}{(\tau-\tilde{\tau})^2}\biggr)\,.
\end{equation}
and
\begin{equation}
\label{Heat_kernel_estimates_ii_2}
\int \mathbb{W}^{\tau,\tilde{\tau}}_{x,\tilde{x}}(\dd \omega)\,|\omega(t)-\omega(s)|_{\Lambda} \leq C \,\biggl(1+\frac{1}{(\tau-\tilde{\tau})^{d/2}}\biggr)\,\biggl((t-s)^{1/2}+
|x-\tilde{x}|_{\Lambda}\,\frac{t-s}{\tau-\tilde{\tau}}\biggr)\,.
\end{equation}
Here $|x|_{\Lambda} \deq \min_{n\in \mathbb{Z}^d} |x - n|$ denotes the periodic Euclidean norm of $x \in \Lambda$.
\item \label{itm:heatkernel3}
For $0<s\leq t$, we have
\begin{equation*}
\|\psi^t-\psi^s\|_{L^1(\Lambda)} \leq d \log(t/s)\,.
\end{equation*}
\end{enumerate}
\end{lemma}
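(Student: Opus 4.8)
All three estimates follow from the Gaussian structure of Brownian bridges after lifting from $\Lambda$ to $\R^d$. Write $\psi^t_{\R^d}(z)\deq(2\pi t)^{-d/2}\ee^{-|z|^2/(2t)}$ for the Euclidean heat kernel and let $\pi\col\R^d\to\Lambda$ be the canonical projection. The starting point, used throughout, is the periodization identity
\begin{equation*}
\int \mathbb{W}^{\tau,\tilde\tau}_{x,\tilde x}(\dd\omega)\,F(\omega)=\sum_{n\in\Z^d}\psi^{\tau-\tilde\tau}_{\R^d}(x+n-\tilde x)\int \P^{\R^d,\tau,\tilde\tau}_{\tilde x,\,x+n}(\dd\tilde\omega)\,F(\pi\circ\tilde\omega)\,,
\end{equation*}
where $\P^{\R^d,\tau,\tilde\tau}_{\tilde x,\,x+n}$ is the law of the Euclidean Brownian bridge on $[\tilde\tau,\tau]$ from $\tilde x$ to $x+n$. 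Taking $F\equiv1$ recovers the periodization $\psi^{\tau-\tilde\tau}(x-\tilde x)=\sum_n\psi^{\tau-\tilde\tau}_{\R^d}(x+n-\tilde x)$ of the torus heat kernel and the first equality in \ref{itm:heatkernel1}, since each bridge law is a probability measure. For the bound on $\psi^t$ with $t\deq\tau-\tilde\tau$, I would split into $t\ge1$ and $t<1$: for $t\ge1$, Poisson summation gives $\psi^t(x)=\sum_{k\in\Z^d}\ee^{-2\pi^2 t|k|^2}\ee^{2\pi\ii k\cdot x}$, bounded in modulus by $\sum_k\ee^{-2\pi^2|k|^2}=O(1)$; for $t<1$, the lattice point $n_0$ realizing $|x+n_0-\tilde x|=|x-\tilde x|_\Lambda\le\sqrt d/2$ contributes $O(t^{-d/2})$, and the Gaussian decay of the remaining terms makes them sum to $O(t^{-d/2})$ as well.

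The core estimate is \ref{itm:heatkernel2}. Set $T\deq\tau-\tilde\tau$. Under $\P^{\R^d,\tau,\tilde\tau}_{\tilde x,\,x+n}$ the increment $\omega(t)-\omega(s)$ is Gaussian with mean $\tfrac{t-s}{T}(x+n-\tilde x)$ and covariance $(t-s)\bigl(1-\tfrac{t-s}{T}\bigr)$ in each coordinate, so
\begin{equation*}
\int\P^{\R^d,\tau,\tilde\tau}_{\tilde x,\,x+n}(\dd\tilde\omega)\,|\tilde\omega(t)-\tilde\omega(s)|^2\le d(t-s)+\frac{(t-s)^2}{T^2}\,|x+n-\tilde x|^2\,.
\end{equation*}
Since $|\pi(z)-\pi(z')|_\Lambda\le|z-z'|$, applying the periodization identity to $F(\omega)=|\omega(t)-\omega(s)|_\Lambda^2$ and invoking \ref{itm:heatkernel1} for the term coming from $d(t-s)$, it remains to bound $\sum_n\psi^T_{\R^d}(y+n)|y+n|^2$ with $y\deq x-\tilde x$. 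Splitting off the term nearest to $-y$ (for which $|y+n_0|=|y|_\Lambda$) and handling the remaining terms by a routine Gaussian lattice estimate (using $\int_{\R^d}\psi^T_{\R^d}(z)|z|^2\,\dd z=dT$), this sum is $O\bigl((1+T^{-d/2})(|y|_\Lambda^2+T)\bigr)$; since $\tfrac{(t-s)^2}{T^2}\cdot T\le t-s$, the $T$ contribution is absorbed and \eqref{Heat_kernel_estimates_ii_1} follows. Estimate \eqref{Heat_kernel_estimates_ii_2} is then immediate from \eqref{Heat_kernel_estimates_ii_1} by Cauchy--Schwarz against the total mass $\psi^T(x-\tilde x)=\int\mathbb{W}^{\tau,\tilde\tau}_{x,\tilde x}$, combined with $\sqrt{a+b}\le\sqrt a+\sqrt b$ and \ref{itm:heatkernel1}.

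For \ref{itm:heatkernel3}, write $\psi^t-\psi^s=\int_s^t\partial_r\psi^r\,\dd r$, hence $\|\psi^t-\psi^s\|_{L^1(\Lambda)}\le\int_s^t\|\partial_r\psi^r\|_{L^1(\Lambda)}\,\dd r$. By periodization $\|\partial_r\psi^r\|_{L^1(\Lambda)}\le\|\partial_r\psi^r_{\R^d}\|_{L^1(\R^d)}$, and since $\psi^r_{\R^d}$ factorizes into one-dimensional heat kernels $g^r$, the Leibniz rule gives $\|\partial_r\psi^r_{\R^d}\|_{L^1(\R^d)}\le d\,\|\partial_r g^r\|_{L^1(\R)}$. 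Finally $\partial_r g^r(x)=\tfrac{1}{2r}g^r(x)\bigl(\tfrac{x^2}{r}-1\bigr)$, so $\|\partial_r g^r\|_{L^1(\R)}=\tfrac{1}{2r}\,\E\,|W^2-1|\le\tfrac1r$ for $W$ a standard real Gaussian; integrating $\tfrac dr$ over $[s,t]$ gives $d\log(t/s)$.

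\textbf{Main obstacle.} The one step requiring care is the lattice sum in \ref{itm:heatkernel2}: one must extract the \emph{periodic} norm $|x-\tilde x|_\Lambda$ from $\sum_n\psi^T_{\R^d}(x+n-\tilde x)|x+n-\tilde x|^2$ (a Euclidean distance there would diverge under the sum over $\Z^d$), with a bound uniform as $T\to0$, where the kernel concentrates. Everything else is routine Gaussian bookkeeping.
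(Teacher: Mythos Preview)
Your proposal is correct and matches the paper's treatment. The paper defers parts \ref{itm:heatkernel1} and the first estimate of \ref{itm:heatkernel2} to \cite[Lemmas 2.2--2.3]{FKSS_2020} (whose arguments are precisely the periodization and Brownian-bridge moment computation you outline), derives \eqref{Heat_kernel_estimates_ii_2} from \eqref{Heat_kernel_estimates_ii_1} by Cauchy--Schwarz against the total mass exactly as you do, and proves \ref{itm:heatkernel3} by the same $\int_s^t\|\partial_r\psi^r\|_{L^1}\,\dd r$ argument, computing $\|\partial_r\psi^r_{\R^d}\|_{L^1(\R^d)}$ directly in $d$ dimensions via $\int(\tfrac{d}{2r}+\tfrac{|x|^2}{2r^2})\psi^r_{\R^d}(x)\,\dd x=\tfrac{d}{r}$ rather than your one-dimensional factorization; both routes yield the same constant $d\log(t/s)$.
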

The results of Lemma \ref{Heat_kernel_estimates} are contained in \cite{FKSS_2020}. Part \ref{itm:heatkernel1} is given in \cite[Lemma 2.2]{FKSS_2020}. Estimate \eqref{Heat_kernel_estimates_ii_1} is proved in \cite[Lemma 2.3]{FKSS_2020}. Estimate \eqref{Heat_kernel_estimates_ii_2} then follows from the Cauchy-Schwarz inequality and part \ref{itm:heatkernel1}. Part \ref{itm:heatkernel3} follows since
\begin{equation*}
\|\psi^{t}-\psi^{s}\|_{L^1} \leq \int_{s}^{t} \dd u\, \|\partial_{u} \psi^{u}\|_{L^1} \leq \int_{s}^{t} \dd u\, \int_{\R^d} \dd x\, \biggl(\frac{d}{2u}+\frac{|x|^2}{2u^2}\biggr)\,\widetilde{\psi}^{u}(x)=d \log(t/s)\,,
\end{equation*}
where $\widetilde{\psi}^{u}(x)=\frac{1}{(2\pi u)^{d/2}}\,\ee^{-\frac{|x|^2}{2u}}$ is the heat kernel on $\R^d$, 
as was noted in the proof of \cite[Lemma 5.18]{FKSS_2020}.

Let us fix a function $\varphi \in C_c^{\infty}(\R)$ which is even, nonnegative, of positive type, and which satisfies $\varphi(0)=1$.
For fixed $L>0$, given $\eta>0$, we define the $\nu$-periodic function
\begin{equation}
\label{approximate_delta_function}
\delta_{\eta}(\tau) \deq \frac{1}{\eta} \sum_{y \in \Z} (\fra F^{-1} \varphi) \biggl(\frac{\tau- \nu y}{\eta}\biggr)\,,
\end{equation}
where $\fra F$ denotes Fourier transform (see \eqref{Fourier}).
Here, \eqref{approximate_delta_function} can be interpreted as an approximate delta function on $[-\nu/2,\nu/2)$.
By construction, we have
\begin{equation}
\label{delta_nu_integral}
\int_{0}^{\nu} \dd \tau\,\delta_{\eta}(\tau)=1\,,\qquad \delta_{\eta} \geq 0.
\end{equation}

For simplicity of notation, we suppress the dependence on $\epsilon$ and $\nu$ in the quantum objects. We only emphasize the $\eta$ dependence through a subscript when appropriate. We write the $\epsilon$ dependence as a superscript in the classical objects.
Let us note several properties of $v^{\epsilon}$ that follow from \eqref{v_epsilon}.
\begin{lemma}
\label{v^{epsilon}_properties}
There exists $C>0$, depending only on $v$, such that the following properties hold.
\begin{enumerate}[label=(\roman*)]
\item \label{itm:veps1}
$\|v^{\epsilon}\|_{L^{\infty}(\Lambda)} \leq \frac{C}{\epsilon^d}$.
\item  \label{itm:veps2}
We have that 
\begin{equation*}
|v^{\epsilon}(x)-v^{\epsilon}(y)|\leq \frac{C}{\epsilon^{d+1}}\,|x-y|_{\Lambda}
\end{equation*}
for all $x,y \in \Lambda$.
\end{enumerate}
\end{lemma}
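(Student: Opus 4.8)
The plan is to exploit that $v$ is compactly supported and Lipschitz, and that in the sequel $\epsilon$ is always taken small, so we may freely assume $\epsilon \leq 1$ (the only relevant regime). Fix $R > 0$ with $\supp v \subseteq B(0,R)$ and set $L \deq \norm{\nabla v}_{L^\infty(\R^d)} < \infty$.

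First I would prove \ref{itm:veps1}. For fixed $x \in \Lambda$, the summand $\epsilon^{-d} v\pb{(x - n)/\epsilon}$ in \eqref{v_epsilon} vanishes unless $\abs{x - n} < R \epsilon \leq R$, and the number of $n \in \Z^d$ with $\abs{x - n} < R$ is at most $(2R + 1)^d$, a constant depending only on $v$. Bounding each nonvanishing term by $\epsilon^{-d} \norm{v}_{L^\infty}$ yields $\abs{v^\epsilon(x)} \leq (2R + 1)^d \norm{v}_{L^\infty} \, \epsilon^{-d}$, which is \ref{itm:veps1}.

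For \ref{itm:veps2} I would split into two cases. If $\abs{x - y}_\Lambda \geq \epsilon$, then $\epsilon^{-d} \leq \epsilon^{-d-1}\abs{x - y}_\Lambda$, so the claim follows from \ref{itm:veps1} and the triangle inequality. If $\abs{x - y}_\Lambda < \epsilon$, choose representatives $x, y \in \R^d$ with $\abs{x - y} = \abs{x - y}_\Lambda$. For each $n$, the difference $v\pb{(x-n)/\epsilon} - v\pb{(y-n)/\epsilon}$ vanishes unless the segment joining $(x-n)/\epsilon$ and $(y-n)/\epsilon$ --- whose length is $\abs{x-y}/\epsilon < 1$ --- meets $B(0,R)$, which forces $\abs{x - n} < (R+1)\epsilon \leq R + 1$; as in the proof of \ref{itm:veps1} this holds for at most $(2R + 3)^d$ indices $n$. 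For those indices the mean value theorem gives $\absb{v\pb{(x-n)/\epsilon} - v\pb{(y-n)/\epsilon}} \leq L\abs{x - y}/\epsilon$. Summing over the $O(1)$ relevant $n$ and multiplying by $\epsilon^{-d}$ gives \ref{itm:veps2}.

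The argument is entirely routine; the only points requiring care are bookkeeping --- keeping the count of nonvanishing terms in the periodized sum \eqref{v_epsilon} uniform in $x$ and $\epsilon$ (which rests on the compact support of $v$), and passing correctly between the periodic norm $\abs{\cdot}_\Lambda$ and the Euclidean norm of suitably chosen representatives. There is no analytic obstacle. The same reasoning, applied verbatim with a Lipschitz compactly supported regularization $v_\eta$ in place of $v$, gives the bounds uniformly in $\eta$, as asserted in Remark \ref{v_assumptions}.
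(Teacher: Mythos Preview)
Your proof is correct. The paper does not spell out a proof of this lemma at all: it simply introduces it with ``Let us note several properties of $v^{\epsilon}$ that follow from \eqref{v_epsilon}'' and leaves the verification to the reader. Your argument --- counting the finitely many nonvanishing summands using the compact support of $v$, bounding each one by $\epsilon^{-d}\norm{v}_{L^\infty}$ for \ref{itm:veps1}, and applying the mean value theorem with the same counting for \ref{itm:veps2} --- is exactly the routine computation the paper has in mind, including the uniformity in $\eta$ for the regularized potential mentioned in Remark \ref{v_assumptions}. Your restriction to $\epsilon \leq 1$ is appropriate and necessary: for large $\epsilon$ the periodized rescaling $v^\epsilon$ tends to the constant $\int v = 1$, so the bound $C\epsilon^{-d}$ would fail, but only the regime $\epsilon \to 0$ is used in the paper.
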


For $\eta>0$, recalling \eqref{approximate_delta_function}, we let $(\cal{C}_{\eta})^{\tau,\tilde{\tau}}_{x,\tilde{x}} \deq \nu \,\delta_{\eta}(\tau-\tilde{\tau})\,v^{\epsilon}(x-\tilde{x})$ and define $\mu_{\cal C_{\eta}}(\dd \sigma)$ to be the real Gaussian measure with mean zero and covariance
\begin{equation}
\label{Gaussian_measure_quantum}
\int \mu_{\cal C_{\eta}} (\dd \sigma) \, \sigma(\tau,x)\,\sigma(\tilde{\tau},\tilde{x})=(\cal{C}_{\eta})^{\tau,\tilde{\tau}}_{x,\tilde{x}}\,.
\end{equation}
Since $v \in C_c^{\infty}(\R^d)$, under the law $\mu_{\cal C_{\eta}}$, $\sigma$ is almost surely a smooth periodic function on $[0,\nu] \times \Lambda$.
Let us note that we can rewrite \eqref{Hamiltonian_H} as 
\begin{equation}
\label{H^epsilon_2}
H=\bigoplus_{n \in \N} \biggl(H_{n,\varrho^{\epsilon}}-\frac{(\tau^\epsilon)^2}{2}-E^{\epsilon}\biggr)\,,
\end{equation}
where 
\begin{equation}
\label{H^epsilon_3}
H_{n,\varrho} \deq \Biggl[\nu \sum_{i=1}^{n} (\kappa-\Delta/2)_i + \frac{\nu^2}{2} \sum_{i,j=1}^{n}v^{\epsilon}(x_i-x_j)\Biggr]-\varrho \nu^2 n+\frac{\varrho^2\nu^2}{2}
\end{equation}
and 
\begin{equation*}
\varrho^{\epsilon}\deq \frac{\varrho_\nu}{\nu}+\frac{\tau^\epsilon}{\nu}\,,
\end{equation*}
where we recall the definition \eqref{rho_nu_definition} of $\varrho_\nu$.
Namely, from \eqref{Hamiltonian_H}, using notation as in \eqref{H^epsilon_3} as well as $\widehat{v}_{\epsilon}(0)=0$, it follows that $H$ acts on the $n$-th sector of Fock space as
\begin{multline*}
\Biggl[\nu \sum_{i=1}^{n} (\kappa-\Delta/2)_i+\frac{\nu^2}{2} \sum_{i,j=1}^{n} v_{\epsilon}(x_i-x_j)\Biggr]
-\tau_{\epsilon} (n \nu - \varrho_\nu)-\nu \varrho_\nu \widehat{v}_{\epsilon}(0)n+\frac{1}{2} \widehat{v}_\epsilon(0) \varrho_\nu^2-E^{\epsilon}
\\
=H_{n,0}-\biggl(\frac{\varrho_\nu}{\nu}+\frac{\tau^{\epsilon}}{\nu}\biggr)\nu^2 n +\biggl(\frac{\varrho_\nu}{\nu}+\frac{\tau^{\epsilon}}{\nu}\biggr)^2\frac{\nu^2}{2}-\frac{(\tau^{\epsilon})^2}{2}-E^{\epsilon}\,,
\end{multline*}
which gives us \eqref{H^epsilon_2}.
In the sequel, we denote
\begin{equation}
\label{T^epsilon}
T^{\epsilon}:=\frac{(\tau^{\epsilon})^2}{2}+E^{\epsilon}\,.
\end{equation}
In particular, from \eqref{T^epsilon} and \eqref{tau(epsilon)_bound}, it follows that 
\begin{equation}
\label{T^epsilon_bound}
|T^{\epsilon}| \lesssim_{\kappa,v} \chi(\epsilon)^2\,.
\end{equation}

%\cor
%In particular, by arguing as in \cite[Appendix B]{FKSS_2020} and recalling \cite[Proposition 3.12 (iii)]{FKSS_2020}, \cite[Lemma 5.4]{FKSS_2020}, we obtain the functional integral representation for the quantum relative partition function \eqref{Z^epsilon_quantum}.
%\nc

In Lemmas \ref{Partition_function_rate_of_convergence_1} and \ref{Partition_function_rate_of_convergence_2} below, we give the functional integral representation of the (quantum) relative partition function \eqref{Z^epsilon_quantum}
and the (classical) relative partition function $\zeta^{W^{\epsilon}}$ corresponding to the interaction $W^{\epsilon}$ given by \eqref{W^epsilon}. In both cases, the representation is based on the use of the Hubbard-Stratonovich transformation \cite{FKSS_2020}.

\begin{lemma} 
\label{Partition_function_rate_of_convergence_1}
The relative partition function \eqref{Z^epsilon_quantum} can be written as $\cal Z=\lim_{\eta \rightarrow 0} \cal Z_{\eta}$,
where
\begin{equation}
\label{Z^{epsilon}_{eta}}
\cal Z_{\eta} \deq \int \mu_{\cal C_{\eta}}(\dd \sigma)\,\ee^{T^{\epsilon}-\frac{\ii \tau^\epsilon[\sigma]}{\nu}}\,\ee^{F_2(\sigma)}\,,
\end{equation}
for
\begin{multline} 
\label{F_2}
F_2(\sigma) = - \sum_{\mathbf{r} \in (\nu \N)^3} \frac{\ind{\abs{\mathbf{r}} > 0} \, \ee^{-\kappa \abs{\mathbf{r}}}}{\abs{\mathbf{r}}} \int_{[0,\nu]^3} \dd \mathbf{\tau} \int \dd \mathbf{x} \,
\sigma(\tau_2, x_2) \, \sigma(\tau_3, x_3)
\\
\times  \int \bb W_{x_1, x_3}^{\tau_1 + r_3, \tau_3}(\dd \omega_3) \, \bb W_{x_3, x_2}^{\tau_3 + r_2, \tau_2}(\dd \omega_2) \, \bb W_{x_2, x_1}^{\tau_2 + r_1, \tau_1}(\dd \omega_1) \, \ee^{\ii \int \dd s \, \sigma([s]_\nu, \omega_1(s))}\,,
\end{multline}
which satisfies
\begin{equation}
\label{Re_F2}
\re F_2 \leq 0\,.
\end{equation}
In \eqref{Z^{epsilon}_{eta}}, we write 
\begin{equation}
\label{bracket_sigma}
[\sigma] \deq \int_{0}^{\nu} \dd \tau \int \dd x\, \sigma(\tau,x) 
\end{equation}
and in 
\eqref{F_2}, we write 
\begin{equation}
\label{bracket_t}
[t]_{\nu} \deq (t \,\mathrm{mod}\,\nu) \in [0,\nu)\,.
\end{equation}
\end{lemma}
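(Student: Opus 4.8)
The plan is to derive this from the Feynman--Kac and Hubbard--Stratonovich construction of the relative partition function in \cite[Section 5]{FKSS_2020} (see also \cite{ginibre1971some}), keeping track of the extra counterterms now encoded in \eqref{Hamiltonian_H}--\eqref{H^epsilon_2}. Since $T^{\epsilon} = (\tau^{\epsilon})^2/2 + E^{\epsilon}$ acts as the same constant on every sector of $\cal F$, \eqref{H^epsilon_2} gives $\ee^{-H} = \ee^{T^{\epsilon}} \bigoplus_{n} \ee^{-H_{n,\varrho^{\epsilon}}}$, so that $\cal Z = \ee^{T^{\epsilon}} (Z^{(0)})^{-1} \sum_{n} \tr_{\cal H_n} \ee^{-H_{n,\varrho^{\epsilon}}}$. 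This already produces the prefactor $\ee^{T^{\epsilon}}$ in \eqref{Z^{epsilon}_{eta}} and reduces the problem to a functional-integral formula for the remaining ratio.

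For that ratio I would apply the Feynman--Kac formula together with the bosonic symmetrisation of $\tr_{\cal H_n}$: the trace becomes a sum over $\pi \in S_n$, and $\ee^{-H_{n,0}}$ turns into an integral over $n$ Brownian bridges on the thermal circle, concatenated along the cycles of $\pi$ and weighted by the measures $\mathbb{W}$ from \eqref{W_measure}; the two-body term $\tfrac{\nu^2}{2}\sum_{i,j} v^{\epsilon}(x_i - x_j)$ contributes an equal-(thermal-)time pair interaction. As this is too singular to feed directly into a Gaussian transform, I would regularise the time-delta by $\delta_{\eta}$ from \eqref{approximate_delta_function}, replacing the pair kernel by $(\cal C_{\eta})^{\tau,\tilde{\tau}}_{x,\tilde{x}} = \nu\, \delta_{\eta}(\tau - \tilde{\tau})\, v^{\epsilon}(x - \tilde{x})$ as in \eqref{Gaussian_measure_quantum}. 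Since both $v^{\epsilon}$ and $\delta_{\eta}$ are of positive type (the latter because $\varphi$ is nonnegative), the kernel $\cal C_{\eta}$ is a genuine covariance, so the Hubbard--Stratonovich identity $\ee^{-\frac12 \langle \varrho, \cal C_{\eta} \varrho \rangle} = \int \mu_{\cal C_{\eta}}(\dd\sigma)\, \ee^{\ii \langle \sigma, \varrho \rangle}$ decouples the interaction, leaving for each loop the running weight $\ee^{\ii \int \dd s\, \sigma([s]_{\nu}, \omega(s))}$. The background subtraction $-\varrho_{\nu}$ in \eqref{Hamiltonian_H}, via the identity $\tau^{\epsilon} = \int v^{\epsilon} G$ from \eqref{tau_epsilon}, converts the constant part of the source into the phase $\ee^{-\ii \tau^{\epsilon}[\sigma]/\nu}$ and simultaneously removes the otherwise divergent single-$\sigma$ (tadpole) contribution.

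After the decoupling, the sum over $n$ and over permutations resums by the standard linked-cluster identity: dividing by $Z^{(0)}$ removes the $\sigma$-independent (empty-loop) part, leaving $\exp$ of a sum over closed Brownian loops $\omega$ (of arbitrary thermal length) of $\int \mathbb{W}(\dd\omega)\, \pb{\ee^{\ii X(\omega)} - 1 - \ii X(\omega)}$ with $X(\omega) = \int \dd s\, \sigma([s]_{\nu}, \omega(s))$; this sum is $F_2(\sigma)$. Applying the fundamental theorem of calculus twice along each loop rewrites $\ee^{\ii X} - 1 - \ii X$ as a time-ordered double integral with two explicit $\sigma$-factors at the loop positions $x_2, x_3$ and the full running exponential kept on the preceding arc $\omega_1$; summing the windings $\mathbf{r} \in (\nu\N)^3$ of the three resulting arcs $\omega_1, \omega_2, \omega_3$ through the subordination formula $G = \int_0^{\infty} \ee^{-\kappa t} \psi^t\, \dd t$, which produces the factor $\ee^{-\kappa |\mathbf{r}|}/|\mathbf{r}|$, brings $F_2$ into the displayed form \eqref{F_2}. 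The bound \eqref{Re_F2} is then immediate from the un-rewritten expression, since $\re\pb{\ee^{\ii X} - 1 - \ii X} = \cos X - 1 \leq 0$ for real $X$ and the loop measures $\mathbb{W}$ are nonnegative.

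The main obstacle I expect is the limit $\eta \to 0$ and the exchange of limits it entails: one must show that the $\delta_{\eta}$-regularised loop gas converges back to $\tr_{\cal F} \ee^{-H}$, a Trotter-type statement, while controlling the oscillatory Gaussian average $\int \mu_{\cal C_{\eta}}(\dd\sigma)\,(\cdots)$ uniformly in $\eta$. Here \eqref{Re_F2} is exactly what bounds the integrand by $1$ in modulus, so that once pointwise convergence in $\sigma$ is established, dominated convergence delivers $\cal Z = \lim_{\eta \to 0} \cal Z_{\eta}$. A secondary, bookkeeping-heavy point is to verify that the renormalisation parameters $\alpha^{\epsilon}_{\nu}, \theta^{\epsilon}_{\nu}$ built into $H$ in \eqref{def_H} recombine, after the Hubbard--Stratonovich step, into precisely the prefactor $\ee^{T^{\epsilon} - \ii \tau^{\epsilon}[\sigma]/\nu}$ with no residual divergence.
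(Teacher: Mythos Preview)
Your approach is essentially the same as the paper's: both derive the representation by feeding the counterterm-augmented Hamiltonian \eqref{H^epsilon_2}--\eqref{H^epsilon_3} into the Feynman--Kac / Hubbard--Stratonovich machinery of \cite[Section~3 and Lemma~5.4]{FKSS_2020}, with the only novelty being the modified source density $f(\tau,x)=\sum_i \delta(x-\omega_i(\tau))-\varrho_\nu/\nu-\tau^{\epsilon}/\nu$.

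One bookkeeping point is slightly off in your sketch. You attribute the phase $\ee^{-\ii\tau^{\epsilon}[\sigma]/\nu}$ to the $-\varrho_\nu$ subtraction together with the identity $\tau^{\epsilon}=\int v^{\epsilon}G$. In fact these two roles are separate: the $-\varrho_\nu/\nu$ piece of $f$ is exactly what was already in \cite{FKSS_2020} and is what cancels the single-$\sigma$ tadpole $\int\mathbb W(\dd\omega)\,\ii X(\omega)$ (this is why $F_2$ starts at quadratic order in $\sigma$, i.e.\ why the loop integrand is $\ee^{\ii X}-1-\ii X$). The phase arises instead from the \emph{additional} constant $-\tau^{\epsilon}/\nu$ in $f$, coming from $\varrho^{\epsilon}=\varrho_\nu/\nu+\tau^{\epsilon}/\nu$, via $\ee^{\ii\langle -\tau^{\epsilon}/\nu,\sigma\rangle}=\ee^{-\ii\tau^{\epsilon}[\sigma]/\nu}$. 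The formula $\tau^{\epsilon}=\int v^{\epsilon}G$ is merely the definition of the counterterm and is not invoked in this step. With that correction your outline matches the paper's proof.
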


\begin{proof}
The proof follows from that of \cite[Proposition 3.12]{FKSS_2020} by setting
\begin{equation}
\label{Lemma_5.6_proof}
f(\tau,x)=\sum_{i=1}^{n} \delta(x-\omega_i(\tau))-\frac{\varrho_\nu}{\nu}-\frac{\tau^{\epsilon}}{\nu}\,.
\end{equation}
Note that the only difference then is the factor $\ee^{T^{\epsilon}}$ coming from the constant term \eqref{T^epsilon} in \eqref{H^epsilon_2} and the extra term $-\frac{\tau^{\epsilon}}{\nu}$ in \eqref{Lemma_5.6_proof} above. By applying the Hubbard-Stratonovich transformation
\begin{equation*}
\ee^{-\frac{1}{2} \langle f,\mathcal{C}_\eta f \rangle}= \int \mu_{\cal C_{\eta}}(\dd \sigma)\,\ee^{\ii \langle f, \sigma \rangle}\,,
\end{equation*}
this new term adds a factor of 
\begin{equation*}
\ee^{-\frac{\ii \tau^{\epsilon}}{\nu}\int_0^\nu \dd \tau \int \dd x\, \sigma(\tau,x)}=
\ee^{-\frac{\ii \tau^{\epsilon}[\sigma]}{\nu}}
\end{equation*}
to the integrand of the functional integral \eqref{Z^{epsilon}_{eta}}. The formula \eqref{F_2} is given in \cite[Lemma 5.4]{FKSS_2020}.
\end{proof}

We now explain how to obtain the functional integral representation for the classical partition function in our setting.
We define $\mu_{v^{\epsilon}}(\dd \xi)$ to be the real Gaussian measure with covariance 
\begin{equation}
\label{Gaussian_measure_classical}
\int \mu_{v^{\epsilon}}(\dd \xi) \,\xi(x)\,\xi(\tilde x)=v^{\epsilon}(x-\tilde x)\,.
\end{equation}
Since $v \in C_c^{\infty}(\R^d)$, under the law $\mu_{v^{\epsilon}}$, $\xi$ is almost surely a smooth periodic function on $\Lambda$. We have the following representation.

\begin{lemma}
\label{Partition_function_rate_of_convergence_2}
We have 
\begin{equation}
\label{zeta^{epsilon}}
\zeta^{W^\epsilon} = \int \mu_{v^{\epsilon}}(\dd \xi)\,\ee^{T^{\epsilon}-\ii \tau^\epsilon \langle \xi,1\rangle_{L^2}}\,\ee^{f_2(\xi)}\,,
\end{equation}
for
\begin{equation}
\label{f_2}
f_2(\xi)=-\int_{[0,\infty)^{3}} \dd \mathbf{r}\, \frac{\ee^{-\kappa|\mathbf{r}|}}{|\mathbf{r}|}\,\int\dd \mathbf{x}\,\xi(x_2)\,\xi(x_3)
\int \mathbb{W}_{x_{1},x_{3}}^{r_{3},0}(\dd \omega_{3})\,\mathbb{W}_{x_{3},x_{2}}^{r_{2},0}(\dd \omega_{2})\,\mathbb{W}_{x_{2},x_{1}}^{r_{1},0}(\dd \omega_{1})\,\ee^{\ii \int \dd s\,\xi(\omega_{1}(s))}\,,
\end{equation}
which satisfies
\begin{equation}
\label{Re_f2}
\re f_2 \leq 0\,.
\end{equation}
Note that in \eqref{zeta^{epsilon}}, $\langle \xi,1 \rangle_{L^2}=\int \dd x\,\xi(x)$ denotes the $L^2$ inner product. 
\end{lemma}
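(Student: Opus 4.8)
The statement to prove is Lemma \ref{Partition_function_rate_of_convergence_2}, the functional-integral representation of the classical partition function $\zeta^{W^\epsilon}$.

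\medskip

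The plan is to derive \eqref{zeta^{epsilon}} by applying the Hubbard--Stratonovich transformation to the Gaussian expectation defining $\zeta^{W^\epsilon}$, in direct analogy with the quantum computation of Lemma \ref{Partition_function_rate_of_convergence_1}, but now with the much simpler \emph{time-independent} classical field. First I would recall that $\zeta^{W^\epsilon} = \E[\ee^{-W^\epsilon}]$ with $W^\epsilon$ given by \eqref{W^epsilon}, and rewrite the quadratic-in-density part of $W^\epsilon$ using the identity $\frac12\int\dd x\,\dd\tilde x\,\wick{\abs{\phi(x)}^2}v^\epsilon(x-\tilde x)\wick{\abs{\phi(\tilde x)}^2} = \frac12\scalar{\rho}{v^\epsilon\rho}$ where $\rho(x)=\wick{\abs{\phi(x)}^2}$, viewing $v^\epsilon$ as the covariance operator of the Gaussian measure $\mu_{v^\epsilon}$ from \eqref{Gaussian_measure_classical}. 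Combining with the linear counterterm $-\tau^\epsilon\int\dd x\,\wick{\abs{\phi(x)}^2}$ and the constant $-E^\epsilon$, one gets $-W^\epsilon = -\frac12\scalar{f}{v^\epsilon f} + \ii\,(\text{imaginary shift})$ after completing the square appropriately; more cleanly, setting $f(x) = \wick{\abs{\phi(x)}^2} - \text{(shift)}$ mirroring \eqref{Lemma_5.6_proof}, the Hubbard--Stratonovich formula $\ee^{-\frac12\scalar{f}{v^\epsilon f}} = \int\mu_{v^\epsilon}(\dd\xi)\,\ee^{\ii\scalar{f}{\xi}}$ converts the quartic interaction into a Gaussian average over $\xi$ of an expression that is now only \emph{quadratic} in $\phi$ (inside the exponential $\ee^{\ii\int\dd x\,\xi(x)\wick{\abs{\phi(x)}^2}}$), plus the prefactor $\ee^{T^\epsilon - \ii\tau^\epsilon\scalar{\xi}{1}_{L^2}}$ coming exactly as in the proof of Lemma \ref{Partition_function_rate_of_convergence_1}: the $\ee^{T^\epsilon}$ from the constant term \eqref{T^epsilon} and the phase from the $-\tau^\epsilon/\nu$-type shift, here $-\tau^\epsilon$ times $\scalar{\xi}{1}_{L^2}$.

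\medskip

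The second step is to carry out the remaining $\phi$-expectation, which is now Gaussian in $\phi$ up to the exponential of a quadratic form $\ii\int\dd x\,\xi(x)\wick{\abs{\phi(x)}^2}$. This is precisely the perturbative/resolvent expansion already performed in \cite{FKSS_2020}; here one expands $\ee^{\ii\int\xi\wick{\abs{\phi}^2}}$ and resums, using that the free field $\phi$ has covariance $(\kappa-\Delta/2)^{-1}$ whose kernel admits the Brownian representation $(\kappa-\Delta/2)^{-1}(x,y) = \int_0^\infty\dd r\,\ee^{-\kappa r}\,\psi^r(x-y) = \int_0^\infty\dd r\,\ee^{-\kappa r}\int\bb W^{r,0}_{x,y}(\dd\omega)$. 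Tracking the combinatorics of the Wick contractions yields exactly the cubic structure in \eqref{f_2}: three Brownian bridges $\bb W_{x_1,x_3}^{r_3,0}, \bb W_{x_3,x_2}^{r_2,0}, \bb W_{x_2,x_1}^{r_1,0}$ joined in a loop, two factors of $\xi$ at the "vertices" $x_2,x_3$, and the phase $\ee^{\ii\int\dd s\,\xi(\omega_1(s))}$ from the one bridge carrying the interaction insertions. The combinatorial factor $1/\abs{\f r}$ and the overall minus sign arise exactly as in \cite[Lemma 5.4]{FKSS_2020}; the cleanest route is to observe that \eqref{f_2} is literally the formal $\nu\to 0$ (or rather the collapse of the time direction) limit of \eqref{F_2}, since in the classical theory there is no imaginary-time circle and $\sigma(\tau,x)$ degenerates to $\xi(x)$, $[\sigma]$ to $\nu\scalar{\xi}{1}_{L^2}$ (hence $\tau^\epsilon[\sigma]/\nu \to \tau^\epsilon\scalar{\xi}{1}_{L^2}$), and the $(\nu\N)^3$ sum with $\ind{\abs{\f r}>0}\ee^{-\kappa\abs{\f r}}/\abs{\f r}$ over $[0,\nu]^3$ becomes the integral $\int_{[0,\infty)^3}\dd\f r\,\ee^{-\kappa\abs{\f r}}/\abs{\f r}$.

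\medskip

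Finally I would verify \eqref{Re_f2}: since $\psi^r\geq 0$ and $\bb W^{r,0}_{x,y}$ is a positive measure, the integrand of $f_2$ equals $-(\text{nonnegative Brownian weight})\times \xi(x_2)\xi(x_3)\,\ee^{\ii\int\dd s\,\xi(\omega_1(s))}$; taking the real part and using that $\xi(x_2)\xi(x_3)$ together with $\cos(\int\dd s\,\xi(\omega_1(s)))$ does not obviously have a sign, the correct argument is the one from \cite{FKSS_2020}: one recognizes $\re f_2$, after symmetrizing, as $-\frac12$ of the Fourier/positive-type quadratic form associated with the trace of a nonnegative operator, exactly as \eqref{Re_F2} is established there. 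I expect the main obstacle to be bookkeeping rather than conceptual: making the "collapse of the imaginary-time circle" rigorous means re-deriving the cubic resolvent expansion in the classical setting and checking that every combinatorial constant, the $1/\abs{\f r}$ weight, and the sign match \eqref{f_2} — but since all of these computations are done in \cite[Section 5]{FKSS_2020} for the quantum case and the classical case is strictly simpler (no time ordering, no periodicity in $\tau$), the proof amounts to transcribing that argument with $\sigma\rightsquigarrow\xi$ and pointing to the relevant lemmas, just as the proof of Lemma \ref{Partition_function_rate_of_convergence_1} points to \cite[Proposition 3.12, Lemma 5.4]{FKSS_2020}.
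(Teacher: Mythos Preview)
Your proposal is correct and follows essentially the same route as the paper. The paper's proof is even terser: it completes the square in $W^\epsilon_N$ to rewrite it as $\tfrac12\langle f, v^\epsilon f\rangle - T^\epsilon$ with $f(x)=\abs{\phi_N(x)}^2-\E[\abs{\phi_N(x)}^2]-\tau^\epsilon$, applies Hubbard--Stratonovich exactly as you describe (producing the prefactor $\ee^{T^\epsilon - \ii\tau^\epsilon\langle\xi,1\rangle_{L^2}}$), and then cites \cite[Proposition 4.1]{FKSS_2020} for the $\phi$-expectation and \cite[Lemma 5.4]{FKSS_2020} for the formula \eqref{f_2} and the bound \eqref{Re_f2}; your additional heuristic about the collapse of the imaginary-time circle is a correct informal gloss but is not needed, since the classical computation is done directly in \cite{FKSS_2020} rather than obtained as a limit of the quantum one.
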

\begin{proof}
We note that for fixed $0<N<\infty$, we have

\begin{multline}
\label{W^epsilon_N}
W^{\epsilon}_N:=\frac{1}{2} \int \dd x\,\dd \tilde x\,\bigl(\abs{\phi_N(x)}^2- \E \qb{\abs{\phi_N(x)}^2}\bigr)\,v(x-\tilde x)\,
\bigl(\abs{\phi_N(\tilde x)}^2- \E \qb{\abs{\phi_N(\tilde x)}^2}\bigr)
\\
-\tau^{\epsilon} \int \dd x \,\bigl(\abs{\phi_N(x)}^2- \E \qb{\abs{\phi_N(x)}^2}\bigr)-E^{\epsilon}
\\
=\frac{1}{2} \int \dd x\,\dd \tilde x\,\Bigl(\abs{\phi_N(x)}^2- \bigl\{\E \qb{\abs{\phi_N(x)}^2}+\tau^{\epsilon}\bigr\}\Bigr)\,v(x-\tilde x)\,
\Bigl(\abs{\phi_N(\tilde x)}^2- \bigl\{\E \qb{\abs{\phi_N(\tilde x)}^2}+\tau^{\epsilon}\bigr\}\Bigr)-T^{\epsilon}\,.
\end{multline}
Here, we recall \eqref{T^epsilon}.
The claim then follows by using \eqref{W^epsilon_N} in the proof of \cite[Proposition 4.1]{FKSS_2020}. %In particular, the 
%$\tau^{\epsilon}$ terms in \eqref{W^epsilon_N} give the $\ee^{-\ii \tau^\epsilon \langle \xi,1\rangle_{L^2}}$ factor in the functional integral upon applying the Hubbard-Stratonovich formula. 
In particular, by applying the Hubbard-Stratonovich transformation
\begin{equation*}
\ee^{-\frac{1}{2} \langle f,v^{\epsilon} f \rangle}= \int \mu_{v^{\epsilon}}(\dd \xi)\,\ee^{\ii \langle f, \xi \rangle}\,,
\end{equation*}
for 
\begin{equation*}
f(x)=\abs{\phi_N(x)}^2-\E \qb{\abs{\phi_N(x)}^2}+\tau^{\epsilon}\,,
\end{equation*}
the 
$\tau^{\epsilon}$ terms in \eqref{W^epsilon_N} add a factor of $\ee^{-\ii \tau^\epsilon \langle \xi,1\rangle_{L^2}}$ 
to the integrand of the functional integral \eqref{zeta^{epsilon}}.
Here, we also recall \eqref{T^epsilon}. The formula \eqref{f_2} is given in \cite[Lemma 5.4]{FKSS_2020}.
\end{proof}

Recalling \eqref{delta_nu_integral}, let us note the following useful result.
\begin{lemma}
\label{Lemma 5.2'}
If $\sigma=\sigma(\tau,x)$ has law $\mu_{\cal C_{\eta}}$ with covariance \eqref{Gaussian_measure_quantum}, then its time average
\begin{equation}
\label{sigma_time_average}
\langle \sigma \rangle (x) \deq \frac{1}{\nu}\,\int_{0}^{\nu} \dd \tau\, \sigma(\tau,x)
\end{equation}
has law $\mu_{v^{\epsilon}}$ with covariance \eqref{Gaussian_measure_classical}. 
\end{lemma}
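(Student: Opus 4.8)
The plan is to exploit that $\langle\sigma\rangle$ is the image of the centered Gaussian field $\sigma$ under a deterministic linear map, namely averaging in the time variable, so that it is again a centered Gaussian field whose law is fully determined by its covariance. It then suffices to verify that this covariance is precisely $v^\epsilon(x-\tilde x)$, i.e.\ the covariance \eqref{Gaussian_measure_classical} of $\mu_{v^\epsilon}$.

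First I would record that, since $v \in C_c^\infty(\R^d)$, the field $\sigma$ is almost surely smooth on $[0,\nu]\times\Lambda$ and has uniformly bounded variance $(\cal C_\eta)^{\tau,\tau}_{x,x} = \nu\,\delta_\eta(0)\,v^\epsilon(0)$, so the time integral in \eqref{sigma_time_average} is well defined and, for any finite family $x_1,\dots,x_n\in\Lambda$, the vector $(\langle\sigma\rangle(x_1),\dots,\langle\sigma\rangle(x_n))$ is an $L^2$-limit of Riemann sums of the jointly Gaussian variables $\sigma(\tau,x_j)$, hence jointly centered Gaussian; thus $\langle\sigma\rangle$ is a centered Gaussian field. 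Next, using Fubini (justified by the above integrability) together with \eqref{Gaussian_measure_quantum},
\begin{equation*}
\int \mu_{\cal C_\eta}(\dd\sigma)\,\langle\sigma\rangle(x)\,\langle\sigma\rangle(\tilde x)
= \frac{1}{\nu^2}\int_0^\nu\!\dd\tau\int_0^\nu\!\dd\tilde\tau\;(\cal C_\eta)^{\tau,\tilde\tau}_{x,\tilde x}
= \frac{v^\epsilon(x-\tilde x)}{\nu}\int_0^\nu\!\dd\tau\int_0^\nu\!\dd\tilde\tau\;\delta_\eta(\tau-\tilde\tau)\,.
\end{equation*}
Because $\delta_\eta$ is $\nu$-periodic, for each fixed $\tau$ the inner integral equals $\int_0^\nu\delta_\eta(s)\,\dd s = 1$ by \eqref{delta_nu_integral}, so the double integral is $\nu$ and the right-hand side reduces to $v^\epsilon(x-\tilde x)$. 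This coincides with \eqref{Gaussian_measure_classical}, and since a centered Gaussian measure is determined by its covariance we conclude that $\langle\sigma\rangle$ has law $\mu_{v^\epsilon}$.

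There is no genuine obstacle in this argument; the only two points deserving a sentence of care are the interchange of expectation with the time integrals (immediate from the boundedness of the variance of $\sigma$ for fixed $\eta$) and the periodicity argument reducing $\int_0^\nu\int_0^\nu \delta_\eta(\tau-\tilde\tau)\,\dd\tilde\tau\,\dd\tau$ to $\nu$ via \eqref{delta_nu_integral}.
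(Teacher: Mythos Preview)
Your proof is correct and matches the paper's intended argument: the paper does not spell out a proof of this lemma but simply states it as a consequence of \eqref{delta_nu_integral}, which is exactly the covariance computation you carry out. Your additional care about Gaussianity of the time average and the Fubini justification is a welcome elaboration of what the paper leaves implicit.
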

We observe that
\begin{equation}
\label{sigma_time_average_observation}
\big \langle \langle \sigma \rangle,1 \big \rangle_{L^2}=\frac{1}{\nu}\,[\sigma]\,,
\end{equation}
for $[\sigma]$ as in \eqref{bracket_sigma}.
Therefore, we can rewrite \eqref{zeta^{epsilon}} as
\begin{equation}
\label{zeta^{epsilon}_{eta}_2}
\zeta^{W^\epsilon} = \int \mu_{\cal C_{\eta}}(\dd \sigma)\,\ee^{T^{\epsilon}-\frac{\ii \tau^\epsilon [\sigma]}{\nu}}\,\ee^{f_2(\langle \sigma \rangle)}\,.
\end{equation}
We note the following result.

\begin{lemma}
\label{Partition_function_rate_of_convergence_3}
Uniformly in $\eta>0$, we have
\begin{equation*}
\Bigl|\cal Z_{\eta}-\zeta^{W^\epsilon}\Bigr|\leq \ee^{T^{\epsilon}}\, \biggl(\int\mu_{\cal C_{\eta}}(\dd \sigma)\bigl|F_2(\sigma)-f_2(\langle \sigma \rangle)\bigr|^2\biggr)^{1/2}\,.
\end{equation*}
\end{lemma}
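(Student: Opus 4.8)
The plan is to subtract the two functional-integral representations directly and exploit that both are integrals against the \emph{same} Gaussian measure $\mu_{\cal C_\eta}$. By Lemma~\ref{Partition_function_rate_of_convergence_1} we have $\cal Z_\eta = \int \mu_{\cal C_\eta}(\dd\sigma)\,\ee^{T^\epsilon - \ii\tau^\epsilon[\sigma]/\nu}\,\ee^{F_2(\sigma)}$, while the rewriting \eqref{zeta^{epsilon}_{eta}_2} (obtained from Lemma~\ref{Partition_function_rate_of_convergence_2} via Lemma~\ref{Lemma 5.2'} and the identity \eqref{sigma_time_average_observation}) gives $\zeta^{W^\epsilon} = \int \mu_{\cal C_\eta}(\dd\sigma)\,\ee^{T^\epsilon - \ii\tau^\epsilon[\sigma]/\nu}\,\ee^{f_2(\langle\sigma\rangle)}$, with the \emph{same} oscillatory prefactor $\ee^{T^\epsilon - \ii\tau^\epsilon[\sigma]/\nu}$. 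Subtracting,
\[
\cal Z_\eta - \zeta^{W^\epsilon} = \ee^{T^\epsilon} \int \mu_{\cal C_\eta}(\dd\sigma)\, \ee^{-\ii\tau^\epsilon[\sigma]/\nu}\, \pb{\ee^{F_2(\sigma)} - \ee^{f_2(\langle\sigma\rangle)}}\,,
\]
and since $T^\epsilon$ is real and $\absb{\ee^{-\ii\tau^\epsilon[\sigma]/\nu}} = 1$, moving the modulus inside the integral yields $\absb{\cal Z_\eta - \zeta^{W^\epsilon}} \leq \ee^{T^\epsilon}\int \mu_{\cal C_\eta}(\dd\sigma)\,\absb{\ee^{F_2(\sigma)} - \ee^{f_2(\langle\sigma\rangle)}}$.

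Next I would use the elementary bound $\absb{\ee^z - \ee^w} \leq \abs{z - w}$, valid whenever $\re z \leq 0$ and $\re w \leq 0$. This follows by writing $\ee^z - \ee^w = (z - w)\int_0^1 \ee^{tz + (1-t)w}\,\dd t$ and noting that $\absb{\ee^{tz + (1-t)w}} = \ee^{t\re z + (1-t)\re w} \leq 1$. The hypotheses $\re F_2 \leq 0$ and $\re f_2 \leq 0$ are precisely \eqref{Re_F2} and \eqref{Re_f2}, so applying this inequality with $z = F_2(\sigma)$ and $w = f_2(\langle\sigma\rangle)$ gives, pointwise in $\sigma$,
\[
\absb{\ee^{F_2(\sigma)} - \ee^{f_2(\langle\sigma\rangle)}} \leq \absb{F_2(\sigma) - f_2(\langle\sigma\rangle)}\,.
\]
Finally I would apply the Cauchy--Schwarz inequality (equivalently Jensen's inequality) with respect to the probability measure $\mu_{\cal C_\eta}$ to pass from the $L^1(\mu_{\cal C_\eta})$-norm of $F_2(\sigma) - f_2(\langle\sigma\rangle)$ to its $L^2(\mu_{\cal C_\eta})$-norm, which produces the claimed estimate. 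The prefactor $\ee^{T^\epsilon}$ depends only on $\epsilon$, not on $\eta$, so the bound is uniform in $\eta$ as stated.

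There is no substantive obstacle in this lemma; it is a bookkeeping step whose point is to isolate the single scalar quantity $\int \mu_{\cal C_\eta}(\dd\sigma)\,\absb{F_2(\sigma) - f_2(\langle\sigma\rangle)}^2$, whose smallness (the genuinely hard estimate, requiring the $L^p$-Hölder continuity of Brownian paths from Lemma~\ref{Heat_kernel_estimates} and the $\epsilon$-dependent Lipschitz bound on $v^\epsilon$ from Lemma~\ref{v^{epsilon}_properties}) is the subject of the subsequent lemmas. The only point requiring any care here is to make sure the two representations are genuinely written over the common measure $\mu_{\cal C_\eta}$ and that the oscillatory prefactors match exactly, which is what \eqref{zeta^{epsilon}_{eta}_2} and \eqref{sigma_time_average_observation} guarantee.
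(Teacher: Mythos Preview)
Your proof is correct and follows essentially the same approach as the paper's own argument: subtract the two representations over the common measure $\mu_{\cal C_\eta}$, use the elementary bound $\abs{\ee^z-\ee^w}\leq\abs{z-w}$ for $\re z,\re w\leq 0$ (justified by \eqref{Re_F2} and \eqref{Re_f2}), and then apply Cauchy--Schwarz. You even supply slightly more detail than the paper by spelling out the integral-representation proof of the elementary inequality and the reason for uniformity in $\eta$.
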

\begin{proof}
By \eqref{Z^{epsilon}_{eta}} and \eqref{zeta^{epsilon}_{eta}_2}, we have
\begin{equation}
\label{Partition_function_rate_of_convergence_3_proof}
\cal Z_{\eta}-\zeta^{W^\epsilon}= \int \mu_{\cal C_{\eta}}(\dd \sigma)\,\ee^{T^{\epsilon}-\frac{\ii \tau^\epsilon [\sigma]}{\nu}}\,\Bigl(\ee^{F_2(\sigma)}-\ee^{f_2(\langle \sigma \rangle)}\Bigr)\,.
\end{equation}
By \eqref{Re_F2} and \eqref{Re_f2} and the elementary inequality $|\ee^{a}-\ee^{b}| \leq |a-b|$ for $a,b \in \C$ with $\re a, \re b \leq 0$, we have that for all $\sigma$
\begin{equation}
\label{Partition_function_rate_of_convergence_3_proof_2}
\Bigl|\ee^{F_2(\sigma)}-\ee^{f_2(\langle \sigma \rangle)}\Bigr| \leq \bigl|F_2(\sigma)-f_2(\langle \sigma \rangle)\bigr|\,.
\end{equation}
The claim follows from \eqref{Partition_function_rate_of_convergence_3_proof}, \eqref{Partition_function_rate_of_convergence_3_proof_2}, and the Cauchy-Schwarz inequality.
\end{proof}

In order to simplify notation in the sequel, we define the function $\Theta\col \R^+ \rightarrow \R$ by
\begin{equation}
\label{Theta_function}
\Theta(t) \deq 
\begin{cases}
\sqrt{t} & \text{if } d=2
\\
\sqrt{t}\log t^{-1}& \text{if } d=3\,.
\end{cases}
\end{equation}
Note that the upper bound in Proposition \ref{Partition_function_rate_of_convergence} 
can then be rewritten as $\ee^{C_1 \chi(\epsilon)^2}\Theta^{1/2}(\nu)$, with $\chi$ given by \eqref{sigma(epsilon)}. We prove the following two estimates, which correspond to quantitative versions of \cite[Lemma 5.9]{FKSS_2020} and \cite[Lemma 5.10]{FKSS_2020} respectively, and in turn let us use the bound from Lemma \ref{Partition_function_rate_of_convergence_3} to prove Proposition \ref{Partition_function_rate_of_convergence}.
\begin{lemma}
\label{Partition_function_rate_of_convergence_4}
Uniformly in $\eta>0$, we have
\begin{equation*}
\biggl|\int\mu_{\cal C_{\eta}}(\dd \sigma)\,\overline{F_2(\sigma)}F_2(\sigma)-\int\mu_{v^{\epsilon}}(\dd \xi)\,\overline{f_2(\xi)}f_2(\xi)\biggr|\lesssim_{\kappa,v} \frac{\Theta(\nu)}{\epsilon^{5d+1}}\,.
\end{equation*}
\end{lemma}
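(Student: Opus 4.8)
\emph{The plan.} The starting point is Lemma \ref{Lemma 5.2'}: since $\langle\sigma\rangle$ has law $\mu_{v^\epsilon}$ whenever $\sigma$ has law $\mu_{\cal C_\eta}$, we may rewrite the classical second moment as $\int\mu_{v^{\epsilon}}(\dd\xi)\,\overline{f_2(\xi)}f_2(\xi)=\int\mu_{\cal C_\eta}(\dd\sigma)\,\overline{f_2(\langle\sigma\rangle)}f_2(\langle\sigma\rangle)$, so that both terms live on the same probability space. Setting $G(\sigma)\deq F_2(\sigma)-f_2(\langle\sigma\rangle)$ and expanding $\overline{F_2}F_2-\overline{f_2(\langle\sigma\rangle)}f_2(\langle\sigma\rangle)=\overline{f_2(\langle\sigma\rangle)}\,G+\overline{G}\,f_2(\langle\sigma\rangle)+\overline{G}\,G$, the triangle and Cauchy--Schwarz inequalities reduce the claim to two bounds, uniform in $\eta>0$: a crude polynomial bound $\normb{f_2}_{L^2(\mu_{v^\epsilon})}\lesssim_{\kappa,v}\epsilon^{-2d}$ on the prefactor norm, and the quantitative smallness $\normb{G}_{L^2(\mu_{\cal C_\eta})}\lesssim_{\kappa,v}\epsilon^{-(3d+1)}\Theta(\nu)$. (Any split of the exponent $5d+1$ into a contribution $\epsilon^{-\#}$ for $\|f_2\|$ and $\epsilon^{-\#}\Theta(\nu)$ for $\|G\|$ suffices; the leftover term $\overline{G}G$ then contributes $O(\Theta(\nu)^2\epsilon^{-(6d+2)})$, which is of lower order than $\Theta(\nu)\epsilon^{-(5d+1)}$ since $\Theta(\nu)\epsilon^{-(d+1)}\to0$ by \eqref{epsilon_lower_bound_2}.)

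\emph{The prefactor bound.} For $\normb{f_2}_{L^2(\mu_{v^\epsilon})}$ one expands the phase $\ee^{\ii\int\dd s\,\xi(\omega_1(s))}$ and its conjugate in power series and applies Wick's theorem to the Gaussian field $\xi$ with covariance $v^\epsilon$. Each $\xi$-contraction produces a factor $v^\epsilon(\cdot)$, which is integrated against the total masses $\psi^{r_j}(\cdot)\leq C(1+r_j^{-d/2})$ of the Brownian bridge measures $\mathbb W^{r_j,0}$ (Lemma \ref{Heat_kernel_estimates}\ref{itm:heatkernel1}) and the exponential weight $\ee^{-\kappa|\mathbf r|}/|\mathbf r|$. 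Using $\|v^\epsilon\|_{L^\infty}\lesssim\epsilon^{-d}$ (Lemma \ref{v^{epsilon}_properties}\ref{itm:veps1}) and $\|v^\epsilon\|_{L^1}=1$ to carry out the spatial integrals, then the convergent $\mathbf r$-integral, yields the stated power of $\epsilon^{-1}$; this is essentially the bound of \cite{FKSS_2020} for $\epsilon=1$ with the $\epsilon$-dependence tracked. (The analogous bound for $F_2$ then follows from $\|F_2\|_{L^2}\leq\|f_2(\langle\sigma\rangle)\|_{L^2}+\|G\|_{L^2}$.)

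\emph{The main step: bounding $\|G\|_{L^2(\mu_{\cal C_\eta})}$.} One decomposes $F_2(\sigma)-f_2(\langle\sigma\rangle)$ into a bounded number of difference terms according to the three ways in which the representations \eqref{F_2} and \eqref{f_2} differ: (i) the winding sum $\sum_{\mathbf r\in(\nu\N)^3}\int_{[0,\nu]^3}\dd\boldsymbol\tau$ versus the integral $\int_{[0,\infty)^3}\dd\mathbf r$, a Riemann-sum error controlled by the smoothness in $\mathbf r$ of $\psi^{r_j}$ and Lemma \ref{Heat_kernel_estimates}\ref{itm:heatkernel3}; (ii) the shifted bridge time-intervals $[\tau_j,\tau_{j\pm1}+r_j]$ versus $[0,r_j]$, handled by $\|\psi^t-\psi^s\|_{L^1}\leq d\log(t/s)$ (Lemma \ref{Heat_kernel_estimates}\ref{itm:heatkernel3}) together with the path-continuity estimate Lemma \ref{Heat_kernel_estimates}\ref{itm:heatkernel2}; and (iii) the replacement of the time-oscillating phase $\ee^{\ii\int\dd s\,\sigma([s]_\nu,\omega_1(s))}$ by $\ee^{\ii\int\dd s\,\langle\sigma\rangle(\omega_1(s))}$. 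For (iii), $|\ee^{\ii a}-\ee^{\ii b}|\leq|a-b|$ reduces matters to the $L^2(\mu_{\cal C_\eta})$-norm of $\int\dd s\,(\sigma([s]_\nu,\omega_1(s))-\langle\sigma\rangle(\omega_1(s)))$ multiplied by the remaining prefactor integrand; carrying out all the Gaussian contractions (using $\|v^\epsilon\|_{L^\infty}\lesssim\epsilon^{-d}$ for the contractions not involving the difference) and using the heat-kernel estimates of Lemma \ref{Heat_kernel_estimates} for the spatial and $\mathbf r$-integrals produces the power $\epsilon^{-(3d+1)}$ and the factor $\Theta(\nu)$.

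\emph{The main obstacle.} The crux is the phase-replacement in step (iii), estimated uniformly in $\eta$. One cannot bound $\sigma(\tau,x)-\langle\sigma\rangle(x)$ pointwise in $\tau$: its $\tau$-covariance is $(\nu\delta_\eta(\tau-\tilde\tau)-1)\,v^\epsilon(x-y)$, which degenerates as $\eta\to0$. Instead one must exploit the exact cancellation $\int_0^\nu(\nu\delta_\eta(\tau)-1)\,\dd\tau=0$ (from \eqref{delta_nu_integral}): grouping the time-integrals into $\nu$-periods and subtracting, within each period, the value of $v^\epsilon(\omega_1(s)-\omega_1(t))$ at a reference time, the remainder is bounded by $\|\nabla v^\epsilon\|_{L^\infty}\lesssim\epsilon^{-(d+1)}$ (Lemma \ref{v^{epsilon}_properties}\ref{itm:veps2}) times the oscillation of the Brownian path over a window of length $\nu$, which in $L^2$ is $\lesssim\sqrt\nu$ by Lemma \ref{Heat_kernel_estimates}\ref{itm:heatkernel2} (the extra logarithm in $\Theta(\nu)$ for $d=3$ coming from the heat-kernel mass bound). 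This is precisely where the interplay between the Lipschitz constant $\epsilon^{-(d+1)}$ of $v^\epsilon$ and the path modulus $\Theta(\nu)$ forces the lower bound \eqref{epsilon_lower_bound} on $\epsilon$.
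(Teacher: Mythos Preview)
Your approach is genuinely different from the paper's, and it has a real gap at the Cauchy--Schwarz step.

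The paper does \emph{not} pass through $\|G\|_{L^2}$. Instead it expands both Gaussian second moments explicitly via Wick's theorem, obtaining the deterministic representations \eqref{5.17'} and \eqref{5.8'} with integrands $J(\f r,\tilde{\f r})$ and $I^\epsilon(\f r,\tilde{\f r})$, and then compares these directly: Lemma~\ref{J_approximation} shows $J(\f r,\tilde{\f r})=\nu^6 I^\epsilon(\f r,\tilde{\f r})+O(\nu^{13/2}\epsilon^{-(5d+1)}\cdots)$, and Lemma~\ref{Riemann_sum_I} controls the Riemann--sum error. The point of Wick--expanding \emph{first} is that all $\sigma$-dependence becomes explicit factors of $\nu\delta_\eta(\cdot)v^\epsilon(\cdot)$, which integrate out uniformly in $\eta$; only afterwards does one telescope between quantum and classical integrands.

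Your Cauchy--Schwarz route requires $\|G\|_{L^2}\lesssim\epsilon^{-(3d+1)}\Theta(\nu)$, i.e.\ a \emph{full} power of $\Theta(\nu)$ in $\|G\|$. But the paper's own machinery (Lemmas~\ref{Partition_function_rate_of_convergence_4} and~\ref{Partition_function_rate_of_convergence_5} combined, which is exactly how Lemma~\ref{Partition_function_rate_of_convergence_3} is used downstream) yields only $\|G\|_{L^2}^2\lesssim\epsilon^{-(5d+1)}\Theta(\nu)$, hence $\|G\|_{L^2}\lesssim\epsilon^{-(5d+1)/2}\Theta(\nu)^{1/2}$. Feeding this into your Cauchy--Schwarz gives $\|f_2\|\,\|G\|\lesssim\epsilon^{-(9d+1)/2}\Theta(\nu)^{1/2}$, and this is \emph{not} dominated by the target $\epsilon^{-(5d+1)}\Theta(\nu)$: the ratio is $\epsilon^{(d+1)/2}\Theta(\nu)^{-1/2}$, which diverges under \eqref{epsilon_lower_bound} (for $d=2$, $\epsilon^3\gg\sqrt\nu$). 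So Cauchy--Schwarz is too crude here --- what is actually needed is the direct bound on $\int\overline{f_2(\langle\sigma\rangle)}\,G(\sigma)=\int\overline{f_2}F_2-\int|f_2|^2$, which is precisely Lemma~\ref{Partition_function_rate_of_convergence_5} and exploits cancellation that Cauchy--Schwarz discards.

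Your sketch for a direct telescoping bound on $\|G\|$ does not close this gap either. When you apply $|e^{iA}-e^{iB}|\leq|A-B|$ pointwise and then square, the resulting integrand carries the prefactors $\sigma(\tau_2,x_2)\sigma(\tau_3,x_3)$ at \emph{fixed} times multiplied by $|A-B|$; taking the Gaussian expectation then produces moments that either diverge as $\eta\to0$ or, after exploiting the cancellation $\int_0^\nu(\nu\delta_\eta-1)=0$ as you describe, yield a single factor of the path modulus $\sqrt\nu$ in $\|T_k\|_{L^2}^2$ --- i.e.\ again $\|G\|\sim\Theta(\nu)^{1/2}$, not $\Theta(\nu)$. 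The stronger scaling you posit is not established.
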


\begin{lemma}
\label{Partition_function_rate_of_convergence_5}
Uniformly in $\eta>0$, we have
\begin{equation*}
\biggl|\int\mu_{\cal C_{\eta}}(\dd \sigma)\,\overline{f_2(\langle \sigma \rangle)}F_2(\sigma)-\int\mu_{v^{\epsilon}}(\dd \xi)\,\overline{f_2(\xi)}f_2(\xi)\biggr|\lesssim_{\kappa,v}\frac{\Theta(\nu)}{\epsilon^{5d+1}}\,.
\end{equation*}
\end{lemma}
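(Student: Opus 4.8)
The plan is to run the same argument as in the proof of Lemma \ref{Partition_function_rate_of_convergence_4}, of which this is essentially one half. First, by Lemma \ref{Lemma 5.2'} the time average $\langle \sigma \rangle$ has law $\mu_{v^\epsilon}$ whenever $\sigma$ has law $\mu_{\cal C_\eta}$, so
\begin{equation*}
\int \mu_{v^{\epsilon}}(\dd \xi)\,\overline{f_2(\xi)}f_2(\xi) = \int \mu_{\cal C_{\eta}}(\dd \sigma)\,\overline{f_2(\langle \sigma \rangle)}f_2(\langle \sigma \rangle)\,,
\end{equation*}
and the quantity to estimate becomes $\int \mu_{\cal C_{\eta}}(\dd \sigma)\,\overline{f_2(\langle \sigma \rangle)}\pb{F_2(\sigma) - f_2(\langle \sigma \rangle)}$. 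Everything thus reduces to comparing $F_2(\sigma)$ with its \emph{frozen-time} approximation $f_2(\langle \sigma \rangle)$; the extra factor $\overline{f_2(\langle \sigma \rangle)}$ merely supplies two further field insertions in the diagrammatic expansion carried out below. (Note that a direct Cauchy--Schwarz bound via $\|F_2 - f_2(\langle \sigma \rangle)\|_{L^2}$ would be too lossy, producing only $\Theta(\nu)^{1/2}$; indeed, the bound on $\|F_2 - f_2(\langle \sigma \rangle)\|_{L^2}$ is itself a consequence of Lemmas \ref{Partition_function_rate_of_convergence_4} and \ref{Partition_function_rate_of_convergence_5} together.)

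Next I would Taylor-expand the exponentials $\ee^{\ii \int \dd s\,\sigma([s]_\nu,\omega_1(s))}$ in \eqref{F_2} and $\ee^{\ii \int \dd s\,\langle \sigma \rangle(\omega_1(s))}$ in \eqref{f_2}. Since $\langle \sigma \rangle$ is linear in $\sigma$ and $\sigma$ is real Gaussian, this writes the integrand as a series of monomials in $\sigma$; integrating against $\mu_{\cal C_{\eta}}$ and applying Wick's theorem expresses each monomial as a finite sum of diagrams — iterated integrals over the loop parameters $\mathbf{r}$, the time variables, the space variables, and the Brownian bridge measures \eqref{W_measure}, of products of covariances. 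A pairing of two $\sigma$-insertions contributes $\nu\,\delta_\eta(\tau - \tilde\tau)\,v^{\epsilon}(x - \tilde x)$, while a pairing involving at least one $\langle \sigma \rangle$ contributes $v^{\epsilon}(x - \tilde x)$ by \eqref{delta_nu_integral}; since $\delta_\eta \geq 0$ has unit mass over the period, all resulting time integrals are bounded uniformly in $\eta$.

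For each diagram I would then compare its contribution to $\int \mu_{\cal C_{\eta}}\,\overline{f_2(\langle \sigma \rangle)}F_2$ with its contribution to $\int \mu_{\cal C_{\eta}}\,\overline{f_2(\langle \sigma \rangle)}f_2(\langle \sigma \rangle)$. Their difference is a sum of error terms, each arising from an $O(\nu)$ mismatch: in the durations $\tau_i + r_j - \tau_k$ of the Brownian bridges versus the clean durations $r_j$; in the discretization $\sum_{\mathbf{r}\in(\nu\N)^3}\int_{[0,\nu]^3}$ of the loop-parameter integral versus $\int_{[0,\infty)^3}\dd\mathbf{r}$; and in replacing $\sigma([s]_\nu,\omega_1(s))$ inside the exponent by $\langle \sigma \rangle(\omega_1(s))$. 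The duration mismatches are controlled by Lemma \ref{Heat_kernel_estimates} \ref{itm:heatkernel3}, whose bound $d\log\!\pb{1 + O(\nu)/r_j}$ remains integrable against the loop weight $\ee^{-\kappa|\mathbf{r}|}/|\mathbf{r}|$; the exponent and discretization errors are controlled by the $L^1$-Hölder continuity of Brownian paths, Lemma \ref{Heat_kernel_estimates} \ref{itm:heatkernel2} (estimate \eqref{Heat_kernel_estimates_ii_2}), which gives $|\omega_1(s) - \omega_1(s')|_\Lambda \lesssim |s - s'|^{1/2} \lesssim \nu^{1/2}$ over time intervals of length $O(\nu)$, combined with the Lipschitz bound $|v^{\epsilon}(x) - v^{\epsilon}(y)| \leq C\epsilon^{-d-1}|x-y|_\Lambda$ of Lemma \ref{v^{epsilon}_properties} \ref{itm:veps2}. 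All remaining covariances are bounded by $\|v^{\epsilon}\|_{L^\infty} \leq C\epsilon^{-d}$ while a space variable is pinned, and cost nothing once it is integrated out since $\|v^{\epsilon}\|_{L^1} = 1$; thus only a bounded number of $v^{\epsilon}$-factors ever carry a negative power of $\epsilon$, and precisely one is treated by the Lipschitz estimate, which is where the single $\epsilon^{-1}$ and the gain $\Theta(\nu)$ appear (the additional $\log\nu^{-1}$ for $d = 3$ coming from the loop integral near $\mathbf{r} = 0$). Summing over all diagrams and all expansion orders, the number of Wick pairings at order $m$ is absorbed by the Taylor coefficient $1/m!$ while each further insertion contributes a uniformly bounded factor after integrating heat kernels and the $L^1$-normalized $v^{\epsilon}$ against $\ee^{-\kappa|\mathbf{r}|}/|\mathbf{r}|$, so the series converges absolutely and uniformly in $\eta$, yielding the claimed bound $\Theta(\nu)/\epsilon^{5d+1}$.

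The main obstacle I expect is precisely this last bookkeeping: tracking, in every diagram at every order, the unique field insertion that must be estimated through the Hölder/Lipschitz pair in order to extract the factor $\Theta(\nu)$, while simultaneously confirming that no more than the advertised number of $\epsilon^{-d}$-factors arises and that the resummation over diagrams is finite and $\eta$-independent. In substance this is a quantitative refinement of the qualitative convergence argument of \cite[Lemma 5.10]{FKSS_2020}, carried out in close parallel to the proof of Lemma \ref{Partition_function_rate_of_convergence_4}.
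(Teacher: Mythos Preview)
Your Taylor-expansion strategy has a genuine gap at the resummation step. The claim that ``each further insertion contributes a uniformly bounded factor after integrating \ldots the $L^1$-normalized $v^{\epsilon}$'' does not hold: the insertions coming from the expanded exponentials sit at \emph{path} points $\omega_1(s)$, $\tilde\omega_1(\tilde s)$, and are integrated over the time variables $s,\tilde s$, not over a free space variable. A pairing of two such insertions produces a path--path interaction $\int\dd s\int\dd\tilde s\,v^{\epsilon}(\omega_1(s)-\tilde\omega_1(\tilde s))$ (possibly times $\nu\delta_\eta$), which after the only available pointwise bound $\abs{v^{\epsilon}}\leq C\epsilon^{-d}$ is of order $\epsilon^{-d}r_1\tilde r_1$, not $O(1)$. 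At order $m$ you therefore pick up $(C\epsilon^{-d}r_1\tilde r_1)^{m}$-type factors, and the series, even if it converges, reassembles into $\exp\bigl(C\epsilon^{-d}(\cdots)\bigr)$, which destroys any polynomial-in-$\epsilon^{-1}$ bound. The $1/m!$ does not rescue this, because it is cancelled by the number of pairings and by the size of each diagram, not just by one of them.

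The paper avoids this entirely by never Taylor-expanding. For fixed paths $\omega,\tilde\omega$, the $\sigma$-integral in $\int\mu_{\cal C_\eta}(\dd\sigma)\,\overline{f_2(\langle\sigma\rangle)}F_2(\sigma)$ is a Gaussian moment with four explicit insertions (at $x_2,x_3,\tilde x_2,\tilde x_3$) and a linear phase in the exponent; it can be computed \emph{exactly}, giving the closed formula \eqref{5.31'}--\eqref{5.33'}. The exponential that appears there, $\ee^{-\frac{1}{2}(\mathbb V_\eta(\omega_1,\omega_1)+\mathbb V^\epsilon(\tilde\omega_1,\tilde\omega_1)-2\mathbb V^\epsilon(\omega_1,\tilde\omega_1))}$, has nonpositive exponent because $v^{\epsilon}$ and $\delta_\eta$ are of positive type (cf.\ \eqref{V_quantum_positive}, \eqref{V_classical_positive}); this is precisely the bound-by-$1$ that replaces your divergent resummation. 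What remains is a \emph{finite} sum over partial pairings of the four-element set $A$, and the proof then proceeds exactly as for Lemma \ref{Partition_function_rate_of_convergence_4}: one compares $\tilde J^{\epsilon}(\mathbf r,\tilde{\mathbf r})$ with $\nu^6 I^{\epsilon}(\mathbf r,\tilde{\mathbf r})$ (Lemma \ref{J_approximation2}) and handles the Riemann-sum error in the single discretized variable $\mathbf r$ (Lemma \ref{Riemann_sum_I2}). Your identification of the analytic inputs (Lemmas \ref{Heat_kernel_estimates} \ref{itm:heatkernel2}--\ref{itm:heatkernel3} and \ref{v^{epsilon}_properties} \ref{itm:veps2}) is correct; what is missing is the exact Gaussian computation that reduces the problem to finitely many terms.
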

Assuming Lemmas \ref{Partition_function_rate_of_convergence_4} and \ref{Partition_function_rate_of_convergence_5} for now, we can prove the convergence rate given in Proposition \ref{Partition_function_rate_of_convergence}.
\begin{proof}[Proof of Proposition \ref{Partition_function_rate_of_convergence}]
The claim follows from Lemmas \ref{Partition_function_rate_of_convergence_1}--\ref{Partition_function_rate_of_convergence_5} by recalling \eqref{T^epsilon_bound}.
\end{proof}

The rest of this section is devoted to showing Lemmas \ref{Partition_function_rate_of_convergence_4} and \ref{Partition_function_rate_of_convergence_5}. %Some of the auxiliary steps in the discussion that follows are worked out in Appendix \ref{Appendix C}.
Before proceeding with the proofs, we need to introduce some notation and definitions. Throughout we use the convention that, given a path $\omega \in \Omega^{\tau,\tilde{\tau}}$ and a function $f$, we write
\begin{equation*}
\int \dd s \,f(\omega(s)) \equiv \int_{\tilde{\tau}}^{\tau} \dd s\,f(\omega(s))\,.
\end{equation*}
We define the following quantities that will allow us to rewrite the terms that arise in the sequel.

\begin{definition}[Classical interactions]
\label{V_classical}
Let $x ,\tilde x \in \Lambda$ and $\omega \in \Omega^{\tau_1,\tilde \tau_1}, \tilde \omega \in \Omega^{\tau_2, \tilde \tau_2}
$ be continuous paths. We then define the \emph{point-point interaction}
\begin{equation*}
(\mathbb{V}^{\epsilon})_{x,\tilde x} \deq \int \mu_{v^{\epsilon}}(\dd \xi)\,\xi(x)\,\xi(\tilde x)=v^{\epsilon}(x-\tilde x)\,,
\end{equation*}
the \emph{point-path interaction}
\begin{equation*}
(\mathbb{V}^{\epsilon})_{x}(\omega) \deq \int \mu_{v^{\epsilon}}(\dd \xi)\,\int \dd s\,\xi(\omega(s))=\int \dd s\,v^{\epsilon}(x-\omega(s))\,,
\end{equation*}
and the \emph{path-path interaction}
\begin{equation*}
\mathbb{V}^{\epsilon}(\omega,\tilde \omega) \deq \int \mu_{v^{\epsilon}}(\dd \xi)\,\int \dd s\,\xi(\omega(s))\,\int \dd \tilde{s}\,\xi(\tilde{\omega}(\tilde{s}))=\int \dd s\,\int \dd \tilde{s}\,v^{\epsilon}(\omega(s)-\tilde{\omega}(\tilde{s}))\,.
\end{equation*}
\end{definition}
In what follows, we use the notation $x_{i,0} \equiv x_i, \tilde{x}_{i,1} \equiv \tilde{x}_i$ for $i=1,2$ and we write 
\begin{equation}
\label{set_A}
A \deq \{2,3\} \times \{0,1\}\,. 
\end{equation}
Arguing analogously as for \cite[(5.8)--(5.9)]{FKSS_2020}, we get that
\begin{equation}
\label{5.8'}
\int \mu_{v^{\epsilon}}(\dd \xi)\,\overline{f_2(\xi)}\,f_2(\xi)=\int_{[0,\infty)^3} \dd \mathbf{r} \,\frac{\ee^{-\kappa |\mathbf{r}|}}{|\mathbf{r}|}\,\int_{[0,\infty)^3} \dd \mathbf{\tilde r}\, \frac{\ee^{-\kappa |\mathbf{\tilde r}|}}{|\mathbf{\tilde r}|}\,I^{\epsilon}(\f r, \tilde{\f r})\,,
\end{equation}
where 
\begin{multline}
\label{5.9'}
I^{\epsilon}(\f r, \tilde{\f r}) \deq \int \dd \mathbf{x} \, \int \dd \mathbf{\tilde x}\,\int \mathbb{W}^{r_3,0}_{x_1,x_3}(\dd \omega_3)\,\mathbb{W}^{r_2,0}_{x_3,x_2}(\dd \omega_2)\,\mathbb{W}^{r_1,0}_{x_2,x_1}(\dd \omega_1)\,
\\
\times
\mathbb{W}^{\tilde{r}_3,0}_{\tilde{x}_1,\tilde{x}_3}(\dd \tilde{\omega}_3)\,\mathbb{W}^{\tilde{r}_2,0}_{\tilde{x}_3,\tilde{x}_2}(\dd \tilde{\omega}_2)\,\mathbb{W}^{\tilde{r}_1,0}_{\tilde{x}_2,\tilde{x}_1}(\dd \tilde{\omega}_1)\,
\ee^{-\frac{1}{2}\bigl(\mathbb{V}^{\epsilon}(\omega_1,\omega_1)+\mathbb{V}^{\epsilon}(\tilde \omega_1,\tilde \omega_1)-2\mathbb{V}^{\epsilon}(\omega_1,\tilde \omega_1)\bigr)}\,
\\
\times
\sum_{\Pi \in \fra{M}(A)} \prod_{\{a,b\} \in \Pi} (\mathbb{V}^{\epsilon})_{x_a,x_b}\,\prod_{a \in A \setminus [\Pi]} \ii \Bigl((\mathbb{V}^{\epsilon})_{x_a}(\omega_1)-
(\mathbb{V}^{\epsilon})_{x_a}(\tilde{\omega}_1)\Bigr)\,.
\end{multline}
Here, we write $|\mathbf{r}|=r_1+r_2+r_3, |\mathbf{\tilde r}|=\tilde r_1+\tilde r_2+\tilde r_3$. Moreover, we denote by $\fra{M}(A)$ the set of partial pairings on the set $A$.

\begin{definition}[Quantum interactions]
\label{V_quantum}
Let $(\tau,x),(\tilde{\tau},\tilde{x}) \in [0,\nu] \times \Lambda$ and let $\omega \in \Omega^{\tau_{1},\tilde{\tau}_{1}},\tilde{\omega} \in \Omega^{\tau_{2},\tilde{\tau}_{2}}$ be continuous paths. With $\delta_{\eta}$ given by \eqref{approximate_delta_function}, we define the \emph{point-point interaction}
\begin{equation*}
(\mathbb{V}_{\eta})_{x,\tilde{x}}^{\tau,\tilde{\tau}} \deq \int \mu_{\cal{C}_{\eta}}(\dd \sigma)\,\sigma(\tau,x)\,\sigma(\tilde{\tau},\tilde{x})=\nu\,\delta_{\eta}(\tau-\tilde{\tau})\,v^{\epsilon}(x-\tilde{x})\,,
\end{equation*}
the \emph{point-path interaction}
\begin{equation*}
(\mathbb{V}_{\eta})_{x}^{\tau}(\omega) \deq \int \mu_{\cal{C}_{\eta}}(\dd \sigma)\,\sigma(\tau,x)\,
\int_{0}^{\nu} \dd t\,\int \dd s\,\delta(t-[s]_{\nu})\,\sigma(t,\omega(s))
=\nu\,\int \dd s\, \delta_{\eta}(\tau-[s]_{\nu})\,v^{\epsilon}(x-\omega(s))\,,
\end{equation*}
and the \emph{path-path interaction}
\begin{multline*}
\mathbb{V}_{\eta}(\omega,\tilde{\omega}) \deq \int \mu_{\cal{C}_{\eta}}(\dd \sigma)\,\int_{0}^{\nu} \dd t\,\int \dd s\, \sigma(\tau,x)\,\delta(t-[s]_{\nu})\,\sigma(t,\omega(s))\,\int_{0}^{\nu}\dd \tilde{t}\,\int \dd \tilde{s}\,\delta(\tilde{t}-[\tilde{s}]_{\nu})\,\sigma(\tilde{t},\tilde{\omega}(\tilde{s}))
\\
=\nu\,\int \dd s\,\int \dd \tilde{s}\, \delta_{\eta}([s]_{\nu}-[\tilde{s}]_{\nu})\,v^{\epsilon}(\omega(s)-\tilde{\omega}(\tilde{s}))\,.
\end{multline*}
Here, we recall \eqref{bracket_t}.
\end{definition}

Arguing analogously as for \cite[(5.17)--(5.18), (5.21)]{FKSS_2020}, we have that
\begin{equation} 
\label{5.17'}
\int \mu_{\cal{C}_{\eta}}(\dd \sigma) \, \overline{F_2(\sigma)} F_2(\sigma)
=
\sum_{\mathbf{r} \in (\nu \N)^3} \frac{\ind{|\mathbf{r}|>0}\,\ee^{-\kappa|\mathbf{r}|}}{|\mathbf{r}|}
\sum_{\tilde{\mathbf{r}} \in (\nu \N)^3} \frac{\ind{|\tilde{\mathbf{r}}|>0}\,\ee^{-\kappa|\tilde{\mathbf{r}}|}}{|\tilde{\mathbf{r}}|}\,
J(\mathbf{r},\tilde{\mathbf{r}})\,,
\end{equation}
where 
\begin{multline}
\label{5.21'}
J(\mathbf{r},\tilde{\mathbf{r}}) \deq \int_{[0,\nu]^3} \dd \f \tau \,\int_{[0,\nu]^3} \dd \tilde{\f \tau}\, \int \dd \mathbf{x}\,\int\dd \tilde{\mathbf{x}}\,\int \mathbb{W}^{\tau_1+r_3,\tau_3}_{x_1,x_3}(\dd \omega_3)\,\mathbb{W}^{\tau_3+r_2,\tau_2}_{x_3,x_2}(\dd \omega_2)\,\mathbb{W}^{\tau_2+r_1,\tau_1}_{x_2,x_1}(\dd \omega_1)
\\
\times
\mathbb{W}^{\tilde{\tau}_1+\tilde{r}_3,\tilde{\tau}_3}_{\tilde{x}_1,\tilde{x}_3}(\dd \tilde{\omega}_3)\,\mathbb{W}^{\tilde{\tau}_3+\tilde{r}_2,\tilde{\tau}_2}_{\tilde{x}_3,\tilde{x}_2}(\dd \tilde{\omega}_2)\,\mathbb{W}^{\tilde{\tau}_2+\tilde{r}_1,\tilde{\tau}_1}_{\tilde{x}_2,\tilde{x}_1}(\dd \tilde{\omega}_1)\,
\ee^{-\frac{1}{2}\bigl(\mathbb{V}_{\eta}(\omega_1,\omega_1)+\mathbb{V}_{\eta}(\tilde \omega_1,\tilde \omega_1)-2\mathbb{V}_{\eta}(\omega_1,\tilde \omega_1)\bigr)}\,
\\
\times
\sum_{\Pi \in \fra{M}(A)} \prod_{\{a,b\} \in \Pi} (\mathbb{V}_{\eta})_{x_a,x_b}^{\tau_a,\tau_b}\,\prod_{a \in A \setminus [\Pi]} \ii \Bigl((\mathbb{V}_{\eta})_{x_a}^{\tau_a}(\omega_1)-
(\mathbb{V}_{\eta})_{x_a}^{\tau_a}(\tilde{\omega}_1)\Bigr)\,.
\end{multline}

The first step in the proof of Lemma \ref{Partition_function_rate_of_convergence_4} is to compare \eqref{5.21'} with $\nu^6 I^{\epsilon}(\mathbf{r},\tilde{\mathbf{r}})$, which appears in a Riemann sum of mesh size $\nu$ for \eqref{5.8'}. We show the following quantitative estimate.

\begin{lemma}[Approximation of $J(\mathbf{r},\tilde{\mathbf{r}})$]
\label{J_approximation}
For all $\f r,\tilde{\f r} \in (\nu \N)^3$ with $|\f r|, |\tilde{\f r}|>0$, we have that uniformly in $\eta>0$
\begin{equation}
\label{J_approximation_claim}
J(\mathbf{r},\tilde{\mathbf{r}})=\nu^6 I^{\epsilon}(\mathbf{r},\tilde{\mathbf{r}})+O\Biggl(\frac{\nu^{13/2}}{\epsilon^{5d+1}}\,\biggl(1+\frac{1}{|\f r|^{d/2}}\biggr)\,\biggl(1+\frac{1}{|\tilde{\f r}|^{d/2}}\biggr)\,(1+|\f r|+|\tilde{\f r}|)^6\Biggr)\,.
\end{equation}
\end{lemma}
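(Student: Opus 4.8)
The plan is to compare $J(\mathbf{r},\tilde{\mathbf{r}})$ from \eqref{5.21'} with $\nu^6 I^\epsilon(\mathbf{r},\tilde{\mathbf{r}})$ from \eqref{5.9'} by a step-by-step replacement in which each of the structural ingredients of \eqref{5.21'} is matched with its counterpart in \eqref{5.9'}, tracking the error in $L^1$ against the product of $\bb W$-measures. The two expressions have the same combinatorial shape (a sum over partial pairings $\Pi \in \fra M(A)$, with point-point factors $(\mathbb V_\eta)^{\tau_a,\tau_b}_{x_a,x_b}$, point-path factors $(\mathbb V_\eta)^{\tau_a}_{x_a}(\omega_1)$, and a Gaussian path-path exponential), so the differences come from three sources: (a) the quantum point-point/point-path/path-path interactions $\mathbb V_\eta$ carry an extra time-delta factor $\nu\,\delta_\eta(\tau_a - \tau_b)$ or $\nu\,\delta_\eta(\tau_a - [s]_\nu)$ compared to the classical $\mathbb V^\epsilon$, which upon integrating the $\delta_\eta$'s over $\f\tau,\tilde{\f\tau} \in [0,\nu]^3$ produces the factor $\nu^6$; (b) the quantum Brownian measures $\bb W^{\tau_i + r_i, \tau_i}_{\cdot,\cdot}$ run over intervals of length $r_i$ but shifted by the small times $\tau_i \in [0,\nu]$, whereas the classical ones run over $[0,r_i]$; (c) in the quantum point-path interaction the path $\omega_1$ is evaluated at $[s]_\nu$ rather than $s$, so the phase $\ee^{\ii \int \dd s\, \sigma([s]_\nu,\omega_1(s))}$ must be compared with $\ee^{\ii \int \dd s\,\xi(\omega_1(s))}$ after using Lemma \ref{Lemma 5.2'} to identify $\langle\sigma\rangle$ with $\xi$.

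First I would integrate out the time variables: using $\int_0^\nu \dd\tau_a\,\delta_\eta(\tau_a - \tau_b) = 1$ and the analogous identities from \eqref{delta_nu_integral}, the $\f\tau,\tilde{\f\tau}$ integrals in \eqref{5.21'} decouple from everything except the Brownian measures (whose endpoints' times depend on $\f\tau$), and each of the six time integrations contributes a factor $\nu$ together with an error controlled by the time-modulus of continuity of the integrand. Here is where Lemma \ref{Heat_kernel_estimates} enters: part \ref{itm:heatkernel1} gives the $\bb W$-mass bound $1 + |\f r|^{-d/2}$ (and the $\tilde{\f r}$ analogue), part \ref{itm:heatkernel2} gives the $L^1(\bb W)$-Hölder estimate $\lesssim (t-s)^{1/2}$ for $|\omega(t) - \omega(s)|$, which combined with the Lipschitz bound $|v^\epsilon(x) - v^\epsilon(y)| \leq C\epsilon^{-d-1}|x-y|$ from Lemma \ref{v^{epsilon}_properties}\ref{itm:veps2} converts a path displacement of size $O(\nu^{1/2})$ into an error of size $O(\epsilon^{-d-1}\nu^{1/2})$ per interaction factor, and part \ref{itm:heatkernel3} handles the $\|\psi^t - \psi^s\|_{L^1}$ discrepancy coming from the shifted time intervals in the $\bb W$-measures. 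Summing over the $\Pi \in \fra M(A)$ (a bounded number of terms) and over the at most $\le 6$ interaction factors, and keeping the worst exponent $\epsilon^{-5d-1}$ (roughly: up to $5$ factors of $v^\epsilon$ or $\nabla v^\epsilon$ each costing $\epsilon^{-d}$, plus one Lipschitz factor costing an extra $\epsilon^{-1}$), yields the stated error $\nu^{13/2}\epsilon^{-5d-1}(1+|\f r|^{-d/2})(1+|\tilde{\f r}|^{-d/2})(1+|\f r|+|\tilde{\f r}|)^6$; the extra half-power $\nu^{13/2} = \nu^6 \cdot \nu^{1/2}$ beyond the main $\nu^6$ is precisely the gain from one application of the $(t-s)^{1/2}$ Hölder bound, and the polynomial factor $(1+|\f r|+|\tilde{\f r}|)^6$ absorbs the growth of the Brownian-bridge variances $|x-\tilde x|^2 (t-s)^2/(\tau-\tilde\tau)^2$ in \eqref{Heat_kernel_estimates_ii_1} together with the lengths of the time intervals over which the path-integrals $\int \dd s$ range.

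The replacement is carried out by a telescoping sum: write $J(\mathbf r,\tilde{\mathbf r}) - \nu^6 I^\epsilon(\mathbf r,\tilde{\mathbf r})$ as a sum of terms each of which differs from the previous one by a single substitution (one time-integral performed, or one $\bb W^{\tau_i+r_i,\tau_i}$ replaced by $\bb W^{r_i,0}$, or one $[s]_\nu$ replaced by $s$, or one $\delta_\eta$-smeared interaction replaced by its $\nu\times(\text{classical})$ form), bound each difference using the Lemma \ref{Heat_kernel_estimates} estimates and Lemma \ref{v^{epsilon}_properties} for the still-present factors, and use the mass bound \ref{itm:heatkernel1} to control all the untouched $\bb W$-factors. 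I expect the main obstacle to be bookkeeping rather than conceptual: one must handle the path-path Gaussian exponential $\ee^{-\frac12(\mathbb V_\eta(\omega_1,\omega_1) + \mathbb V_\eta(\tilde\omega_1,\tilde\omega_1) - 2\mathbb V_\eta(\omega_1,\tilde\omega_1))}$, which is bounded by $1$ in modulus (it is $\ee^{-\frac12 \|\cdots\|^2} \le 1$ since $v^\epsilon$ is of positive type, exactly as in \eqref{Re_F2}), yet whose difference from the classical exponential must still be estimated — again via the difference $\delta_\eta([s]_\nu - [\tilde s]_\nu) v^\epsilon(\omega_1(s) - \tilde\omega_1(\tilde s))$ versus $v^\epsilon(\omega_1(s) - \tilde\omega_1(\tilde s))$ and the $|\ee^a - \ee^b| \le |a-b|$ inequality for $\re a,\re b \le 0$ — and to do so without losing any extra powers of $\nu^{-1}$ from the singular $\delta_\eta$; the point is that after integrating $\delta_\eta$ in time the $\nu$ in front cancels, leaving a clean classical quantity, and all the $\eta$-dependence drops out uniformly. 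The careful accounting of how many factors of $\epsilon^{-d-1}$ versus $\epsilon^{-d}$ appear, to land exactly on $\epsilon^{-5d-1}$, is the other delicate point, but it follows from noting that in each monomial over $\fra M(A)$ the total number of $v^\epsilon$-type factors is bounded and only one Lipschitz estimate (hence one extra $\epsilon^{-1}$) is needed per substitution step.
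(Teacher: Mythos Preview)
Your plan is essentially correct and matches the paper's approach: telescoping replacement of quantum by classical ingredients, using the Lipschitz bound on $v^\epsilon$ (Lemma \ref{v^{epsilon}_properties}\ref{itm:veps2}) together with the $L^1(\bb W)$-H\"older continuity of Brownian paths (Lemma \ref{Heat_kernel_estimates}\ref{itm:heatkernel2}) to gain the extra $\nu^{1/2}$, and using positivity of the path--path quadratic form to bound the exponential by $1$.

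Two minor corrections worth noting. First, the paper does \emph{not} use Lemma \ref{Heat_kernel_estimates}\ref{itm:heatkernel3} in this proof (that estimate is reserved for Lemma \ref{Riemann_sum_I}); instead, after a time translation it \emph{concatenates} the three bridges $\omega_1,\omega_2,\omega_3$ into a single loop $\omega\colon[\tau_1,\tau_1+|\f r|]\to\Lambda$ and integrates out $x_2,x_3$, so that the only remaining $\f\tau$-dependence sits in the evaluation times $t_a$ of the loop (see \eqref{omega_hat}--\eqref{parameter_choice_quantum}). This is cleaner than comparing $\bb W^{\tau_i+r_i,\tau_i}$ to $\bb W^{r_i,0}$ directly. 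Second, the $\epsilon^{-5d-1}$ bookkeeping is: at most $|A|=4$ point--point or point--path factors each contributing $\epsilon^{-d}$ via \eqref{V_quantum_estimates}, plus one additional $\epsilon^{-(d+1)}$ from the single Lipschitz replacement in the error term (either in the exponential via Lemma \ref{5.12'} or in one of the interaction factors), giving $4d+(d+1)=5d+1$.
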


The second step in the proof of Lemma \ref{Partition_function_rate_of_convergence_4} consists in giving a quantitative estimate on the error obtained by approximating the integral \eqref{5.8'} with the above Riemann sum. To this end, we prove the following estimate.
\begin{lemma}[Quantitative Riemann sum approximation for \eqref{5.8'}]
\label{Riemann_sum_I}
Recalling \eqref{Theta_function}, we have that uniformly in $\eta>0$
\begin{multline*}
\Biggl|\int_{[0,\infty)^3} \dd \mathbf{r} \,\frac{\ee^{-\kappa |\mathbf{r}|}}{|\mathbf{r}|}\,\int_{[0,\infty)^3} \dd \mathbf{\tilde r}\, \frac{\ee^{-\kappa |\mathbf{\tilde r}|}}{|\mathbf{\tilde r}|}\,I^{\epsilon}(\f r, \tilde{\f r})-\nu^6 \sum_{\mathbf{r} \in (\nu \N)^3}\,\frac{\ind{|\mathbf{r}|>0}\,\ee^{-\kappa |\mathbf{r}|}}{|\mathbf{r}|}\, \sum_{\mathbf{\tilde r} \in (\nu \N)^3}\, \frac{\ind{|\mathbf{\tilde r}|>0}\,\ee^{-\kappa |\mathbf{\tilde r}|}}{|\mathbf{\tilde r}|}\,I^{\epsilon}(\f r, \tilde{\f r})\Biggr| 
\\
\lesssim_{\kappa,v} \frac{\Theta(\nu)}{\epsilon^{5d+1}}\,.
\end{multline*}
\end{lemma}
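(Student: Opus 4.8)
The plan is to estimate the error in a multidimensional Riemann sum approximation with mesh size $\nu$, where the summand involves the function $\f r \mapsto \frac{\ind{|\f r| > 0} \ee^{-\kappa|\f r|}}{|\f r|} I^\epsilon(\f r, \tilde{\f r})$ (and its analogue in $\tilde{\f r}$). Writing $g^\epsilon(\f r, \tilde{\f r}) \deq \frac{\ee^{-\kappa|\f r|}}{|\f r|} \frac{\ee^{-\kappa|\tilde{\f r}|}}{|\tilde{\f r}|} I^\epsilon(\f r, \tilde{\f r})$, the difference between the integral $\int_{[0,\infty)^6} g^\epsilon$ and the Riemann sum $\nu^6 \sum_{(\f r, \tilde{\f r}) \in (\nu\N)^6, |\f r|,|\tilde{\f r}|>0} g^\epsilon(\f r, \tilde{\f r})$ is controlled by a sum of contributions coming from: (i) the singularities of $1/|\f r|$ and $1/|\tilde{\f r}|$ at the origin, where the integrand is not bounded but is integrable; (ii) the oscillatory behaviour of $I^\epsilon$ away from the singularity, where one gains from an $L^p$-modulus-of-continuity bound; and (iii) the tails, controlled by the exponential factors $\ee^{-\kappa|\f r|}$. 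First I would reduce to the one-variable problem by symmetry and Fubini, treating the $\tilde{\f r}$-sum and the $\f r$-sum in turn.

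The heart of the matter is a quantitative bound on the increments of $I^\epsilon$. For the region away from the origin, I would use that each heat kernel factor $\psi^{r_i}$ entering $I^\epsilon$ (through the measures $\bb W^{r_i,0}_{\cdot,\cdot}$) is, by Lemma \ref{Heat_kernel_estimates} \ref{itm:heatkernel1} and \ref{itm:heatkernel3}, bounded by $C(1 + |\f r|^{-d/2})$ and has $L^1$-modulus of continuity $\lesssim \log(\text{ratio of times})$ under the change $r_i \mapsto r_i + O(\nu)$; together with the Lipschitz bound on $v^\epsilon$ from Lemma \ref{v^{epsilon}_properties} \ref{itm:veps2} and the $L^\infty$-bound from \ref{itm:veps1}, this yields $|I^\epsilon(\f r, \tilde{\f r}) - I^\epsilon(\f r', \tilde{\f r})| \lesssim \epsilon^{-5d-1}\, \nu^{1/2}\, (1 + |\f r|^{-d/2})(1 + |\tilde{\f r}|^{-d/2})\,\OO(\text{polynomial in } |\f r|, |\tilde{\f r}|)$ for $|\f r - \f r'| \leq \nu$; the factor $\epsilon^{-5d-1}$ comes from the $\OO(5)$ point-path/point-point interaction factors $\mathbb V^\epsilon$ (each of size $\OO(\epsilon^{-d})$) and one Lipschitz difference (size $\OO(\epsilon^{-d-1})$), exactly as in Lemma \ref{J_approximation}. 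The power $\nu^{1/2}$ (rather than $\nu$) reflects the square-root Hölder continuity of Brownian paths, $d=2$, via Lemma \ref{Heat_kernel_estimates} \ref{itm:heatkernel2}, and in $d=3$ the extra $\log \nu^{-1}$ in $\Theta$ comes from the logarithmic divergence of $\int \dd \f r\, \ee^{-\kappa |\f r|}/|\f r| \cdot |\f r|^{-1/2}$ near the singularity. Summing these increment bounds over the Riemann-sum cells, using the exponential decay $\ee^{-\kappa |\f r|}$ to control the polynomial prefactors and to make the sum over cells finite, produces the bound $\frac{\Theta(\nu)}{\epsilon^{5d+1}}$.

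For the cells adjacent to the coordinate hyperplanes where $1/|\f r|$ is singular (the cubes of side $\nu$ touching $\{r_i = 0\}$), the increment bound above degenerates, so instead I would estimate the integral and the sum over those cells separately and directly, using the crude pointwise bound $|I^\epsilon(\f r, \tilde{\f r})| \lesssim \epsilon^{-4d} (1 + |\f r|^{-d/2})(1+|\tilde{\f r}|^{-d/2})(1+|\f r|+|\tilde{\f r}|)^{\OO(1)}$, which follows from Lemma \ref{Heat_kernel_estimates} \ref{itm:heatkernel1}, the bound $\mathbb V^\epsilon = \OO(\epsilon^{-d})$, and $\re \mathbb V^\epsilon(\omega_1,\omega_1) + \re\mathbb V^\epsilon(\tilde\omega_1,\tilde\omega_1) - 2\re\mathbb V^\epsilon(\omega_1,\tilde\omega_1) \geq 0$ (positive type of $v$). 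Since $\int_{|\f r| \leq \nu} \frac{\dd \f r}{|\f r|}(1 + |\f r|^{-d/2})$ and the corresponding sum over the $\OO(1)$ boundary cells are both $\OO(\nu^{d/2 - 1})$ for $d = 2, 3$ — hence $\OO(1)$ or smaller in $d=2$ and $\OO(\nu^{1/2})$ in $d = 3$ — one checks this boundary contribution is also $\lesssim \Theta(\nu)/\epsilon^{5d+1}$ after multiplying in the $\epsilon$-losses (here the crude $\epsilon^{-4d}$ is more than enough), absorbing it into the claimed bound. Combining the interior and boundary estimates, and the analogous treatment of the $\tilde{\f r}$-variable, gives the lemma.

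\medskip

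\noindent\emph{Main obstacle.} The delicate point is getting the right power of $\nu$, namely $\nu^{1/2}$ in $d=2$: a naive Riemann-sum estimate based only on $L^\infty$ control of the derivative in $\f r$ would lose too much near the singularity $|\f r| \to 0$, while using the $L^1$-continuity of the heat kernel in time (Lemma \ref{Heat_kernel_estimates} \ref{itm:heatkernel3}) alone gives only a logarithmic modulus, not a power of $\nu$. The resolution — and the reason $\Theta$ has the stated form — is that the improvement over a log actually comes from the Hölder-$\tfrac12^-$ continuity of the Brownian path entering the interaction terms $(\mathbb V^\epsilon)_x(\omega_1) - (\mathbb V^\epsilon)_x(\tilde\omega_1)$ and $\mathbb V^\epsilon(\omega_1,\omega_1) - \mathbb V^\epsilon(\omega_1,\tilde\omega_1)$, quantified through Lemma \ref{Heat_kernel_estimates} \ref{itm:heatkernel2} combined with the Lipschitz bound on $v^\epsilon$; carefully tracking which factor supplies the $\nu^{1/2}$ and which supply the $\epsilon$-losses, while keeping the polynomial-in-$|\f r|$ prefactors under control via the exponential weights, is the technical crux. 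Everything else is bookkeeping of the type already carried out in Lemma \ref{J_approximation}.
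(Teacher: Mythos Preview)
Your high-level plan (crude bound on a boundary region, increment bound in the interior) is the same as the paper's, but the increment bound you state is incomplete and, as a consequence, your boundary region is too small. When $\f r \mapsto \f r + \f\alpha$ with $|\f\alpha|\leq\nu$, two distinct effects arise. The first is the one you describe: the marked points $x_2,x_3$ slide along the concatenated loop by $O(\nu)$ in time, and Lipschitz-$v^\epsilon$ plus Brownian H\"older continuity (Lemma~\ref{Heat_kernel_estimates}\ref{itm:heatkernel2}) yields the $\epsilon^{-5d-1}\sqrt\nu\,(1+|\f r|^{-d/2})$ contribution. The second is a change in the \emph{total} time of the loop, which cannot be handled this way --- the integrand in \eqref{5.9'} does not depend on $\omega_3$ at all, so path H\"older continuity is irrelevant here --- and must instead be controlled by $\|\psi^{r_3+\alpha_3}-\psi^{r_3}\|_{L^1}\lesssim\nu/r_3$ from Lemma~\ref{Heat_kernel_estimates}\ref{itm:heatkernel3}. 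This produces a separate error $\epsilon^{-4d}\,\tfrac{\nu}{r_3}\,(1+(r_1+r_2)^{-d/2})$ whose singularity at $r_3\to 0$ is \emph{not} dominated by your bound (take $r_3\sim\nu$, $r_1+r_2\sim 1$). The paper keeps these two error terms separate throughout; see \eqref{I_estimate_3}.

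Because of the $1/r_j$ singularity, the boundary strip has to be $\{\min_j \hat r_j\leq C_0\nu\}$ --- all cells touching \emph{any} coordinate hyperplane, not just the $O(1)$ cells near the origin with $|\f r|\leq\nu$; the paper checks directly in \eqref{I_estimate_1}--\eqref{I_estimate_2} that both integral and sum over this larger strip are $\lesssim\epsilon^{-4d}\Theta(\nu)$. Your explanation of the $d=3$ logarithm is also off: $\int_{[0,\infty)^3}|\f r|^{-3/2}\,\dd\f r$ has no divergence near the origin. The $\log\nu^{-1}$ actually comes from the heat-kernel term via $\sum_{r_3\in\nu\N^*}\tfrac{\nu}{r_3}\,e^{-\kappa r_3}\sim\log\nu^{-1}$ combined with $\nu^2\sum_{r_1,r_2}(1+(r_1+r_2)^{-1-d/2})\lesssim\nu^{-1/2}$ for $d=3$, and from the weight difference $\tfrac{e^{-\kappa|\f r+\f\alpha|}}{|\f r+\f\alpha|}-\tfrac{e^{-\kappa|\f r|}}{|\f r|}=O(\nu^{1/2}|\f r|^{-3/2})$, whose sum picks up a logarithm precisely when $(d+3)/2=3$.
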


With the above two results, we have all of the necessary tools to prove Lemma \ref{Partition_function_rate_of_convergence_4}.
\begin{proof}[Proof of Lemma \ref{Partition_function_rate_of_convergence_4}]
The claim follows from Lemmas \ref{J_approximation} and \ref{Riemann_sum_I} by using \eqref{5.8'} and \eqref{5.17'}.
\end{proof}

We now give the proofs of Lemmas \ref{J_approximation} and \ref{Riemann_sum_I}. In the proof of Lemma \ref{J_approximation}, we use the following estimates that are obtained from Definition \ref{V_quantum} and Lemma \ref{v^{epsilon}_properties} \ref{itm:veps1}, and \eqref{delta_nu_integral}.
\begin{equation}
\label{V_quantum_estimates}
|(\mathbb{V}_{\eta})_{x,\tilde x}^{\tau,\tilde \tau}| \leq \frac{C\nu}{\epsilon^d}\,\delta_{\eta}(\tau-\tilde \tau)\,,\quad |(\mathbb{V}_{\eta})_{x}^{\tau}(\omega')| \leq \frac{C}{\epsilon^d}\,\bigl[(\tau'-\tilde{\tau}')+\nu\bigr]\,.
\end{equation}
In the second estimate in \eqref{V_quantum_estimates}, we take $\omega' \in \Omega^{\tau',\tilde{\tau}'}$.
We note that for paths $\omega',\tilde \omega'$ with 
\begin{equation*}
f(\tau,x)=\int \dd s\, \delta(\tau-[s]_{\nu})\,\delta(x-\omega_1(s))-
\int \dd \tilde s\,\delta(\tau-[\tilde s]_{\nu})\,\delta(x-\tilde{\omega}_1(s))\,
\end{equation*}
we have that
\begin{equation}
\label{V_quantum_positive}
\mathbb{V}_{\eta}(\omega,\omega)+\mathbb{V}_{\eta}(\tilde \omega,\tilde \omega)-2\mathbb{V}_{\eta}(\omega,\tilde \omega) =\langle f, \cal C_{\eta} f \rangle \geq 0\,,
\end{equation}
since $\delta_{\eta}$ and $v^{\epsilon}$ are of positive type.

\begin{proof}[Proof of Lemma \ref{J_approximation}.]
We recall the definition \eqref{set_A} of the set $A$.
For fixed $\f \tau,\f \tau \in [0,\nu]^3$, arguing analogously as in  \cite[(5.21)]{FKSS_2020}, we apply a time translation and rewrite \eqref{5.21'} as
\begin{align}
\notag
&\int_{[0,\nu]^3} \dd \f \tau \,\int_{[0,\nu]^3} \dd \tilde{\f \tau}\, \int \dd \mathbf{x}\,\int\dd \tilde{\mathbf{x}}\,
\int \mathbb{W}^{\tau_1+|\mathbf{r}|,\tau_3+r_1+r_2}_{x_1,x_3}(\dd \omega_3)\,\mathbb{W}^{\tau_3+r_1+r_2,\tau_2+r_1}_{x_3,x_2}(\dd \omega_2)\,\mathbb{W}^{\tau_2+r_1,\tau_1}_{x_2,x_1}(\dd \omega_1)
\\
\notag
&\mathbb{W}^{\tilde{\tau}_1+|\mathbf{\tilde{r}}|,\tilde{\tau}_3+\tilde{r}_1+\tilde{r}_2}_{\tilde{x}_1,\tilde{x}_3}(\dd \tilde{\omega}_3)\,\mathbb{W}^{\tilde{\tau}_3+\tilde{r}_1+\tilde{r}_2,\tilde{\tau}_2+\tilde{r}_1}_{\tilde{x}_3,\tilde{x}_2}(\dd \tilde{\omega}_2)\,\mathbb{W}^{\tilde{\tau}_2+\tilde{r}_1,\tilde{\tau}_1}_{\tilde{x}_2,\tilde{x}_1}(\dd \tilde{\omega}_1)\,
\ee^{-\frac{1}{2}\bigl(\mathbb{V}_{\eta}(\omega_1,\omega_1)+\mathbb{V}_{\eta}(\tilde \omega_1,\tilde \omega_1)-2\mathbb{V}_{\eta}(\omega_1,\tilde \omega_1)\bigr)}\,
\\
\label{5.21*}
&\sum_{\Pi \in \fra{M}(A)} \prod_{\{a,b\} \in \Pi} (\mathbb{V}_{\eta})_{x_a,x_b}^{\tau_a,\tau_b}\,
\prod_{a \in A \setminus [\Pi]} \ii \Bigl((\mathbb{V}_{\eta})_{x_a}^{\tau_a}(\omega_1)-
(\mathbb{V}_{\eta})_{x_a}^{\tau_a}(\tilde{\omega}_1)\Bigr)=J(\mathbf{r},\tilde{\mathbf{r}})\,.
\end{align}

In the sequel, we denote by $\hat k$ either the quantity $k$ or $\tilde k$. We hence consider paths $\hat{\omega}\col[\hat{\tau}_1, \hat{\tau_1}+|\hat{\f r}|] \rightarrow \Lambda$ obtained by concatenating $\hat \omega_1,\hat \omega_2, \hat \omega_3$ occurring in \eqref{5.21*}. In particular, 
\begin{equation}
\label{omega_hat}
\hat \omega_1=\hat \omega |_{[\hat{\tau}_{1}, \hat{\tau}_2+\hat r_1]}\,,\quad  \hat \omega_2=\hat \omega|_{[\hat{\tau}_{2}+\hat r_{1},\hat{\tau}_{3}+\hat {r}_{1}+\hat{r}_{2}]}\,,\quad \hat \omega_3=\hat \omega|_{[\hat{\tau}_{3}+\hat{r}_{1}+\hat{r}_{2},\hat{\tau}_{1}+|\hat{\f r}|]}\,.
\end{equation}
We define the following parameters for $a \in A$.
\begin{equation}
\label{parameter_choice_quantum}
\begin{cases}
{t}_{(2,0)}\deq {\tau}_{2}+{r}_{1}\,,\;\; {t}_{(3,0)}\deq {\tau}_{3}+{r}_{1}+{r}_{2}\,,\;\; {t}_{(2,1)} \deq \tilde{\tau}_{2}+\tilde{r}_{1}\,,\;\; {t}_{(3,1)}\deq \tilde{\tau}_{3}+\tilde{r}_{1}+\tilde{r}_{2}
\\
{s}_{(2,0)}\deq {\tau}_{1}+{r}_{1}\,,\;\;{s}_{(3,0)}\deq {\tau}_{1}+{r}_{1}+{r}_{2}\,, \;\;{s}_{(2,1)}\deq \tilde{\tau}_{1}+ \tilde{r}_{1} \,, \;\;{s}_{(3,1)}\deq  \tilde{\tau}_{1}+ \tilde{r}_{1}  +\tilde{r}_{2}
\\
{\omega}_{(2,0)}={\omega}_{(3,0)}  \deq  \omega\,, \;\;{\omega}_{(2,1)}={\omega}_{(3,1)}  \deq \tilde{\omega}  \,.
\end{cases}
\end{equation}
Note that the times $t_a$ in \eqref{parameter_choice_quantum} were chosen as initial and final times of the paths $\hat \omega_j$ in \eqref{omega_hat}. The $s_a$ satisfy $|s_a-t_a|\leq \nu$ and are used as approximations of the $t_a$ which help us remove the $\hat{\tau}_2, \hat{\tau}_3$ dependence. 

Using \eqref{V_quantum_positive}, \eqref{omega_hat} and Lemma \ref{5.12'} \ref{itm:512_1} below for the factor 
$
\ee^{-\frac{1}{2}\bigl(\mathbb{V}_{\eta}(\omega_1,\omega_1)+\mathbb{V}_{\eta}(\tilde \omega_1,\tilde \omega_1)-2\mathbb{V}_{\eta}(\omega_1,\tilde \omega_1)\bigr)}
$
occurring in the integrand of \eqref{5.21*}, recalling \eqref{parameter_choice_quantum} and integrating in the $\hat x_2,\hat x_3$ variables, we obtain from \eqref{5.21*} that

\begin{multline}
\label{5.25'}
J(\mathbf{r},\tilde{\mathbf{r}}) 
= \int_{[0,\nu]^3} \dd \f \tau \,\int_{[0,\nu]^3} \dd \tilde{\f \tau}\, \int \dd x_1\,\int\dd \tilde x_1\,\int \mathbb{W}^{\tau_1+|\f r|,\tau_1}_{x_1,x_1}(\dd \omega)\,\mathbb{W}^{\tilde \tau_1+|\tilde{\f r}|,\tilde{\tau}_1}_{\tilde x_1,\tilde x_1}(\dd \tilde \omega)\,
\\
\biggl[\ee^{-\frac{1}{2}\bigl(\mathbb{V}^{\epsilon}(\omega_1,\omega_1)+\mathbb{V}^{\epsilon}(\tilde \omega_1,\tilde \omega_1)-2\mathbb{V}^{\epsilon}(\omega_1,\tilde \omega_1)\bigr)}+O\Bigl(\bigl|\mathbb{\hat{V}}_{\eta}(\omega_1,\omega_1)\bigr|+\bigl|\mathbb{\hat{V}}_{\eta}(\tilde{\omega}_1,\tilde{\omega}_1)\bigr|+\bigl|\mathbb{\hat{V}}_{\eta}(\omega_1,\tilde{\omega}_1)\bigr|\Bigr)
\\
+O\biggl(\frac{\nu (r_1+ \tilde r_1 +\nu)}{\epsilon^d}\biggr)\biggr]
\sum_{\Pi \in \fra{M}(A)} \prod_{\{a,b\} \in \Pi} (\mathbb{V}_{\eta})_{\omega_a(t_a),\omega_b(t_b)}^{\tau_a,\tau_b}\,\prod_{a \in A \setminus [\Pi]} \ii \Bigl((\mathbb{V}_{\eta})_{\omega_a(t_a)}^{\tau_a}(\omega_1)-
(\mathbb{V}_{\eta})_{\omega_a(t_a)}^{\tau_a}(\tilde{\omega}_1)\Bigr)\,.
\end{multline}
We first estimate the contributions to \eqref{5.25'} coming from the two error terms occurring in the square brackets. We recall \eqref{V_hat_intermediate_estimate}, \eqref{I_i}, \eqref{omega_hat}, and apply
Fubini's theorem, Lemma \ref{Heat_kernel_estimates} \ref{itm:heatkernel1}--\ref{itm:heatkernel2}, and \eqref{delta_nu_integral} to deduce that
\begin{multline}
\label{V_hat_bound}
\int \mathbb{W}^{\tau_1+|\f r|,\tau_1}_{x_1,x_1}(\dd \omega)\,\mathbb{W}^{\tilde \tau_1+|\tilde{\f r}|,\tilde{\tau}_1}_{\tilde x_1,\tilde x_1}(\dd \tilde \omega)\,\Bigl(\bigl|\mathbb{\hat{V}}_{\eta}(\omega_1,\omega_1)\bigr|+\bigl|\mathbb{\hat{V}}_{\eta}(\tilde{\omega}_1,\tilde{\omega}_1)\bigr|+\bigl|\mathbb{\hat{V}}_{\eta}(\omega_1,\tilde{\omega}_1)\bigr|\Bigr)
\\
\leq \frac{C \sqrt{\nu}}{\epsilon^{d+1}}\,\biggl(1+\frac{1}{|\f r|^{d/2}}\biggr)\,\biggl(1+\frac{1}{|\tilde{\f r}|^{d/2}}\biggr)\,(1+|\f r|+|\tilde{\f r}|)^2\,.
\end{multline}
Note that here we used estimate \eqref{Heat_kernel_estimates_ii_2} when applying Lemma \ref{Heat_kernel_estimates} \ref{itm:heatkernel2}.
Combining \eqref{V_hat_bound} with \eqref{V_quantum_estimates}, and using \eqref{delta_nu_integral} for the $\dd \hat{\f \tau}$ integration, we deduce that the first error term in the square brackets in \eqref{5.25'} gives a contribution which is 
\begin{equation}
\label{error_term_quantum}
O\Biggl(\frac{\nu^{13/2}}{\epsilon^{5d+1}}\,\biggl(1+\frac{1}{|\f r|^{d/2}}\biggr)\,\biggl(1+\frac{1}{|\tilde{\f r}|^{d/2}}\biggr)\,(1+|\f r|+|\tilde{\f r}|)^6\Biggr)\,.
\end{equation}
Here, we note that there are at most $|A|=4$ factors of $\frac{1}{\epsilon^d}$ coming from \eqref{V_quantum_estimates}. Likewise, there are at most 4 additional factors of $1+|\f r|+|\tilde{\f r}|$.
Similarly, the second error term also gives a contribution which is bounded from above by \eqref{error_term_quantum} (note that now, we do not need to use \eqref{V_hat_bound}). Therefore, we can write \eqref{5.25'} as 
\begin{equation}
\label{5.25'_error_bound}
J(\mathbf{r},\tilde{\mathbf{r}}) =\bar{J}(\mathbf{r},\tilde{\mathbf{r}}) + O\Biggl(\frac{\nu^{13/2}}{\epsilon^{5d+1}}\,\biggl(1+\frac{1}{|\f r|^{d/2}}\biggr)\,\biggl(1+\frac{1}{|\tilde{\f r}|^{d/2}}\biggr)\,(1+|\f r|+|\tilde{\f r}|)^6\Biggr)\,,
\end{equation}
where
\begin{multline}
\label{J_bar}
\bar{J}(\mathbf{r},\tilde{\mathbf{r}}) \deq
\\
\int_{[0,\nu]^3} \dd \f \tau \,\int_{[0,\nu]^3} \dd \tilde{\f \tau}\, \int \dd x_1\,\int\dd \tilde x_1\,\int \mathbb{W}^{\tau_1+|\f r|,\tau_1}_{x_1,x_1}(\dd \omega)\,\mathbb{W}^{\tilde \tau_1+|\tilde{\f r}|,\tilde{\tau}_1}_{\tilde x_1,\tilde x_1}(\dd \tilde \omega)\,
\ee^{-\frac{1}{2}\bigl(\mathbb{V}^{\epsilon}(\omega_1,\omega_1)+\mathbb{V}^{\epsilon}(\tilde \omega_1,\tilde \omega_1)-2\mathbb{V}^{\epsilon}(\omega_1,\tilde \omega_1)\bigr)}
\\
\times
\sum_{\Pi \in \fra{M}(A)} \prod_{\{a,b\} \in \Pi} (\mathbb{V}_{\eta})_{\omega_a(t_a),\omega_b(t_b)}^{\tau_a,\tau_b}\,\prod_{a \in A \setminus [\Pi]} \ii \Bigl((\mathbb{V}_{\eta})_{\omega_a(t_a)}^{\tau_a}(\omega_1)-
(\mathbb{V}_{\eta})_{\omega_a(t_a)}^{\tau_a}(\tilde{\omega}_1)\Bigr)\,.
\end{multline}
In order to analyse \eqref{J_bar},  we note the following three estimates.
\begin{enumerate}[label=(\roman*)]
\item \label{itm:jbar1}
We have
\begin{multline}
\label{claim_(i)}
\ee^{-\frac{1}{2}\bigl(\mathbb{V}^{\epsilon}(\omega_1,\omega_1)+\mathbb{V}^{\epsilon}(\tilde \omega_1,\tilde \omega_1)-2\mathbb{V}^{\epsilon}(\omega_1,\tilde \omega_1)\bigr)}
\\
=\ee^{-\frac{1}{2}\bigl(\mathbb{V}^{\epsilon}\bigl(\omega|_{[\tau_1,\tau_1+r_1]},\omega|_{[\tau_1,\tau_1+r_1]}\bigr)+\mathbb{V}^{\epsilon}\bigl(\tilde \omega|_{[\tilde \tau_1,\tilde \tau_1+\tilde r_1]},\tilde \omega|_{[\tilde \tau_1,\tilde \tau_1+\tilde r_1]}\bigr)-2\mathbb{V}^{\epsilon}\bigl(\omega|_{[\tau_1,\tau_1+r_1]},\tilde \omega|_{[\tilde \tau_1,\tilde \tau_1+\tilde r_1]}\bigr)\bigr)}
\\
+O\biggl(\frac{\nu}{\epsilon^d}(1+r_1+\tilde r_1)\biggr)\,.
\end{multline}
\item \label{itm:jbar2}
For all $a,b \in A$, we have
\begin{multline}
\label{claim_(ii)}
\int_0^{\nu}\dd \tau_a\,\int_0^{\nu} \dd \tau_b\,(\mathbb{V}_{\eta})^{\tau_a,\tau_b}_{\omega_a(t_a),\omega_b(t_b)}
=\nu^2\,(\mathbb{V}^{\epsilon})_{\omega_a(s_a),\omega_b(s_b)}
\\
+O\biggl(\frac{\nu}{\epsilon^{d+1}}\,\int_0^{\nu}\dd \tau_a\,\int_0^{\nu}\dd \tau_b\,\delta_{\eta}(\tau_a-\tau_b)\,\bigl(|\omega_a(t_a)-\omega_a(s_a)|_{\Lambda}+|\omega_b(t_b)-\omega_b(s_b)|_{\Lambda}\bigr)\biggr)\,.
\end{multline}
\item \label{itm:jbar3}
For all $a \in A$, we have
\begin{multline}
\label{claim_(iii)}
\int_0^{\nu} \dd \tau_a\,(\mathbb{V}_{\eta})^{\tau_a}_{\omega_a(t_a)}(\hat \omega_1)=
\nu\,(\mathbb{V}^{\epsilon})_{\omega_a(s_a)}(\hat{\omega}|_{[\hat \tau_1,\hat \tau_1+\hat r_1]})
\\
+O\biggl(\frac{\hat r_1}{\epsilon^{d+1}}\,\int_0^{\nu} \dd \tau_a\,|\omega_a(t_a)-\omega_a(s_a)|_{\Lambda}\biggr)
+O\biggl(\frac{\nu^2}{\epsilon^d}\biggr)\,.
\end{multline}
\end{enumerate}
In order to show \ref{itm:jbar1}, we note that for paths $\omega',\tilde \omega'$ and
\begin{equation*}
f(x)=\int \dd s\, \delta(x-\omega_1(s))-\int \dd \tilde s\,\delta(x-\tilde{\omega}_1(s))\,,
\end{equation*}
we obtain
\begin{equation}
\label{V_classical_positive}
\mathbb{V}^{\epsilon}(\omega',\omega')+\mathbb{V}^{\epsilon}(\tilde \omega',\tilde \omega')-2\mathbb{V}^{\epsilon}(\omega',\tilde \omega')
=\int \dd x\,\dd y\, f(x) \,v^{\epsilon}(x-y)\,f(y)\geq 0\,,
\end{equation}
since $v^{\epsilon}$ is of positive type.

%We refer to \cite[(5.11)]{FKSS_2020} for the proof of \eqref{V_classical_positive}. 
Claim \ref{itm:jbar1} then follows by using Definition \ref{V_classical}, \eqref{V_classical_positive}, and Lemma \ref{5.12'} \ref{itm:512_2}.  In order to prove claim \ref{itm:jbar2}, we use Definition \ref{V_quantum}, \eqref{parameter_choice_quantum} and \eqref{delta_nu_integral} to rewrite the left-hand side of \eqref{claim_(ii)} as
\begin{equation*}
\nu^2\,v^{\epsilon}\bigl(\omega_a(s_a)-\omega_b(s_b)\bigr)+\nu\,\int_0^{\nu}\dd \tau_a\,\int_0^{\nu} \dd \tau_b \,\delta_{\eta}(\tau_a-\tau_b)\,\Bigl[v^{\epsilon}\bigl(\omega_a(t_a)-\omega_b(t_b)\bigr)-v^{\epsilon}\bigl(\omega_a(s_a)-\omega_b(s_b)\bigr)\Bigr]\,,
\end{equation*}
which by Lemma \ref{v^{epsilon}_properties} \ref{itm:veps2} is of the form given by the right-hand side of \eqref{claim_(ii)}. In order to prove claim \ref{itm:jbar3}, we use first recall \eqref{omega_hat} and use $|\hat \tau_1-\hat \tau_2|\leq \nu$, Definition \ref{V_quantum}, \eqref{delta_nu_integral}, and Lemma \ref{v^{epsilon}_properties} \ref{itm:veps1} to rewrite the left-hand side of \eqref{claim_(iii)} as
\begin{multline}
\label{claim_(iii)_LHS}
\nu\,\int_0^{\nu} \dd \tau_a\,\int_{\hat \tau_1}^{\hat \tau_1+\hat r_1} \dd s\,\delta_{\eta}(\tau_a-[s]_{\nu})\,v^{\epsilon}\bigl(\omega_a(s_a)-\hat \omega(s)\bigr)
\\
+\nu\,\int_0^{\nu} \dd \tau_a\,\int_{\hat \tau_1}^{\hat \tau_1+\hat r_1} \dd s\,\delta_{\eta}(\tau_a-[s]_{\nu})\,\Bigl[v^{\epsilon}\bigl(\omega_a(t_a)-\hat \omega(s)\bigr)-v^{\epsilon}\bigl(\omega_a(s_a)-\hat \omega(s)\bigr)\Bigr]+O\biggl(\frac{\nu^2}{\epsilon^d}\biggr)\,.
\end{multline}
The last error term in \eqref{claim_(iii)_LHS} comes from replacing $\hat \tau_2+ \hat r_2$ by $\hat \tau_1+\hat r_1$ in the upper limit of the $s$ integral.
For the first term, we use \eqref{delta_nu_integral} to integrate in $\tau_a$. For the second term, we use Lemma \ref{v^{epsilon}_properties} \ref{itm:veps2} followed by \eqref{delta_nu_integral} to integrate in $s$. It follows that \eqref{claim_(iii)_LHS} is of the form given by the right-hand side of \eqref{claim_(iii)}.

We recall \eqref{5.9'}, \eqref{omega_hat}, \eqref{parameter_choice_quantum}. Then, we use the estimates \ref{itm:jbar1}--\ref{itm:jbar3} above together with a telescoping argument in \eqref{J_bar}, and argue analogously as in the proof of \eqref{5.25'_error_bound} to deduce that 
\begin{equation}
\label{bar_J_approximation}
\bar{J}(\mathbf{r},\tilde{\mathbf{r}})=\nu^6 I^{\epsilon}(\mathbf{r},\tilde{\mathbf{r}})+O\Biggl(\frac{\nu^{13/2}}{\epsilon^{5d+1}}\,\biggl(1+\frac{1}{|\f r|^{d/2}}\biggr)\,\biggl(1+\frac{1}{|\tilde{\f r}|^{d/2}}\biggr)\,(1+|\f r|+|\tilde{\f r}|)^6\Biggr)\,.
\end{equation}
We hence deduce \eqref{J_approximation_claim} from \eqref{5.25'_error_bound} and \eqref{bar_J_approximation}.
\end{proof}

In the proof of Lemma \ref{Riemann_sum_I}, we use the following estimates that follow from Definition \ref{V_classical} and Lemma \ref{v^{epsilon}_properties} \ref{itm:veps1}. 
\begin{equation}
\label{V_classical_estimates}
\bigl|(\mathbb{V}^{\epsilon})_{x,\tilde x}\bigr| \leq \frac{C}{\epsilon^d}\,,\qquad \bigl|(\mathbb{V}^{\epsilon})_{x}(\omega)\bigr| \leq \frac{C(\tau_1-\tilde \tau_1)}{\epsilon^d}\,.
\end{equation}

\begin{proof}[Proof of Lemma \ref{Riemann_sum_I}]
We recall \eqref{5.9'} and note that for $\f r, \tilde{\f r} \in [0,\infty)^3 \times [0,\infty)^3$, we have
\begin{equation}
\label{I_bound}
|I^{\epsilon}(\f r,\tilde{\f r})| \leq \frac{C}{\epsilon^{4d}} \,\biggl(1+\frac{1}{|\f r|^{d/2}}\biggr)\,\biggl(1+\frac{1}{|\tilde{\f r}|^{d/2}}\biggr)\,(1+|\f r|+|\tilde{\f r}|)^4\,,
\end{equation}
by using \eqref{V_classical_estimates}, \eqref{V_classical_positive}, and Lemma \ref{Heat_kernel_estimates} \ref{itm:heatkernel1}.

Let $C_0>0$ be given. Recalling \eqref{Theta_function}, we first prove the following two estimates.
\begin{equation}
\label{I_estimate_1}
\Biggl|\int_{[0,\infty)^3} \dd \mathbf{r} \,\frac{\ee^{-\kappa |\mathbf{r}|}}{|\mathbf{r}|}\,\int_{[0,\infty)^3} \dd \mathbf{\tilde r}\, \frac{\ee^{-\kappa |\mathbf{\tilde r}|}}{|\mathbf{\tilde r}|}\,I^{\epsilon}(\f r, \tilde{\f r})\,\ind{\min(r_j,\tilde r_j) \leq C_0 \nu}\Biggr| \lesssim_{C_0,\kappa,v} \frac{\sqrt{\nu}}{\epsilon^{4d}}\,.
\end{equation}
\begin{equation}
\label{I_estimate_2}
\Biggl|\nu^6 \sum_{\mathbf{r} \in (\nu \N)^3}\,\frac{\ind{|\mathbf{r}|>0}\,\ee^{-\kappa |\mathbf{r}|}}{|\mathbf{r}|}\, \sum_{\mathbf{\tilde r} \in (\nu \N)^3}\, \frac{\ind{|\mathbf{\tilde r}|>0}\,\ee^{-\kappa |\mathbf{\tilde r}|}}{|\mathbf{\tilde r}|}\,I^{\epsilon}(\f r, \tilde{\f r})\,\ind{\min(r_j,\tilde r_j) \leq C_0 \nu}\Biggr| \lesssim_{C_0,\kappa,v} \frac{\Theta(\nu)}{\epsilon^{4d}}\,.
\end{equation}

By using \eqref{I_bound} and symmetry, we note that \eqref{I_estimate_1} follows from \footnote{We observe that this is a slightly stronger bound than what we need when $d=2$. The bound \eqref{I_estimate_1} is sufficient for the rest of the argument. An analogous observation holds for the bound \eqref{I_estimate_2}.}
\begin{equation}
\label{I_estimate_1_A}
\int_{[0,\infty)^3} \dd \mathbf{r}\,\ind{r_3 \leq C_0\nu}\,\frac{\ee^{-\kappa |\mathbf{r}|/2}}{|\mathbf{r}|^{1+d/2}} \lesssim_{C_0,\kappa} \nu^{2-d/2}\,.
\end{equation}
The estimate \eqref{I_estimate_1_A} follows by using spherical coordinates and considering the contributions $\frac{C_0 \nu}{r} \geq 1$ and $\frac{C_0 \nu}{r} <1$ separately. We omit the details.

By using \eqref{I_bound} and symmetry, we deduce that \eqref{I_estimate_2} follows from 
\begin{equation}
\label{I_estimate_2_A}
\nu^3 \sum_{r_3 \in \nu \N}\,\ind{r_3>0}\,\frac{\ee^{-\kappa r_3}}{r_3^{1+d/2}}
+
\nu^3 \sum_{\mathbf{r} \in (\nu \N)^3}\,\ind{r_3 \leq C_0 \nu}\,\frac{\ind{|(r_1,r_2)|>0}\,\ee^{-\kappa |\mathbf{r}|}}{|\mathbf{r}|^{1+d/2}} \lesssim_{C_0,\kappa} \Theta(\nu)\,.
\end{equation}
The first term on the left-hand side of \eqref{I_estimate_2_A} is
\begin{equation*}
\leq \nu^{3-d/2} \sum_{r_3 \in \nu \N}\,\ind{r_3>0}\,\frac{\ee^{-\kappa r_3}}{r_3} \lesssim_{\kappa} \nu^{2-d/2}\,\log \nu^{-1} \lesssim \Theta(\nu)
\end{equation*}
by considering Riemann sums in one dimension.
The second term on the left-hand side of \eqref{I_estimate_2_A} is
\begin{equation*}
\lesssim_{C_0} \nu^3 \sum_{\mathbf{r} \in (\nu \N)^2}\,\frac{\ind{|\mathbf{r}|>0}\,\ee^{-\kappa |\mathbf{r}|}}{|\mathbf{r}|^{1+d/2}}\lesssim_{\kappa} \sqrt{\nu}\,,
\end{equation*}
where the latter inequality follows by considering Riemann sums in two dimensions. Here, we use the convention that for $\mathbf{r}=(r_1,r_2) \in (\nu \N)^2$, we take $|\mathbf{r}|=r_1+r_2$. We deduce \eqref{I_estimate_2_A} and \eqref{I_estimate_2} then follows.

We henceforth fix $C_0>0$ large and consider $\f r,\tilde{\f r} \in [0,\infty)^3$, $\f \alpha,\tilde{\f \alpha} \in \R^3$ with 
\begin{equation}
\label{assumption_r_alpha}
|r_i|,|\tilde r_i|\geq C_0 \nu\,,\quad |\alpha_i|,|\tilde\alpha_i|\leq \nu\,.
\end{equation}
Under the assumption \eqref{assumption_r_alpha}, we show that
\begin{multline}
\label{I_estimate_3}
|I^{\epsilon}(\f r+\f \alpha,\tilde{\f r}+\tilde{\f \alpha})-I^{\epsilon}(\f r,\tilde{\f r})|=O\Biggl(\frac{\sqrt{\nu}}{\epsilon^{5d+1}} 
\biggl(1+\frac{1}{|\f r|^{d/2}}\biggr)\,\biggl(1+\frac{1}{|\tilde{\f r}|^{d/2}}\biggr)\,(1+|\f r|+|\tilde{\f r}|)^5\Biggr)
\\
+O\Biggl(\frac{\nu}{\epsilon^{4d}}\,\frac{1}{r_3}\,\biggl(1+\frac{1}{(r_1+r_2)^{d/2}}\biggr)\,\biggl(1+\frac{1}{|\tilde{\f r}|^{d/2}}\biggr)\,(1+|\f r|+|\tilde{\f r}|)^4\Biggr)
\\
+O\Biggl(\frac{\nu}{\epsilon^{4d}}\,\biggl(1+\frac{1}{|\f r|^{d/2}}\biggr)\,\frac{1}{{\tilde r_3}}\,\biggl(1+\frac{1}{(\tilde r_1+\tilde r_2)^{d/2}}\biggr)\,(1+|\f r|+|\tilde{\f r}|)^4\Biggr) \deq \cal E_1(\f r,\tilde{\f r})\,.
\end{multline}
By using \eqref{I_bound}, \eqref{assumption_r_alpha}, and the notation in \eqref{I_estimate_3} we have by a direct calculation that
\begin{align}
\notag
\frac{\ee^{-\kappa |\f r+\f \alpha|}}{|\f r+ \f \alpha|}\,\frac{\ee^{-\kappa |\tilde{\f r}+\tilde{\f \alpha}|}}{|\tilde{\f r}+ \tilde{\f \alpha}|}\,&I^{\epsilon}(\f r+\f \alpha,\tilde{\f r}+\tilde{\f \alpha})-\frac{\ee^{-\kappa |\f r|}}{|\f r|}\,\frac{\ee^{-\kappa |\tilde{\f r}|}}{|\tilde{\f r}|}\,I^{\epsilon}(\f r,\tilde{\f r})
\\
\notag
&=O\Biggl(\frac{\ee^{-\kappa |\f r|}}{|\f r|}\,\frac{\ee^{-\kappa |\tilde{\f r}|}}{|\tilde{\f r}|}\,
\cal E_1(\f r,\tilde{\f r})\Biggr)
\\
\notag
&+O_{\kappa}\Biggl(\frac{\nu}{\epsilon^{4d}} \biggl(1+\frac{1}{|\f r|^{(d+3)/2}}\biggr)\,\biggl(1+\frac{1}{|\tilde{\f r}|^{(d+3)/2}}\biggr)\,(1+|\f r|+|\tilde{\f r}|)^4\,\,\ee^{-\kappa (|\f r| +|\tilde{\f r}|)}\Biggr)
\\
\label{I_estimate_3B}
& \deq \cal E_2(\f r,\tilde{\f r})\,.
\end{align}
In order to obtain the second error term in \eqref{I_estimate_3B}, we note that, by \eqref{assumption_r_alpha}, we have\footnote{In order to obtain the second error term in \eqref{I_estimate_3B*}, we interpolate between the estimates
$|\frac{1}{|\hat{\f r}+ \hat{\f \alpha}|}-\frac{1}{|\hat{\f r}|}| \lesssim \frac{1}{|\hat{\f r}|}$ and $|\frac{1}{|\hat{\f r}+ \hat{\f \alpha}|}-\frac{1}{|\hat{\f r}|}| \lesssim \frac{\nu}{|\hat{\f r}|^2}$.
}
\begin{equation}
\label{I_estimate_3B*}
\frac{\ee^{-\kappa |\hat{\f r}+\hat{\f \alpha}|}}{|\hat{\f r}+ \hat{\f \alpha}|}-
\frac{\ee^{-\kappa |\hat{\f r}|}}{|\hat{\f r}|}=O_{\kappa} \Biggl(\frac{\ee^{-\kappa |\hat{\f r}|}\,\nu}{|\hat{\f r}|}+\frac{\ee^{-\kappa |\hat{\f r}|}\,\nu^{1/2}}{|\hat{\f r}|^{3/2}}\Biggr)\,.
\end{equation}

We then deduce the lemma by using \eqref{I_estimate_1}, \eqref{I_estimate_2}, \eqref{assumption_r_alpha}, \eqref{I_estimate_3}, \eqref{I_estimate_3B}, and considering Riemann sums for \eqref{5.8'}. Indeed, for all $(\f s,\tilde{\f s}) \in [0,\infty)^3 \times [0,\infty)^3$, we take $(\f r,\tilde{\f r}) \in (\nu \N)^3 \times (\nu \N)^3$ such that $\hat r_j = \lfloor \hat s_j \rfloor_{\nu}$, where
\begin{equation}
\label{floor_nu}
\lfloor s \rfloor_{\nu} \deq \mathrm{max}\,\{u \in \nu \N\,,\,u \leq s\}\,.
\end{equation}
Then, we automatically have $|\hat r_j-\hat s_j| \leq \nu$ for all $j=1,2,3$. 
We then use \eqref{I_estimate_3}, \eqref{I_estimate_3B} with $\hat{\f r}+\hat{\f \alpha} \equiv \hat{\f s}$ and we reduce the claim to showing
\begin{equation*}
\nu^6 \sum_{\f r,\tilde{\f r} \in (\nu \N)^3 \cap [C_0\nu,\infty)^3} \cal E_2(\f r,\tilde{\f r}) \lesssim \Theta(\nu)\,,
\end{equation*}
which follows from \eqref{I_estimate_3}--\eqref{I_estimate_3B}.
Let us note that when estimating the contributions from the last two error terms in \eqref{I_estimate_3}, we use
\begin{align}
\label{I_estimate_3'_A}
&\sum_{r_3 \in (\nu \N) \cap [C_0\nu,\infty)} \frac{\nu}{r_3}\,\ee^{-\kappa r_3/2} \lesssim_{\kappa} \log \nu^{-1}
\\
\label{I_estimate_3'}
&\sum_{\f r \in (\nu \N)^2 \cap [C_0\nu,\infty)^2} \nu^2 \,\biggl(1+\frac{1}{|\f r|^{1+d/2}}\biggr) \,\ee^{-\kappa |\f r|/2}\lesssim_{C_0,\kappa} \log \nu^{-1}\, \ind{d=2}+ \nu^{-1/2}\, \ind{d=3}\,,
\end{align}
which follow by considering Riemann sums in one and two dimensions respectively. Here, the term on the left-hand side of \eqref{I_estimate_3'} comes from estimating
$\frac{1}{|(r_1,r_2,r_3)|}(1+\frac{1}{(r_1+r_2)^{d/2}}) \lesssim 1+\frac{1}{(r_1+r_2)^{1+d/2}}$.
Finally, let us note that when $d=3$, we have $(d+3)/2=3$ and therefore the second error term in \eqref{I_estimate_3B} yields the logarithmic factor in the error term. Similarly, in light of \eqref{I_estimate_3'_A}--\eqref{I_estimate_3'}, the same is true for the last two error terms in \eqref{I_estimate_3}. This requires the necessary modification in \eqref{Theta_function} when $d=3$.

The rest of the proof is devoted to showing \eqref{I_estimate_3}. Let $\f r, \tilde{\f r}, \f \alpha, \tilde{\f \alpha}$ be as in \eqref{assumption_r_alpha}.
Recalling \eqref{5.9'} and using an appropriate time translation of the paths\footnote{The time-translation is analogous to that used to in order to rewrite \eqref{5.21'} as \eqref{5.21*} above.}, we can write

\begin{align}
\notag
I^{\epsilon}&(\f r+\f\alpha, \tilde{\f r}+\tilde{\f \alpha}) 
\\
\notag
&= \int \dd \mathbf{x} \, \int \dd \mathbf{\tilde x}\,\int \mathbb{W}^{|\f r|+|\f \alpha|,r_1+r_2+\alpha_1+\alpha_2}_{x_1,x_3}(\dd \omega_3)\,\mathbb{W}^{r_1+r_2+\alpha_1+\alpha_2,r_1+\alpha_1}_{x_3,x_2}(\dd \omega_2)\,\mathbb{W}^{r_1+\alpha_1,0}_{x_2,x_1}(\dd \omega_1)\,
\\
\notag
&\times
\mathbb{W}^{|\tilde{\f r}|+|\tilde{\f \alpha}|,\tilde r_1+\tilde r_2+\tilde \alpha_1+\tilde \alpha_2}_{\tilde x_1,\tilde x_3}(\dd \tilde{\omega}_3)\,\mathbb{W}^{\tilde r_1+\tilde r_2+\tilde \alpha_1+\tilde \alpha_2,\tilde r_1+\tilde \alpha_1}_{\tilde x_3,\tilde x_2}(\dd \tilde{\omega}_2)\,\mathbb{W}^{\tilde r_1+\tilde \alpha_1,0}_{\tilde x_2,\tilde x_1}(\dd \tilde{\omega}_1)\,
\\
\label{I^{epsilon}_translated}
&\times \ee^{-\frac{1}{2}\bigl(\mathbb{V}^{\epsilon}(\omega_1,\omega_1)+\mathbb{V}^{\epsilon}(\tilde \omega_1,\tilde \omega_1)-2\mathbb{V}^{\epsilon}(\omega_1,\tilde \omega_1)\bigr)}\,\sum_{\Pi \in \fra{M}(A)} \prod_{\{a,b\} \in \Pi} (\mathbb{V}^{\epsilon})_{x_a,x_b}\,\prod_{a \in A \setminus [\Pi]} \ii \Bigl((\mathbb{V}^{\epsilon})_{x_a}(\omega_1)-
(\mathbb{V}^{\epsilon})_{x_a}(\tilde{\omega}_1)\Bigr)\,.
\end{align}
Note that, by \eqref{assumption_r_alpha}, we indeed have that $|\tilde{\f r}|+|\tilde{\f \alpha}|>\tilde r_1+\tilde r_2+\tilde \alpha_1+\tilde \alpha_2$ and $|\tilde{\f r}|+|\tilde{\f \alpha}|>\tilde r_1+\tilde r_2+\tilde \alpha_1+\tilde \alpha_2$, hence the above expression is well-defined.

We now show that 
\begin{multline}
\label{I^{epsilon}_translated_2}
\eqref{I^{epsilon}_translated}= 
\bar{I}^{\epsilon}(\f r+\f\alpha, \tilde{\f r}+\tilde{\f \alpha})+O\Biggl(\frac{\nu}{\epsilon^{4d}}\,\frac{1}{r_3}\,\biggl(1+\frac{1}{(r_1+r_2)^{d/2}}\biggr)\,\biggl(1+\frac{1}{|\tilde {\f r}|^{d/2}}\biggr)\,(1+|\f r|+|\tilde{\f r}|)^4\Biggr)
\\
+O\Biggl(\frac{\nu}{\epsilon^{4d}}\,\biggl(1+\frac{1}{|\f r|^{d/2}}\biggr)\,\frac{1}{\tilde r_3}\,\biggl(1+\frac{1}{(\tilde r_1+\tilde r_2)^{d/2}}\biggr)\,(1+|\f r|+|\tilde{\f r}|)^4\Biggr)\,,
\end{multline}
where
\begin{multline}
\label{I^{epsilon}_translated_bar}
\bar{I}^{\epsilon}(\f r+\f\alpha, \tilde{\f r}+\tilde{\f \alpha})\deq
\int \dd \mathbf{x} \, \int \dd \mathbf{\tilde x}\,\int \mathbb{W}^{|\f r|,r_1+r_2+\alpha_1+\alpha_2}_{x_1,x_3}(\dd \omega_3)\,\mathbb{W}^{r_1+r_2+\alpha_1+\alpha_2,r_1+\alpha_1}_{x_3,x_2}(\dd \omega_2)\,\mathbb{W}^{r_1+\alpha_1,0}_{x_2,x_1}(\dd \omega_1)\,
\\
\times
\mathbb{W}^{|\tilde{\f r}|,\tilde r_1+\tilde r_2+\tilde \alpha_1+\tilde \alpha_2}_{\tilde x_1,\tilde x_3}(\dd \tilde{\omega}_3)\,\mathbb{W}^{\tilde r_1+\tilde r_2+\tilde \alpha_1+\tilde \alpha_2,\tilde r_1+\tilde \alpha_1}_{\tilde x_3,\tilde x_2}(\dd \tilde{\omega}_2)\,\mathbb{W}^{\tilde r_1+\tilde \alpha_1,0}_{\tilde x_2,\tilde x_1}(\dd \tilde{\omega}_1)\,
\\
\times \ee^{-\frac{1}{2}\bigl(\mathbb{V}^{\epsilon}(\omega_1,\omega_1)+\mathbb{V}^{\epsilon}(\tilde \omega_1,\tilde \omega_1)-2\mathbb{V}^{\epsilon}(\omega_1,\tilde \omega_1)\bigr)}\,\sum_{\Pi \in \fra{M}(A)} \prod_{\{a,b\} \in \Pi} (\mathbb{V}^{\epsilon})_{x_a,x_b}\,\prod_{a \in A \setminus [\Pi]} \ii \Bigl((\mathbb{V}^{\epsilon})_{x_a}(\omega_1)-
(\mathbb{V}^{\epsilon})_{x_a}(\tilde{\omega}_1)\Bigr)\,
\end{multline}
is obtained by modifying \eqref{I^{epsilon}_translated} to replace the final time of $\omega_3$ and $\tilde \omega_3$ by $|\f r|$ and $|\tilde{\f r}|$ respectively\footnote{$\bar{I}^{\epsilon}(\f r+\f\alpha, \tilde{\f r}+\tilde{\f \alpha})$ can easily be written as a function of $(\f r,\tilde{\f r})$. This notation is more convenient for the purposes of our argument.}.

In order to deduce the estimate \eqref{I^{epsilon}_translated_2}, let us first consider the case $\tilde{\f \alpha}=\f 0$. We note that neither of the integrands in \eqref{I^{epsilon}_translated}, \eqref{I^{epsilon}_translated_bar} depend on $\omega_3$.
Using \eqref{V_classical_positive}, \eqref{V_classical_estimates}, integrating in $x_2,\tilde x_1,\tilde x_2, \tilde x_3$ and recalling Lemma \ref{Heat_kernel_estimates} \ref{itm:heatkernel1} as well as \eqref{assumption_r_alpha}, we deduce that 
\begin{align*}
|\bar{I}^{\epsilon}(\f r+\f\alpha, \tilde{\f r})-I^{\epsilon}(\f r+\f\alpha, \tilde{\f r})| \lesssim_{d,v} \frac{1}{\epsilon^{4d}}\,&\int \dd x_1\,\dd x_3 \,\bigl|\psi^{r_3+\alpha_3-\alpha_1-\alpha_2}(x_1-x_3)-\psi^{r_3-\alpha_1-\alpha_2}(x_1-x_3)\bigr|\,
\\
&\times\biggl(1+\frac{1}{(r_1+r_2)^{d/2}}\biggr)\,\biggl(1+\frac{1}{|\tilde {\f r}|^{d/2}}\biggr)\,(1+|\f r|+|\tilde {\f r}|)^4
\\
&\lesssim_{d,v} \frac{|\alpha_3|}{\epsilon^{4d}}\,\frac{1}{r_3}\,\biggl(1+\frac{1}{(r_1+r_2)^{d/2}}\biggr)\,\biggl(1+\frac{1}{|\tilde {\f r}|^{d/2}}\biggr)\,(1+|\f r|+|\tilde{\f r}|)^4\,.
\end{align*}
For the last inequality, we used Lemma \ref{Heat_kernel_estimates} \ref{itm:heatkernel3} and \eqref{assumption_r_alpha}. We hence deduce \eqref{I^{epsilon}_translated_2} when $\tilde{\f \alpha}=\f 0$. The general claim of \eqref{I^{epsilon}_translated_2} follows by a telescoping argument.

We now show that $|\bar{I}^{\epsilon}(\f r+\f \alpha,\tilde{\f r}+\tilde{\f \alpha})-I^{\epsilon}(\f r,\tilde{\f r})|$ is bounded by the first error term on the right-hand side of \eqref{I_estimate_3}. In order to do this, we rewrite $\bar{I}^{\epsilon}(\f r+\f \alpha,\tilde{\f r}+\tilde{\f \alpha})$
by arguing as in the proof of Lemma \ref{J_approximation}.
Similarly as in \eqref{omega_hat}, consider loops 
$\hat \omega \col [0,|\f {\hat{r}}|] \rightarrow \Lambda$ which are obtained by concatenating $\hat \omega_1, \hat \omega_2, \hat \omega_3$. In particular, 
\begin{equation}
\label{hat_omega_2}
\hat \omega_1=\hat \omega \big|_{[0,\hat r_1+\hat \alpha_1]}\,,\quad \hat \omega_2=\hat \omega \big|_{[\hat r_1+\hat \alpha_1,\hat r_1+\hat r_2+\hat \alpha_1+\hat \alpha_2]}\,,\quad \hat \omega_3 =\hat \omega \big|_{[\hat r_1+\hat r_2+\hat \alpha_1+\hat \alpha_2,|\hat{\f r}|+|\hat{\f \alpha}|]}\,.
\end{equation}
Given $a \in A$, we define the following quantities, similarly to \eqref{parameter_choice_quantum}.
\begin{equation}
\label{parameter_choice}
\begin{cases}
&t_{(2,0)} \deq r_1+\alpha_1\,,\;\; t_{(3,0)}\deq r_1+r_2+\alpha_1+\alpha_2\,,\;\; t_{(2,1)} \deq \tilde r_1+\tilde \alpha_1\,,\;\;
t_{(3,1)} \deq \tilde r_1 +\tilde r_2 + \tilde \alpha_1+ \tilde \alpha_2\,,
\\
&s_{(2,0)} \deq r_1,\,\;\; s_{(3,0)} \deq r_1+r_2\,,\;\; s_{(2,1)} \deq \tilde r_1\,,\;\; s_{(3,1)} \deq \tilde r_1+\tilde r_2
\\
&\omega_{(2,0)}=\omega_{(3,0)}  \deq \omega \,,\;\; \omega_{(2,1)}=\omega_{(3,1)}  \deq \tilde{\omega}\,.
\end{cases}
\end{equation}
Note that, in \eqref{parameter_choice}, the $s_a$ are the approximations of the $t_a$ which have been decoupled from the variables $\hat \alpha_j$.
By construction, we have that $\omega_a(t_a)=x_a$ and $|s_a-t_a|\lesssim \nu$ for all $a \in A$.

Therefore, integrating in $\hat x_2, \hat x_3$, we get
\begin{multline}
\label{I_bar}
\bar{I}^{\epsilon}(\f r+\f \alpha,\tilde{\f r}+\tilde{\f \alpha})= \int \dd x_1\,\int \dd \tilde x_1\,\int \mathbb{W}^{|\mathbf{r}|,0}_{x_1,x_1}(\dd \omega)\,\mathbb{W}^{|\mathbf{\tilde r}|,0}_{\tilde x_1,\tilde x_1}(\dd \tilde \omega)\,\ee^{-\frac{1}{2}\bigl(\mathbb{V}^{\epsilon}(\omega_1,\omega_1)+\mathbb{V}^{\epsilon}(\tilde \omega_1,\tilde \omega_1)-2\mathbb{V}^{\epsilon}(\omega_1,\tilde \omega_1)\bigr)}\,
\\
\times
\sum_{\Pi \in \fra{M}(A)} \prod_{\{a,b\} \in \Pi} (\mathbb{V}^{\epsilon})_{\omega_a(t_a),\omega_b(t_b)}\,\prod_{a \in A \setminus [\Pi]} \ii \Bigl((\mathbb{V}^{\epsilon})_{\omega_a(t_a)}(\omega_1)-
(\mathbb{V}^{\epsilon})_{\omega_a(t_a)}(\tilde{\omega}_1)\Bigr)\,.
\end{multline}
We set $\f \alpha=\tilde{\f \alpha}=\f 0$ in \eqref{I^{epsilon}_translated} and recall \eqref{hat_omega_2}--\eqref{parameter_choice} to write
\begin{multline}
\label{I_comparison}
I^{\epsilon}(\f r,\tilde{\f r})= \int \dd x_1\,\int \dd \tilde x_1\,\int \mathbb{W}^{|\mathbf{r}|,0}_{x_1,x_1}(\dd \omega)\,\mathbb{W}^{|\mathbf{\tilde r}|,0}_{\tilde x_1,\tilde x_1}(\dd \tilde \omega)\,\ee^{-\frac{1}{2}\bigl(\mathbb{V}^{\epsilon}(\omega_1,\omega_1)+\mathbb{V}^{\epsilon}(\tilde \omega_1,\tilde \omega_1)-2\mathbb{V}^{\epsilon}(\omega_1,\tilde \omega_1)\bigr)}\,
\\
\times
\sum_{\Pi \in \fra{M}(A)} \prod_{\{a,b\} \in \Pi} (\mathbb{V}^{\epsilon})_{\omega_a(s_a),\omega_b(s_b)}\,\prod_{a \in A \setminus [\Pi]} \ii \Bigl((\mathbb{V}^{\epsilon})_{\omega_a(s_a)}(\omega_1)-
(\mathbb{V}^{\epsilon})_{\omega_a(s_a)}(\tilde{\omega}_1)\Bigr)\,.
\end{multline}
In order to compare \eqref{I_bar} and \eqref{I_comparison}, we note the following three estimates, which hold uniformly in $\eta>0$.
\begin{enumerate}[label=(\roman*)]
\item \label{itm:V1}
We have
\begin{multline*}
\ee^{-\frac{1}{2}\bigl(\mathbb{V}^{\epsilon}(\omega_1,\omega_1)+\mathbb{V}^{\epsilon}(\tilde \omega_1,\tilde \omega_1)-2\mathbb{V}^{\epsilon}(\omega_1,\tilde \omega_1)\bigr)}
\\
=\ee^{-\frac{1}{2}\bigl(\mathbb{V}^{\epsilon}\bigl(\omega|_{[0,r_1]},\omega_{[0,r_1]}\bigr)+\mathbb{V}^{\epsilon}\bigl(\tilde \omega_{[0,\tilde r_1]},\tilde \omega_{[0,\tilde r_1]}\bigr)-2\mathbb{V}^{\epsilon}\bigl(\omega_{[0,r_1]},\tilde \omega_{[0,\tilde r_1]}\bigr)\bigr)}+O\biggl(\frac{\nu}{\epsilon^d}(1+r_1+\tilde r_1)\biggr)\,.
\end{multline*}
\item \label{itm:V2}
For all $a,b \in A$, we have
\begin{equation*}
\bigl|(\mathbb{V}^{\epsilon})_{\omega_a(t_a),\omega_b(t_b)}-(\mathbb{V}^{\epsilon})_{\omega_a(s_a),\omega_b(s_b)}\bigr| \leq \frac{C}{\epsilon^{d+1}} \Bigl(|\omega_a(t_a)-\omega_a(s_a)|_{\Lambda}+|\omega_b(t_b)-\omega_b(s_b)|_{\Lambda}\Bigr)\,.
\end{equation*}
\item \label{itm:V3}
For all $a \in A$, we have
\begin{equation*}
\bigl|(\mathbb{V}^{\epsilon})_{\omega_a(t_a)}(\hat \omega_1)-(\mathbb{V}^{\epsilon})_{\omega_a(s_a)}(\hat \omega_1)\bigr| \leq \frac{C}{\epsilon^{d+1}}\,\hat r_1\,|\omega_a(t_a)-\omega_a(s_a)|_{\Lambda}\,.
\end{equation*}
\end{enumerate}
Claim \ref{itm:V1} follows by arguing as for \eqref{claim_(i)}.
Claim \ref{itm:V2} follows by using Definition \ref{V_classical} and Lemma \ref{v^{epsilon}_properties} \ref{itm:veps2}. Claim \ref{itm:V3} is shown analogously.

Starting from the identities \eqref{I_bar}, \eqref{I_comparison}, using \eqref{V_classical_estimates}, estimates \ref{itm:V1}--\ref{itm:V3} above, and recalling Lemma \ref{Heat_kernel_estimates} \ref{itm:heatkernel1}--\ref{itm:heatkernel2}, we deduce that
\begin{equation}
\label{I_estimate_4}
|\bar{I}^{\epsilon}(\f r+\f \alpha,\tilde{\f r}+\tilde{\f \alpha})-I^{\epsilon}(\f r,\tilde{\f r})|=O\Biggl(\frac{\sqrt{\nu}}{\epsilon^{5d+1}} 
\biggl(1+\frac{1}{|\f r|^{d/2}}\biggr)\,\biggl(1+\frac{1}{|\tilde{\f r}|^{d/2}}\biggr)\,(1+|\f r|+|\tilde{\f r}|)^5\Biggr)\,.
\end{equation}
Note that here we again used estimate \eqref{Heat_kernel_estimates_ii_2} when applying Lemma \ref{Heat_kernel_estimates} \ref{itm:heatkernel2}.
Combining \eqref{I^{epsilon}_translated}--\eqref{I^{epsilon}_translated_2}, and \eqref{I_estimate_4}, we obtain \eqref{I_estimate_3} and the lemma follows.
\end{proof}

In order to prove Lemma \ref{Partition_function_rate_of_convergence_5}, we need to make some minor modifications.
Arguing analogously as for \cite[(5.31)--(5.33)]{FKSS_2020}, we have that
\begin{equation} 
\label{5.31'}
\int \mu_{\cal{C}_{\eta}}(\dd \sigma) \, \overline{f_2(\langle \sigma \rangle)} F_2(\sigma)
=\frac{1}{\nu^3}\,
\sum_{\mathbf{r} \in (\nu \N)^3} \frac{\ind{|\mathbf{r}|>0}\,\ee^{-\kappa|\mathbf{r}|}}{|\mathbf{r}|}
\int_{[0,\infty)^3} \dd \tilde{\mathbf{r}} \,\frac{\ee^{-\kappa|\tilde{\mathbf{r}}|}}{|\tilde{\mathbf{r}}|}\,
\tilde{J}^{\epsilon}(\mathbf{r},\tilde{\mathbf{r}})\,,
\end{equation}
where 
\begin{multline}
\label{5.33'}
\tilde{J}^{\epsilon}(\mathbf{r},\tilde{\mathbf{r}}) \deq \int_{[0,\nu]^3} \dd \f \tau \,\int_{[0,\nu]^3} \dd \tilde{\f \tau}\, \int \dd \mathbf{x}\,\int\dd \tilde{\mathbf{x}}\,\int \mathbb{W}^{\tau_1+r_3,\tau_3}_{x_1,x_3}(\dd \omega_3)\,\mathbb{W}^{\tau_3+r_2,\tau_2}_{x_3,x_2}(\dd \omega_2)\,\mathbb{W}^{\tau_2+r_1,\tau_1}_{x_2,x_1}(\dd \omega_1)
\\
\times
\mathbb{W}^{\tilde{r}_3,0}_{\tilde{x}_1,\tilde{x}_3}(\dd \tilde{\omega}_3)\,\mathbb{W}^{\tilde{r}_2,0}_{\tilde{x}_3,\tilde{x}_2}(\dd \tilde{\omega}_2)\,\mathbb{W}^{\tilde{r}_1,0}_{\tilde{x}_2,\tilde{x}_1}(\dd \tilde{\omega}_1)\,
\ee^{-\frac{1}{2}\bigl(\mathbb{V}_{\eta}(\omega_1,\omega_1)+\mathbb{V}^{\epsilon}(\tilde \omega_1,\tilde \omega_1)-2\mathbb{V}^{\epsilon}(\omega_1,\tilde \omega_1)\bigr)}\,
\\
\times
\sum_{\Pi \in \fra{M}(A)} \prod_{\{a,b\} \in \Pi} (\mathbb{V}_{\eta})_{x_a,x_b}^{\tau_a,\tau_b}\,\prod_{a \in A \setminus [\Pi]} \ii \Bigl((\mathbb{V}_{\eta})_{x_a}^{\tau_a}(\omega_1)-
(\mathbb{V}^{\epsilon})_{x_a}(\tilde{\omega}_1)\Bigr)\,.
\end{multline}

The following analogues of Lemmas \ref{J_approximation} and \ref{Riemann_sum_I} hold.
\begin{lemma}
\label{J_approximation2}
For all $\f r \in (\nu \N)^3$ and $\tilde{\mathbf{r}} \in [0,\infty)^3$ with $|\f r|, |\tilde{\mathbf{r}}|>0$, we have that uniformly in $\eta>0$
\begin{equation*}
J(\mathbf{r},\tilde{\mathbf{r}})=\nu^6 I^{\epsilon}(\mathbf{r},\tilde{\mathbf{r}})+O\Biggl(\frac{\nu^{13/2}}{\epsilon^{5d+1}}\,\biggl(1+\frac{1}{|\f r|^{d/2}}\biggr)\,\biggl(1+\frac{1}{|\tilde{\f r}|^{d/2}}\biggr)\,(1+|\f r|+|\tilde{\f r}|)^6\Biggr)\,.
\end{equation*}
\end{lemma}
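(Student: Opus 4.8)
We prove the estimate for the mixed quantity $\tilde J^{\epsilon}(\f r,\tilde{\f r})$ of \eqref{5.33'} (written $J(\f r,\tilde{\f r})$ in the statement). The plan is to re-run the proof of Lemma~\ref{J_approximation} essentially verbatim, exploiting that in $\tilde J^\epsilon$ only the $\omega$-side still carries the quantum structure --- the discrete times $\f r\in(\nu\N)^3$, the $\f\tau$-integrations, the regularized interaction $\mathbb{V}_\eta$ and the approximate delta $\delta_\eta$ --- while the $\tilde\omega$-side already appears in the classical form that enters $I^\epsilon(\f r,\tilde{\f r})$ from \eqref{5.9'}: continuous times $\tilde{\f r}$, $\tilde{\f\tau}$-independent propagators $\mathbb{W}^{\tilde r_j,0}$, and the classical interaction $\mathbb{V}^\epsilon$. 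Consequently only the $\omega$-side needs conversion, so the argument is a strict simplification of the proof of Lemma~\ref{J_approximation}. First I would, exactly as in passing from \eqref{5.21'} to \eqref{5.21*}, apply a time translation to $\omega_1,\omega_2,\omega_3$, concatenate them into a single Brownian loop $\omega\col[\tau_1,\tau_1+|\f r|]\to\Lambda$, and introduce the junction times $t_a$ (with $\omega_a(t_a)=x_a$) together with their $\f\tau$-independent approximations $s_a$ satisfying $|s_a-t_a|\le\nu$, as in \eqref{parameter_choice_quantum}. The $\int_{[0,\nu]^3}\dd\tilde{\f\tau}$ integration is trivial since the $\tilde\omega$-propagators do not depend on $\tilde{\f\tau}$; it only produces a factor $\nu^3$, which together with the $\nu^3$ gained from removing the $\f\tau$-dependence of the $\omega$-side yields the prefactor $\nu^6$ in front of $I^\epsilon(\f r,\tilde{\f r})$.

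Next I would carry out the same three replacement steps as in the proof of Lemma~\ref{J_approximation}, now applied only to the $\omega$-side. First, for the exponential factor, use the positivity of the mixed quadratic form $\mathbb{V}_\eta(\omega_1,\omega_1)+\mathbb{V}^\epsilon(\tilde\omega_1,\tilde\omega_1)-2\mathbb{V}^\epsilon(\omega_1,\tilde\omega_1)\ge 0$ (established as in \eqref{V_quantum_positive}--\eqref{V_classical_positive}, since $\delta_\eta$ and $v^\epsilon$ are of positive type) together with Lemma~\ref{5.12'}\,\ref{itm:512_1} to replace $\mathbb{V}_\eta(\omega_1,\omega_1)$ by $\mathbb{V}^\epsilon(\omega|_{[\tau_1,\tau_1+r_1]},\omega|_{[\tau_1,\tau_1+r_1]})$ at the cost of the error terms $|\hat{\mathbb{V}}_\eta(\omega_1,\omega_1)|$ and $O(\nu(r_1+\tilde r_1+\nu)/\epsilon^d)$; the remaining two pieces of the exponent are already classical and only require the path-restriction identity \eqref{claim_(i)}. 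Second, for the pairing factors $(\mathbb{V}_\eta)^{\tau_a,\tau_b}_{x_a,x_b}$ I would use Definition~\ref{V_quantum}, $\int_0^\nu\dd\tau\,\delta_\eta=1$, and the Lipschitz bound Lemma~\ref{v^{epsilon}_properties}\,\ref{itm:veps2} to pass to $\nu^2(\mathbb{V}^\epsilon)_{\omega_a(s_a),\omega_b(s_b)}$ plus a displacement error, as in \eqref{claim_(ii)}. Third, for the point--path factors $(\mathbb{V}_\eta)^{\tau_a}_{x_a}(\omega_1)-(\mathbb{V}^\epsilon)_{x_a}(\tilde\omega_1)$, the $\tilde\omega_1$ term integrates in $\tau_a$ directly to $\nu\,(\mathbb{V}^\epsilon)_{x_a}(\tilde\omega_1)$ while the $\omega_1$ term is treated as in \eqref{claim_(iii)}. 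Finally I would collect the errors: the $\eta$-error terms $|\hat{\mathbb{V}}_\eta(\cdot,\cdot)|$ are bounded after integration against $\mathbb{W}\otimes\mathbb{W}$ by the displacement estimate \eqref{Heat_kernel_estimates_ii_2} of Lemma~\ref{Heat_kernel_estimates}\,\ref{itm:heatkernel2}, exactly as in \eqref{V_hat_bound}; the displacement errors from the telescoping are bounded in the same way, each costing a factor $O(\nu^{1/2}/\epsilon)$ relative to the main term; using \eqref{V_quantum_estimates} and \eqref{delta_nu_integral} for the remaining $\f\tau$-integrations and Lemma~\ref{Heat_kernel_estimates}\,\ref{itm:heatkernel1}--\ref{itm:heatkernel2} for the loop integrations, and noting that there are at most $|A|=4$ factors $\epsilon^{-d}$ and at most four factors $1+|\f r|+|\tilde{\f r}|$, one arrives at the claimed bound $O(\nu^{13/2}\epsilon^{-5d-1}(1+|\f r|^{-d/2})(1+|\tilde{\f r}|^{-d/2})(1+|\f r|+|\tilde{\f r}|)^6)$. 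Since every step uses $\delta_\eta$ only through $\delta_\eta\ge 0$ and $\int_0^\nu\delta_\eta=1$, all bounds are uniform in $\eta>0$.

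The main obstacle is purely one of bookkeeping rather than of ideas: because one side is already in classical form, no new mechanism is needed beyond that of Lemma~\ref{J_approximation}. The delicate point is tracking the $\epsilon$-dependence so that the single Lipschitz step --- which trades an $\epsilon^{-d}$ factor for $\epsilon^{-d-1}$ --- combined with the four point interactions produces exactly the exponent $5d+1$, and controlling the polynomial growth in $|\f r|,|\tilde{\f r}|$ through the Brownian-loop estimates of Lemma~\ref{Heat_kernel_estimates} so that it does not exceed the stated power $6$; both are handled by the same careful accounting as in the proof of Lemma~\ref{J_approximation}, now with the $\tilde\omega$-side simply held fixed.
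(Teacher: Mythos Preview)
Your proposal is correct and follows the same route as the paper, whose proof simply states that the argument is analogous to that of Lemma~\ref{J_approximation} with the single modification that, for the already-classical point--path term $(\mathbb{V}^\epsilon)_{x_a}(\tilde\omega_1)$, claim \eqref{claim_(iii)} is replaced by the displacement estimate $(\mathbb{V}^{\epsilon})_{\omega_a(t_a)}(\tilde{\omega}_1)=(\mathbb{V}^{\epsilon})_{\omega_a(s_a)}(\tilde{\omega}_1)+O(\tilde r_1\,\epsilon^{-d-1}|\omega_a(t_a)-\omega_a(s_a)|_\Lambda)+O(\nu\epsilon^{-d})$. The only minor imprecision in your sketch is the phrase ``integrates in $\tau_a$ directly to $\nu\,(\mathbb{V}^\epsilon)_{x_a}(\tilde\omega_1)$'': since $x_a=\omega_a(t_a)$ may still depend on $\tau_a$ when $a$ lies on the $\omega$-side, the displacement $t_a\to s_a$ must be performed first, which is exactly the estimate the paper records.
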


\begin{lemma}
\label{Riemann_sum_I2}
Recalling \eqref{Theta_function}, we have that uniformly in $\eta>0$
\begin{multline*}
\Biggl|\int_{[0,\infty)^3} \dd \mathbf{r} \,\frac{\ee^{-\kappa |\mathbf{r}|}}{|\mathbf{r}|}\,\int_{[0,\infty)^3} \dd \mathbf{\tilde r}\, \frac{\ee^{-\kappa |\mathbf{\tilde r}|}}{|\mathbf{\tilde r}|}\,I^{\epsilon}(\f r, \tilde{\f r})-\nu^3 \sum_{\mathbf{r} \in (\nu \N)^3}\,\frac{\ind{|\mathbf{r}|>0}\,\ee^{-\kappa |\mathbf{r}|}}{|\mathbf{r}|}\, \int_{[0,\infty)^3}\dd \mathbf{\tilde r}\, \frac{\ee^{-\kappa |\mathbf{\tilde r}|}}{|\mathbf{\tilde r}|}\,I^{\epsilon}(\f r, \tilde{\f r})\Biggr| 
\\
\lesssim_{\kappa,v} \frac{\Theta(\nu)}{\epsilon^{5d+1}}\,.
\end{multline*}
\end{lemma}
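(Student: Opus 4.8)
The plan is to follow the proof of Lemma~\ref{Riemann_sum_I}, the only structural difference being that here solely the $\f r$-variable is discretised while the $\tilde{\f r}$-variable stays an integral; note also that $I^\epsilon(\f r,\tilde{\f r})$ from \eqref{5.9'} is independent of $\eta$, so uniformity in $\eta$ is automatic. The inputs needed are already in place: the pointwise bound \eqref{I_bound}, the small-time estimates \eqref{I_estimate_1} and \eqref{I_estimate_2}, and, on the region $|r_i|\ge C_0\nu$, the Lipschitz-type estimate \eqref{I_estimate_3} specialised to $\tilde{\f\alpha}=\f 0$ (so that only its first two error terms survive, the third being produced by the telescoping in $\tilde{\f\alpha}$, which we no longer need), together with the elementary expansion \eqref{I_estimate_3B*}. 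The one extra observation is that the $\tilde{\f r}$-integral is harmless throughout: since $1+d/2<3$ for $d=2,3$ one has $\int_{[0,\infty)^3}\dd\tilde{\f r}\,\frac{\ee^{-\kappa|\tilde{\f r}|}}{|\tilde{\f r}|}\bigl(1+|\tilde{\f r}|^{-d/2}\bigr)(1+|\tilde{\f r}|)^k\lesssim_{\kappa,k}1$, so multiplying any of the above bounds by $\frac{\ee^{-\kappa|\tilde{\f r}|}}{|\tilde{\f r}|}$ and integrating over $\tilde{\f r}$ changes nothing.

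First I would bound the contribution of the region $\min_j r_j\le C_0\nu$, in both the continuous integral and the semi-discrete sum, by $\Theta(\nu)\,\epsilon^{-4d}$; this follows from \eqref{I_bound}, the harmlessness of the $\tilde{\f r}$-integral, and the one- and two-dimensional Riemann-sum estimates underlying \eqref{I_estimate_1_A} and \eqref{I_estimate_2_A}. On the complement $r_i\ge C_0\nu$, for $\f s\in[C_0\nu,\infty)^3$ I would set $r_j=\lfloor s_j\rfloor_\nu$ as in \eqref{floor_nu}, so that $|r_j-s_j|\le\nu$, apply \eqref{I_estimate_3} with $\tilde{\f\alpha}=\f 0$ and \eqref{I_estimate_3B*}, and integrate the resulting errors in $\tilde{\f r}$. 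This reduces the claim to summing the first two error terms of \eqref{I_estimate_3} against $\nu^3\,\frac{\ind{|\f r|>0}\ee^{-\kappa|\f r|}}{|\f r|}$ over $\f r\in(\nu\N)^3\cap[C_0\nu,\infty)^3$: the $\sqrt{\nu}\,\epsilon^{-5d-1}$ term yields $O(\sqrt{\nu}\,\epsilon^{-5d-1})$ because $\nu^3\sum_{\f r}|\f r|^{-1-d/2}\ee^{-\kappa|\f r|/2}(1+|\f r|)^5\lesssim_\kappa1$, whereas the $\nu\,\epsilon^{-4d}\,r_3^{-1}\bigl(1+(r_1+r_2)^{-d/2}\bigr)$ term is controlled by \eqref{I_estimate_3'_A}--\eqref{I_estimate_3'}, contributing $O\bigl(\nu\,\epsilon^{-4d}\log\nu^{-1}\bigl(\log\nu^{-1}\,\ind{d=2}+\nu^{-1/2}\,\ind{d=3}\bigr)\bigr)$. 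Using $\epsilon<1$, in both dimensions these are $\lesssim\Theta(\nu)\,\epsilon^{-5d-1}$, which is exactly why the logarithmic loss, and hence the form of $\Theta$ in \eqref{Theta_function}, appears only for $d=3$.

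I expect the main obstacle to be purely bookkeeping: tracking the asymmetry between the discretised variable $\f r$ and the continuous variable $\tilde{\f r}$, and summing or integrating the singular weights $|\f r|^{-1}$, $(r_1+r_2)^{-d/2}$ and $r_3^{-1}$ in the correct order so that the logarithmic factors land precisely where \eqref{Theta_function} predicts. No genuinely new estimate is required beyond those already established for Lemmas~\ref{J_approximation} and \ref{Riemann_sum_I}; in particular \eqref{I_estimate_3} is used verbatim with $\tilde{\f\alpha}=\f 0$.
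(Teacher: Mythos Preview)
Your proposal is correct and takes essentially the same approach as the paper: the paper's proof consists of the single sentence ``The result follows directly from the proof of Lemma~\ref{Riemann_sum_I},'' and your write-up spells out precisely the adaptation (discretise only in $\f r$, keep $\tilde{\f r}$ as an integral, use \eqref{I_estimate_3} with $\tilde{\f\alpha}=\f 0$) that this sentence implicitly invokes.
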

The proofs of Lemmas \ref{J_approximation2} and \ref{Riemann_sum_I2} are very similar to those of Lemmas \ref{J_approximation} and \ref{Riemann_sum_I}; see Appendix \ref{Appendix C}.
\begin{proof}[Proof of Lemma \ref{J_approximation2}]
The proof is analogous to that of Lemma \ref{J_approximation} given above. The only difference is that for $\hat \omega_1=\tilde \omega_1$, we replace \eqref{claim_(iii)} by the estimate
\begin{equation*}
(\mathbb{V}^{\epsilon})_{\omega_a(t_a)}(\tilde{\omega}_1)=(\mathbb{V}^{\epsilon})_{\omega_a(s_a)}(\tilde{\omega}|_{[\tilde \tau_1,\tilde \tau_1+\tilde r_1]})+O\biggl(\frac{\tilde{r}_1}{\epsilon^{d+1}}\,\,|\omega_a(t_a)-\omega_a(s_a)|_{\Lambda}\biggr)+O\biggl(\frac{\nu}{\epsilon^d}\biggr)\,,
\end{equation*}
which follows from Lemma \ref{v^{epsilon}_properties} \ref{itm:veps1}--\ref{itm:veps2} and \eqref{omega_hat}--\eqref{parameter_choice_quantum}.
%Note that a similar modification was also used in the proof of \cite[Lemma 5.10]{FKSS_2020}, see \cite[(5.35)]{FKSS_2020}. We omit the details.
\end{proof}

\begin{proof}[Proof of Lemma \ref{Riemann_sum_I2}]
The result follows directly from the proof of Lemma \ref{Riemann_sum_I}.
\end{proof}

We can now prove Lemma \ref{Partition_function_rate_of_convergence_5}.
\begin{proof}[Proof of Lemma \ref{Partition_function_rate_of_convergence_5}]
The claim follows from Lemmas \ref{J_approximation2} and \ref{Riemann_sum_I2} by using \eqref{5.8'} and \eqref{5.31'}.
\end{proof}

\subsection{Correlation functions}

We note the following analogues of Lemma \ref{Partition_function_rate_of_convergence_1} and \ref{Partition_function_rate_of_convergence_2} that give us functional integral representations for \eqref{hat_Gamma_p_2} and \eqref{hat_gamma_p_2} respectively.

\begin{lemma}
\label{hat_Gamma_fr}
For all $p \in \N$, we have that as $\eta \rightarrow 0$, $\wh{\Gamma}_{p,\eta}\overset{C}{\longrightarrow} \wh{\Gamma}_{p}$, 
where $\wh{\Gamma}_{p}$ satisfies
\begin{equation}
\label{Gamma_hat_quantum}
\nu^p \, \wh{\Gamma}_{p,\eta}=\frac{p!}{\cal Z_{\eta}^{\epsilon}}P_p \,Q_{p,\eta}
\end{equation}
for 
\begin{equation}
\label{Q_quantum}
(Q_{p,\eta})_{\f x,\tilde{\f x}} \deq \int \mu_{\cal C_{\eta}} (\dd \sigma)\, \ee^{T^{\epsilon}-\frac{\ii \tau^\epsilon [\sigma]}{\nu}}\,\ee^{F_2(\sigma)}\,
\prod_{i=1}^{p} \Biggl[\nu\sum_{r_i \in \nu \N^*} \ee^{-\kappa r_i} \, \int \mathbb{W}^{r_i,0}_{x_i,\tilde{x}_i}(\dd \omega_i)\, \Bigl(\ee^{\ii \int_{0}^{r_i} 
\dd t\, \sigma([t]_{\nu},\omega_i(t))}-1\Bigr)\Biggr]\,.
\end{equation}
Here, $F_2$ is given as in \eqref{F_2}. Moreover, we recall that $P_p$ denotes the projection given by \eqref{def_Pn} and 
$\overset{C}{\longrightarrow}$ convergence in the space of continuous functions on $\Lambda^p \times \Lambda^p$ with respect to the supremum norm.
\end{lemma}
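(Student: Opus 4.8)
The plan is to adapt to the reduced density matrices the loop-gas functional-integral representation established for the partition function in Lemma \ref{Partition_function_rate_of_convergence_1}, and to feed the Wick-ordering identity \eqref{hat_Gamma_p_2} into that representation; throughout one may quote the corresponding computations of \cite[Section 5]{FKSS_2020}, with the counterterm modifications already recorded in the proof of Lemma \ref{Partition_function_rate_of_convergence_1}. First I would expand the numerator $\tr_{\cal F}\pb{a^*(\tilde x_1)\cdots a^*(\tilde x_p)\,a(x_1)\cdots a(x_p)\,\ee^{-H}}$ by the Feynman--Kac (Ginibre) representation of the interacting Bose gas. Using the decomposition \eqref{H^epsilon_2}--\eqref{H^epsilon_3}, the constant term \eqref{T^epsilon} peels off the prefactor $\ee^{T^\epsilon}$, the free kinetic operator $\nu(\kappa-\Delta/2)$ turns into heat-kernel weights $\ee^{-\kappa r}\psi^r$ over loop segments of length $r\in\nu\N$, and a Dyson expansion of $\ee^{-H_{n,\varrho^\epsilon}}$ together with the sum over particle number produces a gas of Brownian loops. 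The $p$ creation and annihilation insertions mark $p$ of these loops and force them to be open Brownian bridges of length $r_i$ running from $x_i$ to $\tilde x_i$, weighted by $\mathbb{W}^{r_i,0}_{x_i,\tilde x_i}$; the Bose symmetry and the division by $Z$ supply the factor $p!\,P_p/\cal Z^\epsilon_\eta$, exactly as in \cite{FKSS_2020}.

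Next I would linearise the quartic interaction by the Hubbard--Stratonovich transformation with source $f(\tau,x)=\sum_i\delta(x-\omega_i(\tau))-\varrho_\nu/\nu-\tau^\epsilon/\nu$, precisely as in the proof of Lemma \ref{Partition_function_rate_of_convergence_1} (see \eqref{Lemma_5.6_proof}). This replaces the interaction by a Gaussian average over $\sigma\sim\mu_{\cal C_\eta}$, generates the factor $\ee^{-\ii\tau^\epsilon[\sigma]/\nu}$ from the $\tau^\epsilon$-shift in $f$, and decorates every loop with the phase $\ee^{\ii\int\dd s\,\sigma([s]_\nu,\omega(s))}$. The closed loops resum into $\ee^{F_2(\sigma)}$ with $F_2$ as in \eqref{F_2} (this is \cite[Lemma 5.4]{FKSS_2020}), and the $p$ marked loops give $\prod_{i=1}^p\qbb{\nu\sum_{r_i\in\nu\N^*}\ee^{-\kappa r_i}\int\mathbb{W}^{r_i,0}_{x_i,\tilde x_i}(\dd\omega_i)\,\ee^{\ii\int_0^{r_i}\dd t\,\sigma([t]_\nu,\omega_i(t))}}$, which is the representation of $\nu^p\,\Gamma_{p,\eta}$ before Wick ordering. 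To pass to $\wh\Gamma_{p,\eta}$ I would use the algebraic identity \eqref{hat_Gamma_p_2} (from \cite[Lemma A.4]{FKSS_2020}): the free block $\Gamma^{(0)}_{p-k}$ has the same loop representation but with $\sigma$ and $F_2$ absent, so its marked loops contribute the bare kernels $\nu\sum_{r}\ee^{-\kappa r}\mathbb{W}^{r,0}_{x,\tilde x}$, i.e.\ the ``$1$'' inside $\ee^{\ii\int\sigma}-1$; expanding $\prod_i(\ee^{\ii\int\sigma}-1)$ and matching the $2^p$ terms against $\sum_k\binom pk^2(-1)^{p-k}P_p(\Gamma_k\otimes\Gamma^{(0)}_{p-k})P_p$ produces \eqref{Q_quantum} together with the prefactor $\ee^{T^\epsilon-\ii\tau^\epsilon[\sigma]/\nu}$ of \eqref{Gamma_hat_quantum}.

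For the convergence $\wh\Gamma_{p,\eta}\overset{C}{\longrightarrow}\wh\Gamma_p$ as $\eta\to0$ I would argue by dominated convergence, as in the $\eta\to0$ limit of $\cal Z_\eta$. By \eqref{Re_F2} and $\absa{\ee^{\ii\int_0^{r_i}\dd t\,\sigma([t]_\nu,\omega_i(t))}-1}\le 2$ the integrand defining $Q_{p,\eta}$ is dominated, uniformly in $\eta$ and in $(\f x,\tilde{\f x})\in\Lambda^p\times\Lambda^p$, by $\ee^{T^\epsilon}\prod_{i=1}^p\pbb{2\nu\sum_{r_i\in\nu\N^*}\ee^{-\kappa r_i}\sup_{x,\tilde x}\psi^{r_i}(x-\tilde x)}$, which is finite by Lemma \ref{Heat_kernel_estimates} \ref{itm:heatkernel1} (the sum $\sum_{r\in\nu\N^*}\nu\ee^{-\kappa r}(1+r^{-d/2})$ converges for $d=2,3$). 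Joint continuity in $(\f x,\tilde{\f x})$ follows from continuity of the Brownian-bridge expectations, and since the covariances $\cal C_\eta$ converge as $\eta\to0$ (because $\delta_\eta\to\delta$) the field $\sigma$ converges in law in the relevant path space; dominated convergence then yields convergence in $C(\Lambda^p\times\Lambda^p)$, and the limiting object is by definition $\nu^p\,\wh\Gamma_p$.

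The main obstacle is the bookkeeping in the Wick-ordering step: one must verify that expanding $\prod_i(\ee^{\ii\int\sigma}-1)$ over the $2^p$ subsets reproduces \eqref{hat_Gamma_p_2} with the correct signs $(-1)^{p-k}$, binomial factors $\binom pk^2$, and symmetrisations $P_p$ --- in particular that ``forgetting'' the $\sigma$-decoration on a marked loop really yields the free one-particle density matrix kernel $(\ee^{\nu(\kappa-\Delta/2)}-1)^{-1}$ appearing in $\Gamma^{(0)}$. Everything else is a routine, if lengthy, transcription of the partition-function computation of \cite{FKSS_2020} together with the modifications already carried out for Lemma \ref{Partition_function_rate_of_convergence_1}.
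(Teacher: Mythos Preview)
Your proposal is correct and follows the same route as the paper. The paper's own proof is a one-line reference to \cite[(5.36)--(5.37)]{FKSS_2020}, noting only that the $\epsilon$-dependent counterterms in \eqref{Hamiltonian_H} produce the extra factor $\ee^{T^\epsilon-\ii\tau^\epsilon[\sigma]/\nu}$ exactly as in the proof of Lemma \ref{Partition_function_rate_of_convergence_1}; your outline unpacks precisely this argument (Feynman--Kac/loop-gas representation, Hubbard--Stratonovich with the shifted source \eqref{Lemma_5.6_proof}, resummation of closed loops into $\ee^{F_2}$, and the combinatorial matching of $\prod_i(\ee^{\ii\int\sigma}-1)$ with \eqref{hat_Gamma_p_2}).

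One small caution on the $\eta\to0$ step: your phrasing ``the field $\sigma$ converges in law in the relevant path space'' is not quite the mechanism, since as $\eta\to0$ the time covariance degenerates to a delta and the limiting field is white in time. In \cite{FKSS_2020} the limit is taken after integrating out $\sigma$: the Gaussian integral produces explicit Wick sums in the covariance $\cal C_\eta$, and it is these deterministic expressions that converge as $\delta_\eta\to\delta$. Your domination via $\re F_2\le0$ and $\abs{\ee^{\ii\int\sigma}-1}\le2$ together with Lemma \ref{Heat_kernel_estimates} \ref{itm:heatkernel1} is the right uniform bound to justify this.
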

\begin{proof}
We obtain \eqref{Gamma_hat_quantum}--\eqref{Q_quantum} by arguing analogously as for 
\cite[(5.36)--(5.37)]{FKSS_2020}. The only difference is that, in the $\sigma$, integral, we have to add the extra factor of 
$\ee^{T^{\epsilon}-\frac{\ii \tau^\epsilon [\sigma]}{\nu}}$ due to the $\epsilon$-dependent corrections in \eqref{Hamiltonian_H}. This change is justified by arguing analogously as in the proof of Proposition \ref{Partition_function_rate_of_convergence_1} above.
\end{proof}

\begin{lemma}
\label{hat_gamma_fr}
For all $p \in \N$, we have that 
\begin{equation}
\label{gamma_hat_classical}
\wh{\gamma}_{p}^{W^\epsilon}=\frac{p!}{\zeta^{W^\epsilon}}P_p \,q_{p}^{\epsilon}
\end{equation}
for 
\begin{equation}
\label{q_classical}
(q_{p}^{\epsilon})_{\f x,\tilde{\f x}} \deq \int \mu_{v^{\epsilon}} (\dd \xi)\, \ee^{T^{\epsilon}-\ii \tau^\epsilon \langle \xi,1 \rangle_{L^2}}\,\ee^{f_2(\xi)}\,
\prod_{i=1}^{p} \Biggl[\int_{0}^{\infty}  \dd r_i\,\ee^{-\kappa r_i} \, \int \mathbb{W}^{r_i,0}_{x_i,\tilde{x}_i}(\dd \omega_i)\, \Bigl(\ee^{\ii \int_{0}^{r_i} 
\dd t\, \xi(\omega_i(t))}-1\Bigr)\Biggr]\,.
\end{equation}
Here, $f_2$ is given as in \eqref{f_2}.
\end{lemma}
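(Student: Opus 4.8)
The statement to prove is Lemma \ref{hat_gamma_fr}, which gives the functional integral representation \eqref{gamma_hat_classical}--\eqref{q_classical} for the classical Wick-ordered correlation function $\wh\gamma_p^{W^\epsilon}$. This is the classical counterpart of Lemma \ref{hat_Gamma_fr}, and the reference \cite{FKSS_2020} has already carried out the analogous computation in the mean-field setting. So the plan is to reduce to that reference's derivation (specifically \cite[(5.36)--(5.37)]{FKSS_2020}, or rather its classical analogue), tracking only the modifications forced by the $\epsilon$-dependent counterterms $\tau^\epsilon$ and $E^\epsilon$ appearing in the definition \eqref{W^epsilon} of $W^\epsilon$.

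\begin{proof}[Proof of Lemma \ref{hat_gamma_fr}]
The argument follows that of Lemma \ref{hat_Gamma_fr}, replacing the quantum objects by their classical counterparts; it is also the classical analogue of \cite[(5.36)--(5.37)]{FKSS_2020} with the modifications needed to incorporate the $\epsilon$-dependent counterterms. First, using \eqref{hat_gamma_p_2}, the Wick-ordered correlation function $\wh\gamma^{W^\epsilon}_p$ can be expressed in terms of the unrenormalized $\gamma^{W^\epsilon}_k$ and the free-field correlation functions $\gamma^{(0)}_{p-k}$; as in \cite[Section 5]{FKSS_2020}, combining this with a Feynman--Kac-type representation of the interacting correlation functions collapses the sum and yields that $\wh\gamma^{W^\epsilon}_p = \frac{p!}{\zeta^{W^\epsilon}} P_p\, q^\epsilon_p$ where $q^\epsilon_p$ is the integral over the classical free field of $\ee^{-W^\epsilon}$ against a product of $p$ factors, each of which is a Mehler/heat-kernel path integral $\int_0^\infty \dd r_i\, \ee^{-\kappa r_i} \int \mathbb W^{r_i,0}_{x_i,\tilde x_i}(\dd\omega_i)\,(\ee^{\ii\int_0^{r_i}\dd t\,\xi(\omega_i(t))} - 1)$ after the Hubbard--Stratonovich transformation is applied to decouple the quartic interaction $W^\epsilon$ into a Gaussian integral over the auxiliary field $\xi$ with covariance $v^\epsilon$. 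Here the "$-1$" in each factor arises precisely because the correlation functions are taken in the Wick-ordered (renormalized) sense, which cancels the self-contractions; this is exactly the mechanism of \cite[(5.37)]{FKSS_2020}.

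The only point requiring care is the role of the counterterms. Rewriting $W^\epsilon$ as in \eqref{W^epsilon_N}--\eqref{W^epsilon}, i.e. completing the square so that the quadratic-in-$|\phi|^2$ part becomes $\frac12\int\dd x\,\dd\tilde x\, (|\phi_N(x)|^2 - \{\E[|\phi_N(x)|^2] + \tau^\epsilon\})\, v(x-\tilde x)\, (|\phi_N(\tilde x)|^2 - \{\E[|\phi_N(\tilde x)|^2] + \tau^\epsilon\}) - T^\epsilon$ with $T^\epsilon$ as in \eqref{T^epsilon}, one applies the Hubbard--Stratonovich identity $\ee^{-\frac12\langle f, v^\epsilon f\rangle} = \int\mu_{v^\epsilon}(\dd\xi)\,\ee^{\ii\langle f,\xi\rangle}$ with $f(x) = |\phi_N(x)|^2 - \E[|\phi_N(x)|^2] + \tau^\epsilon$, exactly as in the proof of Lemma \ref{Partition_function_rate_of_convergence_2}. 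The constant $T^\epsilon$ contributes the overall factor $\ee^{T^\epsilon}$, and the shift by $\tau^\epsilon$ in $f$ contributes the factor $\ee^{-\ii\tau^\epsilon\langle\xi,1\rangle_{L^2}}$, both of which appear in \eqref{q_classical}. The remaining Gaussian integration over $\phi_N$ then produces, in the limit $N\to\infty$, the factor $\ee^{f_2(\xi)}$ with $f_2$ as in \eqref{f_2} (this is \cite[Lemma 5.4]{FKSS_2020}) together with the product of $p$ path-integral factors.

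Finally, one passes from the truncated field $\phi_N$ to $\phi$: by the $L^2(\P)$ convergence $W^\epsilon_N \to W^\epsilon$ together with the integrability of $\ee^{-W^\epsilon}$ (established via Lemma \ref{lem:R_variance} and Proposition \ref{prop:Nelson} in Section \ref{sec:field_theory}), and standard continuity of the Mehler kernel path integrals, all the expressions converge, and one obtains \eqref{gamma_hat_classical}--\eqref{q_classical}. The only genuine subtlety, and the place where one must be slightly attentive, is the bookkeeping of the two counterterm factors $\ee^{T^\epsilon}$ and $\ee^{-\ii\tau^\epsilon\langle\xi,1\rangle_{L^2}}$; once these are correctly tracked through the Hubbard--Stratonovich step exactly as in Lemma \ref{Partition_function_rate_of_convergence_2}, the remainder of the computation is identical to \cite[(5.36)--(5.37)]{FKSS_2020}.
\end{proof}
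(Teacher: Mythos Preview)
Your proposal is correct and follows essentially the same approach as the paper: both reduce to the classical functional-integral computation of \cite{FKSS_2020} (the paper cites \cite[(5.38)]{FKSS_2020} directly) and identify that the only modification is the extra factor $\ee^{T^\epsilon - \ii\tau^\epsilon\langle\xi,1\rangle_{L^2}}$ arising from the counterterms in $W^\epsilon$, justified exactly as in Lemma \ref{Partition_function_rate_of_convergence_2}. The paper's proof is more terse (three sentences) and additionally remarks that since $v^\epsilon$ is smooth no $\eta$-regularization is needed, but otherwise your more detailed exposition is the same argument.
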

\begin{proof}
We obtain \eqref{gamma_hat_classical}--\eqref{q_classical} analogously as for \cite[(5.38)]{FKSS_2020}. %(which, in turn follows from \cite[Proposition 4.3]{FKSS_2020} and the Feynman-Kac formula). 
The only difference is that, in the $\xi$ integral, we have to add the extra factor of $\ee^{T^{\epsilon}-\ii \tau^\epsilon \langle \xi,1 \rangle_{L^2}}$ due to the 
second and third term in \eqref{W^epsilon}.
This change is justified by arguing analogously as in the proof of Proposition \ref{Partition_function_rate_of_convergence_2} above. Moreover, since $v^{\epsilon}$ is already assumed to be smooth, there is no need to regularize it with the parameter $\eta$, which was used in \cite{FKSS_2020}.
\end{proof}

%\cob
%Lemma \ref{hat_Gamma_fr} follows from \eqref{H^epsilon_2}, \eqref{H^epsilon_3}, \cite[Appendix B]{FKSS_2020}, \cite[Proposition 3.15]{FKSS_2020}, combined with \cite[(5.36)--(5.37)]{FKSS_2020}. Furthermore, in order to deduce Lemma \ref{hat_gamma_fr}, we argue as in \cite[Appendix B]{FKSS_2020}, \cite[Proposition 4.3]{FKSS_2020} and \cite[(5.38)]{FKSS_2020}. We omit the details.
%\nc

Recalling \eqref{sigma(epsilon)}, we now compare the quantities \eqref{Q_quantum} and \eqref{q_classical}.
\begin{lemma}
\label{Q_and_q_estimates}
There exists $c_3>0$ depending on $\kappa,v$ such that the following results hold for all $p \in \N$.
\begin{enumerate}[label=(\roman*)]
\item  \label{itm:Q1}
We have
\begin{equation*}
\|Q_{p,\eta}\|_{C} \leq C_{\kappa,v}^p \, \ee^{c_3 \chi(\epsilon)^2}\, \biggl(\frac{1}{\epsilon^d}\biggr)^{p/2}\,.
\end{equation*}
\item \label{itm:Q2}
For $\theta(d,p)$ as in \eqref{theta_definition}, we have uniformly in $\eta>0$
\begin{equation}
\label{Q_and_q_estimates_ii}
\|Q_{p,\eta}-q^{\epsilon}_{p}\|_{C} \leq C_{\kappa,v}^p \, \ee^{c_3 \chi(\epsilon)^2}\, \Biggl(\biggl(\frac{1}{\epsilon^d}\biggr)^{p/2}+\biggl(\frac{1}{\epsilon^{5d+1}}\biggr)^{1/2}\Biggr)\,\nu^{\theta(d,p)}\,.
\end{equation}
\end{enumerate}
\end{lemma}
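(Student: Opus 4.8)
The strategy is to mirror the analysis of the partition function in Lemmas \ref{Partition_function_rate_of_convergence_4}--\ref{Riemann_sum_I}, now carried out for the kernels $(Q_{p,\eta})_{\f x, \tilde{\f x}}$ and $(q_p^\epsilon)_{\f x, \tilde{\f x}}$ defined in \eqref{Q_quantum} and \eqref{q_classical}. For part \ref{itm:Q1}, I would expand the product over $i = 1, \dots, p$, and for each factor bound $\bigl|\ee^{\ii \int_0^{r_i} \dd t\, \sigma([t]_\nu, \omega_i(t))} - 1\bigr| \leq 1$, so that $\nu \sum_{r_i \in \nu\N^*} \ee^{-\kappa r_i} \int \bb W^{r_i,0}_{x_i,\tilde x_i}(\dd\omega_i)\,(\cdots)$ is dominated using Lemma \ref{Heat_kernel_estimates} \ref{itm:heatkernel1} by $\nu \sum_{r_i \in \nu\N^*} \ee^{-\kappa r_i}(1 + r_i^{-d/2}) \lesssim_\kappa 1$. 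Meanwhile $|\ee^{F_2(\sigma)}| \leq 1$ by \eqref{Re_F2}, and $|\ee^{T^\epsilon - \ii\tau^\epsilon[\sigma]/\nu}| = \ee^{T^\epsilon} \lesssim \ee^{c\chi(\epsilon)^2}$ by \eqref{T^epsilon_bound}; since $\mu_{\cal C_\eta}$ is a probability measure this already gives $\|Q_{p,\eta}\|_C \leq C_{\kappa,v}^p\,\ee^{c\chi(\epsilon)^2}$, which is slightly stronger than stated (the $\epsilon^{-d/2}$ per factor is a crude but valid overcount; alternatively one can insert one factor of $\|v^\epsilon\|_\infty^{1/2} \lesssim \epsilon^{-d/2}$ per pair using a Cauchy--Schwarz / Hubbard--Stratonovich argument as in \eqref{5.17'}). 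Either way part \ref{itm:Q1} follows.

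For part \ref{itm:Q2} I would proceed in the same two-step scheme as the proof of Lemma \ref{Partition_function_rate_of_convergence_4}. First, using Lemma \ref{Lemma 5.2'} to realize $\xi = \langle\sigma\rangle$ and \eqref{sigma_time_average_observation} to identify the phases, write the difference $Q_{p,\eta} - q_p^\epsilon$ as a sum of terms in which either the factor $\ee^{F_2(\sigma)} - \ee^{f_2(\langle\sigma\rangle)}$ appears (controlled by $|F_2(\sigma) - f_2(\langle\sigma\rangle)|$ via $|\ee^a - \ee^b| \leq |a-b|$ for $\re a, \re b \leq 0$, exactly as in \eqref{Partition_function_rate_of_convergence_3_proof_2}, then estimated in $L^2(\mu_{\cal C_\eta})$ by Lemmas \ref{J_approximation}--\ref{Riemann_sum_I}), or one of the $p$ bracketed path factors is replaced by its classical counterpart. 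For the latter, the key point is that a single quantum factor $\nu\sum_{r_i \in \nu\N^*}\ee^{-\kappa r_i}\int\bb W^{r_i,0}_{x_i,\tilde x_i}(\dd\omega_i)(\ee^{\ii\int\sigma} - 1)$ differs from $\int_0^\infty \dd r_i\,\ee^{-\kappa r_i}\int\bb W^{r_i,0}_{x_i,\tilde x_i}(\dd\omega_i)(\ee^{\ii\int\langle\sigma\rangle} - 1)$ by two contributions: a Riemann-sum error in $r_i$ of size $O(\Theta(\nu))$ (handled as in Lemma \ref{Riemann_sum_I}, splitting off $r_i \lesssim \nu$ and using $\|\psi^t - \psi^s\|_{L^1} \leq d\log(t/s)$ from Lemma \ref{Heat_kernel_estimates} \ref{itm:heatkernel3}), and a phase-replacement error $\ee^{\ii\int_0^{r_i}\dd t\,\sigma([t]_\nu,\omega_i(t))} - \ee^{\ii\int_0^{r_i}\dd t\,\langle\sigma\rangle(\omega_i(t))}$, bounded by $\int_0^{r_i}\dd t\,|\sigma([t]_\nu,\omega_i(t)) - \langle\sigma\rangle(\omega_i(t))|$ and then by $L^p$-Hölder continuity of Brownian paths (Lemma \ref{Heat_kernel_estimates} \ref{itm:heatkernel2}) together with Lemma \ref{v^{epsilon}_properties} \ref{itm:veps2}, picking up a factor $\epsilon^{-(d+1)}\sqrt\nu$ of the type already seen in Lemma \ref{J_approximation}. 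Collecting all $O(p)$ such terms, using Hölder's inequality in $\mu_{\cal C_\eta}$ together with the bound from part \ref{itm:Q1} for the remaining $p-1$ factors, and recalling $\ee^{T^\epsilon}\lesssim\ee^{c\chi(\epsilon)^2}$, yields \eqref{Q_and_q_estimates_ii} with the stated $\nu^{\theta(d,p)}$; the exponent $\theta(d,p) = \frac{1}{4p+4}$ ($d=2$) arises by optimizing the split between the $\epsilon^{-d}$-type losses (which for a crude $\epsilon(\nu)$-independent estimate must be beaten by powers of $\nu$) and the $\nu^{1/4}$ from $\Theta(\nu)^{1/2}$, distributed over the $p+1$ factors $F_2$ and the $p$ path legs.

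The main obstacle I anticipate is bookkeeping the $\epsilon$-dependent constants cleanly: each of the $p$ path-leg replacements and the single $F_2$-replacement contributes its own power of $1/\epsilon$ (from $\|v^\epsilon\|_\infty$ and from the Lipschitz constant of $v^\epsilon$), and one must check that the total never exceeds $\epsilon^{-(5d+1)/2}$ — exactly the worst single term, coming from $\|F_2(\sigma) - f_2(\langle\sigma\rangle)\|_{L^2}$ via Lemmas \ref{J_approximation}--\ref{Riemann_sum_I} — while the path-leg errors each cost only $\epsilon^{-d/2}$ or $\epsilon^{-(d+1)}\sqrt\nu$ and hence are absorbed into $\epsilon^{-dp/2}\cdot\epsilon^{-(5d+1)/2}$ after the crude bound of part \ref{itm:Q1} is applied to the spectator factors. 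The uniformity in $\eta$ is inherited from the corresponding uniformity in Lemmas \ref{J_approximation} and \ref{Riemann_sum_I}. No genuinely new estimate is needed beyond what is already proved for the partition function; the work is in organizing the telescoping and the Hölder splitting so that the $p$-dependence of the exponent $\theta(d,p)$ comes out as in \eqref{theta_definition}.
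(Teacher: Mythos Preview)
There are two genuine gaps.

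\textbf{Part \ref{itm:Q1}.} Your primary argument uses $\bigl|\ee^{\ii\int_0^{r_i}\sigma}-1\bigr|\leq 2$ and then claims
\[
\nu\sum_{r_i\in\nu\N^*}\ee^{-\kappa r_i}\bigl(1+r_i^{-d/2}\bigr)\;\lesssim_\kappa\;1\,.
\]
This is false: for $d=2$ the sum is $\sum_{n\geq 1}n^{-1}\ee^{-\kappa n\nu}\asymp\log\nu^{-1}$, and for $d=3$ it is $\asymp\nu^{-1/2}$. The short-time singularity of the heat kernel is not integrable against the crude bound. The paper instead uses $\bigl|\ee^{\ii\theta}-1\bigr|\leq|\theta|$ followed by Cauchy--Schwarz in $\mu_{\cal C_\eta}$, which produces a factor $\|v^\epsilon\|_{L^\infty}^{1/2}\,r_i$ per leg; this yields
\[
|Y_{\f x,\tilde{\f x}}(\f r)|\;\leq\;C_d^p\,\ee^{T^\epsilon}\,\|v^\epsilon\|_{L^\infty}^{p/2}\,\prod_{i=1}^p\bigl(r_i+r_i^{1-d/2}\bigr)\,,
\]
whose sum over $\f r$ converges. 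What you list parenthetically as an ``alternative'' is in fact required, and is the origin of the factor $\epsilon^{-dp/2}$ in the statement.

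\textbf{Part \ref{itm:Q2}.} The paper's proof does not telescope directly. It introduces an auxiliary truncation $r_i\in[\delta,1/\delta]$ and proceeds in three steps: (1) the truncation error is $O(\epsilon^{-dp/2}\delta^{2-d/2})$; (2) on the truncated region the Riemann-sum and interpolation errors are $O(\epsilon^{-d/2}\delta^{-dp/2}\sqrt{\nu})$; (3) the $F_2\to f_2$ and phase-replacement errors are $O(\epsilon^{-(5d+1)/2}\delta^{-dp/2}\Theta(\nu)^{1/2})$. The exponent $\theta(d,p)$ then arises by \emph{optimizing over $\delta$}: for $d=2$ one balances $\delta$ against $\delta^{-p}\nu^{1/4}$, giving $\delta\sim\nu^{1/(4p+4)}$. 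Your proposal omits this truncation entirely; without it, the spectator factors again face the divergent $r$-sum from the previous paragraph, and your telescoping scheme does not close. Moreover, your explanation that $\theta(d,p)$ comes from an $\epsilon$--$\nu$ tradeoff is incorrect: $\epsilon$ and $\nu$ are independent parameters in the statement (the $\epsilon$-powers remain explicit in \eqref{Q_and_q_estimates_ii}), and the optimization is purely over the internal parameter $\delta$.
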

Furthermore, we show the following lower bound on the classical and quantum relative partition functions.
\begin{lemma}[Lower bound on the relative partition function]
\label{lower_bound}
The following estimates hold for some constant $c_4>0$ depending on $\kappa$.
\begin{enumerate}[label=(\roman*)]
\item \label{itm:lb1}
$\zeta^{W^{\epsilon}} \geq \exp[-c_4 \chi(\epsilon)^2]$.
\item \label{itm:lb2}
$\cal Z \geq \exp[-c_4 \chi(\epsilon)^2]$.
\end{enumerate}
\end{lemma}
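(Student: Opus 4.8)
Both bounds follow from the Gibbs variational (Jensen / Peierls--Bogoliubov) inequality; the essential point is that the counterterms $\tau^\epsilon$ and $E^\epsilon$ built into $W^\epsilon$ in \eqref{W^epsilon} and into $H$ in \eqref{Hamiltonian_H} are exactly what is needed to cancel the mean of the interaction in the corresponding free state. For \ref{itm:lb1}, convexity of $\exp$ and $W^\epsilon\in L^1(\P)$ give $\zeta^{W^\epsilon}=\E[\ee^{-W^\epsilon}]\geq\exp(-\E[W^\epsilon])$, so it suffices to show $\E[W^\epsilon]=0$. Writing $W^\epsilon=\lim_N W^\epsilon_N$ in $L^2(\P)$ with $W^\epsilon_N$ as in \eqref{W^epsilon_N} and using Wick's theorem (Lemma \ref{lem:Wick}), one has $\E[\wick{\abs{\phi_N(x)}^2}]=0$ and $\E[\wick{\abs{\phi_N(x)}^2}\wick{\abs{\phi_N(\tilde x)}^2}]=G_N(x-\tilde x)^2$, whence $\E[W^\epsilon_N]=\frac12\int\dd z\,v^\epsilon(z)\,G_N(z)^2-E^\epsilon$ by translation invariance and $\abs{\Lambda}=1$. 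Passing to $N\to\infty$ (via $G_N\to G$, the uniform bound $\abs{G_N}\lesssim 1+\abs{\log\abs{\cdot}}$ of Lemma \ref{lem:prop_G_N}, and dominated convergence) and recalling the definition \eqref{tau_epsilon} of $E^\epsilon$ yields $\E[W^\epsilon]=\frac12\int\dd z\,v^\epsilon(z)\,G(z)^2-E^\epsilon=0$. Hence $\zeta^{W^\epsilon}\geq 1\geq\exp(-c_4\chi(\epsilon)^2)$.

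For \ref{itm:lb2}, the Peierls--Bogoliubov inequality applied to the splitting $H=H^{(0)}+(H-H^{(0)})$ gives $\cal Z=\tr_{\cal F}(\ee^{-H})/\tr_{\cal F}(\ee^{-H^{(0)}})\geq\exp(-\langle H-H^{(0)}\rangle_0)$, where $\langle X\rangle_0\deq\tr_{\cal F}(X\ee^{-H^{(0)}})/Z^{(0)}$. By \eqref{Hamiltonian_H} we have $H-H^{(0)}=W_{\op{int}}-\tau^\epsilon N_{\op{sh}}-E^\epsilon$ with $W_{\op{int}}\deq\frac12\int\dd x\,\dd\tilde x\,(\nu a^*(x)a(x)-\varrho_\nu)\,v^\epsilon(x-\tilde x)\,(\nu a^*(\tilde x)a(\tilde x)-\varrho_\nu)$ and $N_{\op{sh}}\deq\int\dd x\,(\nu a^*(x)a(x)-\varrho_\nu)$. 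Since $v^\epsilon$ is of positive type, $W_{\op{int}}\geq 0$; since $G\geq 0$ and $v^\epsilon$ is of positive type, $E^\epsilon=\frac12\int\dd z\,v^\epsilon(z)G(z)^2\geq 0$; and by translation invariance of $\langle\cdot\rangle_0$ with the definition \eqref{rho_nu_definition} of $\varrho_\nu$ one has $\langle N_{\op{sh}}\rangle_0=0$. Therefore $\langle H-H^{(0)}\rangle_0\leq\langle W_{\op{int}}\rangle_0$, and by the Wick theorem for the quasi-free state $\langle\cdot\rangle_0$ (see \cite[Lemma 2.9]{FKSS_2020}), using $\nu\langle a^*(0)a(0)\rangle_0=\varrho_\nu$,
\[
\langle W_{\op{int}}\rangle_0=\frac12\int\dd z\,v^\epsilon(z)\,G_\nu(z)^2+\frac12\,v^\epsilon(0)\,\nu\varrho_\nu,
\]
where $G_\nu$ denotes the kernel of $\nu\Gamma^{(0)}_1=\nu(\ee^{\nu(\kappa-\Delta/2)}-1)^{-1}$.

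It remains to bound the right-hand side by $\chi(\epsilon)^2$. The contact term satisfies $v^\epsilon(0)\nu\varrho_\nu\lesssim_v\epsilon^{-d}\nu\varrho_\nu$ (Lemma \ref{v^{epsilon}_properties}\,\ref{itm:veps1}), and since $\varrho_\nu$ grows at most like $\log\nu^{-1}$ for $d=2$ and like $\nu^{-1/2}$ for $d=3$, this vanishes as $\nu\to 0$ under the constraint \eqref{epsilon_lower_bound} (cf.\ \eqref{epsilon_lower_bound_2}), hence is $\lesssim\chi(\epsilon)^2$. For the first term, the crucial input is the uniform-in-$\nu$ pointwise estimate $0\leq G_\nu(z)\lesssim_\kappa 1+\abs{\log\abs{z}_\Lambda}$ for $d=2$, and $\lesssim_\kappa 1+\abs{z}_\Lambda^{-1}$ for $d=3$ — i.e.\ that the rescaled free one-particle density matrix has no worse a singularity than the Green function $G$. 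Granting this, scaling $z=\epsilon y$ and using $\norm{v^\epsilon}_{L^1(\Lambda)}\lesssim_v 1$ together with the integrability near the origin of $(1+\abs{\log\abs{y}})^2$ (resp.\ of $(1+\abs y^{-1})^2$ in $d=3$), one obtains $\int\dd z\,\abs{v^\epsilon(z)}\,G_\nu(z)^2\lesssim_v 1+\chi(\epsilon)^2$. Altogether $\langle H-H^{(0)}\rangle_0\lesssim_{\kappa,v}\chi(\epsilon)^2$, which proves \ref{itm:lb2}. (Alternatively one can bypass the $G_\nu$-estimate: since Proposition \ref{Partition_function_rate_of_convergence} is available, $\cal Z\geq\zeta^{W^\epsilon}-\abs{\cal Z-\zeta^{W^\epsilon}}\geq 1-C_{\kappa,v}\ee^{C_1\chi(\epsilon)^2}\Theta^{1/2}(\nu)\geq\tfrac12$ for $\nu$ small, by \ref{itm:lb1} and \eqref{epsilon_lower_bound_2}.)

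The main obstacle is the uniform pointwise bound on the kernel $G_\nu$ of $\nu\Gamma^{(0)}_1$, i.e.\ that rescaling the free Gibbs one-particle density matrix by $\nu$ reproduces at most the singularity of $G$ uniformly in $\nu$; this can be extracted from the heat-kernel representation $G_\nu(z)=\nu\sum_{n\geq 1}\ee^{-n\nu\kappa}\psi^{n\nu}(z)$ by comparing it, as a Riemann sum, with $G(z)=\int_0^\infty\ee^{-t\kappa}\psi^t(z)\,\dd t$ and invoking the heat-kernel estimates of Lemma \ref{Heat_kernel_estimates}. Everything else is routine bookkeeping with the explicit counterterms and Gaussian / quasi-free Wick calculus.
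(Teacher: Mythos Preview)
Your argument is correct and follows the same overall strategy as the paper (Jensen for \ref{itm:lb1}, Peierls--Bogoliubov for \ref{itm:lb2}), with a couple of noteworthy differences.

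For \ref{itm:lb1} you observe the exact cancellation $\E[W^\epsilon]=0$, which is sharper than what the paper writes: there the quartic term and $E^\epsilon$ are bounded separately by $\chi(\epsilon)^2$ rather than observed to cancel. Your version is cleaner and yields $\zeta^{W^\epsilon}\geq 1$.

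For \ref{itm:lb2} the paper does not use a pointwise bound on $G_\nu$; instead it proves the $L^2$ estimate $\|G_{[\nu]}-G\|_{L^2}\lesssim_\kappa \nu^\alpha$ for $\alpha\in(0,1-d/4)$ (by interpolating $\abs{\frac{\nu}{\ee^{\nu\lambda_k}-1}-\frac{1}{\lambda_k}}\leq \nu^{1-\delta}\lambda_k^{-\delta}$), then uses H\"older with $\|v^\epsilon\|_{L^\infty}\lesssim\epsilon^{-d}$ to get $\int|v^\epsilon|G_{[\nu]}^2\lesssim\int|v^\epsilon|G^2+\nu^\alpha\epsilon^{-d}\lesssim\chi(\epsilon)^2$ under \eqref{epsilon_lower_bound}. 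Your pointwise heat-kernel route works too and gives the same bound. One small slip: your justification of $E^\epsilon\geq 0$ (``$G\geq 0$ and $v^\epsilon$ of positive type'') is not quite right, since $v^\epsilon$ need not be pointwise nonnegative; the correct argument is that $\widehat{G^2}=\widehat G*\widehat G\geq 0$ and $\widehat{v^\epsilon}\geq 0$, so Plancherel gives $E^\epsilon\geq 0$. In any case this is inessential, since $|E^\epsilon|\lesssim\chi(\epsilon)^2$ by \eqref{tau(epsilon)_bound}.

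Your alternative route for \ref{itm:lb2} via Proposition~\ref{Partition_function_rate_of_convergence} is legitimate (that proposition is proved independently of Lemma~\ref{lower_bound}) and is arguably the simplest path.
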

We prove Lemmas \ref{Q_and_q_estimates} and \ref{lower_bound} in Appendix \ref{Appendix C}.
Using Lemmas \ref{Q_and_q_estimates} and \ref{lower_bound}, we now prove Proposition \ref{Correlation_functions_rate_of_convergence}.

\begin{proof}[Proof of Proposition \ref{Correlation_functions_rate_of_convergence}]

By Lemmas \ref{Partition_function_rate_of_convergence_1}, \ref{hat_Gamma_fr}, and \ref{hat_gamma_fr} it suffices to estimate
\begin{equation}
\label{ratio_estimate}
\lim_{\eta \rightarrow 0}\, \biggl\|\frac{Q_{p,\eta}}{\cal Z_{\eta}}-\frac{q^{\epsilon}_{p}}{\zeta^{W^\epsilon}}\biggr\|_{C} 
\leq \frac{\bigl|\cal Z-\zeta^{W^\epsilon}\bigr|}{\cal Z\,\zeta^{W^\epsilon}}\,\lim_{\eta \rightarrow 0}\,\|Q_{p,\eta}\|_{C}+\frac{1}{\zeta^{W^\epsilon}}\,\lim_{\eta \rightarrow 0}\,\|Q_{p,\eta}-q^{\epsilon}_{p}\|_{C}\,.
\end{equation}
The claim now follows from \eqref{ratio_estimate} by using Proposition \ref{Partition_function_rate_of_convergence}, Lemma \ref{Q_and_q_estimates} \ref{itm:Q1} and Lemma \ref{lower_bound} to estimate the first term and Lemma \ref{Q_and_q_estimates} \ref{itm:Q2} combined with Lemma \ref{lower_bound} \ref{itm:lb1} to estimate the second term. Throughout we recall \eqref{epsilon_lower_bound}, and we obtain the claim if we take $C_2>C_1+c_3+2c_4$.
\end{proof} 

\subsection{The mean-field limit for unbounded nonlocal interactions in dimensions $d=2,3$}
\label{Extension of results to unbounded nonlocal interactions}

We conclude this section by using the techniques developed above to extend the mean-field limit of \cite{LNR3, FKSS_2020}, with nonlocal interaction, from bounded interaction potentials to unbounded interaction potentials. Our assumptions on the potential are the same as in the seminal work \cite{bourgain1997invariant}. Previously, the mean-field limit with unbounded interaction potentials was considered in \cite{sohinger2019microscopic}, however with a modified, regularized, quantum many-body state instead of the grand canonical state \eqref{gc_density}. We remark that the results in \cite{bourgain1997invariant} are originally stated in a setting that does not assume any positivity of the interaction, and hence require a truncation in the Wick-ordered mass of the field. However, when restricted to positive (defocusing) interactions, the truncation can be removed.

\begin{assumption}
\label{v_assumption_L^q}
In the classical setting, we consider $v \in L^q(\Lambda)$ which is even, real-valued, and of positive type, such that 
\begin{equation}
\label{v_assumption_L^q_eq}
\begin{cases}
q>1 & \text{if } d=2
\\
q>3 & \text{if } d=3\,.
\end{cases}
\end{equation}
\end{assumption}
Note that, in terms of $L^q$ integrability, \eqref{v_assumption_L^q_eq} is the optimal range for $q$. We refer the reader to \cite{bourgain1997invariant} and \cite[Section 1.4]{sohinger2019microscopic} for a further discussion. In particular, for $d = 2$ we can take $v$ to be the Coulomb potential.
For $q$ as in \eqref{v_assumption_L^q_eq}, one verifies that $G \in L^{2q'}(\Lambda)$ (where $q'$ denotes the H\"{o}lder conjugate of $q$) by writing $G$ as a Fourier series and using Sobolev embedding.

With $v$ as above, we study the interacting field theory \eqref{field theory}, where now
\begin{equation}
\label{zeta_nonlocal}
V=\frac{1}{2} \int \dd x\,\dd y\, \wick{|\phi(x)|^2}\,v(x-y)\,\wick{|\phi(y)|^2}\,.
\end{equation}
The interaction $V$ is rigorously defined by using a frequency truncation, as in Section \ref{sec:classical_field}. We refer the reader to \cite[Lemma 1.4]{sohinger2019microscopic} for a precise summary.
We also make the appropriate modifications in the definition of the correlation functions.

Similarly as in \eqref{approximate_delta_function}, we consider $\psi \in C_c^{\infty}(\R^d)$ even, nonnegative and satisfying $\psi(0)=1$ and for $\epsilon>0$, we define $\delta_{\epsilon}: \Lambda \rightarrow \C$ by
\begin{equation}
\label{delta_{epsilon,lambda}}
\delta_{\epsilon,\Lambda}(x) \deq \frac{1}{\epsilon^d} \sum_{y \in \Z^d} (\fra F^{-1} \psi) \biggl(\frac{x-y}{\epsilon}\biggr)\,,
\end{equation}
where $\fra F$ denotes Fourier transform (see \eqref{Fourier}).
With notation as in \eqref{delta_{epsilon,lambda}} and $v$ as in Assumption \ref{v_assumption_L^q}, we now write
\begin{equation}
\label{v_regularized}
v^{\epsilon} \deq v*\delta_{\epsilon,\Lambda} \in C^{\infty}(\Lambda)\,.
\end{equation}
From \eqref{v_regularized}, we deduce that 
\begin{equation}
\label{v_regularized_b}
\|v^{\epsilon}\|_{L^q}=\|v\|_{L^q}\,,\qquad v^{\epsilon} \stackrel{L^q}{\longrightarrow} v\,.
\end{equation}

When working in the quantum setting, we hence set $\tau^\epsilon = 0$ and $E^{\epsilon}=0$ in \eqref{Hamiltonian_H}.
In this section, instead of \eqref{epsilon_lower_bound}, we take 
\begin{equation}
\label{epsilon_bound_nonlocal}
\epsilon(\nu) \gtrsim \frac{1}{\log \nu^{-1}}\,.
\end{equation}

\begin{theorem} 
\label{theorem_non_local}
Let $d \leq 3$.
With $V$ as in \eqref{zeta_nonlocal}, $v$ as in Assumption \ref{v_assumption_L^q},  and  $\epsilon>0$ as in \eqref{epsilon_bound_nonlocal}, we have the following results as $\epsilon, \nu \to 0$.
\begin{enumerate}[label=(\roman*)]
\item \label{itm:nonlocal1}
$\cal Z \to \zeta^V$.
\item \label{itm:nonlocal2}
$\nu^p \, \wh \Gamma_p \overset{C}{\longrightarrow} \wh \gamma_p^{V}$.
\end{enumerate}
\end{theorem}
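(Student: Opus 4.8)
The plan is to follow the same two-step architecture used in the proof of Theorem \ref{thm:main}, but with the local interaction $V$ replaced throughout by the nonlocal interaction \eqref{zeta_nonlocal}, and with the renormalization counterterms switched off ($\tau^\epsilon = E^\epsilon = 0$). Concretely, I would establish two analogues of Propositions \ref{prop:step1} and \ref{prop:step2}: first, that $\cal Z - \zeta^{V^\epsilon} \to 0$ and $\norm{\nu^p \wh\Gamma_p - \wh\gamma_p^{V^\epsilon}}_C \to 0$ as $\nu,\epsilon\to 0$ under \eqref{epsilon_bound_nonlocal} (where now $V^\epsilon = \frac12 \int \dd x\,\dd y\, \wick{|\phi(x)|^2}\, v^\epsilon(x-y)\,\wick{|\phi(y)|^2}$ with $v^\epsilon$ as in \eqref{v_regularized}); and second, that $\zeta^{V^\epsilon} \to \zeta^V$ and $\norm{\wh\gamma_p^{V^\epsilon} - \wh\gamma_p^V}_C \to 0$ as $\epsilon \to 0$. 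Combining these gives \ref{itm:nonlocal1} and \ref{itm:nonlocal2}.

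For the first (quantitative, mean-field) step, I would rerun the functional integral representation of Section \ref{The rate of convergence} with $\tau^\epsilon = E^\epsilon = 0$, so that $T^\epsilon = 0$ and the oscillatory phase with diverging prefactor disappears entirely. The only place where $\epsilon$ now enters the error estimates is through the bounds on $v^\epsilon$: here, instead of Lemma \ref{v^{epsilon}_properties}, one uses that $\norm{v^\epsilon}_{L^q} = \norm{v}_{L^q}$ is \emph{uniform} in $\epsilon$ together with the crude Young-type bounds $\norm{v^\epsilon}_{L^\infty} \lesssim \epsilon^{-d}$ and $\abs{v^\epsilon(x) - v^\epsilon(y)} \lesssim \epsilon^{-d-1}\abs{x-y}_\Lambda$ (which follow from \eqref{v_regularized} and smoothness of $\delta_{\epsilon,\Lambda}$, exactly as in Remark \ref{v_assumptions}). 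The Riemann-sum and path-continuity estimates of Lemmas \ref{J_approximation}, \ref{Riemann_sum_I}, \ref{Q_and_q_estimates}, and \ref{lower_bound} then go through verbatim, producing errors of the form $\epsilon^{-C} \nu^{c}$ for fixed constants $C,c>0$; the condition \eqref{epsilon_bound_nonlocal}, being polynomial rather than stretched-exponential, kills these since $\epsilon^{-C}\nu^c \lesssim (\log\nu^{-1})^C \nu^c \to 0$. This is where the weaker hypothesis on $\epsilon$ suffices: there is no $\ee^{C\chi(\epsilon)^2}$ factor to fight against.

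For the second (purely classical) step, I would mimic Section \ref{sec:field_theory}, but the analysis is substantially easier because the nonlocal $V^\epsilon$ (and $V$) is almost surely nonnegative, so integrability of $\ee^{-V^\epsilon}$ and $\ee^{-V}$ is immediate and uniform in $\epsilon$ — no Nelson argument is needed. What remains is: (a) an $L^2(\P)$ bound $\norm{V^\epsilon - V}_{L^2(\P)} \to 0$, obtained as in Lemma \ref{lem:conv_V} by Wick's theorem, now using $v^\epsilon \to v$ in $L^q$ and $G \in L^{2q'}$ (with $q'$ the conjugate exponent), together with Young's and Hölder's inequalities to control the resulting convolution integrals; (b) the consequence $\norm{\ee^{-V} - \ee^{-V^\epsilon}}_{L^p(\P)} \to 0$ via $|\ee^{-a} - \ee^{-b}| \le |a-b|$ when $a,b\ge 0$ and Hölder, giving $\zeta^{V^\epsilon} \to \zeta^V$; and (c) convergence of correlation functions via the Malliavin-type integration-by-parts representation of Lemma \ref{lem:IBP_representation} and an analogue of Lemma \ref{lem:derivatives-convergence}, where each $L_{N,z}$-derivative of $V^\epsilon - V$ again reduces by Wick's theorem to convolution integrals controlled by $\norm{v^\epsilon - v}_{L^q}$ and powers of $G$ in $L^{2q'}$. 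The main obstacle is bookkeeping in part (c): one must check that every contraction pattern arising from differentiating $V^\epsilon$ up to $2p$ times produces integrands lying in a product of $L^q$ (for the potential factors) and $L^{2q'}$ (for the Green-function factors) spaces with exponents summing correctly, so that Hölder applies — this is where the sharpness of the exponent condition \eqref{v_assumption_L^q_eq} is used, precisely as in \cite{bourgain1997invariant, sohinger2019microscopic}. Once the uniform-in-$N$ bounds of the derivative lemma are in place, the Cauchy-sequence and telescoping arguments of the proof of \eqref{convergence_gamma_field} carry over without change.
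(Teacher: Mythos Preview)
Your proposal is correct and follows essentially the same two-step architecture as the paper's proof. One minor refinement: the paper uses the sharper Young bound $\norm{v^\epsilon}_{L^\infty} \lesssim \epsilon^{-d/q}$ and $\norm{\nabla v^\epsilon}_{L^\infty} \lesssim \epsilon^{-d/q-1}$ (from $\norm{v}_{L^q}\norm{\delta_{\epsilon,\Lambda}}_{L^{q'}}$) rather than your cruder $\epsilon^{-d}$ and $\epsilon^{-d-1}$, but under \eqref{epsilon_bound_nonlocal} either suffices since both yield polylogarithmic-in-$\nu^{-1}$ prefactors that are killed by $\nu^{\theta(d,p)}$.
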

\begin{proof}
We first prove \ref{itm:nonlocal1}.
If $v \in L^q(\Lambda)$, we use \eqref{v_regularized} and Young's inequality, to deduce that for $\epsilon>0$ sufficiently small, we have
\begin{equation}
\label{v_regularized_bound}
\|v^{\epsilon}\|_{L^{\infty}(\Lambda)} \lesssim_{\psi,q,v} \epsilon^{-\frac{d}{q}}\,,\qquad \|\nabla v^{\epsilon}\|_{L^{\infty}(\Lambda)} \lesssim_{\psi,q,v} \epsilon^{-\frac{d}{q}-1}\,.
\end{equation}
Furthermore, $v^{\epsilon}$ is of positive type and has compactly supported Fourier transform.
We use the functional integral setup as before and start from appropriate analogues of Lemmas \ref{Partition_function_rate_of_convergence_1} and \ref{Partition_function_rate_of_convergence_2}. Using \eqref{v_regularized_bound} instead of Lemma \ref{v^{epsilon}_properties}, and arguing as for  Lemmas \ref{Partition_function_rate_of_convergence_3}--\ref{Partition_function_rate_of_convergence_5}, we deduce that for $\epsilon>0$ sufficiently small and all $\nu>0$, we have
\begin{equation}
\label{nonlocal_bound_1}
\Bigl|\cal Z-\cal \zeta^{V^\epsilon}\Bigr| \lesssim \Biggl(\frac{\Theta(\nu)}{\epsilon^{\frac{5d}{q}+1}}\Biggr)^{1/2}\,,
\end{equation}
where $V^{\epsilon}$ denotes the interaction as in \eqref{zeta_nonlocal} with $v$ replaced by $v^{\epsilon}$, and $\Theta(\nu)$ is as in \eqref{Theta_function}. By \eqref{epsilon_bound_nonlocal}, this is an acceptable upper bound and we reduce the claim to showing that 
\begin{equation}
\label{zeta_epsilon_convergence}
\lim_{\epsilon \rightarrow 0} \zeta^{V^\epsilon}=\zeta^V\,. 
\end{equation}
Using the assumption that $v,v^{\epsilon}$ are of positive type and the Cauchy-Schwarz inequality, we obtain
\begin{equation}
\label{zeta_difference}
|\zeta^{\epsilon}-\zeta| \leq \|V^{\epsilon}-V\|_{L^2(\P)}\,.
\end{equation}

We now show that 
\begin{equation}
\label{nonlocal_bound_2}
\|V^{\epsilon}-V\|_{L^2(\P)} \lesssim 
\|v^{\epsilon}-v\|_{L^q}\,. 
\end{equation}
By \eqref{v_regularized_b}, we have that \eqref{nonlocal_bound_2} indeed implies \eqref{zeta_epsilon_convergence}.

By using Fubini's theorem followed by Wick's theorem, we can rewrite the right-hand side of \eqref{zeta_difference} as 
\begin{equation}
\label{nonlocal_bound_3}
\frac{1}{2} \, \biggl(\int \dd x\, \dd 
\tilde x\, \dd y\, \dd \tilde y \,F^{\epsilon}(x,\tilde x,y, \tilde y) \biggr)^{1/2}\,,
\end{equation}
where 
\begin{equation}
\label{F^epsilon}
F^{\epsilon}(x,\tilde x,y, \tilde y) \deq \sum_{\Pi \in \cal M_c^{\mathrm{Wick}}(B)} \prod_{\{i,j\} \in \Pi} G(x_i-x_j)\, \pb{v^{\epsilon}(x-y)-v(x-y)}\,\pb{v^{\epsilon}(\tilde x- \tilde y)-v(\tilde x- \tilde y)}\,.
\end{equation}
In \eqref{F^epsilon}, we let $B \deq \{1,2\} \times \{1,2\} \times \{+,-\}$. Furthermore, we use the variables 
$x_{(1,1,\pm)} \equiv x,\, x_{(2,1,\pm)} \equiv 
\tilde x,\, x_{(1,2,\pm)} \equiv y,\, x_{(2,2,\pm)} \equiv
\tilde y$ and denote by $\cal M_c^{\mathrm{Wick}}(B)$ the set of all complete pairings $\Pi$ of $B$ where 
$\{(a,b,+),(a,b,-)\} \notin \Pi$ for all $a,b \in \{1,2\}$. 

We note that each integration variable in \eqref{nonlocal_bound_3} appears exactly once as part of the argument of $v^{\epsilon}-v$ and exactly twice as part of an argument of a Green function.
We can therefore apply H\"{o}lder's inequality and deduce \eqref{nonlocal_bound_2}. Here,
we use that, uniformly in $c,d \in \Lambda$
\begin{equation}
\label{Holder_estimate}
\int \dd x\, |w(x)|\,\pb{1+G(x-c)}\,\pb{1+G(x-d)} \leq \|w\|_{L^q} \,\pb{1+\|G\|_{L^{2q'}}^2} \lesssim \|w\|_{L^q}\,,
\end{equation}
with $w=v^{\epsilon}-v$, since $G \in L^{2q'}(\Lambda)$. 
When applying \eqref{Holder_estimate}, we recall \eqref{v_regularized_b}.

We now prove \ref{itm:nonlocal2}. With notation defined analogously as in \eqref{Gamma_hat_quantum}--\eqref{q_classical}, we use \eqref{v_regularized_bound} and argue as in the proof of Lemma \ref{Q_and_q_estimates} to deduce that for all $\epsilon>0$ and uniformly in $\eta>0$
\begin{equation}
\label{nonlocal_bound_ii_1}
\|Q_{p,\eta}\|_{C} \leq C_{\kappa,v}^p \,\biggl(\frac{1}{\epsilon^{d/q}}\biggr)^{p/2}\,, \quad 
\|Q_{p,\eta}-q^{\epsilon}_{p}\|_{C} \leq C_{\kappa,v}^p \, \Biggl(\biggl(\frac{1}{\epsilon^{d/q}}\biggr)^{p/2}+\biggl(\frac{1}{\epsilon^{5d/q+1}}\biggr)^{1/2}\Biggr)\,\nu^{\theta(d,p)}\,,
\end{equation}
with $\theta(d,p)$ given as in \eqref{theta_definition}.
Similarly, arguing as in the proof of Lemma \ref{lower_bound}, using Wick's theorem we have that for all $\epsilon>0$
\begin{equation}
\label{nonlocal_bound_ii_2}
\zeta^{V^{\epsilon}} \gtrsim 1\,,\qquad \cal Z \geq \exp \biggl[-c\biggl(1+\frac{\nu \chi(\sqrt{\nu})}{\epsilon^{d/q}}\biggr)\biggr]\,,
\end{equation}
with $\chi$ as in \eqref{sigma(epsilon)}. In order to obtain \eqref{nonlocal_bound_ii_2} we used \eqref{Holder_estimate}. Using \eqref{nonlocal_bound_ii_1}--\eqref{nonlocal_bound_ii_2}, arguing as in the proof of Proposition \ref{Correlation_functions_rate_of_convergence}, and recalling \eqref{epsilon_bound_nonlocal}, we deduce that 
\begin{equation}
\label{nonlocal_bound_ii_3_A}
\Bigl\|\nu^p \, \wh{\Gamma}_{p}-\wh{\gamma}_{p}^{V^{\epsilon}}\Bigr\|_{C} \lesssim \nu^{\alpha}\,,
\end{equation}
for $0<\alpha<\theta(d,p)$.
Hence, we reduce to showing 
\begin{equation}
\label{nonlocal_bound_ii_3}
\lim_{\epsilon \rightarrow 0} \Bigl\|\wh{\gamma}_{p}^{V^{\epsilon}}-\wh{\gamma}_{p}^V\Bigr\|_{C}=0\,.
\end{equation}
By \eqref{nonlocal_bound_2}, we have that $\normb{\ee^{-V^{\epsilon}} - \ee^{-V}}_{L^2(\P)} \to 0$ as $\epsilon \to 0$. Therefore, we obtain \eqref{nonlocal_bound_ii_3} if we prove bounds analogous to those in Lemma  \ref{lem:derivatives-convergence} (with $W^{\epsilon}$ replaced by $V^{\epsilon}$). More precisely, we note that the following claims hold, with $V$ as in \eqref{zeta_nonlocal} and $V^{\epsilon}$ as in \eqref{zeta_nonlocal} with $v$ replaced by $v^{\epsilon}$ and notation as in Lemma \ref{lem:derivatives-convergence}.

\begin{enumerate}[label=(\roman*)]
\item \label{lab:der1_q}
$\sup_N \sup_{\f z} \norm{\cal L_{N, \f z}V}_{L^r(\P)} < \infty$.
\item \label{lab:der2_q}
As $\epsilon \to 0$ we have $\sup_N \sup_{\f z} \norm{\cal L_{N, \f z} V^\epsilon - \cal L_{N, \f z} V}_{L^r(\P)} \to 0$.
\item \label{lab:der3_q}
As $M,N \to \infty$ we have $\sup_{\f z} \norm{(\cal L_{N, \f z} - \cal L_{M, \f z}) V}_{L^r(\P)} \to 0$.
\item \label{lab:der4_q}
For any $\epsilon > 0$, as $M,N \to \infty$ we have $\sup_{\f z} \norm{(\cal L_{N, \f z} - \cal L_{M, \f z}) V^\epsilon}_{L^r(\P)} \to 0$.
\end{enumerate}

We first show \ref{lab:der2_q}. For fixed $0<N<\infty$ and $z \in \Lambda$, we note that by Lemma \ref{lem:L_phi} 
\begin{equation*}
L_{N,z} (V^\epsilon - V) = \int \dd x \, \dd \tilde x \, \pb{v^\epsilon(x - \tilde x) - v(x - \tilde x)} \,
\, G_N(z - x) \,\phi(x)\,\wick{|\phi(\tilde x)|^2}
\end{equation*}
and hence by Lemma \ref{lem:Wick}, we deduce that 
\begin{multline}
\label{(ii)_q}
\normb{L_{N,z} (V^\epsilon - V)}_{L^2(\P)}^2 = \int \dd x \, \dd \tilde x \, \dd y \, \dd \tilde y \,\bigl[v^\epsilon(x - \tilde x) - v(x - \tilde x)\bigr] \,\bigl[v^\epsilon(y - \tilde y) - v(y - \tilde y)\bigr] \,G_N(z-x)\,G_N(z-y)
\\
\times \bigl[G(x-y)\,G^2(\tilde x - \tilde y) + G(x-\tilde x)\,G(\tilde x - \tilde y)\,G(y-\tilde y) + G(x - \tilde y)\, G(\tilde x-\tilde y)\, G(y-\tilde x)\bigr]\,.
\end{multline}
Analogously as in \eqref{nonlocal_bound_3}, each integration variable in \eqref{(ii)_q} appears exactly once as part of the argument of $v^{\epsilon}-v$ and exactly twice as part of the argument of $G$ or $G_N$. We then use H\"{o}lder's inequality as in \eqref{Holder_estimate} where some of the factors of $G$ can be replaced by $G_N$ and note that $\|G_N\|_{L^{2q'}} \leq \|G\|_{L^{2q'}}<\infty$, which holds by \eqref{def_G_N} and Young's inequality. Putting everything together, we get that 
\begin{equation*}
\eqref{(ii)_q} \lesssim \|v^{\epsilon}-v\|_{L^q}^2\,,
\end{equation*}
which is an acceptable bound by \eqref{v_regularized_b}. The other terms for \ref{lab:der2_q} are treated similarly.
We omit the details. 
The proof of \ref{lab:der1_q} is analogous, except that now we apply H\"{o}lder's inequality similarly as in \eqref{Holder_estimate} with $w=v$.

We now show  \ref{lab:der3_q}. We first compute 
\begin{equation*}
(L_{N,z} - L_{M,z}) V = \int \dd x \,\dd \tilde x\,v(x-\tilde x)\, (G_N(x - z) - G_M(x - z)) \phi(x) \wick{\abs{\phi(\tilde x)}^2}\,.
\end{equation*}
Hence, by Lemma \ref{lem:Wick}, we have
\begin{multline}
\label{(iii)_q}
\normb{(L_{N,z}-L_{M,z})V}_{L^2(\P)}^2=
\int \dd x\,\dd \tilde x\,\dd y\,\dd \tilde y\,
v(x-\tilde x)\,v(y-\tilde y)\,
\bigl[G_N(z-x)-G_M(z-x)\bigr]\,
\\
\times \bigl[G_N(z-y)-G_M(z-y)\bigr]\,\bigl[G(x-y)\,G^2(\tilde x-\tilde y)+G(x-\tilde x)\,G(\tilde x-\tilde y)\,G(y-\tilde y)
\\
+G(x-\tilde y)\,G(\tilde x-\tilde y)\,G(y-\tilde x)\bigr]
\end{multline}
Since each integration variable occurs exactly once as part of an argument of $v$ and exactly twice as a part of an argument of $G$ or $G_N-G_M$, we can use H\"{o}lder's inequality as earlier to deduce that 
\begin{equation*}
\eqref{(iii)_q} \lesssim \|v\|_{L^q}^2\,\|G_N-G_M\|_{L^{2q'}}^2\,(1+\|G\|_{L^{2q'}}^3)\,,
\end{equation*}
which is an acceptable upper bound by using \eqref{G_N_est2} as in the proof of Lemma \ref{lem:derivatives-convergence} \ref{lab:der3} .The higher order derivatives are estimated in the same way.
The proof of \ref{lab:der4_q} is analogous, except that we now apply apply H\"{o}lder's inequality similarly as in \eqref{Holder_estimate} with $w=v^{\epsilon}$ and recall \eqref{v_regularized_b}. We hence obtain \eqref{nonlocal_bound_ii_3}. 
\end{proof}

\begin{remark}
\label{weak_operator_topology}
We note that if we relax the topology of convergence in \eqref{nonlocal_bound_ii_3} (and hence in Theorem \ref{theorem_non_local} \ref{itm:nonlocal2}) to the weak operator topology, we can obtain the result by using the first bound in \eqref{nonlocal_bound_ii_2}, and the Cauchy-Schwarz inequality, similarly as in \eqref{zeta_difference}. This applies to the case $(d,q)=1$ as well. We refer the reader to the proof of \cite[Proposition 4.4]{FKSS_2020} for details.
\end{remark}

\begin{remark}
\label{endpoint_admissible}
One can also consider $v \in L^1(\Lambda)$ which is even, real-valued, and of positive type with suitable decay on its Fourier coefficients, see \cite[(16)-(17)]{bourgain1997invariant}. For $d=3$, the assumption in \cite{bourgain1997invariant} is that $\hat{v}(k) \leq \frac{C}{\langle k \rangle^{2+\delta}}$ for some $\delta>0$, which is covered by Theorem \ref{theorem_non_local} above by the Hausdorff-Young inequality (in the classical setting, the decay assumption was recently relaxed in \cite{deng2021invariant}). For $d=2$, the assumption in \cite{bourgain1997invariant} is that $\hat{v}(k) \leq \frac{C}{\langle k \rangle^{\delta}}$ for some $\delta>0$. Note that this corresponds the \emph{endpoint admissible} regime in the terminology of \cite[Definition 1.2 and Section 4]{sohinger2019microscopic}, except that we do not assume pointwise nonnegativity of $v$. Here, it is possible to prove convergence of the partition function (and consequently the convergence of the correlation functions in the weak operator topology as in Remark \ref{weak_operator_topology} above). We present the details in Appendix \ref{Appendix C}.
\end{remark}

\appendix

\section{Wick ordering and hypercontractive moment bounds} \label{sec:wick}

In this appendix we recall some standard facts about Wick ordering and hypercontractive estimates. We refer e.g.\ to \cite{nualart2006malliavin} for a comprehensive account. For the convenience of the reader, we keep this appendix self-contained.

\subsection{Wick ordering} \label{sec:Wick}
Let $\xi = (\xi_1, \dots, \xi_n)$ be a real\footnote{When dealing with complex vectors, we split them into their real and imaginary parts.} Gaussian vector with mean zero. We define the \emph{Wick ordering of the monomial} $\xi_1 \cdots \xi_n$ through
\begin{equation} \label{def_Wick}
\wick{\xi_1 \cdots \xi_n} = \frac{\partial^n}{\partial \lambda_1 \cdots \partial \lambda_n} \frac{\ee^{\lambda \cdot \xi}}{\E [\ee^{\lambda \cdot \xi}]} \biggr \vert_{\lambda = 0}\,.
\end{equation}
The expectation is equal to
\begin{equation*}
\E[ \ee^{\lambda \cdot \xi}] = \ee^{\frac{1}{2} \lambda \cdot \cal C \lambda}\,,
\end{equation*}
where
\begin{equation*}
\cal C_{ij} \deq \E [\xi_i \xi_j]
\end{equation*}
is the covariance matrix of $\xi$. 
Computing the derivatives explicitly, we hence find
\begin{equation} \label{wick_expanded}
\wick{\xi_1 \cdots \xi_n} = \sum_{\Pi \in \fra M([n])} \prod_{i \in [n] \setminus [\Pi]} \xi_i \prod_{\{i,j\} \in \Pi} \p{- \E [\xi_i \xi_j]}\,,
\end{equation}
where we defined $[n] \deq \{1, \dots, n\}$, and $\fra M([n])$ is the set of partial pairings of the set $[n]$ (i.e.\ a set of disjoint unordered pairs of elements of $[n]$) with $[\Pi] \deq \bigcup_{\{i,j\} \in \Pi} \{i,j\}$. Since both sides of \eqref{wick_expanded} are linear in $\xi_1, \dots, \xi_n$, we can extend \eqref{wick_expanded} to a complex Gaussian vector $(\xi_1, \dots, \xi_n)$.

By splitting the summation in \eqref{wick_expanded} over $\Pi$ satisfying $n \notin [\Pi]$ and $n \in [\Pi]$, we obtain the recursion
\begin{equation} \label{Wick_recursion}
\wick{\xi_1 \cdots \xi_n} = \wick{\xi_1 \cdots \xi_{n-1}} \, \xi_n - \sum_{i = 1}^{n-1} \E[\xi_i \xi_n] \, \wick{\xi_1 \cdots \xi_{i-1} \xi_{i+1} \cdots \xi_{n-1}}\,.
\end{equation}
Moreover, from the definition \eqref{def_Wick} we find that Wick ordering commutes with differentiation, since
\begin{equation} \label{Wick_derivative}
\frac{\partial}{\partial \xi_1} \wick{\xi_1 \cdots \xi_n} =  \frac{\partial^n}{\partial \lambda_1 \cdots \partial \lambda_n}  \frac{\lambda_1 \ee^{\lambda \cdot \xi}}{\E [\ee^{\lambda \cdot \xi}]} \biggr \vert_{\lambda = 0}
=  \frac{\partial^{n-1}}{\partial \lambda_2 \cdots \partial \lambda_n}  \frac{\ee^{\sum_{i = 2}^n \lambda_i \xi_i}}{\E [\ee^{\sum_{i = 2}^n \lambda_i \xi_i}]} \biggr \vert_{\lambda = 0} = \wick{\xi_2 \cdots \xi_n}\,.
\end{equation}
Note that the variables $\xi_1, \dots, \xi_n$ are treated as independent for the differentiation, although they need not be stochastically independent. For instance, \eqref{Wick_derivative} implies that $\frac{\dd}{\dd \xi} \,\wick{\xi^n} =  \wick{ n \xi^{n-1}}\,$.

The following is a generalization of Wick's rule to moments of Wick-ordered monomials.
\begin{lemma} \label{lem:Wick}
Let $\xi = (\xi_1, \dots, \xi_n)$ be a complex Gaussian vector with mean zero. Let $Q$ be a partition of $[n]$. Then
\begin{equation*}
\E \qa{\prod_{q \in Q} \wick{\prod_{i \in q} \xi_i}} = \sum_{\Pi \in \fra M_c([n],Q)} \prod_{\{i,j\} \in \Pi} \E [\xi_i \xi_j]\,,
\end{equation*}
where $\fra M_c([n],Q)$ is the set of complete pairings $\Pi$ of the set $[n]$ such that no pair $\{i,j\} \in \Pi$ satisfies $i,j \in q$ for some $q \in Q$.
\end{lemma}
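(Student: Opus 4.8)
The plan is to deduce the identity from the exponential generating function \eqref{def_Wick} for Wick ordering, which converts the combinatorics of partial pairings into a single computation with Gaussian moment generating functions. First I would reduce to the case where $\xi$ is a \emph{real} Gaussian vector. Indeed, expanding each factor $\wick{\prod_{i\in q}\xi_i}$ via \eqref{wick_expanded} and then applying the ordinary Wick (Isserlis) theorem, the left-hand side becomes a fixed polynomial in the entries $\cal C_{ij}=\E[\xi_i\xi_j]$ of the covariance form, and the right-hand side is visibly the same kind of polynomial; two polynomials in the $\cal C_{ij}$ that agree on all real positive-definite covariance matrices agree identically, hence also for the complex symmetric form $\cal C_{ij}=\E[\xi_i\xi_j]$ of a complex Gaussian vector. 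So it is enough to prove the statement for a real Gaussian vector, where $\E[\ee^{\lambda\cdot\xi}]=\ee^{\frac12\lambda\cdot\cal C\lambda}$ holds for all $\lambda\in\R^n$.

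In the real case I would introduce independent formal variables $\lambda_1,\dots,\lambda_n$ and write, for each block $q$ of $Q$, $\wick{\prod_{i\in q}\xi_i}=\bigl(\prod_{i\in q}\partial_{\lambda_i}\bigr)\frac{\ee^{\sum_{i\in q}\lambda_i\xi_i}}{\E[\ee^{\sum_{i\in q}\lambda_i\xi_i}]}\big|_{\lambda=0}$, which is just \eqref{def_Wick}. Since distinct blocks involve disjoint sets of the $\lambda_i$, the Leibniz rule lets me pull all derivatives out of the product, giving
\begin{equation*}
\prod_{q\in Q}\wick{\prod_{i\in q}\xi_i}=\Bigl(\prod_{i\in[n]}\partial_{\lambda_i}\Bigr)\frac{\ee^{\lambda\cdot\xi}}{\prod_{q\in Q}\E[\ee^{\lambda_q\cdot\xi_q}]}\Big|_{\lambda=0}\,,
\end{equation*}
with $\lambda_q\cdot\xi_q\deq\sum_{i\in q}\lambda_i\xi_i$. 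Taking $\E$ and exchanging it with the (polynomial, hence innocuous) $\lambda$-derivatives, the ratio collapses to $\exp\bigl(\sum_{\{i,j\}:\,i\not\sim j}\lambda_i\lambda_j\cal C_{ij}\bigr)$, where $i\sim j$ means that $i$ and $j$ lie in the same block of $Q$: the diagonal terms and all same-block terms of $\tfrac12\lambda\cdot\cal C\lambda$ are exactly cancelled by the denominator $\prod_q\E[\ee^{\lambda_q\cdot\xi_q}]=\exp\bigl(\tfrac12\sum_q\sum_{i,j\in q}\lambda_i\lambda_j\cal C_{ij}\bigr)$.

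It then remains to apply $\prod_{i\in[n]}\partial_{\lambda_i}$ at $\lambda=0$, i.e.\ to read off the coefficient of the squarefree monomial $\lambda_1\cdots\lambda_n$. Expanding $\exp(\cdot)=\sum_{m\geq0}\frac1{m!}\bigl(\sum_{\{i,j\}:\,i\not\sim j}\lambda_i\lambda_j\cal C_{ij}\bigr)^m$, only $m=n/2$ contributes (so both sides vanish when $n$ is odd, consistent with $\fra M_c([n],Q)$ being empty), and $\lambda_1\cdots\lambda_n$ is produced exactly by the $(n/2)!$ orderings of a complete pairing $\Pi$ of $[n]$ none of whose pairs lies inside a block, each ordering contributing $\prod_{\{i,j\}\in\Pi}\cal C_{ij}$; the $(n/2)!$ cancels the $1/(n/2)!$, leaving $\sum_{\Pi\in\fra M_c([n],Q)}\prod_{\{i,j\}\in\Pi}\E[\xi_i\xi_j]$, which is the claim. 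An alternative would be a double induction on $n$ and on $\max_q|q|$, peeling off one factor via the recursion \eqref{Wick_recursion}, but the generating-function route is cleaner. I do not expect a genuine obstacle here; the only points needing a little care are the complex-to-real reduction and keeping the bookkeeping of the exponential expansion straight.
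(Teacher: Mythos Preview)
Your proposal is correct and follows essentially the same route as the paper: reduce to the real case, use the exponential generating function \eqref{def_Wick} to write the product of Wick-ordered monomials as $\partial_{\lambda_1}\cdots\partial_{\lambda_n}$ applied to $\ee^{\lambda\cdot\xi}/\prod_q\E[\ee^{\lambda_q\cdot\xi_q}]$, take the expectation to obtain $\exp\bigl(\tfrac12\sum_{i,j}\cal C^Q_{ij}\lambda_i\lambda_j\bigr)$ with the same-block entries removed, and read off the result by differentiation. The paper's proof is just a terser version of exactly this; your explicit expansion of the exponential and your polynomial-identity justification of the complex-to-real reduction (where the paper simply invokes multilinearity) are fine elaborations but not departures.
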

\begin{proof}
By linearity, we may assume that $(\xi_1, \dots, \xi_n)$ is real.
From the definition \eqref{def_Wick} we find
\begin{equation*}
\prod_{q \in Q} \wick{\prod_{i \in q} \xi_i} = \frac{\partial^n}{\partial \lambda_1 \cdots \partial \lambda_n} \ee^{\lambda \cdot \xi} \prod_{q \in Q} \exp \pbb{-\frac{1}{2} \sum_{i,j \in q} \cal C_{ij} \lambda_i \lambda_j} \biggr \vert_{\lambda = 0}\,,
\end{equation*}
so that taking the expectation yields
\begin{equation*}
\E \qa{\prod_{q \in Q} \wick{\prod_{i \in q} \xi_i}} = \frac{\partial^n}{\partial \lambda_1 \cdots \partial \lambda_n} \exp \pbb{\frac{1}{2} \sum_{i,j \in [n]} \cal C^Q_{ij} \lambda_i \lambda_j} \biggr \vert_{\lambda = 0}\,,
\end{equation*}
where $\cal C^Q_{ij} \deq \pb{1 - \sum_{q \in Q} \ind{i,j \in q}} \cal C_{ij}$. The claim now follows by differentiation.
\end{proof}

\begin{example} \label{ex:two_blocks}
If $Q$  has two blocks, Lemma \ref{lem:Wick} takes on the following form. If $(\xi_1, \dots, \xi_n, \zeta_1, \dots, \zeta_n)$ is a complex Gaussian vector with mean zero, then
\begin{equation*}
\E[\wick{\xi_1 \cdots \xi_n} \, \wick{\zeta_1 \cdots \zeta_n}] = \sum_{\sigma \in S_n} \prod_{i = 1}^n \E[\xi_i \zeta_{\sigma(i)}]\,.
\end{equation*}
\end{example}

\subsection{Hypercontractive moment bounds} \label{sec:hypercontr}

Let $(\cal H, \scalar{\cdot}{\cdot})$ be a separable real Hilbert space, and let $(\phi(f))_{f \in \cal H}$ the abstract Gaussian process indexed by $\cal H$. For an explicit definition, let $(e_k)_{k \in \N}$ be an orthonormal basis of $\cal H$, and let $\Omega = \R^\N$ be equipped with the product sigma-algebra $\cal F$ and the probability measure $\P$, which is an infinite product of standard Gaussians. With the notation $\omega = (\omega_k)_{k \in \N} \in \Omega$, we define
\begin{equation*}
\phi(f) \deq \sum_{k \in \N} \omega_k \scalar{e_k}{f}\,,
\end{equation*}
which converges in $L^2(\Omega, \cal F, \P)$.
Moreover, since $\E[\phi(f) \phi(g)] = \scalar{f}{g}$, the map $\phi \col \cal H \mapsto L^2(\Omega, \cal F, \P)$ is an isometry.

For $n \in \N$ we define the $n$th \emph{polynomial chaos}, denoted by $\cal B_n$, as the closure of the subspace of $L^2(\Omega, \cal F, \P)$ spanned by random variables of the form
\begin{equation*}
\wick{\phi(f_1) \cdots \phi(f_n)} \,, \qquad f_1, \dots, f_n \in \cal H\,.
\end{equation*}

\begin{lemma} \label{lem:chaos_decomp}
We have
\begin{equation*}
L^2(\Omega, \cal F, \P) = \bigoplus_{n \in \N} \cal B_n\,.
\end{equation*}
\end{lemma}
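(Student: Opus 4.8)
The statement to be proved is the Wiener chaos decomposition $L^2(\Omega,\cal F,\P) = \bigoplus_{n\in\N}\cal B_n$, i.e.\ that the closed subspaces $\cal B_n$ are mutually orthogonal and that their algebraic span is dense in $L^2(\Omega,\cal F,\P)$. The plan is to carry this out in three steps: orthogonality, density of polynomials, and identification of the polynomial span with $\bigoplus_n\cal B_n$.

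\emph{Step 1: Orthogonality of the $\cal B_n$.} I would first show that $\cal B_m \perp \cal B_n$ for $m\neq n$. It suffices to check that $\E\big[\wick{\phi(f_1)\cdots\phi(f_m)}\,\wick{\phi(g_1)\cdots\phi(g_n)}\big]=0$ whenever $m\neq n$. This is immediate from Lemma~\ref{lem:Wick} applied with the partition $Q$ consisting of the two blocks $\{1,\dots,m\}$ and $\{m+1,\dots,m+n\}$: the set $\fra M_c([m+n],Q)$ of complete pairings that never pair two indices inside the same block is empty unless $m=n$, since every index in the first block must be matched to one in the second. Hence the expectation vanishes for $m\neq n$, and by bilinearity and taking closures, $\cal B_m\perp\cal B_n$.

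\emph{Step 2: Density of polynomials in $\phi$.} Next I would show that the algebra $\cal P$ of (complex) polynomials in the variables $\{\phi(f): f\in\cal H\}$ is dense in $L^2(\Omega,\cal F,\P)$. Writing $\phi(e_k)=\omega_k$, $\cal P$ is just the polynomial algebra in the coordinates $(\omega_k)_{k\in\N}$, which separates points and is closed under conjugation and contains the constants. The standard argument: it is enough to approximate $\ee^{\ii\sum_{k}t_k\omega_k}$ (finitely many $t_k$) in $L^2$, since such exponentials are total in $L^2$ of a Gaussian product measure (their finite linear span is dense because their span is an algebra separating points of each finite-dimensional factor and one passes to the limit using that $\cal F$ is generated by the $\omega_k$). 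Each such exponential is an $L^2$-limit of its Taylor polynomials because all Gaussian moments are finite and one has the quantitative bound $\norm{\ee^{\ii t\cdot\omega} - \sum_{|\alpha|\le N}\tfrac{(\ii t\cdot\omega)^\alpha}{\alpha!}}_{L^2}\to 0$ by dominated convergence together with Gaussian integrability. This gives $\overline{\cal P}^{L^2} = L^2(\Omega,\cal F,\P)$.

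\emph{Step 3: $\cal P$ is spanned by the Wick monomials.} Finally, I would observe that the span of all Wick-ordered monomials $\wick{\phi(f_1)\cdots\phi(f_n)}$, over all $n$ and all choices of $f_i$, equals $\cal P$: by \eqref{wick_expanded} each Wick monomial is an ordinary polynomial of degree $\le n$, and conversely the recursion \eqref{Wick_recursion} lets one invert this, expressing each ordinary monomial $\phi(f_1)\cdots\phi(f_n)$ as a finite linear combination of Wick monomials of degrees $\le n$ (induction on $n$). Hence $\cal P = \Span\{\wick{\phi(f_1)\cdots\phi(f_n)}\} \subseteq \sum_n \cal B_n$, and taking $L^2$-closures, $L^2(\Omega,\cal F,\P) = \overline{\cal P} \subseteq \overline{\sum_n\cal B_n}$. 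Combined with the orthogonality from Step~1, the $\cal B_n$ form an orthogonal family whose closed linear span is everything, which is precisely the asserted direct sum decomposition.

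The only genuinely non-formal point is the density of finite linear combinations of complex exponentials $\ee^{\ii\sum_k t_k\omega_k}$ (equivalently, that $\cal F = \sigma(\omega_k : k\in\N)$ is exhausted by cylinder functions and that these exponentials are total on each finite-dimensional Gaussian factor); this is classical measure theory, and I would simply cite it (e.g.\ via a martingale-convergence / monotone-class argument over the filtration generated by $\omega_1,\dots,\omega_n$) rather than reprove it. Everything else is bookkeeping with \eqref{wick_expanded}, \eqref{Wick_recursion}, and Lemma~\ref{lem:Wick}.
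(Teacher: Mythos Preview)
Your proposal is correct. The paper's proof follows the same overall architecture---orthogonality first, then density---but handles the density step differently: instead of approximating exponentials by Taylor polynomials, the paper works finite-dimensionally, showing directly that any $f\in L^2(\R^K,\mu_K)$ orthogonal to all polynomials must vanish (using the Gaussian decay of $\mu_K$), and then passes to the limit $K\to\infty$ by Doob's martingale convergence theorem applied to $\xi_K=\E[\xi\mid\cal F_K]$. Your route via totality of the exponentials $\ee^{\ii t\cdot\omega}$ and their $L^2$ Taylor expansion is equally valid and standard; it just relocates the analytic input (Gaussian integrability) from an orthogonality argument to a series-convergence argument, and the finite-to-infinite passage you mention at the end is precisely the martingale step the paper makes explicit.
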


\begin{proof}
The orthogonality of the spaces $(\cal B_n)_{n \in \N}$ is easy to deduce from the definition of Wick ordering. It remains to show that $L^2(\Omega, \cal F, \P) \subset \bigoplus_{n \in \N} \cal B_n$.

For $K \in \N$, let $\cal H_K  = \Span (e_k \col k \leq K) \subset \cal H$. Let $\cal B_n^{(K)}$ be the subspace of $L^2(\Omega, \cal F, \P)$ spanned by random variables of the form $\wick{\phi(f_1) \cdots \phi(f_n)}$ with $f_1, \dots, f_n \in \cal H_K$. Let $\cal F_K$ be the sigma-algebra generated by $(\phi(f) \col f \in \cal H_K)$. Note that $\cal B_n^{(K)} \subset L^2(\Omega, \cal F_K, \P)$. We now claim that
\begin{equation} \label{L2_K}
L^2(\Omega, \cal F_K, \P) = \bigoplus_{n \in \N} \cal B_n^{(K)}\,.
\end{equation}
To see this, it suffices to show that if $\xi \in L^2(\Omega, \cal F_K, \P)$ satisfies $\E [\xi \zeta] = 0$ for every $\zeta \in \cal B_n^{(K)}$ and every $n \in \N$ then $\xi = 0$. We can write $\xi = f(\phi(e_1), \dots, \phi(e_K))$ for some measurable function $f \col \R^K \to \R$, which is square integrable with respect to the the standard Gaussian measure on $\R^K$, which we denote by $\mu_K$. By assumption, $\int \mu_K(\dd x) \,f(x) \, P(x)=0$ for all polynomials $P$ on $\R^K$. Using that $\mu_K$ has Gaussian tails we easily deduce that $f = 0$.

To deduce the claim from \eqref{L2_K}, we choose $\xi \in L^2(\Omega, \cal F, \P)$ and set $\xi_K \deq \E[\xi | \cal F_K]$ so that $(\xi_K)_{K \in \N}$ is a martingale. By Doob's martingale convergence theorem (see e.g.\ \cite[Chapter 12]{williams1991probability}), we have $\xi_K \to \xi$ in $L^2$, which concludes the proof.
\end{proof}

Next, we state and prove a hypercontractive moment bound. Such an estimate is usually derived as a consequence of the hypercontractive property of the Ornstein-Uhlenbeck semigroup associated with $\phi$. Here we give an elementary and very simple argument, relying only on Lemma \ref{lem:Wick} and the Cauchy-Schwarz inequality.

\begin{lemma} \label{lem:hyper}
Let $n \in \N$ and $\xi \in \cal B_n$. Then for any $p \in 2\N^*$ and some universal constant $C$, we have
\begin{equation*}
\E [\xi^{p}] \leq C p^{np/2} \, (\E [\xi^2])^{p/2}\,.
\end{equation*}
\end{lemma}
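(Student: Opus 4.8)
The plan is to prove the bound by a reduction to the case $p = 2n$ followed by an induction/bootstrap on $p$, exploiting that the $p$-th moment of $\xi \in \cal B_n$ can itself be expressed as an expectation of Wick-ordered monomials lying in chaoses of bounded order. First I would reduce to $\xi$ of the special form $\xi = \wick{\phi(f_1)\cdots\phi(f_n)}$ by density: since $\cal B_n$ is the $L^2$-closure of the span of such monomials, and since (by Lemma~\ref{lem:Wick}) all the relevant moments are continuous with respect to $L^2$-convergence inside a fixed chaos, it suffices to establish the estimate on a dense subset; more precisely, one works with finite linear combinations $\xi = \sum_{a} c_a \wick{\phi(f^a_1)\cdots\phi(f^a_n)}$ and notes that $\xi^p$ is again a finite sum of products of Wick monomials, to which Lemma~\ref{lem:Wick} applies directly. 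Applying Lemma~\ref{lem:Wick} with $Q$ the partition of $[np]$ into $p$ blocks of size $n$ (one block per factor of $\xi$), we get
\begin{equation*}
\E[\xi^p] = \sum_{\Pi \in \fra M_c([np], Q)} \prod_{\{i,j\}\in \Pi} \E[\xi_i \xi_j]\,,
\end{equation*}
a sum over complete pairings that never pair two elements within the same block.

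\textbf{Counting the pairings.} The key combinatorial step is to bound the number of such pairings $\fra M_c([np],Q)$ and to control the size of each term. The total number of complete pairings of $np$ elements is $(np-1)!! \leq (np)^{np/2}$, and restricting to those respecting $Q$ only decreases this. Each term $\prod_{\{i,j\}\in\Pi}\E[\xi_i\xi_j]$ is a product of $np/2$ covariances; grouping the factors appropriately, one sees that the absolute value of each such product is at most $(\max_a \E[(\wick{\phi(f^a_1)\cdots\phi(f^a_n)})^2])^{p/2}$ or, after Cauchy--Schwarz across blocks, is dominated by $(\E[\xi^2])^{p/2}$ — this is the place where one must be a little careful, pairing up the $p$ blocks into $p/2$ pairs and using that for two blocks $\E[\wick{\cdots}\wick{\cdots}]$ is, by Example~\ref{ex:two_blocks}, a sum of $n!$ products of covariances each bounded in modulus by a Cauchy--Schwarz estimate involving the two relevant second moments. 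Collecting the constant $n!$ per pair of blocks gives a factor $(n!)^{p/2} \leq C^{p} p^{\text{(lower order)}}$, absorbed into $Cp^{np/2}$. Thus
\begin{equation*}
\E[\xi^p] \leq \#\fra M_c([np],Q) \cdot (n!)^{p/2} (\E[\xi^2])^{p/2} \leq C^p (np)^{np/2} (n!)^{-p/2}(n!)^{p/2}(\E[\xi^2])^{p/2}\,,
\end{equation*}
wait — more cleanly, one bounds $\#\fra M_c([np],Q) \leq (np)^{np/2}$ directly and absorbs $(n!)^{p/2}$ and $n^{np/2}$ into the universal constant raised to the power $p$, since $n$ is fixed; this yields $\E[\xi^p]\leq C^p p^{np/2}(\E[\xi^2])^{p/2}$, which is the claim (the constant $C$ depending on $n$, or universal after a cruder bound).

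\textbf{Main obstacle.} The hard part is the bookkeeping in the combinatorial bound: ensuring that when one expands $\E[\xi^p]$ via Lemma~\ref{lem:Wick} and pairs the $p$ blocks, every resulting product of covariances is genuinely controlled by $(\E[\xi^2])^{p/2}$ rather than by an uncontrolled mixed expression. One must group the $np/2$ covariance factors according to which pair of blocks they connect, apply the two-block identity of Example~\ref{ex:two_blocks} to each pair (picking up the harmless factor $n!$), and then use Cauchy--Schwarz so that each two-block quantity is bounded by $\E[\xi^2]$; a subtlety is that a single pairing $\Pi$ may induce ``crossing'' connections among three or more blocks, so the grouping-into-pairs must be done at the level of an auxiliary pairing of the blocks themselves and one uses the inequality $\prod |a_{ij}| \leq \prod (\E\xi_i^2)^{1/2}(\E\xi_j^2)^{1/2}$ termwise. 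Once this is set up, the remaining estimates are routine. I would also remark that the restriction to $p \in 2\N^*$ is exactly what makes $\xi^p$ a bona fide product amenable to Lemma~\ref{lem:Wick} and guarantees $\E[\xi^p]\geq 0$; general real $p$ then follows, if needed elsewhere, by Hölder, but that is not required for the statement as given.
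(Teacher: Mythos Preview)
Your approach starts the same way as the paper---expand $\E[\xi^p]$ via Lemma~\ref{lem:Wick} as a sum over admissible pairings of $[np]$---but the crucial step, bounding the contribution of each pairing by $(\E[\xi^2])^{p/2}$, is not actually carried out, and the method you sketch (``termwise'' Cauchy--Schwarz on the individual covariances $\E[\xi_i\xi_j]$) does not work. Taking absolute values of each covariance and applying $|\langle f,g\rangle|\leq\|f\|\|g\|$ yields a bound that factors over the $p$ blocks, but the per-block factor is $\sum_a |c_a|\prod_j\|f^a_j\|$, and there is no inequality of the form $(\sum_a |c_a|\prod_j\|f^a_j\|)^2 \lesssim \E[\xi^2]$: the right-hand side can be made arbitrarily small by choosing a representation of $\xi$ with cancellations, while the left-hand side does not see them. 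The ``crossing'' issue you flag is exactly the obstruction, and grouping blocks into pairs does not resolve it, since a single pairing $\Pi$ may connect all $p$ blocks through a single cycle.

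The paper's fix is to first pass to an orthonormal-basis representation $\xi=\sum_{\f k} a_{\f k}\,\wick{\phi(e_{k_1})\cdots\phi(e_{k_n})}$, so that $\E[\xi^2]=n!\sum_{\f k} a_{\f k}^2$ and each covariance becomes a Kronecker delta. For a fixed pairing $\Pi$, the contribution is then $\sum_{\f k_I}\prod_i a_{\f k_{I_i}}\prod_{\text{edges}}\ind{k_{ij}=k_{i'j'}}$, and one sums out the edges one at a time: at each step one applies Cauchy--Schwarz \emph{in the summation variable identified by the edge being removed} (which connects two distinct blocks), yielding the monotonicity $V_\Pi(\Sigma\cup\{e\})\leq V_\Pi(\Sigma)$. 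After removing all edges one is left with $(\sum_{\f k} a_{\f k}^2)^{p/2}$. This inductive Cauchy--Schwarz is the real content; your ``bookkeeping'' paragraph correctly identifies it as the hard part but does not supply it. A side benefit of the paper's argument is that the factor $(n!)^{-p/2}$ coming from $\E[\xi^2]=n!\sum a^2$ exactly compensates the $n$-dependence in the pairing count, giving a genuinely universal constant, whereas your route would at best produce a constant depending on $n$.
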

\begin{proof}
Let
\begin{equation*}
\xi = \sum_{k_1, \dots, k_n \in \N} a_{k_1 \cdots k_n} \wick{\phi(e_{k_1}) \cdots \phi(e_{k_n})}\,,
\end{equation*}
where $a_{k_1 \cdots k_n}$ is symmetric under permutations. By Lemma \ref{lem:Wick} (see also Example \ref{ex:two_blocks}) we have
\begin{equation} \label{Z_variance}
\E [\xi^2] = n! \sum_{k_1, \dots, k_n} a_{k_1 \cdots k_n}^2\,.
\end{equation}
To estimate the $p$th moment, it is convenient to introduce the index sets $I = [p] \times [n]$ and $I_i = \{i\} \times [n]$ for $i \in [p]$. For any $A \subset I$ we use the notation $\f k_A = (k_{ij} \col (i,j) \in A) \in \N^{A}$ for the summation variables indexed by the set $A$. In this notation we can write $\xi = \sum_{\f k_{I_i}} a_{\f k_{I_i}} \wick{\prod_{j = 1}^n \phi(e_{k_{ij}})}$ for each $i \in [n]$, and hence we get, using Lemma \ref{lem:Wick},
\begin{equation} \label{E_xi_n}
\E [\xi^p] = \sum_{\f k_I} \prod_{i = 1}^p a_{\f k_{I_i}} \E \qBB{\prod_{i = 1}^p \wick{\prod_{j = 1}^n \phi(e_{k_{ij}})}} = \sum_{\Pi} \sum_{\f k_I} \prod_{i = 1}^p a_{\f k_{I_i}} \prod_{\{(i,j), (i',j')\} \in \Pi} \, \ind{k_{ij} = k_{i'j'}}\,,
\end{equation}
where the summation ranges over all complete pairings $\Pi$ of $[p] \times [n]$ such that for all $\{(i,j), (i',j')\} \in \Pi$ we have $i \neq i'$.
 
The idea is to fix $\Pi$ and to sum over the variables $k_{ij}$ in pairs connected by the delta function on the right-hand side of \eqref{E_xi_n}, by using a simple repeated application of Cauchy-Schwarz. To that end, we introduce an inductive summation of the edges of $\Pi$ one by one; the order of summation is immaterial. After summing out a number of edges, we obtain a partial pairing $\Sigma \subset \Pi$, whose blocks contain those summation variables that have not yet been summed out. We refer to Figure \ref{fig:hypercontractivity} for an illustration. For the example $\Pi$ of Figure \ref{fig:hypercontractivity}, the complete estimate can be  explicitly written out:
\begin{align*}
\sum_{k,l,m,u,v,w} a_{klm} \, a_{klu} \, a_{vwu} \, a_{vwm}
&\leq \sum_{l,m,u,v,w} \pBB{\sum_k a_{klm}^2}^{1/2} \pBB{\sum_k a_{klu}^2}^{1/2} a_{vwu} \, a_{vwm}
\\
&\leq \sum_{l,m,u,v} \pBB{\sum_k a_{klm}^2}^{1/2} \pBB{\sum_k a_{klu}^2}^{1/2} \pBB{\sum_{w} a_{vwu}^2}^{1/2} \pBB{\sum_w a_{vwm}^2}^{1/2}
\\
&\leq \sum_{l,u,v} \pBB{\sum_{k,m} a_{klm}^2}^{1/2} \pBB{\sum_{k} a_{klu}^2}^{1/2} \pBB{\sum_{w} a_{vwu}^2}^{1/2} \pBB{\sum_{m,w} a_{vwm}^2}^{1/2}
\\
&\leq \sum_{u,v} \pBB{\sum_{k,l,m} a_{klm}^2}^{1/2} \pBB{\sum_{k,l} a_{klu}^2}^{1/2} \pBB{\sum_{w} a_{vwu}^2}^{1/2} \pBB{\sum_{m,w} a_{vwm}^2}^{1/2}
\\
&\leq \sum_{v} \pBB{\sum_{k,l,m} a_{klm}^2}^{1/2} \pBB{\sum_{k,l,u} a_{klu}^2}^{1/2} \pBB{\sum_{u,w} a_{vwu}^2}^{1/2} \pBB{\sum_{m,w} a_{vwm}^2}^{1/2}
\\
&\leq \pBB{\sum_{k,l,m} a_{klm}^2}^{1/2} \pBB{\sum_{k,l,u} a_{klu}^2}^{1/2} \pBB{\sum_{u,v,w} a_{vwu}^2}^{1/2} \pBB{\sum_{m,v,w} a_{vwm}^2}^{1/2}\,.
\end{align*}
The estimate has six steps, corresponding to the six edges to be summed out. The third step corresponds to the step illustrated in the right half of Figure \ref{fig:hypercontractivity}.

\begin{figure}[!ht]
\begin{center}
{\small %% Creator: Inkscape 1.2.1 (9c6d41e, 2022-07-14), www.inkscape.org
%% PDF/EPS/PS + LaTeX output extension by Johan Engelen, 2010
%% Accompanies image file '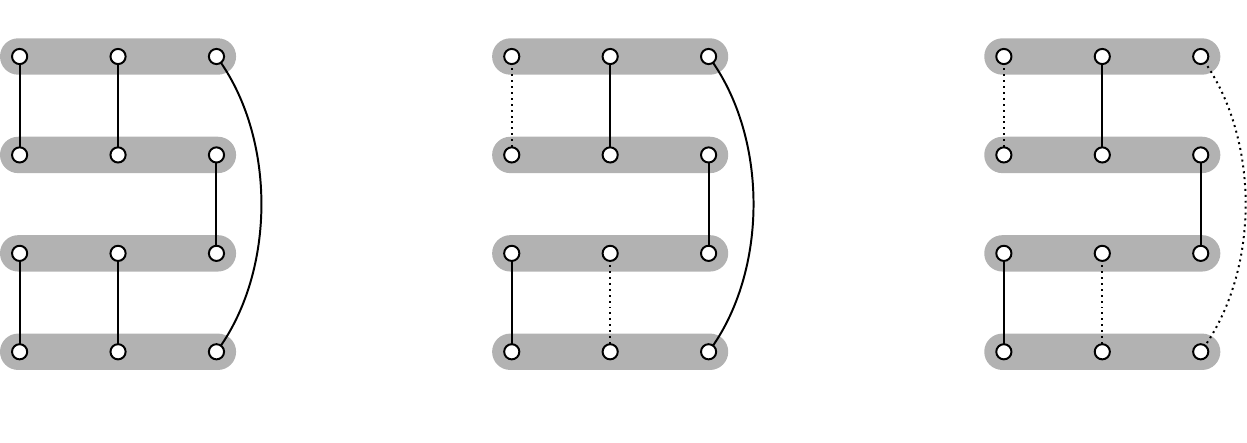' (pdf, eps, ps)
%%
%% To include the image in your LaTeX document, write
%%   \input{<filename>.pdf_tex}
%%  instead of
%%   \includegraphics{<filename>.pdf}
%% To scale the image, write
%%   \def\svgwidth{<desired width>}
%%   \input{<filename>.pdf_tex}
%%  instead of
%%   \includegraphics[width=<desired width>]{<filename>.pdf}
%%
%% Images with a different path to the parent latex file can
%% be accessed with the `import' package (which may need to be
%% installed) using
%%   \usepackage{import}
%% in the preamble, and then including the image with
%%   \import{<path to file>}{<filename>.pdf_tex}
%% Alternatively, one can specify
%%   \graphicspath{{<path to file>/}}
%% 
%% For more information, please see info/svg-inkscape on CTAN:
%%   http://tug.ctan.org/tex-archive/info/svg-inkscape
%%
\begingroup%
  \makeatletter%
  \providecommand\color[2][]{%
    \errmessage{(Inkscape) Color is used for the text in Inkscape, but the package 'color.sty' is not loaded}%
    \renewcommand\color[2][]{}%
  }%
  \providecommand\transparent[1]{%
    \errmessage{(Inkscape) Transparency is used (non-zero) for the text in Inkscape, but the package 'transparent.sty' is not loaded}%
    \renewcommand\transparent[1]{}%
  }%
  \providecommand\rotatebox[2]{#2}%
  \newcommand*\fsize{\dimexpr\f@size pt\relax}%
  \newcommand*\lineheight[1]{\fontsize{\fsize}{#1\fsize}\selectfont}%
  \ifx\svgwidth\undefined%
    \setlength{\unitlength}{359.03347219bp}%
    \ifx\svgscale\undefined%
      \relax%
    \else%
      \setlength{\unitlength}{\unitlength * \real{\svgscale}}%
    \fi%
  \else%
    \setlength{\unitlength}{\svgwidth}%
  \fi%
  \global\let\svgwidth\undefined%
  \global\let\svgscale\undefined%
  \makeatother%
  \begin{picture}(1,0.35466556)%
    \lineheight{1}%
    \setlength\tabcolsep{0pt}%
    \put(0,0){\includegraphics[width=\unitlength,page=1]{hypercontractivity.pdf}}%
    \put(0.08536217,0.00460221){\makebox(0,0)[lt]{\lineheight{1.25}\smash{\begin{tabular}[t]{l}$\Pi$\end{tabular}}}}%
    \put(0.448542,0.00460221){\makebox(0,0)[lt]{\lineheight{1.25}\smash{\begin{tabular}[t]{l}$\Sigma \cup \{e\}$\end{tabular}}}}%
    \put(0.87488348,0.00460221){\makebox(0,0)[lt]{\lineheight{1.25}\smash{\begin{tabular}[t]{l}$\Sigma$\end{tabular}}}}%
    \put(0.61717856,0.22893677){\makebox(0,0)[lt]{\lineheight{1.25}\smash{\begin{tabular}[t]{l}$e$\end{tabular}}}}%
    \put(0.57005811,0.33751879){\makebox(0,0)[lt]{\lineheight{1.25}\smash{\begin{tabular}[t]{l}$(i,j)$\end{tabular}}}}%
    \put(0.57517989,0.03123554){\makebox(0,0)[lt]{\lineheight{1.25}\smash{\begin{tabular}[t]{l}$(i',j')$\end{tabular}}}}%
    \put(0.67556701,0.18284062){\makebox(0,0)[lt]{\lineheight{1.25}\smash{\begin{tabular}[t]{l}$\longrightarrow$\end{tabular}}}}%
  \end{picture}%
\endgroup%
}
\end{center}
\caption{An illustration of the inductive algorithm for successively summing out edges of the pairing $\Pi$. Here $p = 4$ and $n = 3$. The pairing $\Pi$ is illustrated in the figure on the left-hand side. The four grey blocks correspond to the four factors $a_{\f k_{I_1}}, \dots, a_{\f k_{I_4}}$, and each vertex $(i,j) \in [p] \times [n]$ corresponds to a summation variable $k_{ij}$. Note that each edge connects vertices from different grey blocks, as is required by Lemma \ref{lem:Wick}. A partial pairing $\Sigma \subset \Pi$ is represented in the figure on the right-hand side, where the edges of $\Sigma$ are drawn using solid lines. The dashed lines, incident to vertices corresponding to the summation variables $\f k_{I \setminus [\Sigma]}$, have been summed out at this point. This summation contributed a factor $\prod_{i = 1}^p\pb{\sum_{\f k_{I_i \setminus [\Sigma]}} a_{\f k_{I_i}}^2}^{1/2}$, which depends on the remaining summation variables $\f k_{[\Sigma]}$ that correspond to the vertices incident to the edges of $\Pi$. The middle figure corresponds to the partial pairing $\Sigma \cup \{e\}$ with an edge $e \in \Pi \setminus \Sigma$ given by $e = \{(i,j), (i',j')\}$. The induction step underlying the argument, going from $\Sigma \cup \{e\}$ to $\Sigma$, is the summation of a single edge $e$, which amounts to summing over the variables $k_{ij} = k_{i',j',}$ and using Cauchy-Schwarz. For this, it is crucial that $i \neq i'$, i.e.\ $e$ connects vertices in different grey blocks.
\label{fig:hypercontractivity}}
\end{figure}

To describe the general procedure more formally, we denote, for any partial pairing $\Sigma \subset \Pi$,
\begin{equation} \label{def_V_Pi}
V_\Pi(\Sigma) \deq \sum_{\f k_{[\Sigma]}} \prod_{i = 1}^p \pBB{\sum_{\f k_{I_i \setminus [\Sigma]}} a_{\f k_{I_i}}^2}^{1/2} \prod_{\{(i,j), (i',j')\} \in \Sigma} \, \ind{k_{ij} = k_{i'j'}}\,,
\end{equation}
where $[\Sigma] \deq \bigcup_{\sigma \in \Sigma} \sigma$. This expression has three crucial properties. First, by \eqref{E_xi_n} we have
\begin{equation} \label{hyp1}
\E [\xi^p] = \sum_\Pi V_\Pi(\Pi)
\end{equation}
Second, by the definition \eqref{def_V_Pi} we have
\begin{equation} \label{hyp2}
V_\Pi(\emptyset) = \pBB{\sum_{k_1, \dots, k_n} a_{k_1 \cdots k_n}^2}^{p/2}\,.
\end{equation}
Third, for any partial pairing $\Sigma \subset \Pi$ and any edge $e \in \Pi \setminus \Sigma$ we have
\begin{equation}\label{hyp3}
V_\Pi(\Sigma \cup \{e\}) \leq V_\Pi(\Sigma)\,.
\end{equation}
To show \eqref{hyp3}, we suppose that $e = \{(i,j), (i',j')\}$ with $i \neq i'$ (see Figure \ref{fig:hypercontractivity}), and estimate the sum over $\f k_e = (k_{ij}, k_{i'j'})$ in the expression \eqref{def_V_Pi} for $V_\Pi(\Sigma \cup \{e\})$ as
\begin{equation*}
\sum_{\f k_e} \pBB{\sum_{\f k_{I_i \setminus [\Sigma \cup \{e\}]}} a^2_{\f k_{I_i}}}^{1/2} \pBB{\sum_{\f k_{I_{i'} \setminus [\Sigma \cup \{e\}]}} a^2_{\f k_{I_{i'}}}}^{1/2} \ind{k_{ij} = k_{i'j'}}
\leq  \pBB{\sum_{\f k_{I_i \setminus [\Sigma]}} a^2_{\f k_{I_i}}}^{1/2} \pBB{\sum_{\f k_{I_{i'} \setminus [\Sigma]}} a^2_{\f k_{I_{i'}}}}^{1/2}\,,
\end{equation*}
by Cauchy-Schwarz and the fact that the summation variable $k_{ij} = k_{i'j'}$ appears exactly once in each factor.

From \eqref{hyp1}--\eqref{hyp3}, we conclude that
\begin{equation} \label{hyp4}
\E[\xi^p] \leq \sum_{\Pi} \pBB{\sum_{k_1, \dots, k_n} a_{k_1 \cdots k_n}^2}^{p/2}\,.
\end{equation}
Using that the number of pairings $\Pi$ is bounded by $\frac{(np)!}{(np/2)! 2^{np/2}}$, we conclude from \eqref{hyp4} and \eqref{Z_variance} that 
\begin{equation*}
\E [\xi^p] \leq \frac{(np)!}{(np/2)! 2^{np/2}} \frac{1}{(n!)^{p/2}} (\E \xi^2)^{p/2} \leq C p^{np/2} \, (\E \xi^2)^{p/2}\,,
\end{equation*}
by Stirling's approximation. This concludes the proof.
\end{proof}

\begin{remark} \label{rem:hypercontr_complex}
If $\xi$ is complex-valued such that $\re \xi$ and $\im \xi$ both belong to $\cal B_n$, then by Minkowski's inequality we deduce from Lemma \ref{lem:hyper} that for any $p \in 2 \N^*$ we have
\begin{equation*}
\E \abs{\xi}^{p} \leq C p^{np/2} \, (\E \abs{\xi}^2)^{p/2}\,.
\end{equation*}
\end{remark}

\section{Basic estimates for the Green function} \label{sec:G}
In this appendix we prove some basic analytic properties of the Green function $G$ on the torus. All of these results are well known, and we collect and prove them here for the reader's convenience. The following lemma is similar to e.g.\ \cite[Lemma 5.4]{rougerie2016higher} or \cite[Section III.3]{aftalion2001pinning}.

\begin{lemma} \label{lem:G_smoothness}
Let $d = 2$.
There exists $\widetilde{G} \in C^\infty (\Lambda \setminus \{0\})$ such that
\[ G (x) = - \frac{1}{\pi} \log |x| + \widetilde{G} (x)\,, \] 
as well as $\abs{\wt G(x)} + \abs{\nabla \wt G(x)} \lesssim 1$ and $\abs{\nabla^2 \wt G (x)} \lesssim 1 + \abs{\log \abs{x}}$.
\end{lemma}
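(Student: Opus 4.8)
The statement is the standard decomposition of the Green function of $\kappa - \Delta/2$ on the two-dimensional torus into its logarithmic singularity and a smooth remainder. The plan is to compare $G$ with the fundamental solution on $\R^2$ and to extract the logarithm explicitly from the periodization.

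First I would recall that $G$ is the integral kernel of $(\kappa - \Delta/2)^{-1}$ and admits the representation
\begin{equation*}
G(x) = 2 \sum_{k \in \Z^2} \frac{\ee^{2 \pi \ii k \cdot x}}{2\kappa + 4 \pi^2 \abs{k}^2} = \int_0^\infty \dd t \, \ee^{-\kappa t} \, \psi^t(x)\,,
\end{equation*}
where $\psi^t$ is the heat kernel on $\Lambda$ introduced in Section \ref{The rate of convergence}; this is the torus analogue of the subordination identity used there. Using the explicit formula $\psi^t(x) = \sum_{n \in \Z^2} (2 \pi t)^{-1} \ee^{-\abs{x-n}^2/(2t)}$, I would split the sum into the $n = 0$ term and the rest: the latter, $\sum_{n \neq 0} (2\pi t)^{-1} \ee^{-\abs{x-n}^2/(2t)}$ integrated against $\ee^{-\kappa t}\,\dd t$, is smooth on $\Lambda$ including at the origin (for $x$ near $0$ and $n \neq 0$ the Gaussian is uniformly bounded with all derivatives controlled, and the $t$-integral converges at both ends thanks to the factor $\ee^{-\kappa t}$ and the $\ee^{-c/t}$ decay as $t \to 0$). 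The $n = 0$ contribution is $\int_0^\infty \dd t \, \ee^{-\kappa t} (2\pi t)^{-1} \ee^{-\abs{x}^2/(2t)}$, which is precisely (a constant times) the modified Bessel function $K_0$ of the screened Laplacian on $\R^2$; alternatively one writes it as $\int_0^\infty \dd t \, (2\pi t)^{-1} \ee^{-\abs{x}^2/(2t)} - \int_0^\infty \dd t \, (1 - \ee^{-\kappa t}) (2\pi t)^{-1} \ee^{-\abs{x}^2/(2t)}$, where the first integral gives exactly $-\frac{1}{\pi} \log \abs{x}$ up to an additive constant (after a suitable regularization, e.g. cutting off at $t = 1$ and absorbing the convergent tail), and the second integral is smooth across $x = 0$ since $1 - \ee^{-\kappa t} = O(t)$ kills the short-time singularity.

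Assembling these pieces defines $\wt G(x) \deq G(x) + \frac{1}{\pi}\log\abs{x}$, and the decomposition of the previous paragraph shows $\wt G \in C^\infty(\Lambda \setminus \{0\})$ and that $\wt G$ extends continuously to $x = 0$ with the stated bounds: indeed the only term that is not smooth at the origin is $\int_0^\infty \dd t\,(1 - \ee^{-\kappa t})(2\pi t)^{-1}\ee^{-\abs{x}^2/(2t)}$, whose first and second derivatives in $x$ are estimated by differentiating under the integral sign — each $x$-derivative brings down a factor $\abs{x}/t$, so the first derivative stays bounded while the second produces a term $\int_0^\infty \dd t\,(1-\ee^{-\kappa t})(\abs{x}^2/t^2 + 1/t)(2\pi t)^{-1}\ee^{-\abs{x}^2/(2t)}$, which after the change of variable $s = \abs{x}^2/t$ is seen to be $O(1 + \abs{\log\abs{x}})$. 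The smooth $n \neq 0$ piece contributes only $O(1)$ to all derivatives on a neighborhood of the origin.

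The main obstacle is bookkeeping rather than conceptual: one has to isolate the logarithm cleanly from the $t$-integral and verify that all the leftover integrals genuinely converge and are differentiable, keeping track of the correct additive constant so that $G + \frac{1}{\pi}\log\abs{x}$ is bounded (not merely $O(\log\abs{x})$). An alternative, perhaps cleaner, route avoids the subordination formula entirely: observe that $u(x) \deq G(x) + \frac{1}{\pi}\log\abs{x}$ solves $(\kappa - \Delta/2) u = \delta_0 - (\kappa - \Delta/2)(\frac{1}{\pi}\log\abs{x}) = -\frac{\kappa}{\pi}\log\abs{x} + (\text{smooth})$ in the sense of distributions on $\Lambda$ (using $-\Delta \log\abs{x} = -2\pi\delta_0$ on $\R^2$ so the delta functions cancel), whence $u$ solves an elliptic equation with right-hand side in $L^p$ for all $p < \infty$; elliptic regularity then gives $u \in W^{2,p}$ for all $p$, hence $u \in C^{1,\alpha}$ with $\abs{u} + \abs{\nabla u} \lesssim 1$, and a second application (or a Schauder-type estimate with the $\log$ right-hand side) yields $\abs{\nabla^2 u} \lesssim 1 + \abs{\log\abs{x}}$, together with $C^\infty$ smoothness away from $0$ by interior elliptic regularity. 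I would present whichever of the two is shorter; the heat-kernel computation has the advantage of being completely self-contained given the machinery already set up in Section \ref{The rate of convergence}.
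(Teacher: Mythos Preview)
Both of your routes are valid and lead to the result; neither is the paper's argument. The paper works on $\R^2$ with $g_\kappa$ the full-space Green function of $\kappa-\Delta/2$ and $g_0 = -\frac{1}{\pi}\log|\cdot|$: from $(\kappa-\Delta/2)\bigl(\varphi(g_\kappa-g_0)\bigr) = -\kappa\varphi g_0 + (\text{smooth})$ it obtains $\varphi g_\kappa = \varphi g_0 - \kappa(\varphi g_\kappa)*(\varphi g_0) + (\text{smooth})$, iterates once, and uses Fourier decay of the triple convolution to conclude $\varphi g_\kappa - \varphi g_0 + \kappa(\varphi g_0)*(\varphi g_0) \in C^2$; the explicit term $(\varphi g_0)*(\varphi g_0)$ is then estimated by hand, and the torus statement follows by Poisson summation $G = \sum_n g_\kappa(\cdot+n)$. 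Your Route~2 is the abstract version of this --- the equation $(\kappa-\Delta/2)\wt G = -\frac{\kappa}{\pi}\log|\cdot| + (\text{smooth})$ is precisely the paper's identity --- and $L^p$ elliptic regularity does give $\wt G\in C^{1,\alpha}$ cleanly. But your appeal to ``a Schauder-type estimate with the $\log$ right-hand side'' does not work as stated, since $\log|\cdot|$ is not H\"older at $0$; to get the sharp pointwise bound $|\nabla^2\wt G|\lesssim 1+|\log|x||$ you are forced back to estimating the Newton potential of $\log|\cdot|$, i.e.\ essentially the convolution $(\log)*(\log)$ the paper handles directly.

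Route~1 is a genuinely different and workable approach, but fix the bookkeeping: in your splitting $e^{-\kappa t}=1-(1-e^{-\kappa t})$ \emph{both} resulting $t$-integrals over $(0,\infty)$ diverge at $t=\infty$, so your later sentence ``the only term that is not smooth at the origin is $\int_0^\infty(1-e^{-\kappa t})(2\pi t)^{-1}e^{-|x|^2/(2t)}\,\dd t$'' refers to an undefined object. The clean version is to first split $\int_0^\infty=\int_0^1+\int_1^\infty$, keep the manifestly smooth tail $\int_1^\infty e^{-\kappa t}(\cdots)$, and only then write $e^{-\kappa t}=1-(1-e^{-\kappa t})$ on $[0,1]$; your derivative analysis applies verbatim to the now-convergent $\int_0^1(1-e^{-\kappa t})(2\pi t)^{-1}e^{-|x|^2/(2t)}\,\dd t$ and gives exactly the stated bounds.
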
 

\begin{proof}
Let $g_\kappa$ denote the Green function of the operator $\kappa - \Delta / 2$ on $\R^2$, which for $\kappa > 0$ has Fourier transform
\begin{equation*}
\hat g_\kappa(\xi) = \int \dd x \, g_\kappa(x) \, \ee^{- 2 \pi \ii \xi \cdot x} = \frac{1}{\kappa + 2 \pi^2 \abs{\xi}^2}\,.
\end{equation*}
Moreover, $g_0(x) = \frac{1}{\pi} \log \abs{x}^{-1}$.
Let $0 < r \leq 1/4$, and choose $\varphi \in [0,1]$ to be a smooth compactly supported function equal to $1$ in the ball of radius $r$ around the origin and zero outside the ball of radius $2r$ around the origin.

First we note that $(1 - \varphi) g_\kappa \in C^\infty(\Lambda)$. This is a manifestation of elliptic regularity (see for instance \cite[Theorem 8.6.1]{friedlander1998introduction}), which can be seen directly by applying the operator $(\kappa - \Delta/2)^k$ to $(1 - \varphi) g_\kappa$, using the Leibniz rule, and then estimating the decay of $\hat g_\kappa$ for large enough $k$ using the a priori bound $\abs{\hat g_\kappa}(\xi) \leq 1/\kappa$. 

Using $h$ to denote some generic smooth function which may change from one expression to the next, we therefore find that
\begin{equation*}
(\kappa - \Delta/2) \varphi (g_\kappa - g_0) = - \kappa \varphi g_0 + h\,.
\end{equation*}
We conclude that
\begin{equation*}
\varphi g_\kappa = \varphi g_0 - \kappa g_\kappa * (\varphi g_0) + h = \varphi g_0 - \kappa (\varphi g_\kappa) * (\varphi g_0) + h\,.
\end{equation*}
By plugging this equation into itself, we conclude that
\begin{equation*}
\varphi g_\kappa - \varphi g_0 + \kappa (\varphi g_0) * (\varphi g_0) \in C^2(\Lambda)\,,
\end{equation*}
since $(\varphi g_\kappa) * (\varphi g_0) * (\varphi g_0) \in C^2(\Lambda)$ because its Fourier transform decays as $\abs{\xi}^{-6}$ for $\abs{\xi} \to \infty$.

It is elementary that $\wt g \deq \kappa (\varphi g_0) * (\varphi g_0)$ satisfies $\abs{\wt g(x)} +  \abs{\nabla \wt g(x)} \lesssim 1$ and $\abs{\nabla^2 \wt g (x)} \lesssim 1 + \abs{\log \abs{x}}$ for $\abs{x} \leq 1$.
The claim now follows from the observation that
\begin{equation*}
G(x) = \sum_{n \in \Z^d} g_\kappa(x + n)\,,
\end{equation*}
by Poisson summation.
\end{proof}

\begin{remark}
\label{G_smoothness_3D_remark}
An analogue of the result of Lemma \ref{lem:G_smoothness} holds in three dimensions, with the same proof. Namely, if $G$ denotes the Green function of the operator $\kappa - \Delta / 2$ on $\Lambda$ for $d = 3$, then we have
\begin{equation*}
G(x)=\frac{1}{4 \pi \abs{x}} + \wt G(x)\,,
\end{equation*}
where $\abs{\wt G(x)} \lesssim 1$.
\end{remark}

\begin{lemma} \label{lem:prop_G_N}
Let $d = 2$.
The truncated Green function from \eqref{def_G_N} satisfies the bounds
\begin{equation} \label{G_N_est1}
\abs{G_N(x)} \lesssim (1 + \abs{\log \abs{x}}) \wedge\log N
\end{equation}
and
\begin{equation} 
\label{G_N_est2}
\abs{G_N(x) - G(x)} \lesssim(1 + \abs{\log (N \abs{x})}) \wedge \frac{1}{N^2 \abs{x}^2}\,.
\end{equation}
\end{lemma}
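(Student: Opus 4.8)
The plan is to use the decomposition $G(x) = -\frac{1}{\pi}\log|x| + \widetilde{G}(x)$ from Lemma~\ref{lem:G_smoothness}, which gives
\[ G_N = -\frac{1}{\pi}\,(\log|\cdot|) * \rho_N + \widetilde{G} * \rho_N\,, \]
together with the elementary facts that $\rho_N \geq 0$, that $\int_\Lambda \rho_N(y)\,dy = 1$, and that --- for $N$ large, which is the only regime in which the bounds \eqref{G_N_est1}--\eqref{G_N_est2} are used --- $\rho_N$ is supported in $\{|y| \leq 1/N\}$ and equals there the even function $N^2\rho(N\,\cdot\,)$. Throughout I would split into the two ranges $|x| \geq 2/N$ and $|x| < 2/N$: in the first, every $y$ in the support of $\rho_N$ satisfies $|x - y| \asymp |x|$, while in the second the substitution $y = z/N$ is the natural move, producing the identity
\[ (\log|\cdot|*\rho_N)(x) = -\log N + h(Nx)\,, \qquad h := (\log|\cdot|)*\rho\,, \]
where $h$ is continuous (and $\log|x| = -\log N + \log|Nx|$).

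For \eqref{G_N_est1}, in the range $|x| \geq 2/N$ the bound $|\log|x-y|| \lesssim 1 + |\log|x||$ on $\supp\rho_N$ gives $|(\log|\cdot|*\rho_N)(x)| \lesssim 1 + |\log|x||$, and since $\widetilde{G}$ is bounded we get $|G_N(x)| \lesssim 1 + |\log|x||$, which is $\lesssim \log N$ because $|x| \geq 2/N$. In the range $|x| < 2/N$, the displayed identity together with local integrability of $\log|\cdot|$ and $|Nx| \leq 2$ shows $|(\log|\cdot|*\rho_N)(x)| \lesssim \log N$, hence $|G_N(x)| \lesssim \log N \lesssim 1 + |\log|x||$. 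Combining the two ranges gives \eqref{G_N_est1}.

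For \eqref{G_N_est2} I would write $G_N(x) - G(x) = \int (G(x-y) - G(x))\,\rho_N(y)\,dy$ and set $A(x) := (\log|\cdot|*\rho_N)(x) - \log|x|$, $B(x) := (\widetilde{G}*\rho_N)(x) - \widetilde{G}(x)$. For $|x| \geq 2/N$ a second-order Taylor expansion of $\log|\cdot|$ (resp.\ $\widetilde{G}$) around $x$, with the first-order term killed by the evenness of $\rho_N$ and $\int |y|^2\rho_N(y)\,dy \lesssim N^{-2}$, combined with $|\nabla^2\log|\xi|| \lesssim |x|^{-2}$ and $|\nabla^2\widetilde{G}(\xi)| \lesssim 1 + |\log|\xi|| \lesssim 1 + |\log|x||$ on $\{|\xi - x| \leq 1/N\}$, yields $|A(x)| \lesssim N^{-2}|x|^{-2}$ and $|B(x)| \lesssim N^{-2}(1 + |\log|x||)$; since $|x| < 1$ these are both $\lesssim N^{-2}|x|^{-2}$ and, using $N^{-2} \leq |x|^2/4$, also $\lesssim 1 \leq 1 + |\log(N|x|)|$. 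For $|x| < 2/N$ the displayed identity gives the exact cancellation $A(x) = h(Nx) - \log|Nx|$; since $|h| \lesssim 1$ on $\{|w| \leq 2\}$ and $|w|^2|\log|w|| \lesssim 1$ on $(0,2]$, both $|A(x)| \lesssim 1 + |\log(N|x|)|$ and $|A(x)| \lesssim N^{-2}|x|^{-2}$ hold, while $|B(x)| \leq 2\|\widetilde{G}\|_\infty \lesssim 1$, which again lies below both sides of the claimed minimum on this range (here one uses $N^{-2}|x|^{-2} > 1/4$). Adding up yields \eqref{G_N_est2}.

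The main obstacle is the regime $|x| < 2/N$ in \eqref{G_N_est2}: there one cannot simply combine the separate bounds $|G_N(x)| \lesssim \log N$ and $|G(x)| \lesssim 1 + |\log|x||$, because the two $\log N$ contributions do not cancel in that difference; the cancellation must be made explicit through the scaling identity $A(x) = h(Nx) - \log|Nx|$. A secondary technical point is that for the smooth remainder $\widetilde{G}$ in the intermediate range $|x| \gtrsim 1/N$ the crude $L^\infty$ bound does not produce the decay $N^{-2}|x|^{-2}$, so one is forced to use the second-order Taylor expansion and the mild logarithmic growth of $\nabla^2\widetilde{G}$.
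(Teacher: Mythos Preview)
Your proof is correct and follows essentially the same approach as the paper's: both split according to $|x|$ versus $1/N$, use the decomposition $G = -\tfrac{1}{\pi}\log|\cdot| + \widetilde G$ from Lemma~\ref{lem:G_smoothness}, handle the far regime by a second-order Taylor expansion combined with the evenness of $\rho_N$, and handle the near regime via the rescaling $y \mapsto y/N$ to expose the dependence on $N|x|$. Your explicit introduction of $h = (\log|\cdot|)*\rho$ and the identity $A(x) = h(Nx) - \log|Nx|$ is a tidy way to package the near-regime cancellation that the paper carries out inline.
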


\begin{proof}
The bound \eqref{G_N_est1} follows easily from Lemma \ref{lem:G_smoothness} and the definition \eqref{def_G_N}, by considering the cases $\abs{x} > 2/N$ and $\abs{x} \leq 2/N$ separately.

To prove \eqref{G_N_est2}, we write
\begin{equation} \label{G_N-G}
G_N(x) - G(x) = \int \dd y \, \pb{G(x + y) - G(x)} \, \rho_N(y)
\end{equation}
For $\abs{x} \leq 1/8$, we get from Lemma \ref{lem:G_smoothness} that
\begin{equation} \label{G_N-G2}
\abs{G_N(x) - G(x)} \lesssim \absBB{\int \dd y \, \pb{\wt G(x + y) - \wt G(x)} \, \rho_N(y)} + \absBB{\int_{\R^2} \dd y \, \pbb{\log \absbb{x + \frac{y}{N}}^2 - \log \abs{x}^2} \, \rho(y)}\,.
\end{equation}
If $\abs{x} \leq 8/N$ then we estimate the right-hand side of \eqref{G_N-G2} by
\begin{equation*}
C + 2 \int_{\R^2} \dd y \, \absBB{\log \absbb{\frac{x}{\abs{x}} + \frac{y}{N\abs{x}}}} \, \rho(y) \lesssim 1 + \biggl|\log \frac{1}{N \abs{x}}\biggr|\,.
\end{equation*}
If $8/N < \abs{x} \leq 1/8$ we estimate the right-hand side of \eqref{G_N-G2} as
\begin{equation*}
\frac{1 + \abs{\log \abs{x}}}{N^2} + \absBB{ \int_{\R^2} \dd y \, \log \pbb{1+ 2 \frac{x \cdot y}{N\abs{x}^2 } + \frac{\abs{y}^2}{N^2 \abs{x}^2 }} \, \rho(y)} \lesssim \frac{1}{N^2 \abs{x}^2}\,,
\end{equation*}
by Taylor expansion and the fact that $\rho$ is an even function. Similarly, for the case $\abs{x} > 1/8$ we easily get from Lemma \ref{lem:G_smoothness}, a Taylor expansion and the evenness of $\rho$ that the right-hand side of \eqref{G_N-G} is bounded by $1/N^2$. This concludes the proof.
\end{proof}

\section{Proofs of auxiliary claims from Section \ref{The rate of convergence}}
\label{Appendix C}

Throughout this appendix, we recall that $\epsilon$ satisfies \eqref{epsilon_lower_bound}.
Before proceeding to the proof of Lemma \ref{J_approximation}, we note a quantitative analogue of \cite[Lemma 5.12]{FKSS_2020} concerning path-path interactions.

\begin{lemma}
\label{5.12'}
For continuous paths $\omega \in \Omega^{\tau_1,\tilde{\tau}_1}, \tilde{\omega} \in \Omega^{\tau_2,\tilde{\tau}_2}$, the following properties hold.
\begin{enumerate}[label=(\roman*)]
\item \label{itm:512_1}
We have, uniformly in $\eta>0$
\begin{equation}
\label{5.12'_i}
\mathbb{V}_{\eta}(\omega,\tilde{\omega})=\mathbb{V}^{\epsilon}(\omega,\tilde{\omega})+
\mathbb{\hat{V}}_{\eta}(\omega,\tilde{\omega})+O\biggl(\frac{\nu}{\epsilon^d}(\tau_1-\tilde \tau_1)\biggr)\,,
\end{equation}
where $\mathbb{\hat{V}}_{\eta}(\omega,\tilde{\omega})$ satisfies
\begin{equation}
\label{V_hat_intermediate_estimate}
\bigl|\mathbb{\hat{V}}_{\eta}(\omega,\tilde{\omega})\bigr| \leq \frac{C}{\epsilon^{d+1}}\,\int \dd s\,\sum_{i=1}^{N} \int_{I_i} \dd \tilde{s}\,\int_{I_i} \dd \hat{s}\,|\tilde{\omega}(\tilde{s})-\tilde{\omega}(\hat{s})|_{\Lambda}\,\delta_{\eta}([s]_{\nu}-[\hat{s}]_{\nu})\,,
\end{equation}
for $N \deq \lceil \frac{\tau_2-\tilde{\tau}_2}{\nu}\rceil$ and 
\begin{equation}
\label{I_i}
I_{i}\deq
\begin{cases}
[\tilde{\tau}_{2}+(i-1) \nu,\tilde{\tau}_{2}+ i\nu] &\mbox{for }1 \leq i \leq N-1\,,
\\
[\tilde{\tau}_{2} +(N-1)\nu,{\tau}_{2}]  &\mbox{for }i=N\,.
\end{cases}
\end{equation}
\item \label{itm:512_2}
Consider continuous paths $\omega' \in \Omega^{\tau'_{1},\tilde{\tau}'_{1}}, \tilde{\omega}' \in \Omega^{\tau'_{2},\tilde{\tau}'_{2}}$ that agree with the paths $\omega, \tilde{\omega}$ respectively on the intersection of their domains. Suppose that 
$|\tau_j-\tau_j'|\leq \nu$ and $|\tilde{\tau}_j-\tilde{\tau}'_j|\leq \nu$ for $j=1,2$. Then, we have
\begin{equation*}
\mathbb{V}^{\epsilon}(\omega,\tilde{\omega})=\mathbb{V}^{\epsilon}(\omega',\tilde{\omega}')+O\biggl(\frac{\nu}{\epsilon^d}\,(\tau_1-\tilde\tau_1)+\frac{\nu}{\epsilon^d}\,(\tau_2-\tilde\tau_2)+\frac{\nu^2}{\epsilon^d}\biggr)\,.
\end{equation*}
\end{enumerate}
\end{lemma}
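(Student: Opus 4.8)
\textbf{Proof proposal for Lemma \ref{5.12'}.}

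The plan is to unwind both statements from the definitions of the quantum and classical path-path interactions in Definitions \ref{V_quantum} and \ref{V_classical}, reducing everything to the Lipschitz bound for $v^\epsilon$ from Lemma \ref{v^{epsilon}_properties} \ref{itm:veps2} together with the elementary properties \eqref{delta_nu_integral} of the approximate delta function $\delta_\eta$. For part \ref{itm:512_1}, I would start from
\[
\mathbb V_\eta(\omega,\tilde\omega) = \nu \int \dd s \int \dd\tilde s\, \delta_\eta([s]_\nu - [\tilde s]_\nu)\, v^\epsilon(\omega(s) - \tilde\omega(\tilde s))
\]
and compare it with $\mathbb V^\epsilon(\omega,\tilde\omega) = \int \dd s \int \dd\tilde s\, v^\epsilon(\omega(s) - \tilde\omega(\tilde s))$. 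The idea is to partition the $\tilde s$-integration domain $[\tilde\tau_2,\tau_2]$ into the intervals $I_i$ of length (at most) $\nu$ from \eqref{I_i}, and on each $I_i$ replace $\tilde\omega(\tilde s)$ by a fixed representative $\tilde\omega(\hat s)$, $\hat s \in I_i$. Since $\delta_\eta$ is $\nu$-periodic with $\int_0^\nu \delta_\eta = 1$, after summing over $i$ the ``frozen'' contribution $\nu \int \dd s \sum_i \int_{I_i} \dd\hat s\, \delta_\eta([s]_\nu - [\hat s]_\nu)\, v^\epsilon(\omega(s)-\tilde\omega(\hat s))$ telescopes, modulo a boundary error of size $O(\nu \epsilon^{-d}(\tau_1-\tilde\tau_1))$ coming from the fact that $\sum_i \int_{I_i}\dd\hat s\,\delta_\eta([s]_\nu-[\hat s]_\nu)$ equals $1$ only up to edge effects at the endpoints of $[\tilde\tau_2,\tau_2]$ and $\|v^\epsilon\|_\infty \lesssim \epsilon^{-d}$ (Lemma \ref{v^{epsilon}_properties} \ref{itm:veps1}). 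The discrepancy $\mathbb{\hat V}_\eta(\omega,\tilde\omega)$ between $\mathbb V_\eta$ and this frozen version is then controlled by Lemma \ref{v^{epsilon}_properties} \ref{itm:veps2}: $|v^\epsilon(\omega(s)-\tilde\omega(\tilde s)) - v^\epsilon(\omega(s)-\tilde\omega(\hat s))| \leq C\epsilon^{-d-1}|\tilde\omega(\tilde s) - \tilde\omega(\hat s)|_\Lambda$, which is exactly the bound \eqref{V_hat_intermediate_estimate} after reinstating the prefactor $\nu$ and the $\delta_\eta$ weight; finally I would also freeze $\omega$ against the classical integral, but since the statement is symmetric only in the error term I would check that replacing $\omega(s)$ plays no role in \eqref{V_hat_intermediate_estimate} because the frozen version already has $v^\epsilon(\omega(s)-\tilde\omega(\hat s))$ and differs from $\mathbb V^\epsilon$ only through the $\delta_\eta$ normalization, giving the $O(\nu\epsilon^{-d}(\tau_1-\tilde\tau_1))$ term. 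One has to be slightly careful about which path is being coarsened and in which variable; I would follow the bookkeeping of \cite[Lemma 5.12]{FKSS_2020} closely here.

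For part \ref{itm:512_2} the plan is more direct. Since $\omega'$ and $\tilde\omega'$ agree with $\omega$ and $\tilde\omega$ on the overlap of their domains, the difference
\[
\mathbb V^\epsilon(\omega,\tilde\omega) - \mathbb V^\epsilon(\omega',\tilde\omega') = \int \dd s \int \dd\tilde s\, v^\epsilon(\omega(s)-\tilde\omega(\tilde s)) - \int \dd s \int \dd\tilde s\, v^\epsilon(\omega'(s)-\tilde\omega'(\tilde s))
\]
only involves the symmetric differences of the integration domains $[\tilde\tau_1,\tau_1] \triangle [\tilde\tau_1',\tau_1']$ and $[\tilde\tau_2,\tau_2] \triangle [\tilde\tau_2',\tau_2']$, each of which has measure $\leq 2\nu$ by the hypotheses $|\tau_j - \tau_j'| \leq \nu$, $|\tilde\tau_j - \tilde\tau_j'| \leq \nu$. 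Estimating the integrand by $\|v^\epsilon\|_\infty \lesssim \epsilon^{-d}$ and the remaining integration variable's range by $(\tau_2-\tilde\tau_2)$ (respectively $(\tau_1-\tilde\tau_1)$), plus an $O(\nu^2\epsilon^{-d})$ term when both domains are perturbed simultaneously, yields the claimed bound. I expect the main obstacle to be entirely in part \ref{itm:512_1}: getting the error terms to land in precisely the form \eqref{V_hat_intermediate_estimate}--\eqref{I_i} requires care in tracking the $\delta_\eta$ weight through the coarsening and in handling the non-uniform last interval $I_N$, but this is a routine (if fiddly) adaptation of the corresponding qualitative argument in \cite{FKSS_2020}, now made quantitative using the explicit Lipschitz constant $C\epsilon^{-d-1}$ of $v^\epsilon$.
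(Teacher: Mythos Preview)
Your proposal is correct and follows essentially the same approach as the paper. For part \ref{itm:512_1}, the paper does exactly what you describe: it partitions the $\tilde s$-domain into the intervals $I_i$, introduces an extra dummy integration variable on each $I_i$ (trading the prefactor $\nu$ for $\int_{I_i}\dd\tilde s$), and then $\mathbb{\hat V}_\eta$ is defined as the triple integral of $[v^\epsilon(\omega(s)-\tilde\omega(\hat s)) - v^\epsilon(\omega(s)-\tilde\omega(\tilde s))]\,\delta_\eta([s]_\nu-[\hat s]_\nu)$ over $s$, $\tilde s\in I_i$, $\hat s\in I_i$, bounded via Lemma \ref{v^{epsilon}_properties} \ref{itm:veps2}; the $O(\nu\epsilon^{-d}(\tau_1-\tilde\tau_1))$ comes from the mismatch between $I_N$ and the full-length interval $\hat I_N=[\tilde\tau_2+(N-1)\nu,\tilde\tau_2+N\nu]$, estimated with $\|v^\epsilon\|_{L^\infty}$. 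For part \ref{itm:512_2}, the paper's proof is the one-line argument you give: bound the contribution of the symmetric difference of the domains by $\|v^\epsilon\|_{L^\infty}\lesssim\epsilon^{-d}$ times the relevant lengths. Your slight hesitation about ``freezing $\omega$'' is unnecessary---only $\tilde\omega$ is coarsened, and no modification of $\omega(s)$ enters at all.
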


\begin{proof}[Proof of Lemma \ref{5.12'}]
Let us first show \ref{itm:512_1}.
By using Definition \ref{V_classical}, \eqref{delta_nu_integral}, and \eqref{I_i} we write
\begin{equation}
\label{Lemma_C1_1}
\mathbb{V}_{\eta}(\omega,\tilde{\omega})= \int \dd s\,\sum_{i=1}^{N} \int_{I_i} \dd \tilde{s}\,\int_{\hat{I}_i} \dd \hat{s}\,v^{\epsilon}\bigl(\omega(s)-\tilde{\omega}(\hat{s})\bigr)\,\delta_{\eta}([s]_{\nu}-[\hat{s}]_{\nu})\,,
\end{equation}
where $\hat{I}_i:=[\tilde{\tau}_2+(i-1)\nu, \tilde{\tau}_2+i \nu]$ for $1 \leq i \leq N$.
By Lemma \ref{v^{epsilon}_properties} \ref{itm:veps1}, it follows that we have
\begin{equation}
\label{Lemma_C1_2}
\int \dd s\,\int_{I_N} \dd \tilde s\, \int_{I_N \setminus \hat{I}_N} \dd \hat{s}\,
v^{\epsilon}\bigl(\omega(s)-\tilde{\omega}(\hat{s})\bigr)\,\delta_{\eta}([s]_{\nu}-[\hat{s}]_{\nu})
=O\biggl(\frac{\nu}{\epsilon^d}(\tau_1-\tilde \tau_1)\biggr)\,.
\end{equation}
For \eqref{Lemma_C1_2}, we also used \eqref{delta_nu_integral} to deduce that 
\begin{equation*}
0 \leq \int_{\hat{I}_N \setminus I_N} \dd \hat{s}\,\delta_{\eta,\nu}([s]_{\nu}-[\hat{s}]_{\nu}) \leq 1\,.
\end{equation*}
Combining \eqref{Lemma_C1_1} and \eqref{Lemma_C1_1}, it follows that 
\begin{equation}
\label{Lemma_C1_3}
\mathbb{V}_{\eta}(\omega,\tilde{\omega})= \int \dd s\,\sum_{i=1}^{N} \int_{I_i} \dd \tilde{s}\,\int_{I_i} \dd \hat{s}\,v^{\epsilon}\bigl(\omega(s)-\tilde{\omega}(\hat{s})\bigr)\,\delta_{\eta}([s]_{\nu}-[\hat{s}]_{\nu})+O\biggl(\frac{\nu}{\epsilon^d}(\tau_1-\tilde \tau_1)\biggr)\,.
\end{equation}
From \eqref{Lemma_C1_3} and 
Definition \ref{V_quantum}, we obtain \eqref{5.12'_i} with 
\begin{equation}
\label{V_hat_definition}
\mathbb{\hat{V}}_{\eta}(\omega,\tilde{\omega}) \deq \int \dd s\,\sum_{i=1}^{N} \int_{I_i} \dd \tilde{s}\,\int_{I_i} \dd \hat{s}\,\Bigl[v^{\epsilon}\bigl(\omega(s)-\tilde{\omega}(\hat{s})\bigr)-v^{\epsilon}\bigl(\omega(s)-\tilde{\omega}(\tilde{s})\bigr)\Bigr]\,\delta_{\eta}([s]_{\nu}-[\hat{s}]_{\nu})\,.
\end{equation}
By using Lemma \ref{v^{epsilon}_properties} \ref{itm:veps2}, we get that \eqref{V_hat_definition} satisfies \eqref{V_hat_intermediate_estimate}.
Claim \ref{itm:512_2} follows from Definition \ref{V_classical} by using Lemma \ref{v^{epsilon}_properties} \ref{itm:veps1}.
\end{proof}

We now give the proofs of Lemmas \ref{Q_and_q_estimates} and \ref{lower_bound}, which are used in the analysis of the rate of convergence for correlation functions.

\begin{proof}[Proof of Lemma \ref{Q_and_q_estimates}]
In order to prove \ref{itm:Q1}, we rewrite \eqref{Q_quantum} as
\begin{equation}
\label{Y_definition}
(Q_{p,\eta})_{\f x,\tilde{\f x}}=\nu^p\,\sum_{\f r \in (\nu \N^*)^p}\,\ee^{-\kappa |\f r|}\,Y_{\f x,\tilde{\f x}}(\f r)\,,
\end{equation}
where
\begin{equation}
\label{Y_definition2}
Y_{\f x,\tilde{\f x}} (\f r) \deq \int \mu_{\cal C_{\eta}} (\dd \sigma)\, \ee^{T^{\epsilon}-\frac{\ii \tau^\epsilon [\sigma]}{\nu}}\,\ee^{F_2(\sigma)}\,
\prod_{i=1}^{p} \Biggl[\int \mathbb{W}^{r_i,0}_{x_i,\tilde{x}_i}(\dd \omega_i)\, \Bigl(\ee^{\ii \int_{0}^{r_i} 
\dd t\, \sigma([t]_{\nu},\omega_i(t))}-1\Bigr)\Biggr]\,.
\end{equation}
Here we recall the definition of $T^{\epsilon}$ in \eqref{T^epsilon} and the definition of $F_2$ in \eqref{F_2}.
By arguing analogously as for \cite[(5.41)]{FKSS_2020}, we deduce that
\begin{equation}
\label{Y_bound}
|Y_{\f x,\tilde{\f x}} (\f r)| \leq C_d^p\, \ee^{T^{\epsilon}}\,\|v^{\epsilon}\|_{L^{\infty}}^{p/2}\, \prod_{i=1}^{p} \bigl(r_i+r_i^{1-d/2}\bigr) \leq  C_{d,v}^p\,\ee^{T^{\epsilon}}\, \biggl(\frac{1}{\epsilon^d}\biggr)^{p/2}\,\prod_{i=1}^{p} \bigl(r_i+r_i^{1-d/2}\bigr)\,,
\end{equation}
where in the second inequality we used Lemma \ref{v^{epsilon}_properties} \ref{itm:veps1}.
Substituting \eqref{Y_bound} into \eqref{Y_definition}, considering Riemann sums, and recalling \eqref{T^epsilon_bound}, we deduce claim \ref{itm:Q1}.

We now prove claim \ref{itm:Q2}. 
Analogously to \eqref{Y_definition}, we rewrite \eqref{q_classical} as 
\begin{equation}
\label{y_definition}
(q_{p}^{\epsilon})_{\f x,\tilde{\f x}}=\int_{[0,\infty)^p}\,\dd \f r\,\ee^{-\kappa |\f r|}\,y^{\epsilon}_{\f x,\tilde{\f x}}(\f r,\f r)\,,
\end{equation}
where for $\f r, \f s \in [0,\infty)^p$ with $\f s \leq \f r$ (meaning that $s_i \leq r_i$ for all $i$), we define
\begin{equation}
\label{y_definition2}
y^{\epsilon}_{\f x,\tilde{\f x}} (\f r,\f s) \deq  \int \mu_{v^{\epsilon}} (\dd \xi)\, \ee^{T^{\epsilon}-\ii \tau^\epsilon \langle \xi,1 \rangle_{L^2}}\,\ee^{f_2(\xi)}\,
\prod_{i=1}^{p} \Biggl[\int \mathbb{W}^{r_i,0}_{x_i,\tilde{x}_i}(\dd \omega_i)\, \Bigl(\ee^{\ii \int_{0}^{s_i} 
\dd t\, \xi(\omega_i(t))}-1\Bigr)\Biggr]\,.
\end{equation}
Here, we recall the definition of $f_2$ in \eqref{f_2}.
Similarly as in \cite[Section 5.2]{FKSS_2020}, we now divide the proof into the following three steps.
\paragraph{Step 1} We truncate the variables $r_i$ occurring in \eqref{Y_definition}, \eqref{y_definition} to lie in some interval $[\delta,1/\delta]$ for $\delta>0$ small.
\paragraph{Step 2} In the expression obtained by using the above truncation in $q^{\epsilon}_p$ given by \eqref{q_classical}, we compare the integral $\int_{\delta}^{1/\delta}\,\dd r_i\, (\cdots)$ with the corresponding Riemann sum $\nu\,\sum_{r_i \in \nu \N^{*}}\,\ind{\delta \leq r_{i} \leq 1/\delta}\, (\cdots)$.
\paragraph{Step 3} We replace $\xi$ by $\langle \sigma \rangle$ given by \eqref{sigma_time_average} in the functional integral and compare the resulting approximations of $q^{\epsilon}_p$ and $Q^{\epsilon}_{p}$.

We now carry out the details of each of the steps above. At every stage, we keep explicit track of the error terms.
\paragraph{Step 1}
Denoting $[a,b]_{\nu} \deq [a,b] \cap \nu \Z$, we observe the following estimate.
\begin{align}
\notag
\biggl|(Q_{p,\eta})_{\f x,\tilde{\f x}}-\nu^p\,\sum_{\f r \in [\delta,1/\delta]_{\nu}^p}\,\ee^{-\kappa |\f r|}\,Y_{\f x,\tilde{\f x}}(\f r)\biggr| +
\biggl|(q_{p}^{\epsilon})_{\f x,\tilde{\f x}}-&\int_{[\delta,1/\delta]^p}\dd \f r\,\ee^{-\kappa |\f r|}\,y^{\epsilon}_{\f x,\tilde{\f x}}(\f r,\f r)\biggr| 
\\
\label{Lemma 5.15'-5.16'}
&\leq C_{\kappa,v}^p\,\ee^{T^{\epsilon}}\,\epsilon^{-\frac{dp}{2}}\,\delta^{2-\frac{d}{2}}\,.
\end{align}
The estimate for the first term on the left-hand side of \eqref{Lemma 5.15'-5.16'} follows from \eqref{Y_definition} and \eqref{Y_bound} by considering Riemann sums. Similarly, the estimate for the second term on the left-hand side of \eqref{Lemma 5.15'-5.16'} follows from \eqref{y_definition} and from the observation that $y^{\epsilon}_{\f x,\tilde{\f x}}(\f r)$ satisfies the same bound as in \eqref{Y_bound} by an analogous proof. We omit the details.

\paragraph{Step 2}
We show that for $\delta=\delta(\nu)$ sufficiently small (for the precise bound, see \eqref{delta_condition} below), the following estimate holds.
\begin{multline}
\label{Lemma 5.19'}
\biggl|\int_{[\delta,1/\delta]^p}\,\dd \f r\, \ee^{-\kappa |\f r|}\,y^{\epsilon}_{\f x,\tilde{\f x}} (\f r, \f r)-
\nu^p\,\sum_{\f r \in [\delta,1/\delta]_{\nu}^p} \ee^{-\kappa |\f r|}\,y^{\epsilon}_{\f x,\tilde{\f x}} (\f r, \f r)\biggr|
\\
\leq C_{\kappa,v}^p\,\ee^{T^{\epsilon}}\,\epsilon^{-\frac{d}{2}}\,\delta^{-\frac{dp}{2}}\,\sqrt{\nu}+ C_p^d \,\ee^{T^{\epsilon}}\,\delta^{-\frac{dp}{2}-(d-1)}\,\nu\,.
\end{multline}
By arguing analogously as in the proof of \cite[Lemma 5.17]{FKSS_2020}, we obtain that for all $\f r \in [\delta,1/\delta]_{\nu}^p$ and for all $\f s \in (0,\infty)^p$ with $\f s \leq \f r$ we have
\begin{equation}
\label{Lemma 5.17'}
\bigl|y^{\epsilon}_{\f x,\tilde{\f x}}(\f r, \f r)-y^{\epsilon}_{\f x, \tilde{\f x}}(\f r, \f s)\bigr| \leq C_{d,v}^p \,\ee^{T^{\epsilon}}\,\epsilon^{-\frac{d}{2}}\, \delta^{-\frac{dp}{2}}\,\max_{i}\,(r_i-s_i)\,.
\end{equation}
We note that the $\delta$ dependence in \eqref{Lemma 5.17'} is obtained from Lemma \ref{Heat_kernel_estimates} \ref{itm:heatkernel1}.
Furthermore, the $\epsilon$ dependence is obtained from Lemma \ref{v^{epsilon}_properties} \ref{itm:veps1}.

By arguing analogously as in the the proof of \cite[Lemma 5.18]{FKSS_2020}, we deduce that
\begin{equation}
\label{Lemma 5.18'}
\Bigl|\ee^{-\kappa |\f r|}\,y^{\epsilon}_{\f x,\tilde{\f x}}(\f r, \f s)-\ee^{-\kappa |\tilde{\f r}|}\,y^{\epsilon}_{\f x, \tilde{\f x}}(\tilde{\f r}, \f s)\Bigr| \leq C_{\kappa}^p \,\ee^{T^{\epsilon}}\,\delta^{-\frac{dp}{2}}\,\frac{\nu}{\min_{i}\,(r_i-s_i)}\,,
\end{equation}
provided that $\f r,\tilde{\f r} \in [\delta,1/\delta]^p$ are such that $|\f r-\tilde{\f r}| \leq \nu$, $\f s \in (0,\infty)^p$ satisfies 
\begin{equation}
\label{s_conditions}
\f s \leq \f r\,, \quad \f s \leq \tilde {\f r}\,, \quad \mathrm{min}_i s_i \geq \delta/2\,, \quad 1 \geq \mathrm{min}_{i} (r_i-s_i) \geq 2\nu\,.
\end{equation}

We now explain how \eqref{Lemma 5.17'} and \eqref{Lemma 5.18'} imply \eqref{Lemma 5.19'} for a suitable choice of $\delta$.
We recall \eqref{floor_nu} and write $\lfloor \f r \rfloor_{\nu} \deq (\lfloor r_i \rfloor_{\nu})_{i=1}^{p}$. By using \eqref{Re_f2} and Lemma \ref{Heat_kernel_estimates} \ref{itm:heatkernel1} in \eqref{y_definition2}, we have that for all $\f r \in [\delta,1/\delta]^p$ and $\f s \in (0,\infty)^p$ with $\f s \leq \f r$
\begin{equation}
\label{bound_y}
\bigl|y^{\epsilon}_{\f x,\tilde{\f x}}(\f r, \f s)\bigr| \leq C_d^p\,\ee^{T^{\epsilon}}\,\delta^{-\frac{dp}{2}}\,.
\end{equation}
Using \eqref{bound_y}, we get that 
\begin{equation}
\label{(5.47')}
\Biggl|\nu^p\,\sum_{\f r \in [\delta,1/\delta]_{\nu}^p} \ee^{-\kappa |\f r|}\,y^{\epsilon}_{\f x,\tilde{\f x}} (\f r, \f r)-\int_{[\delta,1/\delta]^p} \dd \f r\,\ee^{-\kappa |\lfloor \f r \rfloor_{\nu}|}\,y^{\epsilon}_{\f x, \tilde{\f x}}(\lfloor \f r \rfloor_{\nu}, \lfloor \f r \rfloor_{\nu})
\Biggr| \leq C_d^p\,\ee^{T^{\epsilon}}\,\delta^{-\frac{dp}{2}-(d-1)}\,\nu\,.
\end{equation}
We henceforth assume
\begin{equation}
\label{delta_condition}
\delta \sim \nu^a\,,\quad a \in (0,1/2)\,.
\end{equation}
For fixed $\f r \in [\delta,1/\delta]^p$, and $\nu$ sufficiently small, we define $\f s \equiv \f s(\f r) \in (0,\infty)^p$ by $s_i \deq r_i -\sqrt{\nu}$. Note that $s_i>0$ for small $\nu$ by \eqref{delta_condition}.
By \eqref{Lemma 5.17'}, we have that
\begin{equation}
\label{(5.48')}
\Biggl|\int_{[\delta,1/\delta]^p} \dd \f r\, \ee^{-\kappa |\f r|}\, y^{\epsilon}_{\f x,\tilde{\f x}}(\f r,\f r)-
\int_{[\delta,1/\delta]^p} \dd \f r\, \ee^{-\kappa |\f r|}\, y^{\epsilon}_{\f x,\tilde{\f x}}(\f r,\f s(\f r))\Biggr| \leq C_{\kappa,v}^p\,\ee^{T^{\epsilon}}\,\epsilon^{-\frac{d}{2}}\, \delta^{-\frac{dp}{2}}\,\sqrt{\nu}
\end{equation}
and 
\begin{multline}
\label{(5.49')}
\Biggl|\int_{[\delta,1/\delta]^p} \dd \f r\, \ee^{-\kappa |\lfloor \f r \rfloor_{\nu}|}\, y^{\epsilon}_{\f x,\tilde{\f x}}(\lfloor \f r \rfloor_{\nu},\lfloor \f r \rfloor_{\nu})-
\int_{[\delta,1/\delta]^p} \dd \f r\, \ee^{-\kappa |\lfloor \f r \rfloor_{\nu}|}\, y^{\epsilon}_{\f x,\tilde{\f x}}(\lfloor \f r \rfloor_{\nu},\f s(\f r))\Biggr| 
\\
\leq C_{\kappa,v}^p\,\ee^{T^{\epsilon}}\,\epsilon^{-\frac{d}{2}}\, \delta^{-\frac{dp}{2}}\,\sqrt{\nu}\,.
\end{multline}
In \eqref{(5.49')}, we used $|\lfloor r_i \rfloor_{\nu}-s_i| \lesssim \sqrt{\nu}$.
By \eqref{Lemma 5.18'}, we have that
\begin{equation}
\label{(5.50')}
\Biggl|\int_{[\delta,1/\delta]^p} \dd \f r\, \ee^{-\kappa |\f r|}\, y^{\epsilon}_{\f x,\tilde{\f x}}(\f r,\f s(\f r))-
\int_{[\delta,1/\delta]^p} \dd \f r\, \ee^{-\kappa |\lfloor \f r \rfloor_{\nu}|}\, y^{\epsilon}_{\f x,\tilde{\f x}}(\lfloor \f r \rfloor_{\nu},\f s(\f r))\Biggr| \leq C_{\kappa}^p\,\ee^{T^{\epsilon}}\,\epsilon^{-\frac{d}{2}}\, \delta^{-\frac{dp}{2}}\,\sqrt{\nu}\,.
\end{equation}
Here, we used that for small $\nu$, $\f s(\f r)$ satisfies \eqref{s_conditions} with $\tilde{\f r}=\lfloor \f r \rfloor_{\nu}$.
We hence deduce \eqref{Lemma 5.19'} from \eqref{(5.47')}, \eqref{(5.48')}, \eqref{(5.49')}, and \eqref{(5.50')}.

\paragraph{Step 3}
Let us define
\begin{equation}
\label{cal_I}
\cal I^{\epsilon} \deq \nu^p\,\sum_{\f r \in [\delta,1/\delta]_{\nu}^p}\,\ee^{-\kappa |\f r|}\,y^{\epsilon}_{\f x,\tilde{\f x}}(\f r,\f r)
\end{equation}
and
\begin{equation}
\label{cal_J}
\cal J \deq \nu^p\,\sum_{\f r \in [\delta,1/\delta]_{\nu}^p}\,\ee^{-\kappa |\f r|}\,Y_{\f x,\tilde{\f x}}(\f r)\,.
\end{equation}
Recalling \eqref{Theta_function}, we show that
\begin{equation}
\label{Step_3_bound}
\bigl|\cal I^{\epsilon}-\cal J\bigr| \leq C_{\kappa,v}^p\, \ee^{T^{\epsilon}}\,\epsilon^{-\frac{5d+1}{2}}\,\delta^{-\frac{dp}{2}}\,\Theta^{1/2}(\nu)\,.
\end{equation}
In order to show \eqref{Step_3_bound}, we first recall \eqref{y_definition2}, Lemma \ref{Lemma 5.2'}, and \eqref{sigma_time_average_observation}, and compare \eqref{cal_I} with
\begin{equation}
\label{cal_I'}
\tilde{\cal I} \deq \nu^p\,\sum_{\f r \in [\delta,1/\delta]_{\nu}^p}\,\ee^{-\kappa |\f r|}\, \int \mu_{\cal C_{\eta}} (\dd \sigma)\, \ee^{T^{\epsilon}-\frac{\ii \tau^\epsilon [\sigma]}{\nu}}\,\ee^{F_2(\sigma)}\,
\prod_{i=1}^{p} \Biggl[\int \mathbb{W}^{r_i,0}_{x_i,\tilde{x}_i}(\dd \omega_i)\, \Bigl(\ee^{\ii \int_{0}^{r_i} 
\dd t\, \langle \sigma \rangle (\omega_i(t))}-1\Bigr)\Biggr]\,.
\end{equation}
In other words, \eqref{cal_I'} is obtained  by replacing the factor by $\ee^{f_2(\langle \sigma \rangle)}$ by $\ee^{F_2(\sigma)}$ in the $\sigma$ integral representation of \eqref{cal_I}.
Using \eqref{Re_F2}, \eqref{Re_f2}, the Cauchy-Schwarz inequality, Lemmas \ref{Partition_function_rate_of_convergence_4}--\ref{Partition_function_rate_of_convergence_5}, and Lemma \ref{Heat_kernel_estimates} \ref{itm:heatkernel1}, we deduce that
\begin{equation}
\label{cal_I_difference}
\bigl|\cal I^{\epsilon}-\tilde{\cal I}\bigr| \leq C_{\kappa,v}^p\, \ee^{T^{\epsilon}}\,\epsilon^{-\frac{5d+1}{2}}\,\delta^{-\frac{dp}{2}}\,\Theta^{1/2}(\nu)\,.
\end{equation}
We now compare \eqref{cal_J} and \eqref{cal_I'}. By \eqref{Re_F2} and the Cauchy-Schwarz inequality, 
we have that
\begin{multline}
\label{cal_I_difference2}
\bigl|\tilde{\cal I}-\cal J\bigr| \leq \ee^{T^{\epsilon}}\,
\nu^p\,\sum_{\f r \in [\delta,1/\delta]_{\nu}^p}\,\ee^{-\kappa |\f r|}\, 
\prod_{i=1}^{p} \int \mathbb{W}^{r_i,0}_{x_i,\tilde{x}_i}(\dd \omega_i)
\\
\times 
\Biggl( \int \mu_{\cal C_{\eta}} (\dd \sigma)\,
\Biggl| \prod_{i=1}^{p} \Bigl(\ee^{\ii \int_{0}^{r_i} 
\dd t\, \langle \sigma \rangle (\omega_i(t))}-1\Bigr)-\prod_{i=1}^{p} \Bigl(\ee^{\ii \int_{0}^{r_i} 
\dd t\, \sigma ([t]_{\nu},\omega_i(t))}-1\Bigr) \Biggr|^2 \Biggr)^{1/2}\,.
\end{multline}
By arguing analogously as in the proof of \cite[Lemma 5.20]{FKSS_2020}, we have that for fixed $1 \leq i \leq p$, 
\begin{multline}
\label{cal_I_difference3}
\int \mu_{\cal C_{\eta}} (\dd \sigma)\,
\Bigl|\ee^{\ii \int_{0}^{r_i} 
\dd t\, \langle \sigma \rangle (\omega_i(t))}-\ee^{\ii \int_{0}^{r_i} 
\dd t\, \sigma ([t]_{\nu},\omega_i(t))}\Bigr|^2 \leq \frac{C}{\nu} \sum_{s,\tilde s \in [0,r_i)_{\nu}} \int_{0}^{\nu} \dd u_{1}\,\dd u_{2}\,\dd \tilde{u}_{1}\,\dd \tilde{u}_{2}\,\delta_{\eta}(u_{1}-\tilde{u}_{1})
\\
\times \Bigl(v^{\epsilon}\bigl(\omega_{i}(s+u_{1})-\omega_{i}(\tilde{s}+\tilde{u}_{1})\bigr)-v^{\epsilon}\bigl(\omega_{i}(s+u_{2})-\omega_{i}(\tilde{s}+\tilde{u}_{2})\bigr)\Bigr)\,,
\end{multline}
where $[a,b)_{\nu} \deq [a,b-\nu]_{\nu}$. Using Lemma \ref{v^{epsilon}_properties} \ref{itm:veps2} and $\delta_{\eta} \geq 0$, it follows that 
\begin{multline}
\label{cal_I_difference4}
\eqref{cal_I_difference3} \leq \frac{C}{\epsilon^{d+1}\nu}\,\sum_{s,\tilde s \in [0,r_i)_{\nu}} \int_{0}^{\nu} \dd u_{1}\,\dd u_{2}\,\dd \tilde{u}_{1}\,\dd \tilde{u}_{2}\,\delta_{\eta}(u_{1}-\tilde{u}_{1})\, 
\\
\times
\Bigl[ \bigl| \omega_i (s+u_1)-\omega_i (s+u_2)\bigr|_{\Lambda}+\bigl| \omega_i (\tilde s+\tilde{u}_1)-\omega_i (\tilde s+\tilde{u}_2)\bigr|_{\Lambda}
\Bigr]\,.
\end{multline}
We now use \eqref{cal_I_difference4} and a telescoping argument in \eqref{cal_I_difference2}. In particular, by using the Cauchy-Schwarz inequality in $\mathbb{W}^{r_i,0}_{x_i,\tilde{x}_i}(\dd \omega_i)$, applying Lemma \ref{Heat_kernel_estimates} \ref{itm:heatkernel1}--\ref{itm:heatkernel2} and \eqref{delta_nu_integral},  we deduce that 
\begin{equation}
\label{cal_I_difference5}
\bigl|\tilde{\cal I}-\cal J\bigr|  \leq C_{\kappa,v}^p\,\ee^{T^{\epsilon}}\,\epsilon^{-\frac{d+1}{2}}\,\delta^{-\frac{dp}{2}}\,\sqrt{\nu}\,.
\end{equation}
Above we applied \eqref{Heat_kernel_estimates_ii_1} when using Lemma \ref{Heat_kernel_estimates} \ref{itm:heatkernel2}.
We hence deduce \eqref{Step_3_bound} from \eqref{cal_I_difference} and \eqref{cal_I_difference5}. Here, we also recall \eqref{Theta_function}.

We combine \eqref{Lemma 5.15'-5.16'}, \eqref{Lemma 5.19'}, \eqref{Step_3_bound}, and optimize in $\delta$, keeping  \eqref{delta_condition} in mind. We hence take
$\delta \sim \nu^{\frac{1}{4p+4}}$ when $d=2$ and $\delta \sim \nu^{\frac{1}{6p+4}-}$ when $d=3$.
Note that $\delta$ then indeed satisfies \eqref{delta_condition}.
Putting everything together, and recalling \eqref{T^epsilon_bound}, we obtain \eqref{Q_and_q_estimates_ii}, with $\theta(d,p)$ as in \eqref{theta_definition} which concludes the proof of claim \ref{itm:Q2}.
\end{proof}

\begin{proof}[Proof of Lemma \ref{lower_bound}]
We first prove \ref{itm:lb1}. We recall \eqref{W^epsilon}. From Jensen's inequality we get
\begin{equation}
\label{lower_bound_classical_1}
\zeta^{W^\epsilon} = \E[\ee^{-W^\epsilon}] \geq \ee^{-\E[W^{\epsilon}(\phi)]}\,.
\end{equation}
We note that 
\begin{equation}
\label{lower_bound_classical_2}
\E \biggl[\int \dd x\, \wick{|\phi(x)|^2}\biggr]=0\,.
\end{equation}
Furthermore, by using Wick's theorem and arguing analogously as in \cite[Section 3.1]{frohlich2017gibbs}, we have that 
\begin{equation*}
\E \biggl[\int \dd x\,\dd \tilde x\, \wick{|\phi(x)|^2} \,v^{\epsilon}(x-\tilde x)\, \wick{|\phi(\tilde x)|^2}\biggr] =
\int \dd x\,\dd \tilde x\, v^{\epsilon}(x-\tilde x)\,G(x-\tilde x)\,G(\tilde x-x)\,.
\end{equation*}
By using that $G$ is even, followed by \eqref{v_epsilon}, Lemma \ref{v^{epsilon}_properties} \ref{itm:veps1}, the above expression is in absolute value
\begin{equation}
\label{lower_bound_classical_3}
\leq \int \dd x\,|v^{\epsilon}(x)|\, G(x)^2  \lesssim_v \frac{1}{\epsilon^d}\,\int_{|x| \lesssim \epsilon} \dd x\, G(x)^2\,.
\end{equation}
Using Lemma \ref{lem:G_smoothness} when $d=2$ and Remark \ref{G_smoothness_3D_remark} when $d=3$, a direct calculation shows that 
\begin{equation}
\label{lower_bound_classical_4}
\eqref{lower_bound_classical_3} \lesssim_{\kappa,v} \chi(\epsilon)^2\,.
\end{equation}
Here, we recall \eqref{sigma(epsilon)}. Claim \ref{itm:lb1} now follows by recalling \eqref{W^epsilon} and substituting \eqref{lower_bound_classical_2} and \eqref{lower_bound_classical_4} into \eqref{lower_bound_classical_1}.

We now show \ref{itm:lb2}. We show that for $\alpha \in (0,1-\frac{d}{4})$, we have
\begin{equation}
\label{5.20_(ii)}
\cal Z \geq \exp\Biggl[-c \biggl(\chi(\epsilon)^2+\frac{\nu^{\alpha}}{\epsilon^d}\biggr)\Biggr]\,,
\end{equation}
for some $c>0$ depending on $\kappa$.
Note that \eqref{5.20_(ii)} implies claim \ref{itm:lb1}, since by \eqref{epsilon_lower_bound}--\eqref{sigma(epsilon)}, we have that $\chi(\epsilon)^2 \gtrsim_{\alpha,d}\frac{\nu^{\alpha}}{\epsilon^d}$.

We recall \eqref{Hamiltonian_H} and \eqref{free_Hamiltonian_H}, and set 
$\cal W \deq H^{\epsilon}-H^{(0)}$.
By using the Peierls-Bogoliubov inequality (see \cite[Section 2.5]{ruelle1999statistical}) in \eqref{Z^epsilon_quantum}, we have that
\begin{equation}
\label{lower_bound_quantum_1}
\cal Z \geq \exp\Biggl[\tr_{\cal F} \Biggl(-\cal W \, \frac{\ee^{-H^{(0)}}}{
\tr_{\cal F}(\ee^{-H^{(0)}})}\Biggr)\Biggr]=\exp\Bigl[-\varrho_{\nu}^{(0)}(\cal W)\Bigr]\,,
\end{equation}
where for a closed operator $\cal A$ on $\cal F$, we define 
\begin{equation*}
\varrho_{\nu}^{(0)}(\cal A) \deq   \frac{\tr_{\cal F}(\cal A\,\ee^{-H^{(0)}})}{{\tr_{\cal F}(\ee^{-H^{(0)}})}}\,.
\end{equation*}
Since 
$
\varrho_{\nu}^{(0)} [\int \dd x \, \pb{\nu  a^*(x) a(x) - \varrho_\nu}]=0\,,
$
we get that
\begin{equation}
\label{lower_bound_quantum_2}
\varrho_{\nu}^{(0)}(\cal W)= \frac{1}{2}\,\varrho_{\nu}^{(0)} \biggl[\int \dd x \, \dd \tilde x \, \pb{\nu a^*(x) a_\nu(x) - \varrho_\nu} \, v^\epsilon(x - \tilde x) \, \pb{\nu a^*(\tilde x) a (\tilde x) - \varrho_\nu}\biggr]\,,
\end{equation}
By using the quantum Wick theorem (see \cite[Lemma B1]{frohlich2017gibbs}) and using \cite[Lemma 2.10]{frohlich2017gibbs}, the right-hand side of \eqref{lower_bound_quantum_2} is
\begin{equation}
\label{lower_bound_quantum_3}
\sim \int \dd x\,\dd \tilde x\,G_{[\nu]}(x-\tilde x) \Bigl[G_{[\nu]} (\tilde x-x) +\nu \delta (\tilde x-x)\Bigr]\,v^{\epsilon}(x-\tilde x)\,,
\end{equation}
where the quantum Green function $G_{[\nu]}$ is given by
\begin{equation}
\label{lower_bound_quantum_4}
G_{[\nu]}(x) \deq \sum_{k \in \Z^d} \frac{\nu}{\ee^{\lambda_k \nu}-1}\,u_k\,,
\end{equation}
with notation as in \eqref{lambda_k}.
From \eqref{lower_bound_quantum_4}, we have that $G_{[\nu]}$ is even and therefore by Lemma \ref{v^{epsilon}_properties} \ref{itm:veps1}, we have
\begin{equation}
\label{lower_bound_quantum_5}
|\eqref{lower_bound_quantum_3}| \lesssim \int \dd x\, |v^{\epsilon}(x)|\,G_{[\nu]}(x)^2+ \frac{\nu}{\epsilon^d} \, G_{[\nu]}(0)\,.
\end{equation}
We now bound each of the two terms on the right-hand side of \eqref{lower_bound_quantum_5} separately.
For the first term, we note that for $\alpha \in (0,1-\frac{d}{4})$, $G=\sum_{k \in \Z^d} \frac{1}{\lambda_k}\,\ee^{2\pi \ii k \cdot x} \in L^2(\Lambda)$ satisfies
\begin{equation}
\label{lower_bound_quantum_6}
\bigl\|G_{[\nu]}-G\bigr\|_{L^2} \lesssim_{\kappa} \nu^{\alpha}\,.
\end{equation}
In order to obtain \eqref{lower_bound_quantum_6}, we note that
for all $\delta \in [0,1]$, we have 
\begin{equation}
\label{lower_bound_quantum_7}
\biggl|\frac{\nu}{\ee^{\lambda_k \nu}-1}-\frac{1}{\lambda_k} \biggr| \leq \nu^{1-\delta}\,\frac{1}{\lambda_k^{\delta}}\,.
\end{equation}
Using Plancherel's theorem and taking $\delta>d/4$ in \eqref{lower_bound_quantum_7}, we deduce \eqref{lower_bound_quantum_6}.
We now use H\"{o}lder's inequality, Lemma \ref{v^{epsilon}_properties} \ref{itm:veps1},  \eqref{lower_bound_quantum_6}, and recall \eqref{lower_bound_classical_3}--\eqref{lower_bound_classical_4} to deduce that the first term on the right-hand side of \eqref{lower_bound_quantum_5} is
\begin{equation}
\label{lower_bound_quantum_8}
\lesssim_{\kappa} \int \dd x\, |v^{\epsilon}(x)|\,G(x)^2+\frac{\nu^{\alpha}}{\epsilon^d} \lesssim_{\kappa,v} \chi(\epsilon)^2+\frac{\nu^{\alpha}}{\epsilon^d}\,.
\end{equation}
From \eqref{lower_bound_quantum_4}, by considering the terms with $\lambda_k \leq \frac{1}{\nu}$ and $\lambda_k > \frac{1}{\nu}$ separately, we deduce that the second term on the right-hand side of \eqref{lower_bound_quantum_5} is
\begin{equation}
\label{lower_bound_quantum_9}
\lesssim_{\kappa} \frac{\nu \chi(\sqrt{\nu})}{\epsilon^d}\lesssim_{\alpha}\frac{\nu^{\alpha}}{\epsilon^d}\,,
\end{equation}
since $\alpha<1/2$.
Estimate \eqref{5.20_(ii)} now follows from \eqref{lower_bound_quantum_5}, \eqref{lower_bound_quantum_8}, and \eqref{lower_bound_quantum_9}.
\end{proof}
We now give the details of the proof of the claim from Remark \ref{endpoint_admissible}.
\begin{proof}[Proof of claim from Remark \ref{endpoint_admissible}]
The claim follows if, instead of \eqref{nonlocal_bound_2}, we show
\begin{equation}
\label{nonlocal_bound_2'}
\|V^{\epsilon}-V\|_{L^2(\P)} \lesssim_v \|v^{\epsilon}-v\|_{H^{-1+\frac{\delta}{2}}}^{1/2}\,.
\end{equation}

Let us note that the $x$-dependence in \eqref{F^epsilon} is of the form $\bigl[v^{\epsilon}(x-y)-v(x-y)\bigr]\,G(x-x_a)\,G(x-x_b)$ for some distinct $a, b \in B \setminus \{(1,1,\pm)\}$. We consider the different possibilities for $x_a$ and $x_b$.
\paragraph{\textbf{Case 1:} $x_a=x_b=\tilde x$}
This case is easy because the variables $(x,\tilde x)$ and $(y, \tilde y)$ decouple. In particular, the contribution to \eqref{nonlocal_bound_3} is 
\begin{equation}
\label{nonlocal_bound_4}
\lesssim \biggl| \int \dd x\, \dd \tilde x \bigl[v^{\epsilon}(x-\tilde x)-v(x-\tilde x)\bigr]\,G^2(x-\tilde x) \biggr| \lesssim \|v^{\epsilon}-v\|_{H^{-1+\frac{\delta}{2}}}\, \|G\|_{H^{-1+\frac{\delta}{4}}}^2\lesssim \|v^{\epsilon}-v\|_{H^{-1+\frac{\delta}{2}}}\,.
\end{equation}
In order to prove \eqref{nonlocal_bound_4}, we used duality, 
and the bilinear estimate given in \cite[Lemma 4.2]{sohinger2019microscopic} which implies that $\|G^2\|_{H^{-1+\frac{\delta}{2}}} \lesssim_{\delta} \|G\|_{H^{-1+\frac{\delta}{4}}}^2<\infty$.

\paragraph{\textbf{Case 2:} $x_a=x_b \neq \tilde x $}

We can assume without loss of generality that $x_a=y$. The integrand that we then consider is  
\begin{equation*}
\bigl[v^{\epsilon}(x-\tilde x)-v(x-\tilde x)\bigr]\,G^2(x-y)\,\bigl[v^{\epsilon}(y -\tilde y)-v(y -\tilde y)\bigr]\,G^2(\tilde x -\tilde y)\,.
\end{equation*}
We first integrate in $x$ and use an estimate analogous to \eqref{nonlocal_bound_4}. We then integrate in $\tilde x $ and use $G \in L^2(\Lambda)$. Finally, we integrate in $y,\tilde y$ and recall \eqref{v_regularized} to deduce that the contribution to \eqref{nonlocal_bound_3} is $\lesssim \|v^{\epsilon}-v\|_{H^{-1+\frac{\delta}{2}}}^{1/2}\, \|v\|_{L^1}^{1/2}$.

\paragraph{\textbf{Case 3:} $x_a=\tilde x, x_b \neq \tilde x$ or $x_a \neq \tilde x, x_b=\tilde x$}
We can assume without loss of generality that $x_b=y$. Then, we consider the integrand
\begin{equation*}
\bigl[v^{\epsilon}(x-\tilde x)-v(x-\tilde x)\bigr]\,G(x-\tilde x)\,G(x-y)\,\bigl[v^{\epsilon}(y-\tilde y)-v(y-\tilde y)\bigr]\,G(\tilde x-\tilde y)\,G(y-\tilde y)\,.
\end{equation*}
We first fix $x, \tilde y$ and integrate in $\tilde x$ and $y$, using estimates analogous to \eqref{nonlocal_bound_4}. We then integrate in $x, \tilde y$ to deduce that the contribution to \eqref{nonlocal_bound_3} is $\lesssim \|v^{\epsilon}-v\|_{H^{-1+\frac{\delta}{2}}}$.
\paragraph{\textbf{Case 4:} $x_a \neq x_b, x_a \neq \tilde x, x_b \neq \tilde x$}

This case is similar to Case 2 and we get the same upper bound by an analogous argument.
We hence deduce \eqref{nonlocal_bound_2'}.
\end{proof}

\bibliography{bibliography} 
\bibliographystyle{amsplain}

\bigskip

\noindent
J\"urg Fr\"ohlich, ETH Z\"urich, Institute for Theoretical Physics, \href{mailto:juerg@phys.ethz.ch}{juerg@phys.ethz.ch}.
\\[0.3em]
Antti Knowles, University of Geneva, Section of Mathematics, \href{mailto:antti.knowles@unige.ch}{antti.knowles@unige.ch}.
\\[0.3em]
Benjamin Schlein, University of Z\"urich, Institute of Mathematics, \href{mailto:benjamin.schlein@math.uzh.ch}{benjamin.schlein@math.uzh.ch}.
\\[0.3em]
Vedran Sohinger, University of Warwick, Mathematics Institute, \href{mailto:V.Sohinger@warwick.ac.uk}{V.Sohinger@warwick.ac.uk}.

\bigskip

\paragraph{Acknowledgements}
We thank David Brydges for helpful correspondence and Trishen Gunaratnam for 
very useful discussions. AK acknowledges funding from the European Research Council under the European Union’s Horizon 2020 research and innovation programme (grant agreement No.\ 715539\_RandMat), funding from the Swiss National Science Foundation through the NCCR SwissMAP grant, and support from the US National Science Foundation under Grant No.\ DMS-1928930 during his participation in the program ``Universality and Integrability in Random Matrix Theory and Interacting Particle Systems'' hosted by the Mathematical Sciences Research Institute in Berkeley, California during the Fall semester of 2021. BS acknowledges partial support from the NCCR SwissMAP, from the Swiss National Science Foundation through the Grant ``Dynamical and energetic properties of Bose-Einstein condensates'' and from the European Research Council through the ERC-AdG CLaQS. VS acknowledges support of the EPSRC New Investigator Award grant EP/T027975/1.

\end{document}